\documentclass[11pt]{article}
\usepackage[margin=1in]{geometry}

 \usepackage{amsmath, amsthm, mathtools, dsfont, mathdots}
\usepackage{newtxtext}
\usepackage{newtxmath}

\renewcommand{\paragraph}[1]{\medskip\noindent\textbf{#1}}

\usepackage{enumitem}
\setlist{nosep,topsep=0pt,leftmargin=*}

\usepackage[hidelinks]{hyperref}

\usepackage{nicefrac}
\usepackage{array}
\usepackage{makecell}

\usepackage{graphicx}
\usepackage{subcaption}
\graphicspath{{Figures/}}

\usepackage{xfrac}
\usepackage{flushend}
\usepackage{multicol}
\usepackage{enumitem}
\usepackage{tikz}
\usepackage{wrapfig} 
\usepackage{worldflags}
\usepackage{multirow}
\usetikzlibrary{tikzmark}
\usepackage{mathtools}
\usepackage{tocloft}

\usepackage{etoc}
\usepackage{titletoc}
\usepackage{lipsum}

\usepackage[noend]{algpseudocode}
\usepackage[ruled]{algorithm2e} 

\usepackage{appendix}
\usepackage{lipsum}
\usepackage{geometry}

\usepackage{amsmath}
\usepackage{mathtools}
\usepackage{amsthm}
\usepackage{xspace}
\usepackage{nicefrac}

\usepackage[numbers]{natbib}

\newtheorem{definition}{\textbf{Definition}}
\newtheorem{assumption}{\textbf{Assumption}}
\newtheorem{remark}{\textbf{Remark}}

\newtheorem{lemma}{\textbf{Lemma}}
\newtheorem{theorem}{\textbf{Theorem}}
\newtheorem{proposition}{\textbf{Proposition}}
\newtheorem{corollary}{\textbf{Corollary}}

\DeclareMathOperator*{\argmax}{arg\,max}

\usepackage{accents}
\newcommand{\ubar}[1]{\underaccent{\bar}{#1}}
\DeclareRobustCommand{\ubar}[1]{\underaccent{\bar}{#1}}

\newcommand{\OACC}{\textsf{OACC}\xspace}
\newcommand{\OMCR}{\textsf{OMCR}\xspace}

\newcommand{\PUM}{\textsf{PUMax-}\xspace}

\newcommand{\OPT}{\texttt{OPT}\xspace}
\newcommand{\ALG}{\texttt{ALG}\xspace}

\newcommand{\CP}{\textsf{CP}\xspace}

\newcommand{\asigmas}{\alpha_{\star}^{(\sigma)}\xspace}
\newcommand{\ataus}{\alpha_{\star}^{(\tau)}\xspace}

\newcommand{\Dstar}{\Delta_{\star}^{(\sigma)}\xspace}
\newcommand{\chistar}{\chi_{\star}^{(\tau)}\xspace}

\def\Id{\,\mathrm{d}} 

\usepackage[export]{adjustbox}

\begin{document}

\title{Beyond a Single Optimal Design: A Dynamical Systems Characterization of Online Allocation with Convex Costs}

\author{
    Haoxin Sang\thanks{University of Alberta. Email: \texttt{hsang1@ualberta.ca}}\\
    \and
    Bo Sun\thanks{University of Ottawa. Email: \texttt{bo.sun@uottawa.ca}}\\
    \and 
    Xiaoqi Tan\thanks{University of Alberta. Email: \texttt{xiaoqi.tan@ualberta.ca}}
}

\maketitle

\begin{abstract}
  We study online allocation with convex costs (\OACC), a generalization of limited-supply models in which additional resources can be produced dynamically at convex cost. Prior work largely focuses on constructing a single algorithm that achieves strong competitive guarantees. In contrast, our main contribution is a structural characterization of the optimal design space of reserve functions (i.e., normalized pricing rules) for \OACC, yielding a family of optimal online algorithms.
  
  Using a principled dynamical-systems approach, we show that optimal online algorithms  arise as solutions of a nonlinear eigenvalue problem: the optimal competitive ratio corresponds to the dominant eigenvalue, and the associated eigenfunction determines the optimal reserve function. This perspective establishes the existence of infinitely many optimal designs attaining the best possible competitive ratio, thereby unifying and generalizing a broad class of prior algorithms. Beyond unification, our characterization uncovers new structural phenomena, including the ability to flexibly mix dynamic and static pricing without loss of optimality and the construction of universally competitive algorithms under unknown or stochastic costs. We further extend the analysis to more structured settings such as supply-oblivious arrivals and hard supply constraints, obtaining sharper guarantees that improve upon previous bounds.
\end{abstract}

\setcounter{tocdepth}{2} 
\tableofcontents

\newpage

\section{Introduction}

Over the past few decades, online resource allocation has emerged as a central research topic at the intersection of computer science, economics, and operations research. The fundamental goal is to allocate \textit{limited} resources to a sequence of arriving requests, each associated with a value or reward, in a manner that optimizes a global objective such as social welfare or revenue. These models have been extended to capture a range of additional considerations, including fairness \cite{Zargari2025,Banerjee2023}, risk sensitivity \cite{Christianson2024}, load balancing \cite{Lin2012,Moharir2013}, and application-specific features such as penalties for under- or over-allocation \cite{Balseiro2025}.

A notable generalization of the limited-supply model allows the system to dynamically produce additional resources, incurring costs modeled as a convex function of the total quantity produced. We refer to this framework as \textit{online allocation with convex costs} (\OACC). The \OACC model captures a broad range of practically relevant applications. For example, in online job scheduling for data centers, convex costs model energy costs~\cite{Zhang2015} or carbon costs~\cite{Alaviyar2025} as a function of computational load. In online network caching, the cost reflects penalties associated with cache misses~\cite{Dehghan2019,Zargari2025}. In online advertising, convex costs capture the increasing difficulty of acquiring additional budget under soft budget constraints \cite{Devanur2012}.

From a theoretical perspective, the design of online algorithms for different variants of \OACC has attracted significant attention over the past decade. Representative works include online convex packing and convering \cite{Azar2016}, online combinatorial auctions \cite{Tan2020a,Huang2019,Blum2011}, online knapsack \cite{Tan2020}, and, more recently, online (adversarial) selection with convex costs \cite{Nekouyan2025,Tan2025}. Despite these advances, the existing literature exhibits two notable limitations: 
\begin{itemize}
    \item \textit{Bounded value assumptions}: Most prior work assumes that request values are drawn from a bounded support (e.g., \cite{Nekouyan2025,Tan2025,Ma2020,Sun2022}), introducing additional parameters that are often difficult to justify in practice and typically lead to parameter-dependent performance guarantees. More critically, both the algorithmic designs and hardness results in these studies heavily rely on this bounded-support assumption and cannot be naturally extended to the more general setting with arbitrary arrivals. The only known exceptions are \cite{Blum2011,Azar2016,Huang2019,Tan2020a}. However, the algorithms in \cite{Blum2011, Azar2016} achieve only suboptimal or order-optimal guarantees, respectively, without precisely characterizing the optimal competitive ratio—particularly in settings involving unknown/stochastic costs and hard supply constraints. In contrast, the results of \cite{Huang2019} and \cite{Tan2020a} are optimal, but their optimality holds only for power cost functions of the form $f(y) = y^{\sigma}$, where $\sigma > 1$ is a constant.  
    
    \item \textit{Absence of a characterized design space}: 
    Nearly all existing work on \OACC focuses on constructing a single (asymptotically) optimal algorithm. However, even when worst-case optimal, a single design is often too rigid for practical applications. Online algorithms for \OACC can typically be interpreted as dynamic pricing schemes that adjust posted prices according to resource utilization. A design tailored to worst-case guarantees may therefore be overly conservative for a particular realization of the instance. This observation motivates the need for a richer design space of dynamic pricing mechanisms that can be adapted using instance-specific predictions. Moreover, in many applications the cost functions may not be known precisely in advance. Addressing such uncertainty requires a structural characterization of the admissible design space for an entire class of cost functions, so that robust algorithms can be derived from their intersection. At present, however, it remains unclear when such a design space exists and how to systematically construct one that contains multiple optimal designs for \OACC.
\end{itemize}

In this paper, we address the above gaps by characterizing the full space of admissible online designs through a principled dynamical-systems approach. Rather than producing a single optimal design, our framework identifies the structural conditions that define all competitive algorithms and characterizes those that are optimal. This structural perspective explains precisely when and why optimality arises. As a consequence, we obtain a family of optimal algorithms as natural solutions of the underlying dynamics,  thereby unifying a range of existing algorithms as special cases and extending the design space to accommodate richer constraints and model variations.

\subsection{Our Contributions and Techniques}
We study \OACC under a general setting, where (i) request values are arbitrary and not restricted by any bounded support, and (ii) supply costs are convex in the total allocation, satisfying mild regularity conditions on their elasticity. Our primary contribution is the design of a family of optimal online algorithms via a principled dynamical systems approach. We show that nearly all existing algorithms of different \OACC instances (e.g., \cite{Tan2020,Tan2020a,Azar2016,Huang2019,Blum2011}) can be unified within a common framework of \textit{pseudo-utility maximization}. The core idea is to design a \textit{pseudo-cost function} that dynamically adjusts the immediate value or reward obtained from each request. After normalization by the marginal supply cost, the pseudo-cost function reduces to the design of a \textit{reserve function} $ \phi $, which captures the extent to which the algorithm must ``look ahead'' to \textit{reserve} an appropriate portion of resources for potentially higher-value future arrivals. Within this unified framework, we establish the existence of infinitely-many optimal reserve functions that attain the same, best-possible competitive ratio. Our results strictly improve the best-known guarantees for online convex packing \cite{Azar2016} and recover the optimal guarantees established for fractional online combinatorial auctions \cite{Tan2020a,Huang2019} as special cases.\footnote{Our formulation focuses on additive (fractional) allocation models, but the underlying framework extends naturally to more general valuation settings, such as bundling in combinatorial auctions.} More precisely, we show that the family of optimal reserve functions fills a rich structure, which we term the \emph{optimal design space}, within which infinitely many optimal designs can be constructed in a flexible manner. The existing algorithms mentioned above correspond to particular solutions within this design space.

Notably, the characterized design space enables progress on several challenging variants of \OACC that were previously out of reach. First, we show that it admits a flexible combination of \textit{dynamic} and \textit{static} pricing schemes while preserving optimal competitive performance. This separation is particularly striking in light of prior results for online selection problems, where limited price adaptivity induces a monotone performance interpolation between fully static and fully dynamic solutions in both adversarial and stochastic settings \cite{sun2024static,Perez2025_prophet_limited,Nekouyan2025_risk_pricing}. In contrast, our results reveal a qualitative distinction between fractional and integral online allocation, suggesting a fundamentally different complexity landscape for online (integral) selection under limited price changes. Second, we extend the design space to \OACC with \textit{unknown} or \textit{stochastic} costs, obtaining a family of algorithms that are \textit{universally competitive} across all cost realizations—a property we term a \emph{universal design space}. Third, we provide a sharper analysis under supply-oblivious arrivals, where requests are indifferent to the serving supplier. This yields improved asymptotic competitive ratios that surpass prior bounds determined by the cost function of the worst supply node. Finally, we show that the optimal design space extends to settings that combine convex supply costs with hard supply constraints, resolving a question that persists since the seminal works of \citet{Blum2011} and \citet{Huang2019}.

From a technical perspective, we show that the design of competitive reserve functions reduces to solving a class of \textit{nonlinear eigenvalue problems} of the form $ \phi' = \alpha \cdot F(\phi, y) $, which, more broadly speaking, fall within the scope of \textit{nonlinear dynamical systems}. A key source of complexity arises from a \textit{singularity issue}, which emerges as a critical condition for ensuring the competitiveness of online algorithms. This introduces additional analytical challenges to an already difficult class of problems. Methodologically, our contributions are twofold. First, we provide a theoretical resolution of the nonlinear dynamical system by fully characterizing its solution space, including a rigorous analysis of the upper- and lower-bound solutions—also known as \textit{extremal solutions} in the theory of differential equations—which represent the most and least aggressive feasible designs that achieve the optimal competitive ratio. Second, we develop principled methods for constructing optimal designs (not necessarily direct solutions to the dynamical system) and propose numerically stable procedures for computing the extremal solutions, together with provable bounds on their approximation error. Our results apply to convex cost functions with bounded elasticity, thereby encompassing a broad class of functions that prior works were unable to handle (e.g., \cite{Tan2020a, Huang2019, Blum2011}).

\subsection{Further Related Work}
We review more work related to \OACC and the techniques we adopt to solve the underlying dynamical systems.

\textit{Online knapsack.} A closely related line of work studies the online (fractional) knapsack problem and its extensions. In this setting, items with values and weights arrive sequentially and must be packed into one or more knapsacks, possibly subject to eligibility constraints. The single-knapsack case with small items was studied by \citet{Chakrabarty2008}, where the small-item assumption effectively reduces the problem to a fractional model and enables the derivation of an optimal algorithm under bounded value-density assumptions. \citet{Sun2020} extended this to multiple knapsacks with full eligibility, while \citet{Tan2020} incorporated convex packing costs in the single-knapsack setting and \citet{Sun2022} further allowed departures. However, due to the strong impossibility results of \citet{Marchetti1995}, nearly all of these works rely on bounded-support or density assumptions on item values. In contrast, our framework allows arbitrary valuations and provides a structural characterization of optimal policies under convex costs, without requiring bounded value support.

\textit{Online matching and allocation.} Online (bipartite) matching is a foundational model capturing the core challenges of many online allocation and market design problems, most prominently in Internet advertising. In this framework, a set of offline agents is given upfront, while agents on the other side arrive sequentially and must be matched immediately and irrevocably. Since the seminal work of \citet{Karp1990}, numerous extensions have been developed, including weighted matching \cite{Aggarwal2011}, the AdWords problem \cite{Mehta2007}, and other budgeted allocation models; see the surveys by \citet{Mehta2013} and \citet{Huang2024} for comprehensive overviews. A fundamental distinction between traditional online matching and \OACC lies in the modeling of supply constraints. Standard matching assumes fixed budgets or capacities per offline node, resulting in hard budget exhaustion. In contrast, \OACC replaces fixed capacities with convex production costs, allowing supply to expand continuously at increasing marginal cost. This shift fundamentally alters the structure of optimal online decisions and competitive analysis. The closest related work is \citet{Devanur2012}, where concave return functions implicitly capture linear value minus convex cost, similar in spirit to our formulation. More recently, \citet{Kalen2026} extend online matching with concave returns to more general (possibly non-separable) objectives. However, these works do not provide a structural characterization of optimal policies under convex costs, nor do they yield optimal competitive guarantees in the general \OACC setting considered here.

\textit{Planar polynomial dynamical systems.} 
Polynomial and power-law dynamical systems are widely studied in biology, chemistry, and engineering \cite{Edelstein-Keshet2005,Donnell2013}. Theoretical work on planar systems focuses on qualitative behavior without explicit integration, which is often intractable. Existing results are either too general to yield system-specific structural insights \cite{Hirsch2013,Teschl2012,Dumortier2006}, or rely on assumptions that do not hold in our setting. For example, \citet{Craciun2013} and \citet{Craciun2019} require weak reversibility, which our system lacks, while \citet{Artes2021} and \citet{Shestopalov2021} focus on quadratic systems distinct from ours. To the best of our knowledge, no prior work provides a qualitative analysis tailored to the planar dynamical system that characterizes the design space of optimal online allocation under convex costs.

\section{Problem Formulation, Assumptions, and Preliminaries}

In this section, we formulate the online allocation with convex costs (\OACC) problem and introduce the technical assumptions and preliminaries underlying our algorithmic framework.

\subsection{Problem Formulation} \label{Sec:Problem Formulation}

We consider an online resource allocation problem over a set of $m$ offline supply nodes. Each supply node $j \in [m]$ incurs a cost $f_j(y): \mathbb{R}_+ \rightarrow \mathbb{R}_+$ when providing a total supply of $y$ units. 
A sequence of $ T $ online demand nodes  (or requests) arrive one at a time. Upon the arrival of demand node $ t $, we need to make an immediate and irrevocable allocation decision $ \mathbf{x}_t := (x_{t,1},\dots,x_{t,m}) \in \mathcal{X}_t \subset \mathbb{R}_+^m $. 
Each allocation $x_{t,j}$ denotes the amount of demand $t$ allocated to supply node $j$, which consumes $w_{t,j} x_{t,j}$ units of supply and generates a reward $v_{t,j} x_{t,j}$. The objective is to maximize the total reward minus the total cost of supply, i.e., 
\begin{equation} \label{eq:oacc}
\hspace{-1cm} (\OACC) \qquad 
\underset{\mathbf{x}_t \in \mathcal{X}_t}{\max}\ \sum_{t\in[T]} \sum_{j\in[m]} v_{t,j} x_{t,j} -   \sum_{j\in[m]} f_j\biggl(\sum_{t\in[T]} w_{t,j} x_{t,j} \biggr),
\end{equation}
where the feasible allocation set $\mathcal{X}_t = \{ \mathbf{x}_t \big| \sum_{j\in [m]} x_{t,j} \leq 1, x_{t,j}\geq 0, \forall j\in [m] \}$. 
A demand node $t$ may only be interested in a subset of supply nodes $ \mathcal{J}_t \subseteq [m] $. This is modeled by setting $v_{t,j} = 0, \forall j\in [m]\setminus \mathcal{J}_t$, which implies that allocating demand $t$ to supply $j$ yields no reward, and enforces $x_{t,j} = 0$. 
For ease of exposition, we assume that all $f_j$ are identical and drop their subscripts throughout the paper. However, our results can be naturally extended to the case of non-identical  $f_j$'s, with performance characterized in terms of the ``worst"  $f_j $. 

\OACC generalizes several well-studied online allocation problems:
\begin{itemize}
    \item \textit{Online knapsack with convex costs} \cite{Tan2020,Azar2016}. Here, each supply node represents a knapsack, and each online demand node corresponds to an item with value $v_{t,j}$ and weight $w_{t,j}$ for knapsack $j$. The cost function $f$ acts as soft capacity constraints (or production cost of resources), modeling the cost of allocating capacity in each knapsack. When item values and weights are uniform across knapsacks (i.e., $v_{t,j} = v_t$, $w_{t,j} = w_t$), \OACC reduces to the setting studied in online knapsack with convex costs \cite{Tan2020}, which is also aligned with a special instance of the online convex packing problem studied in \citet{Azar2016}. 

    \item \textit{Online combinatorial auctions with production costs} \cite{Huang2019,Tan2020a,Blum2011}. In this setting, each supply node represents a type of resource, and the cost function $f$ is interpreted as the production cost of that resource. Each demand node $t \in [T]$ corresponds to an online agent, where the tuple $\{(v_{t,j}, w_{t,j})\}_{\forall j}$ characterizes the \textit{type} of agent $t$, representing its valuation for different bundles of resources. The \OACC model thus encompasses the fractional version of online combinatorial auctions with production costs \cite{Huang2019,Tan2020a,Blum2011}. As noted in \cite{Huang2019}, extending these fractional allocation results to the integral setting is relatively straightforward and falls outside the scope of this paper.

    \item \textit{Online matching with concave returns (\OMCR)} \cite{Devanur2012}. In this setting, the supply nodes represent offline agents (e.g., advertisers), and each demand node corresponds to an online item (e.g., an Ad impression). Agent $j$ submits a bid $w_{t,j}$ for item $t$, and their overall return is modeled by a non-negative increasing concave function $g_j\left(y_j\right)$, where $y_j = \sum_{t\in [T]} w_{t,j}x_{t,j}$. 
    This can be converted into our \OACC problem as follows. Let $\tilde{v}_{t,j} = g_j'(0)w_{t,j}$ be the reward of agent $j$ for item $t$, and let $\tilde{f}_j(y_j) = g_j'(0)y_j - g_j(y_j)$ be the cost function of agent $j$. Note that $\tilde{f}_j$ is convex since $g_j$ is concave. The objective of \OMCR can be rewritten as
    $ \sum_{t\in[T]} \sum_{j\in[M]} \tilde{v}_{t,j} x_{t,j} - \sum_{j\in[M]} \tilde{f}_j(y_j) $,
    which is exactly the objective of \OACC with the above definitions of $v_{t,j}$ and $f_j$. 
    This indicates that \OACC is a more general model than \OMCR. However, \OMCR focuses on a different class of costs, which is disjoint with the one we consider in \OACC. We provide a detailed comparison between the two settings in Appendix~\ref{Apx:CompConcaveReturn}.
\end{itemize}

In Section \ref{sec:extension}, we further show that, under specific problem structures, our results extend to (i) \OACC with \textit{supply-oblivious arrivals}, where each demand node is interested in the entire set of supply nodes (i.e., $ \mathcal{J}_t = [m]$ for all $ t \in [T]$), and (ii) \OACC with \textit{(hard) supply constraints}, where each supply node has a hard capacity limit $B_j$ and must additionally satisfy the constraint $ \sum_{t\in[T]} w_{t,j} x_{t,j} \leq B_j $.

\subsection{Assumptions} 
Given an arrival instance $ \mathcal{I} = \{ (\mathbf{v}_1, \mathbf{w}_1), (\mathbf{v}_2, \mathbf{w}_2), \cdots, (\mathbf{v}_T, \mathbf{w}_T) \}$, where $ \mathbf{v}_t = \{v_{t,j}\}_{\forall j} $ and $ \mathbf{w}_t = \{w_{t,j}\}_{\forall j}$, let $ \OPT(\mathcal{I}) $ denote the optimal objective value of Problem~\eqref{eq:oacc} in the offline setting, and let $ \ALG(\mathcal{I}) $ denote the objective value achieved by an online algorithm.  Following the standard competitive analysis framework \cite{Borodin2005}, we define the competitive ratio of an online algorithm as  
\begin{equation}\label{definition_of_alpha}
\alpha := \max_{\text{all possible }\mathcal{I}} \frac{\OPT(\mathcal{I})}{\ALG(\mathcal{I})}, 
\end{equation}
where $\alpha \geq 1$, with values closer to 1 indicating better performance.   In the online setting, the supplier does not have prior knowledge of $ \mathcal{I} $ (including the number of arrivals $ T $) and only knows the cost function $f$ upfront. Consequently, the competitive ratio $\alpha$ depends solely on $f$.

In this paper, we focus on a class of convex cost functions whose marginal production cost has bounded \textit{elasticity}. Formally, the elasticity of the marginal cost $f'$, denoted $Ef'$, is defined as 
\begin{equation*} 
  Ef'(y) := \frac{yf''(y)}{f'(y)} \approx \frac{\% \text{ change in } f'(y)}{\% \text{ change in } y}. 
\end{equation*}
In economics, $Ef'$ measures the responsiveness of the marginal production cost to changes in the allocation level $y$: it represents the percentage change in $f'(y)$ induced by a one-percent change in $y$. For instance, suppose allocated resource rises by 1\%. If the elasticity of the marginal cost is 2, then a 1\% increase in allocated resources results in approximately a 2\% increase in the marginal production cost at that level.

Throughout the paper, unless otherwise stated, we impose the following assumption on the cost function.
\begin{assumption}
\label{ass1}
The cost function $f:\mathbb{R}_+ \to \mathbb{R}_+$ is increasing, smooth, and strictly convex with $f(0)=f'(0)=0$. There exist constants $\sigma \geq \tau \geq 1$ such that for all $y>0$, $Ef'(y)$ is non-decreasing and
    \begin{align*}
        \tau - 1 \leq Ef'(y) \leq \sigma - 1.
    \end{align*}
\end{assumption}
We say $f$ is \textbf{$(\tau,\sigma)$-elastic} if it satisfies Assumption \ref{ass1}, and furthermore, $f$ is \textbf{strongly $(\tau,\sigma)$-elastic} if it satisfies the stricter higher-order condition: $\tau - 2 \leq Ef''(y) \leq \sigma - 2$.
Note that $Ef''(y)$ itself is not necessarily non-negative for $y \ge 0$. However, as shown in Lemma \ref{Lem:ElasticComp} in Appendix, a strongly $(\tau,\sigma)$-elastic function guaranties that $Ef'$ remains bounded within $[\tau-1, \sigma-1]$, and thus, $(\tau, \sigma)$-elastic.

We now clarify the assumptions imposed on a $(\tau,\sigma)$-elastic cost function. First, we assume smoothness of $f$ for analytical convenience. All results can be extended to piecewise-smooth cost functions with minor modifications. Throughout the paper, for ease of presentation, we further assume that $\sigma>1$. The degenerate case $\sigma=\tau=1$ is trivial, as simple greedy algorithms are already 1-competitive. In Assumption~\ref{ass1}, we assume that the cost function satisfies $f'(0)=0$. If this is not the case, we overload the notation by redefining $f(y) \leftarrow f(y)-f'(0)y$, which does not affect the analysis. We additionally assume that $f$ is analytic at the origin, meaning that $f(y)$ coincides with its Taylor expansion in a neighborhood of $y=0$. This technical assumption is required to analyze the local behavior of $f$ near the origin.

We next introduce a special subclass of $(\tau,\sigma)$-elastic costs, referred to as $(\tau,\sigma)$-polynomial cost functions.

\begin{definition}[$ (\tau, \sigma) $-Polynomial]\label{def:tau_sigma_convex}
A polynomial function $ f(y)  $ is $ (\tau, \sigma) $-polynomial if it has a min-order $ \tau $  and max-order $ \sigma $ where $\sigma\geq \tau \geq 1 $ are integers, and all coefficients are nonnegative, i.e., 
$$ 
f(y) = \sum_{k=\tau}^{\sigma} c_k y^k, \quad c_k\geq 0 \text{ and } c_{\tau},c_{\sigma}> 0. 
$$  
\end{definition}

By Definition \ref{def:tau_sigma_convex}, a $(\tau,\sigma)$-elastic polynomial is also strongly $(\tau,\sigma)$-elastic.\footnote{For the strong elasticity, see Proposition~\ref{Prop:ts_poly_nondecreasing} in  Appendix. Beyond nonnegative polynomials, specific examples satisfying the strong-elasticity conditions are $y^2\sqrt{1+y}$ and $y^3/\ln(1+y)$, which satisfy the strong-elasticity bounds associated with $(2,2.5)$ and $(2,3)$, respectively.}
\citet{Azar2016} also discussed two similar function classes: one with bounded-elasticity on $Ef$ and the other with non-negative coefficient polynomial functions. 
In later sections, we illustrate the limiting case where $\sigma = +\infty$ (e.g., the exponential function $f(y) = e^y$ can be interpreted as a $(\tau, \sigma)$-polynomial with $\sigma = +\infty$).

\subsection{Preliminaries}
\label{competitive_analysis}

Below, we introduce the key idea of our algorithm based on pseudo-utility maximization, followed by a discussion of an ODE characterization of $ \alpha $-competitive algorithm design.  

\subsubsection{Pseudo-utility maximization.} 
We propose Algorithm~\ref{alg:PRM_MKwC}, dubbed $ \PUM\phi $, as our overarching algorithmic framework. The key idea of $ \PUM\phi $ is as follows. Rather than greedily maximizing the immediate value or reward from each request, $ \PUM\phi $ adopts a \textit{pseudo-cost function} that dynamically adjusts the immediate cost of allocation, guiding the algorithm to allocate resources so as to maximize the resulting pseudo-utility. More specifically, for each resource node $j \in [m]$, we define a \textit{reserve function} $\phi(y): \mathbb{R}_+ \rightarrow \mathbb{R}_+$, which maps the current resource utilization to a potentially larger value.  The marginal cost of resource at utilization $y$ is then given by $f'(\phi(y))$. For example, when $\phi(y) = 2y$, the resource marginal cost is considered as $f'(2y)$, i.e., the marginal cost at twice the current utilization.  The reserve function $ \phi $ determines the extent to which the algorithm ``looks ahead'' and \textit{reserves} resources for potentially higher-value future arrivals.  Intuitively, for a given set of supply cost functions, a steep and rapidly increasing $ \phi $ encourages greater reservation of resources for future requests, whereas a flatter $ \phi $ induces greedier allocations that prioritize current requests.   At the extremes, when $f \equiv 0$, no reservation is needed since accepting all requests is optimal. Conversely, if $f$ grows rapidly, resources must be allocated with extreme caution to remain profitable under high marginal costs. Thus, the performance of $ \PUM\phi $ fundamentally hinges on the careful design of reserve functions for each supply node, and this design is intrinsically shaped by the marginal cost function $f'$. As shown in Algorithm \ref{alg:PRM_MKwC} below, the pseudo-cost term in Eq.~\eqref{eq_agent_t} is indeed computed by integrating the marginal supply cost $f'(\phi(\eta))$ from the current utilization level $y_{t-1,j}$ to the utilization level after allocation to demand node $t$, i.e., $y_{t-1,j} + w_{t,j} x_{t,j}$.

\begin{algorithm}[t]
\caption{Pseudo-utility maximization for \OACC ($ \PUM\phi $)}
\label{alg:PRM_MKwC}

\DontPrintSemicolon
\LinesNumbered

\textbf{Inputs:} $f$ and $\phi$.\;
\textbf{Initialize:} $y_{0,1}=\dots=y_{0,m}=0$.\;

\While{a new demand node $t$ arrives}{
  Solve the following problem for demand node $t$:
  \begin{align}\label{eq_agent_t}
    & \mathbf{x}_t^{\ALG} = \arg\max_{\mathbf{x}_t} \  \sum_{j=1}^m v_{t,j} x_{t,j} - \sum_{j=1}^m \int_{y_{t-1,j}}^{y_{t-1,j}+w_{t,j} x_{t,j}} f'(\phi(\eta)) d \eta.
  \end{align}
  Update the total utilization $y_{t,j} = y_{t-1,j} + w_{t,j} x_{t,j}^{\ALG}, \quad \forall j\in [m]$.\;
}
\end{algorithm}

The concept of pseudo-utility maximization has proven effective across a variety of online optimization problems, including threshold-based algorithms for online knapsack (e.g., \cite{Chakrabarty2008,Tan2020,Sun2020}), posted pricing mechanisms for online combinatorial auctions (e.g., \cite{Blum2011,Huang2019,Tan2020a}), and inventory-balancing algorithms that maximize pseudo-revenue in online personalized assortment problems (e.g., \cite{Golrezaei2014,Ma2020}). In these settings, pseudo-cost functions are typically introduced as problem-specific constructions tailored to particular models. A central contribution of this work is to show that pseudo-utility maximization admits a complete structural characterization in the \OACC\ framework. In particular, we prove that the entire optimal design space of reserve functions for $\PUM\phi$ can be characterized by solving a \textit{nonlinear eigenvalue problem} (NEP) arising from an associated dynamical system.

\subsubsection{ODE characterization of $ \alpha $-competitive reserve functions}

The characterization of $ \alpha $-competitive posted-price mechanisms (in the context of online combinatorial auctions) has been established by many prior works \citep{Huang2019, Tan2020, Tan2020a}. However, the characterization is not stated using reserve functions, and some are given in the form of integral inequalities.
Here we present an ODE characterization corresponding to reserve functions $\phi$ that provides the sufficient and necessary condition for the design of $ \phi $ such that $ \PUM\phi $ is $ \alpha $-competitive. It connects the design of online algorithms for \OACC to the solution of a class of NEPs.

\begin{lemma}\label{unlimited_supply_SufNec}
Let $\alpha \geq 1$, and let $f$ be an increasing, smooth, and strictly convex cost function. An $\alpha$-competitive online algorithm exists if and only if there exists a reserve function $\phi$ satisfying \eqref{eq:SufNec_phi}. In particular, any such $\phi$ ensures that $\PUM\phi$ is $\alpha$-competitive:
\begin{subequations}\label{eq:SufNec_phi}
\begin{align}
& \phi'(y) = \alpha \cdot \frac{f'(\phi(y)) - f'(y)}{\phi(y) \cdot f''(\phi(y))}, & & \forall y \geq 0, \label{eq:mainODE}\\
& \textsf{Monotonicity: } \phi'(y) \geq 0, & & \forall y \geq 0, \label{eq:SufNec_phi_mono}\\
& \textsf{Dominance: } \phi(y) \geq y, & & \forall y \geq 0, \label{eq:SufNec_phi_superlin}\\ 
& \textsf{Initial condition: } \phi(0) = 0. \label{eq:mainODE_IntialValue}	
\end{align}
\end{subequations}
\end{lemma}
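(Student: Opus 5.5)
The proof has two directions: sufficiency, where $\phi$ satisfying \eqref{eq:SufNec_phi} makes $\PUM\phi$ $\alpha$-competitive, and necessity, where any $\alpha$-competitive algorithm forces the existence of such a $\phi$. Both rest on a per-node potential argument and on an adversarial instance built from a single supply node.

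\textbf{Sufficiency (online primal--dual).} We dualize \eqref{eq:oacc} via Fenchel conjugates. Writing $f(y)=\max_p\{py-f^*(p)\}$ gives the weak-duality bound
\begin{equation*}
\OPT \le \sum_t u_t + \sum_j f^*(p_j),
\end{equation*}
where $u_t=\max_j (v_{t,j}-p_j w_{t,j})^+$ and $p_j$ is any price vector. We take the terminal prices $p_j=f'(\phi(y_j))$, with $y_j$ the final utilization, and $u_t$ equal to the pseudo-utility obtained by $\PUM\phi$ at step $t$. Monotonicity of $\phi$ makes $f'(\phi(\cdot))$ nondecreasing, so the step-$t$ pseudo-utility dominates $v_{t,j}-p_jw_{t,j}$ at the final prices; hence the dual is feasible. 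Since the pseudo-utilities sum to $\sum_t v_t x_t - \sum_j\int_0^{y_j} f'(\phi)$, it suffices to prove, per node, the inequality
\begin{equation*}
\int_0^{y}f'(\phi(\eta))\,d\eta - f(y) + \Bigl[\textstyle\sum v x-\int_0^y f'(\phi)\Bigr] + f^*(f'(\phi(y))) \le \alpha\,\ALG.
\end{equation*}
This reduces to the pointwise condition
\begin{equation*}
f^*(f'(\phi(y))) = \phi(y)f'(\phi(y))-f(\phi(y)) \le (\alpha-1)\Bigl(\int_0^y f'(\phi)-f(y)\Bigr)+\dots
\end{equation*}
The cleanest route is to aim for
\begin{equation*}
\frac{d}{dy}\,f^*(f'(\phi(y))) = \phi\, f''(\phi)\,\phi' \le \alpha\bigl(f'(\phi(y))-f'(y)\bigr),
\end{equation*}
with the right-hand side being $\alpha$ times the derivative of $\int_0^y f'(\phi)-f(y)$, the ALG utility contributed by the pseudo-cost surplus. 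The ODE \eqref{eq:mainODE} makes this hold with equality. Integrating from $0$ using $\phi(0)=0$ gives the per-node inequality, because ALG's total objective is at least $\sum_j(\int_0^{y_j}f'(\phi)-f(y_j))$; the pseudo-utilities are nonnegative. Dominance $\phi\ge y$ is exactly what makes this surplus nonnegative.

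\textbf{Necessity.} We use a one-node continuum instance in which requests of unit weight arrive with increasing values $v$ until the adversary stops. Any deterministic $\alpha$-competitive algorithm induces an allocation $y(v)$ that is nondecreasing in $v$. We define $\phi$ through $f'(\phi(y))=v$ at the allocation level $y=y(v)$ (taking an inverse, and monotone envelopes if needed). If the sequence stops at value $v$, then $\OPT=f^*(v)$ while
\begin{equation*}
\ALG=\int_0^{y} v(\eta)\,d\eta - f(y).
\end{equation*}
Competitiveness for every stopping point yields the integral inequality
\begin{equation*}
f^*(f'(\phi(y)))\le\alpha\Bigl(\int_0^y f'(\phi)-f(y)\Bigr),
\end{equation*}
that is, $\Phi(y)\le \alpha\Psi(y)$ with $\Phi$ the conjugate term and $\Psi$ the surplus. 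Dominance follows because an algorithm allocating beyond $f'^{-1}(v)$ loses value on that portion and can be truncated without harm. To turn the inequality into an ODE, we run a comparison/reparametrization argument. Let $\hat\phi$ solve \eqref{eq:mainODE} using the ratio $\alpha$. Because the inequality holds at the level of integrals, the pair $(\Phi,\Psi)$ can be replaced by a solution of the equality system, and monotonicity makes the equality system solvable. A Grönwall-type comparison then shows that a solution of the ODE exists staying between $y$ and the original $\phi$; this is the extremal, least aggressive solution. That solution satisfies all four conditions.

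\textbf{Main obstacle.} The hardest step is this last one, converting the integral inequality into an exact ODE solution. The ODE is singular at $y=0$, since both the numerator and $\phi f''(\phi)$ vanish there, so standard existence results do not apply directly. I would first try a Peano-type argument on $[\epsilon,\infty)$, sandwiching between the lower barrier $y$ and the upper barrier given by the algorithm-induced $\phi$, and then let $\epsilon\to0$.
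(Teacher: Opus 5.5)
\textbf{Sufficiency and the first half of necessity are fine.} Your sufficiency argument is sound and is essentially the paper's online primal--dual proof. It uses terminal prices $\varPhi(y_{T,j})=f'(\phi(y_{T,j}))$, the identity $\frac{d}{dy}f^*(\varPhi(y))=\phi f''(\phi)\phi'$, and integration from $\phi(0)=0$. Using the full pseudo-utility as $u_t$ instead of the paper's KKT multiplier $\mu_t$ also works, because $u_t\ge\mu_t\ge v_{t,j}-w_{t,j}\varPhi(y_{T,j})$, $u_t\ge 0$ and $\alpha\ge 1$. One slip: your first displayed per-node inequality carries a spurious surplus term on the left. The target is $\sum_t u_t+\sum_j f^*(\varPhi(y_{T,j}))\le\alpha\ALG$, which is what your final route actually proves. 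The first half of necessity, reaching $f^*(\varPhi(y))\le\alpha\bigl(\int_0^y\varPhi-f(y)\bigr)$, also matches the paper. You do need to append many requests at the stopping value to justify $\OPT=f^*(v)$.

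\textbf{The gap is converting that integral inequality into an ODE solution.} An integral inequality is not a differential inequality. So the algorithm-induced $\phi_A$ is neither a sub- nor a super-solution of \eqref{eq:mainODE}, and no Gr\"onwall or comparison step applies. The diagonal is not a forward barrier either: at $\phi=y$ the ODE gives $\phi'=0<1$, so trajectories exit through it. In fact your claimed sandwich is false.

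Take $f(y)=y^2/2$ and $\alpha=5$, with $k_\pm=(5\pm\sqrt5)/2$. The system $\dot\phi=\alpha(\phi-y)$, $\dot y=\phi$ is exactly linear with eigen-directions $\phi=k_\pm y$. A point with $y<\phi<k_-y$ has a negative $k_+$-component, so its trajectory crosses the diagonal and then decreases.

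Now let $\phi_A=k_+y$ up to level $1$, then constant $1$ until $y=0.8$, then rise steeply to the line $k_-y$ and follow it. Set $H(y)=\int_0^y\varPhi_A-f(y)-\frac1\alpha f^*(\varPhi_A(y))$. Then $H=0$ on the first piece. It gains $\int_{1/k_+}^{0.8}(1-y)\,dy\approx 0.24$ on the plateau and loses only about $0.03$ on the steep rise. It stays constant along $k_-y$. Hence $H\ge 0$ everywhere, and $\PUM\phi_A$ is $5$-competitive by your own sufficiency argument. Yet $\phi_A(0.79)=1<k_-\cdot 0.79$, so no ODE solution satisfying all four conditions lies below $\phi_A$.

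\textbf{The fix is to take the extremal from the other side, as the paper does.} The set of utilization functions $\psi$ satisfying $p\psi(p)-\int_0^p\psi-f(\psi(p))\ge\frac1\alpha f^*(p)$ is closed under pointwise infimum. This is because $\Psi\le\psi$ only raises $-\int_0^p\Psi$ and lowers $f(\Psi(p))$, while $p\Psi(p)$ can be made arbitrarily close to $p\psi(p)$. The infimum $\Psi$ must satisfy the inequality with equality, otherwise it could be lowered further. Differentiating the equality and inverting ($\varPhi=\Psi^{-1}$, $\phi=f'^{-1}\circ\varPhi$) gives \eqref{eq:mainODE} with $\phi(0)=0$. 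This is the most aggressive reserve function, and it lies above $\phi_A$, not below it.
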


For completeness, the full proof of Lemma~\ref{unlimited_supply_SufNec} is provided in Appendix~\ref{Apx:Unlimited}. The lemma characterizes $\alpha$-competitive reserve functions as solutions to the ODE~\eqref{eq:mainODE} subject to three necessary structural conditions:
(i)~\emph{monotonicity}~\eqref{eq:SufNec_phi_mono}, which aligns with the irrevocability of online allocations;
(ii)~\emph{the initial condition}~\eqref{eq:mainODE_IntialValue}, which ensures the algorithm can accept low-value requests when demand is sparse; and
(iii)~\emph{dominance}~\eqref{eq:SufNec_phi_superlin}, which enforces $\phi(y) \geq y$ to prevent underpricing and ensure robustness over an infinite horizon.

To illustrate the derivation of \eqref{eq:mainODE}, consider the single-resource case ($m=1$). Composing the marginal cost function $f'$ with the reserve function $\phi$ defines the \emph{pricing function} $\varPhi(y) = f'(\phi(y))$, which acts as an admission threshold: a request is accepted only if its marginal value exceeds the unit price $\varPhi(y)$ at current utilization $y$. Consequently, $ \PUM\phi $ is $\alpha$-competitive, if the associated pricing function satisfies  
\begin{equation} \label{ineq:PricingFunction}
\int_{0}^{y} \varPhi(u) \mathrm{d}u - f(y) \geq \frac{1}{\alpha} f^{*}\big(\varPhi(y)\big), \quad \forall y \geq 0,
\end{equation} 
where $f^*(p) = \max_{z\geq 0} p z - f(z)$ is the convex conjugate of $f$.

In Eq.~\eqref{ineq:PricingFunction}, the left-hand side lower-bounds the performance of $ \PUM\phi $ upon allocating $y$ units, since allocating $y$ units yields at least $\int_{0}^{y} \varPhi(u)\,\mathrm{d}u$ in value. The right-hand side corresponds to the net profit of allocating all $y$ units at price $\varPhi(y)$, which upper-bounds the offline optimum when $\varPhi(y)$ is the highest marginal revenue achievable for any allocation of up to $y$ units. 
In the proof, we establish sufficiency by showing that the differential inequality $\phi'(y) \leq \alpha \cdot \frac{f'(\phi(y)) - f'(y)}{\phi(y) f''(\phi(y))}$, together with the initial condition $\phi(0) = 0$, implies \eqref{ineq:PricingFunction} and thus guarantees $\alpha$-competitiveness. Conversely, equality in the ODE \eqref{eq:mainODE} is necessary because a worst-case adversary can saturate \eqref{ineq:PricingFunction} across all $y \geq 0$.

\section{Optimal Design Space} \label{Optimal Online Algorithms via A Dynamical Systems Approach}

This section presents our main results. We begin by defining and characterizing the optimal design space for \OACC\ in Sections~\ref{sec_connecting_OACC_NEP}–\ref{sec_characterizing_solution_space}. Building on this characterization, we then discuss two applications: the construction of linear and mixed dynamic–static pricing schemes in Section~\ref{sec:Solution Space}, and the development of universally competitive algorithms for handling unknown or stochastic cost functions in Section~\ref{sec:universal_design_space}.

\subsection{Connecting \OACC to Dynamical Systems}
\label{sec_connecting_OACC_NEP}

From a technical standpoint, Eq. \eqref{eq:mainODE} is not an ODE that can be solved directly, primarily due to its \textit{nonlinearity} and the \textit{singularity} of the initial condition $\phi(0)=0$. Specifically, we express Eq.~\eqref{eq:mainODE} in the form of a \textit{nonlinear eigenvalue problem} (NEP):  $ \phi^{\prime}(y) = \alpha \cdot F(\phi, y) $, where 
\begin{equation} \label{Eq:NEPFunction}
    F(\phi,y) = \frac{f'(\phi(y)) - f'(y)}{\phi(y)\cdot f''(\phi(y))} = \frac{1}{Ef'(\phi)}\cdot \left(1 - \frac{f'(y)}{f'(\phi)}\right).
\end{equation}
This connects our problem to the broader study of NEPs with singular initial conditions \cite{Hirsch2013} and one can see the elasticity naturally appears.
Furthermore, we can observe that the function $F(\phi,y)$ fails to satisfy the Lipschitz condition in $\phi$. Consequently, the classical existence and uniqueness theorem for ODEs (see \cite{Walter1998}, p.~62) does not apply.
An alternative way to view this is to rewrite Eq.~\eqref{eq:SufNec_phi} as a two-dimensional dynamical system (see Appendix~\ref{Apx:Dynamical Systems and First Order ODE} for details):  
\begin{equation}
  \left\{\begin{aligned}
    & \frac{\mathrm{d}\phi}{\mathrm{d}t} = \alpha \left(f'\left(\phi\right)  - f'\left(y\right)\right),\\
    & \frac{\mathrm{d}y}{\mathrm{d}t} = \phi \cdot f''(\phi),
  \end{aligned}\right. \label{eq:MainDynSystem}
\end{equation}
where $y = y(t)$ and $\phi = \phi(t)$ are parameterized by $t \in \mathbb{R}$. The solutions to Eq.~\eqref{eq:SufNec_phi} correspond to the trajectories of this dynamical system that pass through the origin and satisfy the constraints Eqs.~\eqref{eq:SufNec_phi_mono}–\eqref{eq:mainODE_IntialValue}. That is, we need to find trajectories confined in the cone $\{(y,\phi) \mid y\geq 0, \phi\geq y \}$.

For dynamical systems, their trajectories can fill the whole plane. So the difficulty here is to identify whether there are trajectories satisfying our constraints.
Moreover, assuming we can find two trajectories living in the cone, by the nature of dynamical systems, the trajectories ``sandwiched'' by the two existing trajectories must be feasible as well (such trajectories are infinitely many). This inspires us to define a design space, which is constituted by all feasible solutions.

More formally, for any $ \alpha \geq 1 $ and any cost function $ f $ satisfying Assumption \ref{ass1}, we define the \textbf{design space} of $ \phi $, denoted by $ \mathcal{S}\left(\alpha, f\right) $, as the family of all feasible reserve functions satisfying Eq.~\eqref{eq:SufNec_phi}. By a slight abuse of notation, we also use $\mathcal{S}(\alpha, f)$ to refer to the planar region spanned by this design space whenever the distinction is clear from the context, i.e., 
$$
\mathcal{S}(\alpha, f) = \bigcup_{\phi \text{ satisfies \eqref{eq:SufNec_phi}}} \operatorname{graph}(\phi),
$$
where $\operatorname{graph}(\phi) = \{(y, \phi(y)) \mid y \geq 0\} \subset \mathbb{R}_+^2$.
Our goal is to investigate whether the design space $\mathcal{S}(\alpha, f)$ is well-defined and non-empty for various choices of $\alpha \geq 1$ and $(\tau, \sigma)$-elastic functions $f$, and, if so, to rigorously characterize its structure.
This introduces an additional layer of complexity, connecting our problem to the broader study of NEPs with singular initial conditions \cite{Hirsch2013}.

\subsection{Characterizing the Design Space $ \mathcal{S}\left(\alpha, f\right) $: Existence, Boundaries, and Geometric Interpretation}
\label{sec_characterizing_solution_space}

\noindent\textbf{Notations.} Before presenting our main results on the NEP in Lemma~\ref{unlimited_supply_SufNec}, we first introduce several key notations for the subsequent analysis. 
Define
\begin{equation} \label{eq_notations_alpha_Delta}
    \ataus = \tau^{\frac{\tau}{\tau-1}}, \qquad  
    \asigmas = \sigma^{\frac{\sigma}{\sigma-1}}. 
\end{equation}
Both $ \ataus $ and $ \asigmas $ are strictly increasing in $\sigma \geq \tau > 1$. Notably, $\lim_{x \to 1} x^{\frac{x}{x-1}} = e$, and thus $ \asigmas \geq \ataus  > e $ for all $ \sigma \geq \tau > 1 $.

\subsubsection{Existence of non-empty design space $\mathcal{S}\left(\alpha, f\right)$ for $ \alpha \geq \asigmas$} 
We begin by formally stating the lower bound of $ \alpha $ in the theorem below.
\label{sec:existence_design_space}

\begin{theorem}[Lower Bound]\label{theorem_lower_bound}
For any $ (\tau, \sigma) $-elastic $ f $, no online algorithm can achieve a competitive ratio of  $ \asigmas -\epsilon $ for any $ \epsilon > 0 $.
\end{theorem}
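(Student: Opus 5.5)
By the ``only if'' direction of Lemma~\ref{unlimited_supply_SufNec}, it suffices to show the following: if $\alpha<\asigmas$, then no reserve function $\phi$ satisfies \eqref{eq:mainODE}, monotonicity, dominance and $\phi(0)=0$ at the same time. The adversarial instance is already built into that lemma, so the argument becomes purely an analysis of the ODE. I will read the elasticity bound $\sigma-1$ as tight, i.e.\ $\sup_y Ef'(y)=\sigma-1$. Since $Ef'$ is non-decreasing, this means $Ef'(y)\uparrow\sigma-1$ as $y\to\infty$, so $f$ behaves like $y^\sigma$ at large scales. The plan is to show that, at large $y$, every solution violates dominance $\phi(y)\ge y$.

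\textbf{Step 1: the power case.} Take $f(y)=y^\sigma$ and write $u(y)=\phi(y)/y\ge 1$. Then \eqref{eq:mainODE} becomes
\[
y\,u'(y)=h_\alpha(u)-u,\qquad h_\alpha(u):=\frac{\alpha}{\sigma-1}\bigl(1-u^{1-\sigma}\bigr).
\]
Maximizing $(1-u^{1-\sigma})/u$ over $u\ge1$ gives the maximizer $u^{\sigma-1}=\sigma$. Consequently $h_\alpha(u)\ge u$ has a root in $[1,\infty)$ if and only if $\alpha\ge \sigma^{\sigma/(\sigma-1)}=\asigmas$. When $\alpha<\asigmas$, there is a $\delta>0$ with $h_\alpha(u)-u\le-\delta u$ for all $u\ge1$. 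This yields $\frac{d}{d\ln y}\ln u\le-\delta$, so $u(y)\le u(y_0)(y/y_0)^{-\delta}$, and $u$ eventually drops below $1$, contradicting dominance.

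\textbf{Step 2: general $(\tau,\sigma)$-elastic $f$.} I would compare the general ODE to the power case using the elasticity representation
\[
\frac{f'(\phi)}{f'(y)}=\exp\Bigl(\int_y^\phi \frac{Ef'(z)}{z}\,dz\Bigr)\le u^{Ef'(\phi)},
\]
which holds because $Ef'$ is non-decreasing and $\phi\ge y$. Substituting into \eqref{Eq:NEPFunction} gives
\[
\phi'\le \alpha\,\frac{1-u^{-e}}{e},\qquad e=Ef'(\phi).
\]
The map $e\mapsto(1-u^{-e})/e$ is decreasing in $e$. Fix $\eta>0$ small enough that $\alpha<(\sigma-\eta)^{(\sigma-\eta)/(\sigma-\eta-1)}$, which is possible because this quantity is continuous and increasing in its exponent. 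Choose $y_0$ with $Ef'(y)\ge\sigma-1-\eta$ for all $y\ge y_0$; since $\phi\ge y$, this also gives $Ef'(\phi)\ge\sigma-1-\eta$. For $y\ge y_0$ we then obtain $y u'\le h^{(\sigma-\eta)}_\alpha(u)-u$, and the argument of Step~1 with $\sigma$ replaced by $\sigma-\eta$ forces $u<1$ eventually. Letting $\epsilon\to0$ corresponds to $\eta\to0$, which proves the theorem.

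\textbf{Main obstacle.} The delicate point is the comparison step. The inequality $f'(\phi)/f'(y)\le u^{Ef'(\phi)}$ must be oriented correctly, and it relies both on monotonicity of $Ef'$ and on dominance holding up to the violation time, which is exactly what the contradiction argument assumes. A second subtlety is the tightness of $\sigma$: if $Ef'$ stays strictly below $\sigma-1$, the bound really concerns the worst $f$ in the class. In that case I would state the result for $f$ with $\sup Ef'=\sigma-1$, for instance $f(y)=y^\sigma$ or any $(\tau,\sigma)$-polynomial. For such polynomials $Ef'\to\sigma-1$ holds automatically, so Step~2 applies directly.
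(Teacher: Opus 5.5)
Your proof is correct, and it reaches the bound by a different mechanism than the paper's.

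\textbf{What is shared.} Both arguments work at the level of the ODE characterization in Lemma~\ref{unlimited_supply_SufNec}, and both need tightness, $Ef'(y)\to\sigma-1$. The paper invokes this in Appendix~\ref{Apx:LowerBound} as ``(tightly) $(\tau,\sigma)$-elastic''. Your caveat is justified: $f(y)=y^2$ is $(2,3)$-elastic yet admits a $4$-competitive algorithm, and $4<3^{3/2}$.

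\textbf{What the paper does.} It performs two asymptotic analyses. As $y\to0^+$ it Taylor-expands $f'$, using analyticity and strong elasticity, to show that $\phi'(0)$ must be a root of $\CP(\alpha,\tau)$; this gives $\alpha\ge\ataus$. As $y\to\infty$ it takes $\Delta=\lim_{y\to\infty}\phi'(y)$ to exist, passes to the limit in the ODE, and concludes that $\Delta$ is a positive root of $\CP(\alpha,\sigma)$; this gives $\alpha\ge\asigmas$.

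\textbf{What you do differently.} You drop the left analysis, which is redundant for this theorem because $\ataus\le\asigmas$. You then replace the limit-passing by the differential inequality $y\,u'\le-\delta u$ for $u=\phi/y$. This is built from the one-sided bound $f'(\phi)/f'(y)\le u^{Ef'(\phi)}$ and the monotonicity of $e\mapsto(1-u^{-e})/e$.

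\textbf{What each route buys.} Yours buys rigor and generality:
\begin{itemize}
\item it never needs the existence of $\lim\phi'$ or the boundedness of $\phi/y$, which the paper argues only heuristically;
\item it uses only Assumption~\ref{ass1} plus tightness;
\item the explicit decay $u(y)\le u(y_0)(y/y_0)^{-\delta}$ shows that dominance on an unbounded horizon is by itself the obstruction, since neither $\phi(0)=0$ nor monotonicity enters. This is consistent with the bounded-horizon solutions for $\alpha<\asigmas$ behind Theorem~\ref{Thm:ExistenceSol_Limited}.
\end{itemize}
The paper's route, in exchange, identifies the asymptotic slope of any feasible reserve function as a root of $\CP(\alpha,\sigma)$ (e.g.\ $\Dstar$ at $\alpha=\asigmas$). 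It reuses this structural information in Proposition~\ref{Prop:BoundResp}.

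One small point: your closing remark about letting $\epsilon\to0$ is unnecessary. For each fixed $\alpha=\asigmas-\epsilon$, one admissible $\eta$ suffices.
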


The proof of Theorem \ref{theorem_lower_bound} is given in Appendix \ref{Apx:LowerBound}. Among existing results, \citet{Tan2020a} and \citet{Huang2019} obtained the same lower bound, but only for power cost functions (i.e., single term polynomials) with $ \tau =  \sigma $; \citet{Azar2016} obtained an $ O(\frac{\tau}{\tau-1} \sigma) $-competitive algorithm, which is only order-optimal and is strictly worse than $ \asigmas $ for small $ \tau $ and $ \sigma $.  We remark that this lower bound result holds for $(\tau,\sigma)$-elastic costs, where $\tau$ and $\sigma$ can be real numbers. As one of our main results, Theorem \ref{Thm:UpperBoundAch} below establishes that $ \asigmas $ is an \textit{exact} matching lower bound, achievable by $ \PUM\phi$ through infinitely many reserve functions.

\begin{theorem}[Existence of Infinitely-Many $ \asigmas$-Competitive Solutions] \label{Thm:UpperBoundAch}
For any strongly $ (\tau, \sigma) $-elastic $ f $,  there exist infinitely many solutions to the NEP in Eq. \eqref{eq:SufNec_phi} for $\alpha \geq \asigmas$.
Furthermore, 
for any integer $k\geq 0$ and $\alpha>1$, define the \textbf{characteristic polynomial} $\CP(\alpha,k)$ as
\begin{equation}
z^{k} - \frac{\alpha}{k-1}\left(z^{k-1} - 1\right) , \quad \forall z\in \mathbb{R}. \label{eq:CharacPoly}
\end{equation}
Then these infinitely many solutions are tightly bounded
by linear functions $\chi^{(\tau)}_{-} y$ and $\chi^{(\tau)}_{+} y$, where $ \chi^{(\tau)}_{+} \geq \chi^{(\tau)}_{-} > 1 $ are the two positive roots to $\CP(\alpha,\tau)$.
\end{theorem}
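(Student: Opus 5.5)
The plan is to pass to the ratio variable $r(y) := \phi(y)/y$, treat the NEP \eqref{eq:mainODE} as a singular ODE in $r$, and build trapping regions from linear comparison functions. I first run the power-cost computation, since it explains where $\CP(\alpha,k)$ comes from. For $f(y)=y^k$ and $\phi=\chi y$ one gets
\[
F(\chi y,y)=\frac{\chi^{k-1}-1}{(k-1)\chi^{k-1}} .
\]
So $\phi=\chi y$ solves \eqref{eq:mainODE} exactly when $\chi^k=\frac{\alpha}{k-1}(\chi^{k-1}-1)$, i.e.\ when $\chi$ is a root of $\CP(\alpha,k)$. Next I would note that $\frac{(k-1)z^k}{z^{k-1}-1}$ is minimised on $z>1$ at $z=k^{1/(k-1)}$, where it equals $k^{k/(k-1)}$. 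Hence $\CP(\alpha,\tau)$ has two roots $\chi^{(\tau)}_{+}\geq\chi^{(\tau)}_{-}>1$ whenever $\alpha\geq\ataus$, and in particular whenever $\alpha\geq\asigmas$. The same holds for $\CP(\alpha,\sigma)$ when $\alpha\geq\asigmas$; this is exactly where the threshold $\asigmas$ will enter.

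For general $f$, \eqref{eq:mainODE} becomes
\[
y\,r'(y)=G(r,y):=\alpha F(ry,y)-r, \qquad
F(ry,y)=\frac{1}{Ef'(ry)}\Bigl(1-\frac{f'(y)}{f'(ry)}\Bigr).
\]
The key steps, in order, are as follows.

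First, a \emph{local analysis at the singular origin}. By analyticity, $f(y)=c_\tau y^\tau(1+O(y))$, so $G(r,y)\to G_\tau(r):=\alpha\frac{r^{\tau-1}-1}{(\tau-1)r^{\tau-1}}-r$ as $y\to0$. The zeros of $G_\tau$ are exactly $\chi^{(\tau)}_{\pm}$. Any solution with $\phi(0)=0$ must therefore have $r(0^+)\in[\chi^{(\tau)}_{-},\chi^{(\tau)}_{+}]$, since otherwise $yr'$ keeps a fixed sign and $r$ blows up or drops below $1$ like $\log y$. Conversely, a Briot--Bouquet or Ważewski-type argument at the node produces trajectories leaving the origin with each admissible slope. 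This gives the tight linear bounds near $0$, and the \emph{infinitely many} part comes from the continuum of such trajectories.

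Second, a \emph{global trapping argument}. Strong $(\tau,\sigma)$-elasticity (the bounds on $Ef''$, together with Lemma~\ref{Lem:ElasticComp}) is what I would use to show that $y\mapsto G(r,y)$ is monotone for fixed $r$. The effective exponent moves from $\tau$ toward $\sigma$ as $y$ grows, so $G(r,\cdot)$ is sandwiched between $G_\tau$ and $G_\sigma$. Since $\alpha\geq\asigmas$, both $G_\tau$ and $G_\sigma$ are nonnegative on a common band of $r$ around $\sigma^{1/(\sigma-1)}$. In that band $r$ is non-decreasing, so $r\geq 1$ is preserved (dominance), and $\phi'=r+yr'\geq0$ gives monotonicity. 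The graphs $\chi^{(\tau)}_{\pm}y$ then act as the lower and upper fences bounding the whole family.

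Finally, \emph{verification and conclusion}. I would check \eqref{eq:SufNec_phi_mono}--\eqref{eq:mainODE_IntialValue} along the trapped trajectories and invoke Lemma~\ref{unlimited_supply_SufNec}.

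The main obstacle I expect is the second step: proving that the flow never exits the admissible cone for large $y$, uniformly along the transition from $\tau$-behaviour to $\sigma$-behaviour. I would first try to establish the monotonicity of $G(r,\cdot)$ directly, by differentiating $f'(y)/f'(ry)$ and $Ef'(ry)$ and using $\tau-2\leq Ef''\leq\sigma-2$. If that fails, I would fall back on explicit sub- and supersolutions of the form $\phi=\chi y$ with $\chi$ chosen between the $\sigma$-roots, verifying the differential inequalities pointwise via the elasticity bounds.
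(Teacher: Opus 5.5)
The gap is in where the \emph{infinitely many} feasible solutions come from. Your band is $[\Delta^{(\sigma)}_{-},\Delta^{(\sigma)}_{+}]$. Since $G_\sigma<G_\tau$ on $r>1$ when $\tau<\sigma$, this band sits strictly inside $(\chi^{(\tau)}_{-},\chi^{(\tau)}_{+})$. Every trajectory from your Step~1 therefore leaves the origin \emph{outside} the band, and the band argument says nothing about whether it ever enters.

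For the $\chi^{(\tau)}_{+}$-trajectories this is harmless. You have $G(\Delta^{(\sigma)}_{\pm},y)\geq G_\sigma(\Delta^{(\sigma)}_{\pm})=0$, so the lines $\Delta^{(\sigma)}_{\pm}y$ are lower fences and these trajectories stay above $\Delta^{(\sigma)}_{+}y$; this is exactly the paper's existence argument. But that family can be a single orbit: for $\tau=2$ the origin is a hyperbolic node and $\chi^{(\tau)}_{+}$ is its strong eigendirection. So infinitely many solutions must come from the $\chi^{(\tau)}_{-}$-family. That family starts below the band and contains infeasible members: the trajectory through $(Y,Y)$ leaves the origin with slope $\chi^{(\tau)}_{-}$ (Lemma~\ref{Lem:LowerBoundforFeasibleSol}) and violates dominance right after $Y$. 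So what you flag as the main obstacle, that the flow never exits the cone, is false for the continuum you build, and a selection step is missing.

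The paper's selection step is backward shooting. For each $Y>0$, take the solution through $(Y,\Delta^{(\sigma)}_{-}Y)$. Going backward it can cross neither the $\chi^{(\tau)}_{+}$-trajectory (which lies above $\Delta^{(\sigma)}_{+}y$) nor the trajectory through $(Y,Y)$ (which stays above the diagonal on $(0,Y)$). It is therefore squeezed into the origin while staying dominant. Going forward, the lower fence $\Delta^{(\sigma)}_{-}y$ keeps it feasible, and distinct $Y$ give distinct solutions. In your variables, the forward invariance of $\{r\leq\chi^{(\tau)}_{-}\}$ (next paragraph) makes $\{r>\chi^{(\tau)}_{-}\}$ backward invariant, which gives dominance on $(0,Y]$ directly.

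Second, the fences for the tight bounds point the wrong way. Since $\chi^{(\tau)}_{\pm}$ are roots of $\CP(\alpha,\tau)$, the sandwich gives $G(\chi^{(\tau)}_{\pm},y)\leq G_\tau(\chi^{(\tau)}_{\pm})=0$. So \emph{both} lines $\chi^{(\tau)}_{\pm}y$ are upper fences: solutions cross them only downward. That yields $\phi\leq\chi^{(\tau)}_{+}y$, but $\chi^{(\tau)}_{-}y$ keeps nothing above it.

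The global bound $\phi\geq\chi^{(\tau)}_{-}y$ for feasible $\phi$ needs an escape argument instead. If $r(y_0)<\chi^{(\tau)}_{-}$, then $r$ stays below $\chi^{(\tau)}_{-}$, and $yr'\leq G_\tau(r)\leq -c<0$ on $[1,r(y_0)]$. Hence $r$ hits $1$ at finite $y$ and dominance fails. The paper reaches the same contradiction by comparing with a power-law equation at some $\gamma>\alpha$ and using the asymptotic slope $\Delta^{(\sigma)}_{\pm}>k$.

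Finally, the hypotheses do not give monotonicity of $G(r,\cdot)$ in $y$, since bounds on $Ef''$ do not make it monotone. You do not need it: the sandwich $G_\sigma\leq G\leq G_\tau$ follows directly from Lemma~\ref{Lem:ElasticComp}, as in Corollary~\ref{Cor:ElasticFBound}.
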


We first briefly discuss the property of the characteristic polynomial defined in Eq. \eqref{eq:CharacPoly}, as it plays a key role in shaping the structure of the design space (for more details, see Lemma~\ref{Lem:PowerEquation} in the Appendix). Define $z_{\star} = k^{1/(k-1)}$ and $\alpha_{\star}(k) = z_{\star}^{k}$. For any $\alpha \geq \alpha_{\star}$, the polynomial $\CP(\alpha,k)$ admits two positive roots, denoted by $z_1$ and $z_2$, such that $ z_1 \geq z_{\star} \geq z_2 > 1$, where $z_1 = z_2 = z_{\star} $ (i.e., a double root) if and only if $\alpha = \alpha_{\star} $. Note that Eq.~\eqref{eq_notations_alpha_Delta} are the $\alpha_{\star}$'s corresponding to $\CP(\alpha,\tau)$ and $\CP(\alpha,\sigma)$, respectively.

The proof of Theorem \ref{Thm:UpperBoundAch} is fairly complicated and we place it in Appendix \ref{Apx:UpperBound}. 
Here we summarize some key insights and implications derived from the proof. 
Theorem \ref{Thm:UpperBoundAch} indicates that for any $ \alpha \geq \asigmas $, 
by the nature of dynamical systems, the infinitely many solutions to Eq.~\eqref{eq:SufNec_phi} never intersect, and all collectively constitute the design space $\mathcal{S}\left(\alpha, f\right)$.
On the other hand, by Theorem~\ref{theorem_lower_bound}, the design space $\mathcal{S}\left(\alpha, f\right)$ is empty for all $ \alpha < \asigmas $. 
The \textbf{optimal design space}, $\mathcal{S}(\asigmas, f)$, therefore comprises infinitely many optimal reserve functions for which $ \PUM\phi $ achieves the optimal $ \asigmas$-competitiveness.

Furthermore, for any $ \alpha \geq \asigmas $, the infinitely many solutions to Eq.~\eqref{eq:SufNec_phi} all begin with an initial slope of either $\chi^{(\tau)}_{+}$ or $\chi^{(\tau)}_{-}$, which are the two positive roots of the characteristic polynomial $\CP(\alpha,\tau)$. We refer to a reserve function as more (respectively, less) \textit{aggressive} if it has a higher (respectively, lower) slope or rate of increase. Theorem~\ref{Thm:UpperBoundAch} then implies that no $\asigmas$-competitive reserve function can be more aggressive than $\chi^{(\tau)}_{+} y$ or less aggressive than $\chi^{(\tau)}_{-} y$ at the onset of allocation.

\subsubsection{Characterizing and computing the boundaries of the design space}
Although Theorem~\ref{Thm:UpperBoundAch} shows that $\mathcal{S}(\alpha, f)$ is contained within the cone defined by the linear functions $\chi^{(\tau)}_{-} y$ and $\chi^{(\tau)}_{+} y$, this enclosure may be loose as $y \to \infty$. We therefore seek a precise characterization of the exact upper and lower boundaries of the design space for all $y \ge 0$, which we formalize through the following definition of the upper- and lower-bound solutions.

\begin{definition} 
\label{Def:UpperLowerBoundSol}
  For any $\alpha\geq \asigmas$, the upper-bound solution $\phi_{\mathrm{ub}}$ is defined as the upper bound of the solutions with initial slope of $\chi^{(\tau)}_{+}$, i.e.,  
  $$
  \phi_{\mathrm{ub}}(y) = \sup \big\{\phi(y) \mid \phi \text{ is a sol. to Eq. \eqref{eq:SufNec_phi}} \text{ and } \phi^{\prime}(0)=\chi^{(\tau)}_{+} \big \}. 
  $$
  In comparison, the lower-bound solution $\phi_{\mathrm{lb}}$ is defined as the lower bound of the solutions with initial slope of $\chi^{(\tau)}_{-}$, i.e., 
  $$ 
  \phi_{\mathrm{lb}}(y) = \inf \big\{\phi(y) \mid \phi \text{ is a sol. to Eq. \eqref{eq:SufNec_phi}} \text{ and } \phi^{\prime}(0)=\chi^{(\tau)}_{-} \big\}.  
  $$
\end{definition} 
 
We remark that the subscripts ``ub" and ``lb'' are abbreviations for ``upper-bound solution" and ``lower-bound solution,'' respectively. 
Note that $\phi_{\mathrm{ub}}$ and $\phi_{\mathrm{lb}}$ characterize the upper and lower bounds of the design space $ \mathcal{S}(\alpha, f)$, and thus are naturally dependent on the target competitive ratio $\alpha$ and the cost function $f$. For any $\alpha\geq \asigmas$, $\phi_{\mathrm{ub}}$ and $\phi_{\mathrm{lb}}$ are well defined, and moreover, by uniform convergence, they are solutions to Eq. \eqref{eq:SufNec_phi} as well (see Appendix \ref{Apx:Discussion on UpperLowerBoundSol} and Appendix \ref{Apx:UpperBound} for more details). Therefore, the upper- and lower-bound solutions $\phi_{\mathrm{ub}}$ and $\phi_{\mathrm{lb}}$ are $\alpha$-competitive and have an initial slope of $\chi^{(\tau)}_{\pm}$, respectively. Proposition~\ref{Prop:BoundResp} further establishes that each boundary solution tightly lies between two linear functions whose slopes are given by the roots of the characteristic polynomial in Eq.~\eqref{eq:CharacPoly}.

\begin{proposition}[\textsc{Tight Linear Bounds for $\phi_{\mathrm{ub}}$ and $\phi_{\mathrm{lb}}$}]
  \label{Prop:BoundResp}
  For any $\alpha\geq \asigmas$, we denote the two roots of $\CP(\alpha,\sigma)$ as $\Delta^{(\sigma)}_{+}$ and $\Delta^{(\tau)}_{-}$. Then, the upper-bound solution $\phi_{\mathrm{ub}}(y)$ is tightly bounded by the linear functions $\chi^{(\tau)}_{+} y$ and $\Delta^{(\sigma)}_{+} y$ for $y\geq 0$; the lower-bound solution $\phi_{\mathrm{lb}}(y)$ is tightly bounded by the linear functions $\chi^{(\tau)}_{-} y$ and $\Delta^{(\sigma)}_{-} y$ for $y\geq 0$.
\end{proposition}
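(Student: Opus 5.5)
Working in the ratio variable $r(y)=\phi(y)/y$, the plan is a comparison argument for the nonlinear eigenvalue problem: show that the lines $z y$ with $z$ a root of $\CP(\alpha,k)$ for $k\in\{\tau,\sigma\}$ act as barriers (sub- or super-solutions) for \eqref{eq:mainODE}. For $\phi=zy$ the ODE becomes $z = \alpha F(zy,y)$. For a power cost $f(y)=y^k$ we get
\[
F(zy,y)=\frac{1}{k-1}\bigl(1-z^{1-k}\bigr),
\]
so $z=\alpha F$ is exactly $z^{k}=\frac{\alpha}{k-1}(z^{k-1}-1)$, i.e. $\CP(\alpha,k)=0$. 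For a general strongly $(\tau,\sigma)$-elastic $f$ I will show that, along a ray $\phi=zy$ with $z>1$, the quantity $\alpha F(zy,y)-z$ is sandwiched between the $\tau$- and $\sigma$-versions:
\[
\frac{\alpha}{\sigma-1}\bigl(1-z^{1-\sigma}\bigr)-z \;\le\; \alpha F(zy,y)-z \;\le\; \frac{\alpha}{\tau-1}\bigl(1-z^{1-\tau}\bigr)-z .
\]
Two facts drive this. First, $1/Ef'(\phi)\in[1/(\sigma-1),1/(\tau-1)]$. Second, $f'(y)/f'(zy)$ lies between $z^{1-\sigma}$ and $z^{1-\tau}$, which follows by integrating $Ef'$ over $[y,zy]$. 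The catch is that these two bounds pull in opposite directions, since a large elasticity shrinks $1/Ef'$ but also shrinks $f'(y)/f'(zy)$. So I will instead treat $F$ as a monotone function of the local elasticity profile and use the strong elasticity condition on $Ef''$ (via Lemma~\ref{Lem:ElasticComp}) to conclude that the combined expression is monotone in the exponent. I expect this monotone comparison to be the main technical lemma.

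Next, the asymptotics at the two ends. Because $Ef'$ is non-decreasing and bounded, it has limits at $0$ and at $\infty$. By analyticity at $0$, the lowest-order term dominates near the origin, so $Ef'\to\tau-1$. Hence near $y=0$ the ODE behaves like the $\tau$-power case, and the linearization forces $\phi'(0)\in\{\chi^{(\tau)}_{\pm}\}$; this is consistent with Theorem~\ref{Thm:UpperBoundAch}. As $y\to\infty$, the elasticity approaches (at most) $\sigma-1$, which makes the $\sigma$-roots $\Delta^{(\sigma)}_\pm$ the relevant asymptotic slopes.

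The barrier argument for $\phi_{\mathrm{ub}}$ runs as follows.
\begin{itemize}
\item It starts at slope $\chi^{(\tau)}_+$, and $\chi^{(\tau)}_+\le\Delta^{(\sigma)}_+$ because the larger root of $\CP$ increases with $k$ at fixed $\alpha$.
\item Write $r=\phi/y$, so $y r' = \alpha F(ry,y)-r$. On the line $r=\Delta^{(\sigma)}_+$ the lower comparison gives $\alpha F-r\ge 0$ and the upper one gives a sign that prevents the trajectory from crossing.
\item Using the sign pattern of $\CP$ (negative between its roots, positive outside), I show that $r$ can neither exit above $\Delta^{(\sigma)}_+$ nor fall below $\chi^{(\tau)}_+$. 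Exiting above would make $r$ blow up and violate being a solution, since trajectories above the top root diverge. Falling below would contradict maximality of $\phi_{\mathrm{ub}}$, because the line $\chi^{(\tau)}_+ y$ is then a subsolution.
\end{itemize}

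The lower-bound solution is handled symmetrically between $\chi^{(\tau)}_- y$ and $\Delta^{(\sigma)}_- y$, with $\Delta^{(\sigma)}_-\le\chi^{(\tau)}_-$. The minimality of $\phi_{\mathrm{lb}}$ excludes crossing below $\Delta^{(\sigma)}_-$, and the dominance constraint $\phi\ge y$ together with $\Delta^{(\sigma)}_->1$ keeps it in the cone.

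Tightness means that each bound is attained in a limit. At $0$ this holds by the initial slopes. At $\infty$ I will show $r(y)\to\Delta^{(\sigma)}_\pm$ whenever $Ef'\to\sigma-1$, using that the $\sigma$-roots are the fixed points of the asymptotic autonomous equation $y r'=\alpha F_\sigma(r)-r$, with the upper root repelling and the lower root attracting in the appropriate direction. Equality of the bounds for pure powers gives tightness in the worst case over the function class. The main obstacle I anticipate is the comparison lemma, i.e. showing that $F(zy,y)$ is monotone in the elasticity profile despite the two competing effects noted above.
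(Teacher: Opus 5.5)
Your toolkit (the ratio $r=\phi/y$, comparison with power costs, extremality of $\phi_{\mathrm{lb}},\phi_{\mathrm{ub}}$, asymptotics at infinity) is the right one. However, the barrier picture is built on an inverted ordering of the roots.

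At fixed $\alpha$ the larger root of $\CP(\alpha,k)$ is strictly \emph{decreasing} in $k$ (last part of Lemma~\ref{Lem:PowerEquation}), and the smaller one is increasing. To see this, let $g_k(x)=(k-1)x^k/(x^{k-1}-1)$. Then $g_\sigma>g_\tau$ on $(1,\infty)$, so $[\Delta^{(\sigma)}_{-},\Delta^{(\sigma)}_{+}]=\{g_\sigma\le\alpha\}$ sits inside $[\chi^{(\tau)}_{-},\chi^{(\tau)}_{+}]=\{g_\tau\le\alpha\}$. That is, $1<\chi^{(\tau)}_{-}<\Delta^{(\sigma)}_{-}\le\Delta^{(\sigma)}_{+}<\chi^{(\tau)}_{+}$ for $\tau<\sigma$. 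For example, with $\tau=2$, $\sigma=3$, $\alpha=3\sqrt3$ one gets $\chi^{(2)}_{\pm}\approx 3.85,\ 1.35$, while $\Delta^{(3)}_{\pm}=\sqrt3$.

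The statement therefore asserts $\Delta^{(\sigma)}_{+}y\le\phi_{\mathrm{ub}}(y)\le\chi^{(\tau)}_{+}y$ and $\chi^{(\tau)}_{-}y\le\phi_{\mathrm{lb}}(y)\le\Delta^{(\sigma)}_{-}y$. The ratio of $\phi_{\mathrm{ub}}$ moves from $\chi^{(\tau)}_{+}$ \emph{down} toward $\Delta^{(\sigma)}_{+}$. Consequently, your claims that $r$ ``cannot fall below $\chi^{(\tau)}_{+}$'' and that trajectories above the top root blow up are false.

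Your sandwich is exactly Corollary~\ref{Cor:ElasticFBound}, so it is not the obstacle you anticipate. It gives $yr'\le G_\tau(r)$, where $G_k(r):=\frac{\alpha}{k-1}(1-r^{1-k})-r=-\CP(\alpha,k)(r)/r^{k-1}$. This $G_k$ is positive strictly between the roots of $\CP(\alpha,k)$ and negative on $(1,\infty)$ outside them. Hence $r$ can never exceed $\chi^{(\tau)}_{+}$, and anything above it is pulled down. Your own computation $\alpha F-r\ge 0$ on $r=\Delta^{(\sigma)}_{+}$ shows that line is a floor, not a ceiling. Likewise, the limiting equation $\mathrm{d}r/\mathrm{d}(\ln y)=G_\sigma(r)$ makes $\Delta^{(\sigma)}_{+}$ attracting and $\Delta^{(\sigma)}_{-}$ repelling (for $\alpha>\asigmas$), the reverse of what you wrote.

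With the correct order, the sandwich gives $yr'\le 0$ on the lines $r=\chi^{(\tau)}_{\pm}$ and $yr'\ge 0$ on $r=\Delta^{(\sigma)}_{\pm}$. For $\phi_{\mathrm{ub}}$ both bounds are then genuine barriers; this is the comparison-theorem argument the paper uses for existence and for the tight upper bound.

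For $\phi_{\mathrm{lb}}$, however, neither line is a barrier in the needed direction, and your two remaining ingredients must swap roles.
\begin{itemize}
\item The ceiling $\Delta^{(\sigma)}_{-}y$ comes from \emph{minimality}, as in the paper's proof. If $\phi_{\mathrm{lb}}(y_0)=\Delta^{(\sigma)}_{-}y_0$, the solution through $(2y_0,2\Delta^{(\sigma)}_{-}y_0)$ stays below $\Delta^{(\sigma)}_{-}y$ on $(0,2y_0)$. Hence it lies below $\phi_{\mathrm{lb}}$ everywhere, contradicting the infimum.
\item The floor $\chi^{(\tau)}_{-}y$ comes from dominance. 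The region $\{r<\chi^{(\tau)}_{-}\}$ is forward invariant, and there $yr'\le G_\tau(r)<0$ drives $r$ down to $1$, violating $\phi\ge y$.
\end{itemize}
Your proposed use of minimality to ``exclude crossing below $\Delta^{(\sigma)}_{-}$'' cannot work: $\phi_{\mathrm{lb}}$ starts with slope $\chi^{(\tau)}_{-}<\Delta^{(\sigma)}_{-}$ and lies below that line near $0$.

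Similarly, tightness of $\Delta^{(\sigma)}_{-}$ at infinity uses repulsion, not attraction. Confined to $r\le\Delta^{(\sigma)}_{-}$, any dip to $\Delta^{(\sigma)}_{-}-\epsilon$ at large $y$ is amplified until dominance fails, which forces $r\to\Delta^{(\sigma)}_{-}$.

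Finally, tightness in Definition~\ref{Def:TightBound} is a statement about a fixed $f$, so the coincidence of the bounds for pure powers plays no role.
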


The proof of Proposition \ref{Prop:BoundResp} is provided in Appendix \ref{Apx:BoundResp}.
Recall that $\chi^{(\tau)}_{+} > \Delta^{(\sigma)}_{+}$ denote the larger roots of $\CP(\alpha,\tau)$ and $\CP(\alpha,\sigma)$, respectively. As $\alpha \rightarrow +\infty$, both $\chi^{(\tau)}_{+}$ and $\Delta^{(\sigma)}_{+}$ diverge to $+\infty$, while $\chi^{(\tau)}_{-}$ and $\Delta^{(\sigma)}_{-}$ converge to $1$. Consequently, as $\alpha \rightarrow +\infty$, the design space $\mathcal{S}\left(\alpha,f\right)$ spanned by $\phi_{\mathrm{ub}}$ and $\phi_{\mathrm{lb}}$ expands and eventually encompasses the entire region between $\phi(y) = y$ and the $\phi$-axis. This indicates that even a trivial yet valid reservation design would still yield a competitive ratio $\alpha \rightarrow +\infty$.

It is worth noting that computing the boundary solutions numerically is nontrivial, primarily due to the singularity of the nonlinear eigenvalue problem in Eq. \eqref{eq:SufNec_phi}. In Appendix \ref{sec_numerical_method}, we analyze this difficulty, propose a stable numerical scheme for approximating $\phi_{\mathrm{ub}}$ and $\phi_{\mathrm{lb}}$, and provide error estimates for the approximation.

\begin{figure*}
  \centering
  \begin{subfigure}{0.27\textwidth}
    \includegraphics[width=1\linewidth]{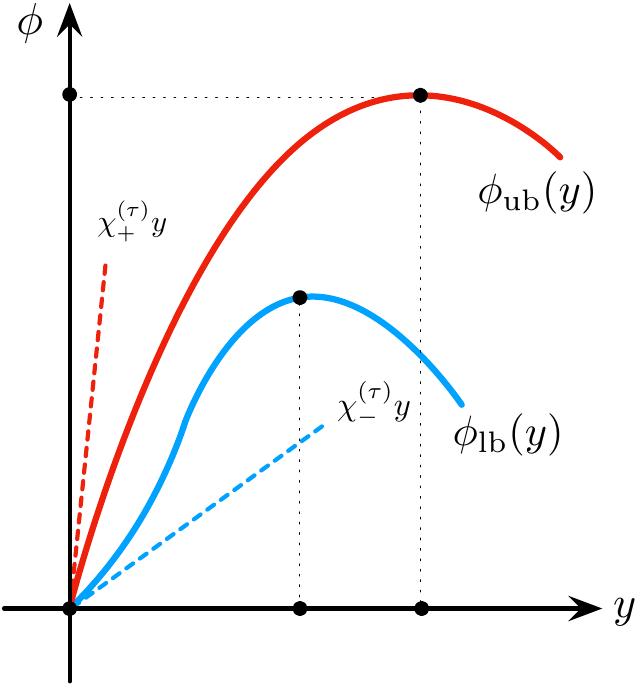}
    \caption{$\ataus<\alpha <\asigmas$}
    \label{fig:UpperBoundSol_Lowalpha}
  \end{subfigure}
  \qquad
  \begin{subfigure}{0.27\textwidth}
    \includegraphics[width=1\linewidth]{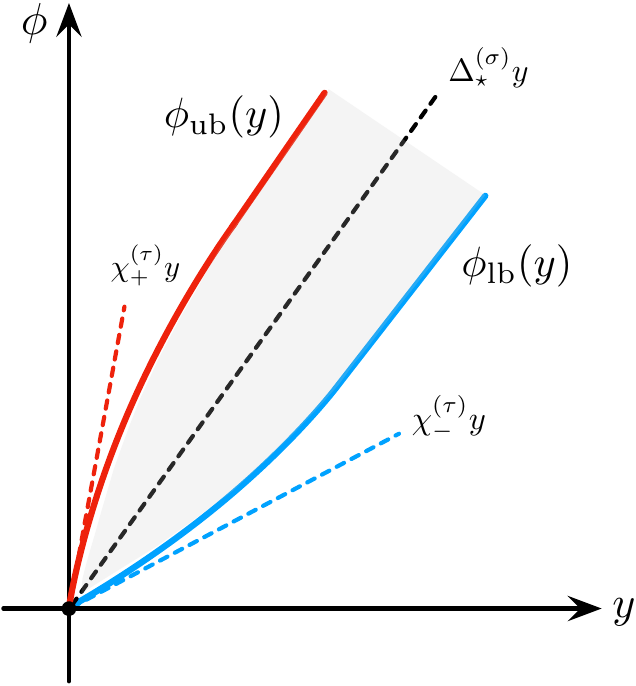}
    \caption{$\alpha =\asigmas$}
    \label{fig:UpperBoundSol_Exactalpha}
  \end{subfigure}
  \qquad
  \begin{subfigure}{0.27\textwidth}
    \includegraphics[width=1\linewidth]{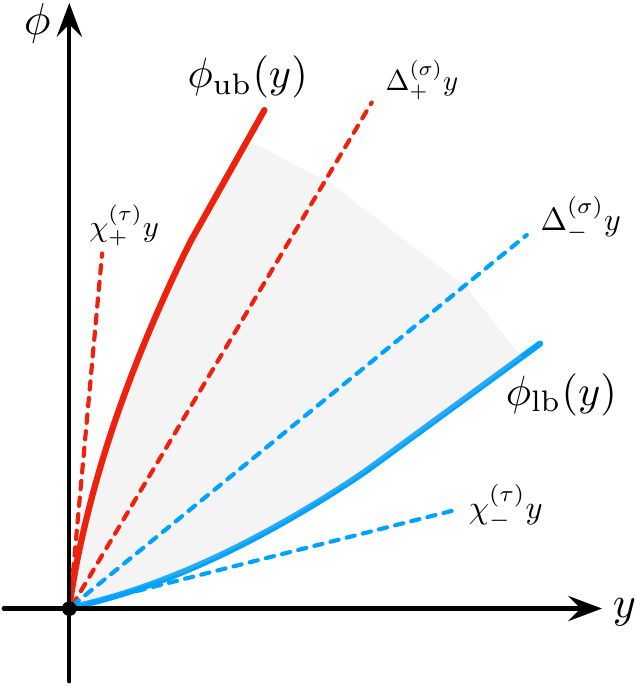}
    \caption{$\alpha >\asigmas$}
    \label{fig:UpperBoundSol_Highalpha}
  \end{subfigure}  
  \caption{Illustration of the design space $ \mathcal{S}(\alpha, f) $ as well as its upper/lower bounds, $ \phi_{\mathrm{ub}} $ and $ \phi_{\mathrm{lb}} $, for $\alpha\in [\ataus, +\infty)$. Infinitely many solutions will fill the design space (i.e., the gray area) if $\alpha\in [\asigmas, +\infty)$, where $ \phi_{\mathrm{ub}} $ and $ \phi_{\mathrm{lb}} $ correspond to the most and least aggressive $\alpha$-competitive reserve functions, respectively.}
\end{figure*}

\subsubsection{Geometric interpretation of the design space}
Figure~\ref{fig:UpperBoundSol_Lowalpha} illustrates that for $\ataus \leq \alpha < \asigmas$, no valid solution exists (noting that $\chi^{(\tau)}_{\pm} y$ coincides when $\alpha = \ataus$, denoted by $\chistar$). In this case, $\phi_{\mathrm{ub}}$ and $\phi_{\mathrm{lb}}$ are feasible within some bounded intervals and violate the constraints Eqs.~\eqref{eq:SufNec_phi_mono}-\eqref{eq:SufNec_phi_superlin} afterwards. 
Figures~\ref{fig:UpperBoundSol_Exactalpha} and \ref{fig:UpperBoundSol_Highalpha} show how $\phi_{\mathrm{ub}}$ and $\phi_{\mathrm{lb}}$ are tightly bounded, and how the design space expands as $\alpha$ increases. This widening illustrates the growing flexibility in designing the reserve function as the target competitive ratio becomes less stringent. Furthermore, increasing $\tau$ in Figure~\ref{fig:UpperBoundSol_Exactalpha} narrows the angle formed by $\chi^{(\tau)}_{\pm} y$, thereby shrinking the optimal design space $ \mathcal{S}(\asigmas, f)$. In fact, when $\tau = \sigma$, the optimal design space $ \mathcal{S}(\asigmas, f) $ collapses into a single line given by $\Dstar y$ (i.e., $ \phi_{\mathrm{ub}}(y) = \phi_{\mathrm{lb}}(y) = \Dstar y $), as illustrated in Figure \ref{fig:UpperBoundSol_Exactalpha}. This occurs because, under $\tau = \sigma$ and $ \alpha = \asigmas $, the two characteristic polynomials $\CP(\alpha,\tau)$ and $\CP(\alpha,\sigma)$ coincide and share the same double root, namely $ \chi^{(\tau)}_{+} = \chi^{(\tau)}_{-} = \chistar = \Delta^{(\sigma)}_{+} = \Delta^{(\sigma)}_{-} = \Dstar $.

We also remark that for any {$(2,\sigma)$-polynomial cost} function, one can employ a linearized system to examine the local behavior of Eq. \eqref{eq:SufNec_phi}, thereby gaining deeper insight into why no feasible solution exists when the competitive ratio $\alpha < \ataus$, as illustrated in Figure \ref{fig:UpperBoundSol_Lowalpha}. The detailed analysis is deferred to Appendix \ref{Linearized System}.

\section{Applications of the Optimal Design Space}
\label{sec:application}

In this section, we introduce three parallel applications of the characterized design space, including constructing new designs, handling uncertainty of cost functions, and applications in more structured setting.

\subsection{Constructing More Reserve Functions based on the Optimal Design Space} \label{sec:Solution Space}

A particularly appealing feature of characterizing the optimal design space $ \mathcal{S}(\asigmas, f) $ is that it enables the construction of infinitely many $\asigmas$-competitive reserve functions beyond the direct solutions to Eq. \eqref{eq:SufNec_phi}. We illustrate this through two representative examples: \textit{linear designs} and \textit{mix-dynamic–static designs}. The former recovers existing optimal designs from prior studies, while the latter offers meaningful insights into the interplay between dynamic and static pricing in economics. These two constructions are grounded in the conditions characterized by the proposition below.

\begin{proposition} \label{Prop:SolSpace}
    For $\alpha \geq \asigmas$, a continuous increasing function $\hat{\phi}(y)$ defined on $[0,+\infty)$ is an $\alpha$-competitive reserve function if it satisfies the following two conditions:
    \begin{itemize}
        \item $\phi_{\mathrm{lb}}(y) \leq \hat{\phi}(y) \leq \phi_{\mathrm{ub}}(y)$ for any $y\geq 0$,
        \item $\hat{\phi}'(y) \leq \alpha \cdot F(\hat{\phi}, y)$ \textbf{almost everywhere} on $[0,+\infty)$.
    \end{itemize}
\end{proposition}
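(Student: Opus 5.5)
The plan is to reduce everything to the sufficiency part of Lemma~\ref{unlimited_supply_SufNec}. As the discussion after that lemma notes, sufficiency only needs three things: the differential inequality $\hat\phi'(y)\le \alpha F(\hat\phi,y)$, the initial condition $\hat\phi(0)=0$, and the structural conditions (monotonicity and dominance). Together these imply the pricing inequality \eqref{ineq:PricingFunction} for $\hat\varPhi(y)=f'(\hat\phi(y))$, and hence $\alpha$-competitiveness of $\PUM\hat\phi$. So I will check the structural conditions for $\hat\phi$ and then redo the sufficiency step under the weaker hypothesis that the inequality holds only almost everywhere.

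\emph{Structural conditions.} Monotonicity is part of the hypothesis. For dominance, the sandwich gives $\hat\phi(y)\ge\phi_{\mathrm{lb}}(y)$, and $\phi_{\mathrm{lb}}(y)\ge y$ because $\phi_{\mathrm{lb}}$ itself solves \eqref{eq:SufNec_phi}. The initial condition follows from $0=\phi_{\mathrm{lb}}(0)\le\hat\phi(0)\le\phi_{\mathrm{ub}}(0)=0$. The sandwich also yields $\hat\phi(y)\le \chi^{(\tau)}_+ y$ near $0$, by Theorem~\ref{Thm:UpperBoundAch} and Proposition~\ref{Prop:BoundResp}. This linear control at the origin handles the singularity: $\hat\phi(y)/y$ stays in $[\chi^{(\tau)}_-,\chi^{(\tau)}_+]$, and all quantities are integrable near $0$.

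\emph{Pricing inequality.} Define $G(y)=\int_0^y f'(\hat\phi(u))\,du - f(y) - \tfrac1\alpha f^*\bigl(f'(\hat\phi(y))\bigr)$; the goal is $G\ge 0$. Since $f'$ is strictly increasing, $f^*(f'(\phi))=\phi f'(\phi)-f(\phi)$, so $\tfrac{d}{d\phi}f^*(f'(\phi))=\phi f''(\phi)$. Then, wherever $\hat\phi$ is differentiable,
\[
G'(y)=f'(\hat\phi(y))-f'(y)-\tfrac1\alpha\,\hat\phi(y)f''(\hat\phi(y))\,\hat\phi'(y)\ \ge 0
\]
almost everywhere, by the second hypothesis. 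Also $G(0)=-\tfrac1\alpha f^*(f'(0))=0$.

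\emph{Main obstacle.} Passing from $G'\ge0$ a.e.\ to $G\ge G(0)=0$ needs absolute continuity, and a merely continuous increasing $\hat\phi$ could carry a Cantor-type singular part. I would split $G=A-B$ with $A(y)=\int_0^y f'(\hat\phi)-f(y)$, which is $C^1$, and $B(y)=\tfrac1\alpha h(\hat\phi(y))$, where $h(\phi)=\phi f'(\phi)-f(\phi)$ is smooth and increasing. Because $\hat\phi$ is monotone, Lebesgue's decomposition gives $\hat\phi(y)-\hat\phi(0)\ge\int_0^y\hat\phi'$, with the singular part only increasing $\hat\phi$. Composing with the increasing $h$, $B(y)-B(0)\ge\int_0^yB'$, so a singular part can only make $G$ smaller, which goes the wrong way. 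I therefore expect the intended reading to be that $\hat\phi$ is absolutely continuous, e.g.\ piecewise $C^1$ as in the linear and mixed dynamic--static constructions, and I would state this assumption explicitly. Under it, $G(y)=\int_0^yG'\ge0$ holds directly.

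Granting this, the rest of the proof of Lemma~\ref{unlimited_supply_SufNec} goes through verbatim: the per-node pricing inequality combined with the online primal--dual accounting over all $m$ nodes shows that $\PUM\hat\phi$ is $\alpha$-competitive.
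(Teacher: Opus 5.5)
Your route is the paper's own. The paper checks the structural conditions, notes that $\hat\phi'$ exists a.e.\ because $\hat\phi$ is monotone, and defers to the sufficiency proof of Lemma~\ref{unlimited_supply_SufNec}. Your reduction to the per-node inequality $G\ge 0$ is also sound: the dual bound telescopes to $\sum_t\mu_t+\sum_j f^*(\lambda_{T,j})$.

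The obstacle you isolate is genuine, and it is a gap in the paper, not in your argument. That sufficiency proof writes $f^*(\varPhi(y_{t,j}))-f^*(\varPhi(y_{t-1,j}))=\int_{y_{t-1,j}}^{y_{t,j}} f^{*\prime}(\varPhi(u))\,\varPhi'(u)\,\mathrm{d}u$, which is exactly the step requiring absolute continuity. In fact, under the paper's gloss ``on a subset of measure zero,'' the proposition is false. Here is a counterexample.
\begin{itemize}
\item Take $f(y)=y^2$, so that $\asigmas=4$ and $F(\phi,y)=1-y/\phi$. Fix $\alpha>4$ and let $\Delta_-<\Delta_+$ be the roots of $z^2-\alpha z+\alpha$.
\item Both $\Delta_\pm y$ solve Eq.~\eqref{eq:SufNec_phi} with initial slopes $\chi^{(\tau)}_\pm$. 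Hence $\phi_{\mathrm{lb}}\le\Delta_- y$ and $\phi_{\mathrm{ub}}\ge\Delta_+ y$.
\item Let $c$ be the Cantor function rescaled to $[1,1+\delta]$, so $c=0$ on $[0,1]$, $c=1$ on $[1+\delta,\infty)$, and $c'=0$ a.e. Set $\hat\phi=\Delta_- y+\epsilon c$ with $0<\epsilon\le\Delta_+-\Delta_-$.
\item Then $\hat\phi$ is continuous, strictly increasing, and lies between $\Delta_- y$ and $\Delta_+ y$. Almost everywhere, $\hat\phi'=\Delta_-=\alpha(1-1/\Delta_-)\le\alpha F(\hat\phi,y)$, so both hypotheses hold.
\item Yet, with $b=1+\delta$,
\[
G(b)=2\int_0^b\hat\phi(u)\,\mathrm{d}u-b^2-\frac{\hat\phi(b)^2}{\alpha}=\epsilon\delta-\frac{2\Delta_- b\,\epsilon+\epsilon^2}{\alpha}<0\quad\text{whenever }\delta<2/\alpha .
\]
\item The increasing-values instance from the necessity proof then gives $\OPT/\ALG>\alpha$. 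That instance has values rising finely to $f'(\hat\phi(b))$, followed by many copies at that value. Since $\hat\varPhi$ is continuous and strictly increasing, $\ALG\to\int_0^b\hat\varPhi-f(b)$ and $\OPT=f^*(\hat\varPhi(b))$.
\end{itemize}

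So an extra hypothesis is necessary, and yours works. Absolute continuity holds for the piecewise $C^1$ linear and mix-dynamic--static designs, and then $G(y)=\int_0^y G'\ge 0$.

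A slightly weaker repair matches the paper's other reading, with exceptions only at countably many points. If $G$ is continuous and differentiable with $G'\ge 0$ outside a countable set $E$, then $G$ is nondecreasing, with no absolute continuity needed. To see this, suppose $G+\eta y$ decreased from $c$ to $d$ for some $\eta>0$. Pick a level $\lambda$ strictly between the two values with $\lambda\notin(G+\eta y)(E)$, which is possible because that image is countable. The last point in $[c,d]$ where $G+\eta y\ge\lambda$ is then a non-exceptional point with nonpositive derivative, which contradicts $G'+\eta>0$ there.
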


The term ``almost everywhere” in Proposition~\ref{Prop:SolSpace} means that the statement holds except at \textit{countably many} points (or, more generally, on a subset of measure zero). Since $\hat{\phi}(y)$ is continuous and increasing, its derivative $\hat{\phi}'$ exists almost everywhere on $[0, +\infty)$ (see Corollary~3.7 in \citet{Stein2005}). In this case, $\hat{\phi}' \leq \alpha \cdot F(\hat{\phi}, y)$ holds and satisfies the constraints in Eq. \eqref{eq:SufNec_phi}. By the proof of Lemma \ref{unlimited_supply_SufNec}, this condition is sufficient for the solution to be $\alpha$-competitive.

\subsubsection{Linear designs} 
A direct consequence of Proposition~\ref{Prop:SolSpace} is that certain linear designs lie within the optimal design space.

\begin{corollary}[Linear Designs] \label{Cor:UniqueLin}
    For any strongly $ (\tau, \sigma) $-elastic $ f $, the following are true:
    \begin{itemize}
        \item If $ \alpha = \asigmas $, there exists a unique linear reserve function, $ \phi(y) = \Dstar y $, such that $ \PUM\phi $ is $\asigmas$-competitive.

        \item If $ \alpha > \asigmas $, there exists infinitely-many linear reserve functions, bounded above by $ \Delta_{+}^{(\sigma)} y $ and below by $ \Delta_{-}^{(\sigma)} y $, such that $ \PUM\phi $ is $\alpha$-competitive.
    \end{itemize}
\end{corollary}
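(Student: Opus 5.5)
For a linear candidate $\hat\phi(y)=cy$ with $c>1$, I plan to check the conditions of Proposition~\ref{Prop:SolSpace} directly, or equivalently the differential inequality from the proof of Lemma~\ref{unlimited_supply_SufNec}. Monotonicity, dominance and $\hat\phi(0)=0$ hold trivially. What remains is the condition
\[
c \le \alpha\, F(cy,y)=\alpha\,\frac{f'(cy)-f'(y)}{cy\,f''(cy)},\qquad \forall y>0 .
\]
The approach is to reduce this to a one-variable inequality in $c$ using the elasticity bounds. Write $f'(y)/f'(cy)=\exp\bigl(-\int_y^{cy} Ef'(s)\,ds/s\bigr)$. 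Let $E:=Ef'(cy)\in[\tau-1,\sigma-1]$. Since $Ef'$ is non-decreasing, $Ef'(s)\le E$ on $[y,cy]$, so $f'(y)/f'(cy)\ge c^{-E}$. Hence
\[
F(cy,y)=\frac{1}{E}\Bigl(1-\frac{f'(y)}{f'(cy)}\Bigr)\le \frac{1-c^{-E}}{E}.
\]
This is an upper bound, pointing the wrong way for sufficiency. I therefore want a lower bound on $F(cy,y)$ instead.

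For the lower bound, use $Ef'(s)\ge \tau-1$, or better compare with the worst exponent. Define $g(E):=(1-c^{-E})/E$, which is decreasing in $E>0$. Strong elasticity, via Lemma~\ref{Lem:ElasticComp}, should let me control the integral $\int_y^{cy}Ef'(s)\,ds/s$ against $Ef'(cy)$ in such a way that $F(cy,y)\ge g(\sigma-1)=\frac{1-c^{1-\sigma}}{\sigma-1}$. The intuition is that the power function $y^\sigma$ is extremal. I expect this step to be the main obstacle: it needs $\bigl(1-e^{-I}\bigr)/E\ge g(\sigma-1)$, where $I=\int_y^{cy}Ef'(s)\,ds/s\le E\ln c$. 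I would try monotonicity of $h(E,I)$ along the admissible region, using the bound $Ef''\le\sigma-2$ to show that $I$ cannot be too small relative to $E$ when $E<\sigma-1$.

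Granting this, the condition $c\le \alpha\,\frac{1-c^{1-\sigma}}{\sigma-1}$ rearranges to $c^{\sigma}-\frac{\alpha}{\sigma-1}(c^{\sigma-1}-1)\le 0$, i.e.\ $\CP(\alpha,\sigma)(c)\le0$. By Lemma~\ref{Lem:PowerEquation} this holds exactly for $c\in[\Delta^{(\sigma)}_-,\Delta^{(\sigma)}_+]$. When $\alpha=\asigmas$ this interval is the single point $\Dstar$, and when $\alpha>\asigmas$ it is a nondegenerate interval, giving infinitely many linear designs. The sandwich condition $\phi_{\mathrm{lb}}\le cy\le\phi_{\mathrm{ub}}$ need not be checked separately, since the differential inequality plus the constraints already suffice by Lemma~\ref{unlimited_supply_SufNec}. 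It is nonetheless consistent with Proposition~\ref{Prop:BoundResp}.

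Uniqueness at $\alpha=\asigmas$, together with the claimed upper and lower boundaries for $\alpha>\asigmas$, comes from necessity. Suppose $cy$ is $\alpha$-competitive with $c\notin[\Delta^{(\sigma)}_-,\Delta^{(\sigma)}_+]$. For large $y$, $Ef'$ is non-decreasing and bounded, so it converges to some limit, and it must be $\sigma-1$ in the tight case. Then $F(cy,y)\to g(\sigma-1)$, so the inequality $c\le\alpha F$ fails for large $y$. Alternatively, Proposition~\ref{Prop:BoundResp} gives $\phi_{\mathrm{ub}}\le\Delta^{(\sigma)}_+y$ asymptotically and $\phi_{\mathrm{lb}}\ge\Delta^{(\sigma)}_-y$, so a line outside the interval exits the design space. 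I would invoke Proposition~\ref{Prop:BoundResp} here to avoid assuming that $Ef'\to\sigma-1$.
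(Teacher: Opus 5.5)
Your sufficiency half follows the paper. The paper's remark that solutions cross $\Delta y$ once and then stay beyond it just says that $cy$ satisfies $c\le\alpha F(cy,y)$, via comparison with $y^{\sigma}$. The step you call the main obstacle is already Corollary~\ref{Cor:ElasticFBound}, with a one-line proof. The condition $Ef''\le\sigma-2$ makes $f''(s)/s^{\sigma-2}$ non-increasing, so $f''(u\phi)\ge u^{\sigma-2}f''(\phi)$ for $u\in[y/\phi,1]$. Integrating in $u$ gives $F(\phi,y)\ge\bigl(1-(y/\phi)^{\sigma-1}\bigr)/(\sigma-1)$, i.e.\ $F(cy,y)\ge g(\sigma-1)$, so you can drop the $(E,I)$ detour. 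You are also right that the sandwich condition is redundant here: Theorem~\ref{sufficiency_unlimited_supply} needs only the inequality and the three constraints.

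The genuine gap is in the uniqueness (``bounded by $\Delta^{(\sigma)}_\pm y$'') half. Showing that $c\le\alpha F(cy,y)$ fails for large $y$ does not show that $\phi=cy$ is not $\alpha$-competitive. That pointwise inequality is only sufficient, and Theorem~\ref{thm:necessity_IVP} only asserts that \emph{some} solution exists.

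What a fixed $\phi$ must satisfy is the integral condition \eqref{ineq:PricingFunction}. It is forced by the instance of Proposition~\ref{prop:ALGinf}: values ramp up to $\varPhi(y)$, followed by a flood at $\varPhi(y)$. For $\phi=cy$ the condition reads $\alpha\bigl(f(cy)/c-f(y)\bigr)\ge cyf'(cy)-f(cy)$. Divide by $f(cy)$ and let $y\to\infty$, using $Ef'\to\sigma-1$ (hence $yf'(y)/f(y)\to\sigma$ and $f(y)/f(cy)\to c^{-\sigma}$). This gives $\alpha(c^{\sigma-1}-1)\ge(\sigma-1)c^{\sigma}$, i.e.\ $\CP(\alpha,\sigma)(c)\le 0$. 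Equivalently, the slack $G$ in \eqref{ineq:PricingFunction} satisfies $G'(y)=cyf''(cy)\bigl(F(cy,y)-c/\alpha\bigr)$. This is eventually at most $-\delta\,cyf''(cy)$, whose integral diverges, so $G$ becomes negative.

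Your fallback via Proposition~\ref{Prop:BoundResp} does not work, for three reasons.
\begin{itemize}
\item Pointwise, that proposition gives $\phi_{\mathrm{ub}}\ge\Delta^{(\sigma)}_+y$ and $\phi_{\mathrm{lb}}\le\Delta^{(\sigma)}_-y$, the reverse of what you wrote; only the limiting slopes equal $\Delta^{(\sigma)}_\pm$.
\item More importantly, lying outside $\mathcal{S}(\alpha,f)$ proves nothing. Proposition~\ref{Prop:SolSpace} is only sufficient, and the paper's conclusion leaves completeness of the design space open.
\item It cannot avoid the tightness assumption $Ef'\to\sigma-1$, which the tightness part of that proposition also uses. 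Without it uniqueness is simply false. For example, $f(y)=y^2$ is strongly $(2,3)$-elastic, yet at $\alpha=3\sqrt3$ every $cy$ with $c^2-\alpha(c-1)\le0$ is $\alpha$-competitive, and these $c$ form a nondegenerate interval containing $\sqrt3$.
\end{itemize}
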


The proof of Corollary~\ref{Cor:UniqueLin} is encompassed by the proof of Theorem \ref{Thm:UpperBoundAch} (see Appendix \ref{Apx:UpperBound}). Specifically, we show that any solution to Eq.~\eqref{eq:SufNec_phi} can intersect the linear functions described in Corollary~\ref{Cor:UniqueLin} only once and must subsequently deviate beyond them, thereby satisfying the second condition in Proposition~\ref{Prop:SolSpace}. We further observe that when $ \alpha = \asigmas $, the unique optimal linear reserve function $ \phi(y) = \Dstar y $ coincides with the designs proposed in \citet{Tan2020a} and \citet{Huang2019}, both achieving the optimal competitive ratio for power cost functions of the form $ f(y) = y^{\sigma} $ with $\sigma > 1$. Consequently, our results substantially generalize these prior works by extending the optimality of linear designs to the broader class of strongly $ (\tau, \sigma) $-elastic cost functions and by characterizing a richer family of optimal reserve functions.

\subsubsection{Mix-dynamic-static designs} 
Beyond the linear designs, another intriguing construction arises by connecting distinct optimal designs with horizontal segments, leading to what we term the “\textit{mix-dynamic-static designs}” (or simply ``\textit{mix designs}"). Specifically, consider $K$ non-intersecting designs $\phi_1, \dots, \phi_K$ within the optimal design space $\mathcal{S}(\asigmas, f)$, each being $\asigmas$-competitive and ordered in decreasing fashion. We introduce $2K - 1$ turning points $0 = p_0 < p_1 < \dots < p_{2K - 2}$ to connect these $K$ reserve functions, where the turning points with odd indices are selected arbitrarily (or by the designer). The reserve function $\hat{\phi}$ with a mix design is then constructed piecewise such that, for any $k \in [2K - 2]$ and $p_{k-1} \leq y < p_k$:
\begin{equation*} 
\hat{\phi}(y) = \Bigg \{
\begin{aligned}
    &\phi_{\frac{k+1}{2}}(y), \quad &&\text{ if } k \text{ is odd}, \\
    &\phi_{\frac{k}{2}+1}(p_k), \quad &&\text{ if } k \text{ is even}.
\end{aligned}
\end{equation*}

That is, for each $i \in [K]$, the function $\hat{\phi}(y)$ coincides with $\phi_i$ over the interval $[p_{2i-2}, p_{2i-1}]$ and remains constant at the level $\phi_i(p_{2i-1}) = \phi_{i+1}(p_{2i})$ over the interval $[p_{2i-1}, p_{2i}]$. By Proposition~\ref{Prop:SolSpace}, such a construction ensures that $\hat{\phi}$ is also an $\asigmas$-competitive, and hence optimal, reserve function. Figure~\ref{fig:ConnectedSol} illustrates an example of this construction using a mixture of three designs: the upper-bound solution $ \phi_{\mathrm{ub}}$, the unique optimal linear design $ \Dstar y $, and the lower-bound solution $ \phi_{\mathrm{lb}} $.

\begin{wrapfigure}{r}{0.35\textwidth}
  \centering
  \includegraphics[width=0.9\linewidth]{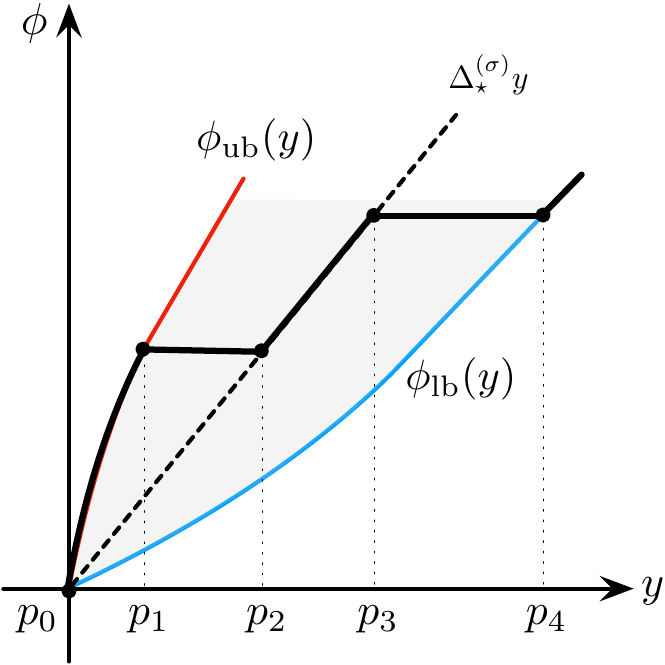} 
  \caption{Constructing an $ \asigmas $-competitive reserve function $ \hat{\phi}(y) $ based on $ \phi_1 = \phi_{\mathrm{ub}} $, $ \phi_2 = \Dstar y $, and $ \phi_3 = \phi_{\mathrm{lb}} $.}
  \label{fig:ConnectedSol}
\end{wrapfigure}
Importantly, the flexibility of the characterized optimal design space enables mixed dynamic–static pricing schemes without requiring the base functions to solve Eq.~\eqref{eq:SufNec_phi}. Any reserve function in $\mathcal{S}(\asigmas, f)$ can serve as a valid building block. In economics, this construction can be viewed as a structured mixture of \textit{dynamic} and \textit{static} pricing. We deploy multiple optimal dynamic pricing strategies and switch among them via intermediate static pricing phases based on the current utilization level. These transitions occur at designated turning points that connect distinct base designs within the optimal design space. Crucially, this modular composition preserves optimal worst-case performance while allowing different pricing behaviors across operational regimes. In contrast to prior results for online selection problems, where limited price adaptivity induces a monotone interpolation between fully static and fully dynamic solutions \cite{sun2024static,Perez2025_prophet_limited,Nekouyan2025_risk_pricing}, our result shows that dynamic and static pricing are not endpoints of a monotone tradeoff in \OACC, but modular components that can be flexibly combined without sacrificing optimality.  In Appendix~\ref{Apx:empirical_evaluation}, we provide empirical evidence demonstrating the practical benefits of this flexibility, particularly in benign environments where instances exhibit rich structural regularities that can be learned.

\subsection{Handling Unknown or Stochastic Costs with Universal Design Space}
\label{sec:universal_design_space}

Another key advantage of the characterized optimal design space is its ability to handle more realistic and challenging \OACC settings in which cost functions are not fully known in advance.  In this section, we consider two such scenarios: (i) the cost function of each offline node is selected by an adversary from a known function set, and (ii) the cost function of each offline node is randomly sampled from a parameterized function class. Nearly all prior works (e.g., \cite{Huang2019, Azar2016, Devanur2012}) are not applicable to these settings, as they assume full knowledge of the cost functions in advance. In particular, algorithms such as Algorithm~\ref{alg:PRM_MKwC} require complete information about the cost functions. We will show that the characterized design space enables a unified analysis and yields effective algorithms for both scenarios.

\textbf{(Technique overview)}. For simplicity of exposition, we assume that all offline nodes share a common cost function class, denoted by $\mathcal{F}$. Our results extend naturally to heterogeneous function classes, with a similar dependence on the “worst” function set as discussed in Section~\ref{Sec:Problem Formulation}. In this section, our objective is to construct designs that are $\alpha$-competitive for every cost function in $\mathcal{F}$. To this end, we introduce an auxiliary cost function $\hat{f}$ derived from the class $\mathcal{F}$, and show that its $\alpha$-design space, $\mathcal{S}(\alpha, \hat{f})$, is contained in the $\alpha$-design space of every cost function in $\mathcal{F}$. By Proposition~\ref{Prop:SolSpace}, it follows that any $\alpha$-competitive design in $\mathcal{S}(\alpha, \hat{f})$ is also $\alpha$-competitive for all costs in $\mathcal{F}$. We refer to $\hat{f}$ as a \textbf{safe envelope} and to $\mathcal{S}(\alpha, \hat{f})$ as the $\alpha$-\textbf{universal design space} for $\mathcal{F}$.

\subsubsection{$\mathcal{F}$ is a finite set of $(\tau,\sigma)$-elastic functions}
We first consider the case in which the cost function is selected (possibly adversarially) from a finite function class. Fix parameters $\sigma \ge \tau \ge 1$, and let $\mathcal{F} = \{f_k\}_{k \in [K]}$ be a finite collection of $(\tau,\sigma)$-elastic functions. Prior to the start of the \OACC process, each agent $j$ is assigned a cost function drawn from $\mathcal{F}$ (again, possibly adversarially), and this function is unknown to the algorithm.

Since our objective is worst-case competitiveness, a natural approach is to identify the ``worst" cost function in $\mathcal{F}$ and design an $\alpha_{\star}^{(\sigma)}$-competitive policy accordingly. However, this approach fails for two reasons. First, the notion of a “worst” function is not well-defined within $\mathcal{F}$, as all functions share the same $(\tau,\sigma)$-elastic structure. Second, even if one selects a particular function in $\mathcal{F}$ and optimizes against it, the resulting design need not be optimal for the others. For example, consider $f_1(y) = 6y^2 + 20y^3 + y^4$ and $f_2(y) = 18y^2 + 6y^3 + 3y^4$. Their respective optimal design spaces do not contain one another (see Proposition~\ref{Prop:tausigmaPolyUniDS} for a formal statement). Consequently, a design that is optimal for $f_1$ may fail to be optimal for $f_2$.

Although there may be no single function in $\mathcal{F}$ uniformly dominates the others, we can construct a new cost function $\hat f$ that is \emph{pointwise safer} than every function in $\mathcal{F}$, hence the term \emph{safe envelope}. The design space associated with the safe envelope $\hat f$ will then be universally competitive across all possible realizations of $f \in \mathcal{F}$. To formalize this idea, let $k : (0,\infty) \to [K]$ be a \textit{safe selection function} that maps each allocation level $y > 0$ to the index of a function in $\mathcal{F}$.
We define the \textbf{elasticity-based safe selection} (ESS) function $k^* : (0,\infty) \to [K]$ by
$$
k^*(y) = \argmax_{k\in [K]} Ef''(y), \quad \forall y>0,
$$
where $Ef'' $ denotes the elasticity of the derivative of the marginal cost of $f'$. Let $k^{*}(y)$ be the smallest index attaining the maximum. Since $\mathcal{F}$ is finite, $k^*$ is piecewise constant and hence piecewise continuous. Intuitively, for each allocation level $y$, we select the cost function whose marginal behavior exhibits the largest elasticity. 
For the previous example, $f_1(y) = 6y^2 + 20y^3 + y^4$ and $f_2(y) = 18y^2 + 6y^3 + 3y^4$, one can compute that $k^*(y) = 1$ for $0\leq y\leq 1$, and $k^*(y)=2$ for $y > 1$.

Denote the switching set as
$$
\mathcal{Y} = \left\{ y>0 : k^{*} \text{ is not constant on any neighbourhood of } y \right\}.
$$
We write $\mathcal{Y} = \{y_1<y_2<\cdots\}$, set $y_0=0$, and denote by $k_s$ the constant value of $k^{*}$ on $(y_s,y_{s+1})$. The following proposition shows that the ESS function $k^*$ induces a safe envelope $\hat{f}$ that remains $(\tau,\sigma)$-elastic.
\begin{proposition}[Constructing a Safe Envelope] \label{Prop:f_hat_finite}
Let $\mathcal{F}=\{f_k\}_{k\in[K]}$, where each $f_k$ is a strongly $(\tau,\sigma)$-elastic cost. Then there exists a function $\hat{f}$ of the form
\begin{align}\label{eq:f_hat_finite}
    \hat{f}(y) = C(y)\, f_{k^{*}(y)}(y) + L(y), \qquad \forall y>0,
\end{align}
where $C$ is positive and piecewise constant and $L$ is piecewise linear, both sharing the switching
points of $k^{*}(y)$, such that:
\begin{enumerate}
    \item[(i)] $\hat{f}(0)=\hat{f}'(0)=0$ and $ \hat{f} $ is strongly $(\tau,\sigma)$-elastic;
    \item[(ii)] for every $f\in\mathcal{F}$,
    $$
    F_{\hat{f}}(\phi,y)\ \leq\ F_{f}(\phi,y), \qquad \forall\, \phi\geq y>0,
    $$
    where $F_g(\phi,y):=\bigl(g'(\phi)-g'(y)\bigr)/\bigl(\phi\, g''(\phi)\bigr)$ as in
    Eq.~\eqref{Eq:NEPFunction}.
\end{enumerate}
Moreover, the pair $(C,L)$ is unique subject to the normalisation $C\equiv 1$ and $L\equiv 0$ on
$[0,y_1]$. 
\end{proposition}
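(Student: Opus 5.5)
The plan is to build $\hat f$ through its second derivative, because $F_g$ depends on $g$ only through $g''$. Since $g'(0)=0$,
\[
F_g(\phi,y)=\frac{\int_y^{\phi} g''(u)\,\mathrm{d}u}{\phi\, g''(\phi)}=\frac{1}{\phi}\int_y^{\phi}\exp\Bigl(-\int_u^{\phi}\frac{Eg''(s)}{s}\,\mathrm{d}s\Bigr)\mathrm{d}u ,
\]
using $\frac{\mathrm{d}}{\mathrm{d}s}\ln g''(s)=Eg''(s)/s$. So a cost whose $Eg''$ dominates $Ef_k''$ pointwise has a smaller integrand, and hence a smaller $F$, for every $\phi\ge y$. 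I therefore want $E\hat f''(y)=\max_k Ef_k''(y)=Ef_{k^*(y)}''(y)$ for almost every $y$. This suggests taking $\hat f''$ proportional to $f_{k_s}''$ on each interval $(y_s,y_{s+1})$, and that is what the form in Eq.~\eqref{eq:f_hat_finite} encodes.

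\textbf{Construction.} On $[y_s,y_{s+1}]$ set $\hat f=C_s f_{k_s}+a_s y+b_s$, with $C_0=1$ and $a_0=b_0=0$ on $[0,y_1]$. The constants are fixed inductively by $C^2$-matching at each switching point $y_s$:
\begin{itemize}
\item $C_s f_{k_s}''(y_s)=C_{s-1}f_{k_{s-1}}''(y_s)$, which determines $C_s>0$ because each $f_k''>0$;
\item $C_sf_{k_s}'(y_s)+a_s=C_{s-1}f_{k_{s-1}}'(y_s)+a_{s-1}$, which determines $a_s$;
\item continuity of $\hat f$ at $y_s$, which determines $b_s$.
\end{itemize}
Each condition is linear with a unique solution, so $(C,L)$ is unique under the stated normalisation. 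This assumes uniqueness is meant within $C^2$ functions of this piecewise form, which is what the requirement that $\hat f$ be $(\tau,\sigma)$-elastic forces. Near $0$ we have $\hat f=f_{k_1}$, so $\hat f(0)=\hat f'(0)=0$, and analyticity at the origin is inherited.

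\textbf{Verification.}
\begin{enumerate}
\item $\hat f''$ is continuous and positive, and $\hat f'(y)=\int_0^y\hat f''$.
\item On each open interval, $E\hat f''=Ef_{k_s}''$ because the constant $C_s$ cancels in the elasticity. This equals $\max_kEf_k''\in[\tau-2,\sigma-2]$.
\item By Lemma~\ref{Lem:ElasticComp}, applied piecewise (its proof only integrates $E\hat f''$), $E\hat f'\in[\tau-1,\sigma-1]$ and $E\hat f'$ is non-decreasing. This gives (i).
\item For (ii), insert $E\hat f''\ge Ef''$ almost everywhere into the exponential formula above; the integrand for $\hat f$ is then pointwise no larger, so $F_{\hat f}\le F_f$ for every $f\in\mathcal F$ and all $\phi\ge y>0$.
\end{enumerate}

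\textbf{Main obstacle.} The delicate points are regularity and the structure of $\mathcal Y$.

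First, $\hat f$ is only $C^2$ with a jump in $\hat f'''$ at the $y_s$, not globally smooth. I will argue this is harmless: every quantity used, namely $F$, Lemma~\ref{unlimited_supply_SufNec}, and the elasticity bounds, involves at most $\hat f''$ together with a.e.\ statements about $E\hat f''$. This matches the paper's remark that piecewise-smooth costs are admissible.

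Second, I must rule out accumulation of switching points. Since $\mathcal F$ is finite, I would use real-analyticity of each $Ef_k''$ (true for polynomial and analytic costs) so that the differences $Ef_i''-Ef_j''$ have only isolated zeros on bounded sets. If switching points could accumulate, I would instead define $\hat f''$ directly by $\hat f''(y)=f_{k_1}''(\epsilon)\exp\bigl(\int_\epsilon^y \max_kEf_k''(s)/s\,\mathrm{d}s\bigr)$, for a small fixed $\epsilon\in(0,y_1)$, and integrate twice. This reproduces the same $\hat f$ whenever $\mathcal Y$ is discrete, and it makes the argument for (i)–(ii) independent of the structure of $\mathcal Y$.
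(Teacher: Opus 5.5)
Your argument is the paper's proof in slightly different packaging. You use the same $C^2$-matching recursion for $(C_s,a_s,b_s)$, hence the same uniqueness argument, and the same cancellation of $C_s$ in $E\hat f''$. Your exponential representation of $F_g$ is the integrated form of the paper's observation that $h=\hat f''/f''$ is non-decreasing, because $(\ln h)'=(E^*-Ef'')/y\ge 0$; so (ii) is fine.

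Two points need correcting. First, $\hat f'''$ does not jump at the $y_s$. Since $E^*$ is continuous, you can let $y\to y_s^{-}$ in $Ef''_{k_{s-1}}\ge Ef''_{k_s}$ and $y\to y_s^{+}$ in $Ef''_{k_s}\ge Ef''_{k_{s-1}}$. This gives $Ef''_{k_{s-1}}(y_s)=Ef''_{k_s}(y_s)=E^*(y_s)$, hence $\hat f'''(y_s^{\pm})=\hat f''(y_s)E^*(y_s)/y_s$. Thus $\hat f\in C^3$ and $E\hat f''=E^*\in[\tau-2,\sigma-2]$ at every $y>0$. This pointwise bound is what strong elasticity in (i) actually requires, and it makes your almost-everywhere hedging unnecessary.

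Second, Lemma~\ref{Lem:ElasticComp} gives only the bounds $E\hat f'\in[\tau-1,\sigma-1]$, not monotonicity. Since $(\ln E\hat f')'=(1+E\hat f''-E\hat f')/y$, monotonicity is equivalent to $E\hat f'\le 1+E\hat f''$, which the bounds do not imply. It does hold when $E^*$ is non-decreasing, for example for families of $(\tau,\sigma)$-polynomials. The paper's proof of (i) rests only on the bound for $E\hat f''$, so drop that claim rather than attribute it to the lemma.

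Your worry about accumulating switching points is legitimate: the paper simply asserts that $k^*$ is piecewise constant because $\mathcal{F}$ is finite. Your analyticity fix only covers costs that are real-analytic on all of $(0,\infty)$, whereas the paper assumes analyticity only at the origin. Your fallback $\hat f''(y)=c\exp\bigl(\int_\epsilon^y E^*(s)/s\,\mathrm{d}s\bigr)$ does work in general for (i)--(ii). It coincides with the paper's $\hat f$ when $\mathcal{Y}$ is discrete, which is the only case in which the piecewise form \eqref{eq:f_hat_finite} makes sense.
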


The pair $(C, L)$ is uniquely determined by a sequence of ODEs derived from the ESS function $k^*$ and the function class $\mathcal{F}$. On any interval where $k^*$ is constant, we have $E\hat{f}'' = Ef_{k^*}''$, which yields a differential equation governing $\hat{f}$ on that interval. Because elasticity is invariant under positive scaling, this equation determines $\hat{f}''$ only up to a positive multiplicative constant, which is exactly the role played by $C$; the linear function $L$ then supplies the two remaining degrees of freedom of $\hat{f}$. We glue consecutive segments by requiring $\hat{f}$, $\hat{f}'$, and $\hat{f}''$ to agree at each switching point, and normalize $C\equiv 1$ and $L \equiv 0$ on the first segment. Solving the ODEs on all intervals induced by $k^*$ in this manner produces $C$ and $L$. Matching the second derivatives is what makes $\hat{f}$ a safe envelope: the pointwise inequality $E\hat{f}'' \geq Ef''$ integrates up to the monotonicity of $\hat{f}''/f''$ from which part (ii) follows (see Appendix~\ref{Apx:UniDSProofs} for details).

\begin{theorem}[Universal Design Space for Finite $ \mathcal{F}$] \label{Thm:UniDSElastic}
    Let $\mathcal{F} = \{f_k\}_{k\in [K]}$, where for each $k\in [K]$, $f_k$ is a strongly $(\tau,\sigma)$-elastic cost.
    For any $\alpha\geq \asigmas$, the $\alpha$-design space  of $ \hat{f}$, $\mathcal{S}(\alpha, \hat{f})$, is an $\alpha$-universal design space for $\mathcal{F} = \{f_k\}_{k\in [K]}$.
\end{theorem}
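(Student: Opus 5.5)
We need to show two things. First, $\mathcal{S}(\alpha,\hat f)$ is non-empty. Second, every $\phi\in\mathcal{S}(\alpha,\hat f)$ makes $\PUM\phi$ $\alpha$-competitive for every $f\in\mathcal{F}$, even when each node $j$ carries its own $f_j\in\mathcal{F}$, unknown to the algorithm.

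Non-emptiness is immediate. By Proposition~\ref{Prop:f_hat_finite}(i), $\hat f$ is strongly $(\tau,\sigma)$-elastic. Theorem~\ref{Thm:UpperBoundAch} then gives infinitely many solutions of \eqref{eq:SufNec_phi} for $\hat f$ whenever $\alpha\ge\asigmas$, so $\mathcal{S}(\alpha,\hat f)\neq\emptyset$ and $\phi_{\mathrm{ub}}^{\hat f},\phi_{\mathrm{lb}}^{\hat f}$ are well defined.

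For the containment, fix $\phi\in\mathcal{S}(\alpha,\hat f)$. Then $\phi(0)=0$, $\phi'\ge0$, $\phi(y)\ge y$, and $\phi'=\alpha F_{\hat f}(\phi,y)$. Because $\phi\ge y$, Proposition~\ref{Prop:f_hat_finite}(ii) applies pointwise and gives
\[
\phi'(y)=\alpha F_{\hat f}(\phi(y),y)\le \alpha F_f(\phi(y),y)\qquad \forall y>0,\ f\in\mathcal{F}.
\]
So $\phi$ satisfies the differential \emph{inequality} for each $f$, together with monotonicity, dominance and the initial condition. By the sufficiency direction of Lemma~\ref{unlimited_supply_SufNec} (as noted after that lemma and used in Proposition~\ref{Prop:SolSpace}), the inequality with $\phi(0)=0$ implies the per-node pricing inequality \eqref{ineq:PricingFunction} for $\varPhi_f=f'\circ\phi$. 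Thus $\PUM\phi$ run with the true $f$ is $\alpha$-competitive.

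There is a subtlety here. The algorithm must price with $f'(\phi(\cdot))$, which requires knowing $f$ even though $\phi$ itself is universal. I would handle this as in the technique overview: the reserve function is the design object, and the pricing for node $j$ uses its realized $f_j$ once revealed. With heterogeneous nodes, the primal–dual argument behind Lemma~\ref{unlimited_supply_SufNec} decomposes into per-node inequalities \eqref{ineq:PricingFunction}, each holding with its own $f_j$. Summing over $j$ then gives the global $\alpha$ bound.

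The main obstacle I expect is making the sufficiency step rigorous when only the inequality holds, rather than equality in \eqref{eq:mainODE}. The plan is to integrate $\frac{d}{dy}\bigl[\int_0^y f'(\phi)-f(y)-\tfrac1\alpha f^*(f'(\phi(y)))\bigr]$. Using $(f^*)'(p)=(f')^{-1}(p)$, this derivative is $f'(\phi)-f'(y)-\tfrac1\alpha\phi f''(\phi)\phi'$, which is nonnegative exactly under the inequality. The expression is zero at $y=0$, so it stays nonnegative. A secondary check is the boundary $y=0$, where (ii) is only stated for $y>0$; continuity of $\phi$ and $\phi(0)=0$ suffice there.
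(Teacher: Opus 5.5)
Your proof is correct, and it is shorter than the paper's. Both start from the same step, $\phi'=\alpha F_{\hat f}(\phi,y)\le \alpha F_f(\phi,y)$ via Proposition~\ref{Prop:f_hat_finite}(ii), but diverge after that.

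The paper then compares extremal solutions. Using the ODE comparison theorem and Lemma~\ref{Lem:LowerBoundforFeasibleSol}, it argues that $\phi^{f}_{\mathrm{lb}}\le\hat\phi_{\mathrm{lb}}$ and $\hat\phi_{\mathrm{ub}}\le\phi^{f}_{\mathrm{ub}}$. Only then does it invoke Proposition~\ref{Prop:SolSpace}, whose first hypothesis is exactly this sandwich.

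You observe that the sandwich only serves to deliver dominance and $\phi(0)=0$. These are properties of $\phi$ alone and hold automatically for $\phi\in\mathcal{S}(\alpha,\hat f)$. So you apply the sufficiency theorem (Theorem~\ref{sufficiency_unlimited_supply}, on which Proposition~\ref{Prop:SolSpace} itself rests) directly, and make the per-node decomposition for heterogeneous $f_j$ explicit along the way. This avoids the comparison arguments near the singular origin, which the paper only sketches, and it never needs the extremal solutions of each $f$.

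What the paper's route additionally gives is the nesting of boundaries, i.e., $\mathcal{S}(\alpha,\hat f)\subseteq\mathcal{S}(\alpha,f)$ as planar regions. That is the form in which the result is reused in the proof of Theorem~\ref{Thm:tausigmaPolyUniDS}. Your argument yields universal competitiveness but not this nesting directly. It can be recovered from your subsolution inequality by comparing with the $f$-trajectory through each point of $\operatorname{graph}(\phi)$.

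The modeling point you raise, that pricing with $f_j'\circ\phi$ requires the realized $f_j$, is a real issue. The paper's proof relies on the same implicit convention.
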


Proposition~\ref{Prop:f_hat_finite} and Theorem~\ref{Thm:UniDSElastic} together yield two important implications. First, since the ESS function $k^*$ is well defined, the induced safe envelope $\hat{f}$ is also well defined and, moreover, is strongly $(\tau,\sigma)$-elastic. It follows that the design space $\mathcal{S}(\alpha,\hat{f})$ is well defined and nonempty, and any design in $\mathcal{S}(\asigmas,\hat{f})$ is $\asigmas$-competitive for every cost function in $\mathcal{F}$. Second, the ESS construction produces an extremal, pointwise-safe cost function associated with the class $\mathcal{F}$. In the special case where $k^*(y)$ is constant for all $y \ge 0$, the safe envelope $\hat{f}$ coincides with the function in $\mathcal{F}$ having the largest elasticity $E_{f’}$. In general, however, $\hat{f}$ need not belong to $\mathcal{F}$; rather, it acts as a safe envelope that guarantees universal competitiveness across all functions in $\mathcal{F}$. We defer the proofs of Proposition~\ref{Prop:f_hat_finite}, Theorem~\ref{Thm:UniDSElastic}, and subsequent results in this subsection to Appendix~\ref{Apx:UniDSProofs}.

\subsubsection{$\mathcal{F}$ is an infinite set of $(\tau,\sigma)$-polynomial functions}
We move on to consider the case where $\mathcal{F}$ consists of $(\tau,\sigma)$-polynomial cost functions parameterized by random coefficients. Formally, let $\mathcal{F}$ be a collection of $(\tau,\sigma)$-polynomial costs such that, for each $k=\tau,\dots,\sigma$, any $f\in\mathcal{F}$ has a (random) coefficient $c_k$ associated with the monomial $y^k$, where $c_k$ is supported on the interval $[\ubar{c}_k,\bar{c}_k]$ with $\ubar{c}_k\ge 0$. We say that a coefficient $c_k$ is degenerate if $\ubar{c}_k=\bar{c}_k$. The coefficients are realized by sampling before the online allocation process begins and are unknown to the online algorithm.

Since $\mathcal{F}$ may be an infinite set, the construction in Proposition \ref{Prop:f_hat_finite} and 
Theorem~\ref{Thm:UniDSElastic} no longer applies. As in the finite-set case, no single cost function $f\in\mathcal{F}$ can be used to derive a universally optimal design. We formalize this impossibility in Proposition~\ref{Prop:tausigmaPolyUniDS}. Specifically, consider a realization of the coefficients $c_k$ for $\tau \le k \le \sigma$. We say that an estimate $\hat{c}_k$ is \textbf{safe} if any design $\phi(y)$ derived from the $(\tau,\sigma)$-polynomial cost function with coefficients $\hat{c}_k$ lies in the $\alpha$-competitive design space of the $(\tau,\sigma)$-polynomial cost function with coefficients $c_k$. In Proposition~\ref{Prop:tausigmaPolyUniDS}, we show that it is impossible to find such safe estimates simultaneously for all coefficients.
\begin{proposition} \label{Prop:tausigmaPolyUniDS}
    Suppose that the supports of the random coefficients of a $(\tau,\sigma)$-polynomial cost function are known and nonnegative. If the supports are degenerate for all $\tau < k < \sigma$, then the estimates $\hat{c}_{\tau}=\ubar{c}_{\tau}$ and $\hat{c}_{\sigma}=\bar{c}_{\sigma}$ constitute a safe estimate. Otherwise, no safe estimates exist.
\end{proposition}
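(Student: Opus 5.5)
The plan is to reduce safety to the pointwise comparison that drives Theorem~\ref{Thm:UniDSElastic}. An estimate $\hat f$ is safe for $f$ whenever $F_{\hat f}(\phi,y)\le F_f(\phi,y)$ for all $\phi\ge y>0$, since by Proposition~\ref{Prop:SolSpace} any $\phi$ with $\phi'\le\alpha F_{\hat f}(\phi,y)$ also satisfies $\phi'\le \alpha F_f(\phi,y)$. The dominance and monotonicity conditions do not involve $f$, and the boundary comparison follows from the same sandwich argument.

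As in the proof of Proposition~\ref{Prop:f_hat_finite}, $F_{\hat f}\le F_f$ for all $\phi\ge y$ holds exactly when $\hat f''/f''$ is non-decreasing. Writing $F_g=\int_y^\phi g''(s)\,ds/(\phi g''(\phi))$, this amounts to $E\hat f''\ge Ef''$ pointwise. Conversely, letting $\phi\downarrow y$ and expanding shows that $E\hat f''\ge Ef''$ is also necessary.

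For the positive direction, set $\hat c_\tau=\ubar c_\tau$, $\hat c_\sigma=\bar c_\sigma$, and keep the middle coefficients at their degenerate values. Write $g''(y)=\sum_k k(k-1)c_k y^{k-2}$. Then
\[
\frac{\hat f''}{f''}=\frac{P(y)+\ubar a\,y^{\tau-2}+\bar b\,y^{\sigma-2}}{P(y)+a\,y^{\tau-2}+b\,y^{\sigma-2}},
\]
where $a\ge\ubar a$ and $b\le\bar b$. Differentiating, the numerator of the derivative is a sum of cross terms $(j-i)\,y^{i+j-5}\,(\hat p_j p_i-\hat p_i p_j)$ over pairs $i<j$. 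Each such term is nonnegative: shrinking the lowest coefficient and enlarging the highest one only increases $\hat p_j p_i-\hat p_i p_j$ for every pair. In particular, the middle pairs contribute zero, the pairs (middle, $\sigma$) contribute $(\bar b-b)p_i\ge0$, and the pairs ($\tau$, middle) contribute $(a-\ubar a)p_j\ge0$. This gives the monotonicity of $\hat f''/f''$.

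For the negative direction, suppose some middle index $\tau<k<\sigma$ has a nondegenerate support. It suffices to show that no fixed $\hat f$ satisfies $E\hat f''\ge Ef''$ for every realization, and to do this we use the two asymptotic regimes. As $y\to0$, $Ef''(y)=\tau-2+\frac{(k-\tau)\,k(k-1)c_k}{\tau(\tau-1)c_\tau}y^{k-\tau}+\dots$ Taking the lowest nondegenerate middle index $k$ together with the realization $c_\tau=\ubar c_\tau$ and $c_k=\bar c_k$ makes the leading correction maximal. This forces $\hat c_k/\hat c_\tau\ge \bar c_k/\ubar c_\tau$, so $\hat c_k$ must be large. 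As $y\to\infty$, the analogous expansion around $\sigma-2$ has a negative correction proportional to $c_{k'}/c_\sigma$ for the highest nondegenerate index $k'$. Safety there forces $\hat c_{k'}/\hat c_\sigma\le \ubar c_{k'}/\bar c_\sigma$, so $\hat c_{k'}$ must be small.

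When $k=k'$ is the same index, these two requirements are jointly satisfiable only with $\hat c_\tau$ small and $\hat c_\sigma$ large. I then exhibit a contradiction at intermediate $y$ by choosing the realization with $c_k=\bar c_k$ and small $c_\tau$. I expect this to be the main obstacle: it needs a careful choice of the adversarial realization (possibly a family scaling $c_\tau$) and care with the cases where $\ubar c_\tau=0$ or the support endpoints are infinite. Concretely, I would first try to show that the two-sided constraints on $\hat c_k$ coming from the realizations $c_k=\bar c_k$ and $c_k=\ubar c_k$ are incompatible via the $y\to0$ and $y\to\infty$ expansions applied to both realizations simultaneously.
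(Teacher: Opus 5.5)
Your reduction is correct: pointwise safety holds exactly when $\hat f''/f''$ is non-decreasing, and necessity does follow from the second-order term of $F_g(y+\epsilon,y)$. Your pair-expansion proof of the positive direction is also correct, and it is a tidy alternative to the paper's weighted-average and mediant lemmas.

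The gap is in the negative direction. You let the estimate be an arbitrary fixed $\hat f$, or even an arbitrary $(\tau,\sigma)$-polynomial. In that setting the claim you set out to prove is false, so the ``contradiction at intermediate $y$'' you plan to exhibit does not exist. For a general $\hat f$, the envelope of Theorem~\ref{Thm:tausigmaPolyUniDS} is already a counterexample. For polynomials, take $\tau=2$, $\sigma=4$, all three supports equal to $[1,2]$, and $\hat f=\tfrac12y^2+y^3+3y^4$, so that $\hat f''=1+6y+36y^2$. Any realization has $f''=a+by+cy^2$ with $a\in[2,4]$, $b\in[6,12]$, $c\in[12,24]$. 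Your own formula then gives the numerator $(6a-b)+2(36a-c)y+(36b-6c)y^2$, and all three coefficients are nonnegative. Hence $\hat f''/f''$ is non-decreasing and $\hat f$ is safe for every realization, although $c_3$ is nondegenerate. This is exactly your regime of ``$\hat c_\tau$ small, $\hat c_\sigma$ large'', and it satisfies both of your endpoint constraints ($\hat c_3/\hat c_2=2\ge \bar c_3/\ubar c_2$ and $\hat c_3/\hat c_4=\tfrac13\le \ubar c_3/\bar c_4$).

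The proposition holds only for estimates inside the supports, $\hat c_j\in[\ubar c_j,\bar c_j]$. That is how the paper reads it: its $\delta$ ranges over $[c_d-\bar c_d,\,c_d-\ubar c_d]$. Your endpoint bookkeeping is also off in two ways. First, the leading correction of $Ef''$ at $0$ (resp. $\infty$) comes from the lowest (resp. highest) index with a \emph{positive} coefficient, which may be a degenerate one. Second, when two middle coefficients are nondegenerate, the two endpoint constraints involve different indices and need not clash.

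With the support restriction, the fix is short and fits your framework. Take an estimate with $\hat c_j\in[\ubar c_j,\bar c_j]$ and a nondegenerate middle index $d$. Choose the admissible realization $f$ that agrees with $\hat f$ in every coefficient except $c_d\neq\hat c_d$. Only pairs containing $d$ survive, and the numerator of $(\hat f''/f'')'$ becomes
\[ (\hat p_d-p_d)\Bigl[\sum_{i<d}(d-i)\,p_i\,y^{i+d-5}-\sum_{j>d}(j-d)\,p_j\,y^{d+j-5}\Bigr]. \]
The bracket is positive near $y=0$, where the $i=\tau$ term dominates, and negative for large $y$, where the $j=\sigma$ term dominates. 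So $\hat f''/f''$ is not monotone, and by your necessity direction $\hat f$ is not safe.

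This is the same mechanism as the paper's proof. There, Lemma~\ref{Lem:OneCoeff} reduces safety of a one-coefficient change to $\delta\,\Delta_d(f;\phi,y)\le 0$. Then $\Delta_d(f;\phi,t\phi)$ tends to $h_{\tau-1}-h_{d-1}>0$ as $\phi\to0^+$ and to $h_{\sigma-1}-h_{d-1}<0$ as $\phi\to\infty$, which forces $\delta=0$.
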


On the other hand, $(\tau,\sigma)$-polynomial cost functions are substantially simpler than general $(\tau,\sigma)$-elastic functions. In particular, to construct a safe envelope $ \hat{f}$, it is not necessary to compare elasticities across all functions in $\mathcal{F}$. Instead, it suffices to consider only finitely many cost functions corresponding to the boundary values of the coefficient supports. Theorem \ref{Thm:tausigmaPolyUniDS} formalizes this result.

\begin{theorem}[Universal Design Space for Infinite $\mathcal{F}$]
\label{Thm:tausigmaPolyUniDS}
Suppose that the supports of the random coefficients of a $(\tau,\sigma)$-polynomial
cost function are known and nonnegative, with $\ubar{c}_{\tau}>0$ and
$\ubar{c}_{\sigma}>0$. For each integer $d$ with $\tau\leq d\leq \sigma-1$, let
$f^{(d)}$ be the $(\tau,\sigma)$-polynomial cost function with coefficients
$$
\hat{c}_k=\ubar{c}_k \quad (\tau\leq k\leq d),
\qquad
\hat{c}_k=\bar{c}_k \quad (d<k\leq \sigma),
$$
and let $k^*$ denote the ESS function induced by
$\mathcal{G}=\left\{f^{(\tau)},f^{(\tau+1)},\dots,f^{(\sigma-1)}\right\}$.
Then the cost function $\hat{f}$ defined by Eq.~\eqref{eq:f_hat_finite} is strongly
$(\tau,\sigma)$-elastic and satisfies that for any $\alpha\geq\asigmas$, the
$\alpha$-design space of $\hat{f}$, $\mathcal{S}(\alpha,\hat{f})$, is an
$\alpha$-universal design space for $\mathcal{F}$.
\end{theorem}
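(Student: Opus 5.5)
The argument reduces Theorem~\ref{Thm:tausigmaPolyUniDS} to the finite-set machinery of Proposition~\ref{Prop:f_hat_finite} and Theorem~\ref{Thm:UniDSElastic}. The key step is a comparison: every $f\in\mathcal{F}$ is pointwise dominated, in the sense of $Ef''$, by some member of the finite family $\mathcal{G}=\{f^{(\tau)},\dots,f^{(\sigma-1)}\}$.

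\textbf{Step 1: properties of $\mathcal{G}$.} Each $f^{(d)}$ is a $(\tau,\sigma)$-polynomial, because $\hat c_\tau=\ubar{c}_\tau>0$ and $\hat c_\sigma=\bar c_\sigma\ge\ubar{c}_\sigma>0$. By the footnote after Definition~\ref{def:tau_sigma_convex}, each is therefore strongly $(\tau,\sigma)$-elastic. Proposition~\ref{Prop:f_hat_finite} then applies to $\mathcal{G}$ and produces $\hat f$, which is strongly $(\tau,\sigma)$-elastic and satisfies
\[
F_{\hat f}(\phi,y)\le F_{g}(\phi,y)\quad\text{for all } g\in\mathcal{G},\ \phi\ge y>0 .
\]
This already gives the first claim of the theorem. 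It remains to show $F_{\hat f}\le F_f$ for every $f\in\mathcal{F}$. Once that holds, the containment argument behind Theorem~\ref{Thm:UniDSElastic} together with Proposition~\ref{Prop:SolSpace} carries over verbatim: a design in $\mathcal{S}(\alpha,\hat f)$ satisfies $\phi'\le\alpha F_{\hat f}\le \alpha F_f$, so it is $\alpha$-competitive for $f$.

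\textbf{Step 2: pointwise elasticity domination.} For $f=\sum_k c_k y^k$,
\[
Ef''(y)=\frac{\sum_k k(k-1)(k-2)c_k y^{k-2}}{\sum_k k(k-1)c_k y^{k-2}},
\]
which is a weighted average of $k-2$ with weights $w_k(y)\propto k(k-1)c_k y^{k-2}$. To maximize this average at a fixed $y$ over the box $c_k\in[\ubar c_k,\bar c_k]$, I use a standard fractional (ratio-of-linear) exchange argument. Increasing $c_k$ raises the average exactly when $k-2$ exceeds the current average. So at a maximizer, there is a threshold $d$ with $c_k$ at its lower bound for $k\le d$ and at its upper bound for $k>d$. (The index $k=d+1$ may straddle the threshold, but it can be pushed to an endpoint without decreasing the ratio, since the ratio is monotone in each single coefficient.) This means
\[
\max_{f\in\mathcal{F}}Ef''(y)=\max_{d}Ef^{(d)\prime\prime}(y).
\]
The threshold cases $d=\sigma$ (all lower bounds) and $d<\tau$ (all upper bounds) also need attention. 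Each of these is dominated by $d=\sigma-1$ or $d=\tau$, respectively, because moving $c_\sigma$ up, or $c_\tau$ down, always weakly increases the average, as $\sigma-2$ and $\tau-2$ are the extreme values of $k-2$. Consequently $E\hat f''(y)=\max_{g\in\mathcal{G}}Eg''(y)\ge Ef''(y)$ for all $y$ and all $f\in\mathcal{F}$.

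\textbf{Step 3: from elasticity domination to $F_{\hat f}\le F_f$.} Following the sketch after Proposition~\ref{Prop:f_hat_finite}, the inequality $E\hat f''\ge Ef''$ means $(\ln \hat f''-\ln f'')'\ge 0$, so $\hat f''/f''$ is nondecreasing. Writing
\[
\hat f'(\phi)-\hat f'(y)=\int_y^\phi \hat f''\le \frac{\hat f''(\phi)}{f''(\phi)}\int_y^\phi f'' ,
\]
and dividing by $\phi\hat f''(\phi)$ gives $F_{\hat f}\le F_f$. The domination $E\hat f''\ge Ef''$ must hold on all of $(0,\infty)$, including near the origin. There both functions have $\tau$ as lowest order, and the ratio $\hat f''/f''$ tends to a finite positive constant; this is harmless because $F$ is invariant under positive scaling.

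The main obstacle I expect is Step 2: making the exchange argument rigorous for the linear-fractional objective with box constraints, and correctly handling the boundary thresholds so that only $d\in\{\tau,\dots,\sigma-1\}$ is needed. I would first establish it by the sign rule $\partial Ef''/\partial c_k\propto (k-2-Ef'')\,k(k-1)y^{k-2}$, then argue via monotonicity that the maximizing configuration is a threshold configuration.
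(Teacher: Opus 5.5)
Your proof is correct, but it takes a different route from the paper. You compare elasticities. At each fixed $y$ you write $Ef''(y)$ as the average of $k-2$ with weights $k(k-1)c_ky^{k-2}$. You maximize this linear-fractional function over the coefficient box: the sign rule forces a threshold configuration, and the two degenerate thresholds are dominated by $d=\tau$ and $d=\sigma-1$. This gives $\sup_{f\in\mathcal F}Ef''(y)=\max_{g\in\mathcal G}Eg''(y)=E\hat f''(y)$, after which the monotone-ratio argument of Lemma~\ref{Lem:ElasticComp} yields $F_{\hat f}\le F_f$.

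The paper never compares elasticities of members of $\mathcal F$. It fixes $(\phi,y)$ and views $F_f(\phi,y)$ as an average of $h_{k-1}(\phi,y)=(1-(y/\phi)^{k-1})/(k-1)$, which decreases in $k$. It then walks a chain $f=g_{\tau-1}\to g_\tau\to\cdots\to g_\sigma$ that resets one coefficient at a time to its lower or upper endpoint, according to the sign of $\Delta_d(g_{d-1})$. The mediant Lemma~\ref{Lem:OneCoeff} shows two things: $F$ never increases along the chain, and once a high move occurs every later move is high. So the terminus is some $f^{(d^*)}$, giving $\min_d F_{f^{(d)}}\le F_f$, which is combined with $F_{\hat f}\le\min_d F_{f^{(d)}}$.

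Both arguments use the same mechanism: an average of a quantity monotone in the degree is extremized over a box at threshold configurations. They just apply it to different averages. Your version is shorter and relies on a standard exchange fact. It also proves a stronger conclusion: $E\hat f''=\sup_{f\in\mathcal F}Ef''$, i.e. $\hat f$ is exactly the elasticity envelope of the whole infinite family. This shows directly why only the nested realizations $f^{(\tau)},\dots,f^{(\sigma-1)}$ matter. The paper's chain is constructive at the level of $F$ and reuses the single-coefficient ``safe estimate'' lemmas behind Proposition~\ref{Prop:tausigmaPolyUniDS}. It identifies, for each $(\phi,y)$, the extremal realization minimizing $F$, but it yields domination only of $F$, not of $Ef''$.

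Two small points. The identity $E\hat f''=\max_{g\in\mathcal G}Eg''$ that you use comes from the proof of Proposition~\ref{Prop:f_hat_finite}, not from its statement. The case $\tau=\sigma$, where $\mathcal G$ is empty, must be set aside separately, as the paper does.
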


Theorem~\ref{Thm:tausigmaPolyUniDS} shows that even for an (uncountable) infinite
function class, a universal design space can still be constructed. In particular, for
the $(\tau,\sigma)$-polynomial class, the safe envelope can be explicitly characterized
and, somewhat surprisingly, depends only on the $\sigma-\tau$ nested extremal
realizations $f^{(\tau)},\dots,f^{(\sigma-1)}$. The intuition is as follows. By
Proposition~\ref{Prop:tausigmaPolyUniDS}, we obtain safe bounds for the leading
coefficients $c_{\tau}$ and $c_{\sigma}$. The remaining difficulty lies in the
intermediate coefficients: as these vary, the elasticity changes pointwise across $y$.
We observe that overestimating an intermediate coefficient increases the elasticity for small $y>0$ while decreasing it for sufficiently large $y>0$, whereas underestimating it produces the opposite effect. 
However, the crossover between these two regimes occurs at a different location for each degree, so at any given scale the safe choice is to underestimate the low-degree coefficients and overestimate the high-degree ones, with the split point moving through $\tau,\dots,\sigma-1$ as the scale grows. Combining the corresponding $\sigma-\tau$ extremal functions in $\mathcal{F}$ then yields the safe envelope $\hat{f}$.

\subsection{Extensions: \OACC under More Structured Settings}
\label{sec:extension}

In this section, we extend our previous analysis to two structured variants of the \OACC problem. 

\subsubsection{\OACC with Supply-Oblivious Arrivals}
Prior to this section, we consider the standard adversarial model in which, for each arriving demand $t$, the adversary specifies a subset $\mathcal{J}_t \subseteq [m]$ of eligible supply nodes. The resulting competitive analysis is therefore governed by the ``worst'' cost function among the feasible nodes, leading to a dependence on the largest degree parameter $\sigma$ across resources. While this model captures general heterogeneity, it may be overly pessimistic for many practical settings. We now consider a weaker adversarial model, which we term \emph{supply-oblivious arrivals}. In this setting, each demand $t$ is indifferent to the identity of the serving supplier, although suppliers may have heterogeneous cost functions. Formally, we assume $v_{t,j} = v_t$, $w_{t,j} = w_t$, and $\mathcal{J}_t = [m]$ for all $t$ and $j \in [m]$. Thus, the adversary controls only the value $v_t$ and weight $w_t$ of each arrival, while the online algorithm determines how to allocate demand across supply nodes. This model captures resource allocation environments in which customers can be served by any supplier (e.g., fully connected markets), while suppliers may differ in production scale or technology. Our main result in this setting shows that the algorithm $\PUM\phi$ benefits from the weaker adversary: in the large-market regime, it asymptotically achieves a competitive ratio equal to the average of the node-wise optimal ratios, rather than being constrained by the worst cost function.

To demonstrate how $ \PUM\phi $ benefits from supply-oblivious arrivals, we establish the characterization of $ \alpha $-competitive reserve functions for this setting, which is similar to Lemma \ref{unlimited_supply_SufNec} and we skip the proof here. 

\begin{lemma}[Sufficiency]\label{sufficiency_unlimited_supply_fullycon} 
For any $ \alpha \geq 1 $, $ \PUM\phi $ is $ \alpha $-competitive if there exist $ \phi_j $, $j\in [m]$:
\begin{subequations}\label{eq:MKwC1_sufficiency_Phi}
\begin{align}
& \sum_{j=1}^{m}\phi_j'(y_j) \leq \alpha\cdot \sum_{j=1}^{m}\frac{f_j'(\phi_j(y_j))  - f_j'(y_j) }{\phi_j(y_j) \cdot f_j''(\phi_j(y_j))}, & & \forall y_j \geq  0,\ j\in [m],\label{eq:eq:MKwC1_sufficiency_Phi_mainODE}\\
& \textsf{Monotone: } \phi_j'(y_j) \geq 0, & & \forall y_j \geq  0,\ j\in [m],\\
& \textsf{Superlinear: } \phi_j(y_j) \geq  y_j, & & \forall y_j \geq  0,\ j\in [m],\\
& \textsf{Boundary condition: } \phi_j(0) =  0. & & \forall j\in [m].	
\end{align}
\end{subequations}
\end{lemma}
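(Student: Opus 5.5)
We follow the primal--dual (online primal--dual / potential) template behind Lemma~\ref{unlimited_supply_SufNec}, now with all $m$ reserve functions aggregated into a single potential. For each node $j$, define the pricing function $\varPhi_j(y)=f_j'(\phi_j(y))$. In the supply-oblivious setting each arrival has a common value $v_t$ and weight $w_t$, so $\PUM\phi$ fills each unit of demand on the node(s) with the currently smallest price $\varPhi_j(y_j)$, as long as $v_t/w_t$ exceeds that price. Consequently, after every arrival all active nodes sit at a common price level $p$, i.e.\ $\varPhi_j(y_j)=p$ for every $j$ with $y_j>0$. This is the structural fact that lets the conditions be summed over $j$ rather than imposed nodewise.

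The proof then has three steps.

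\emph{Step 1 (ALG lower bound).} Each accepted unit on node $j$ at utilization $\eta$ pays at least $\varPhi_j(\eta)$. Hence
$$\ALG\ge \sum_j\Big(\int_0^{y_j}\varPhi_j(u)\,du-f_j(y_j)\Big).$$

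\emph{Step 2 (OPT upper bound).} We use weak duality with the dual price $p$, the final common price level. Every request rejected or partially served has $v_t/w_t\le p$. Writing $\OPT$ with the supplies $z_j$ that it produces, we get
$$\OPT\le \sum_t (v_t-pw_t)^+ +\sum_j f_j^*(p).$$
The first term is bounded by the surplus already earned by the algorithm. To control it cleanly, we write $\ALG$ via the integrated prices and charge the accepted surplus $\int_0^{y_j}\varPhi_j$ against it. This reduces $\alpha$-competitiveness to the aggregated analogue of \eqref{ineq:PricingFunction}:
$$\sum_j\Big(\int_0^{y_j}\varPhi_j-f_j(y_j)\Big)\ \ge\ \frac1\alpha\sum_j f_j^*\big(\varPhi_j(y_j)\big),$$
required along every reachable configuration, i.e.\ the common-price curve $\{(y_j): \varPhi_j(y_j)=p\}$.

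\emph{Step 3 (differential argument).} At $p=0$ (so all $y_j=0$) both sides of the aggregated inequality vanish, by the boundary conditions $\phi_j(0)=0$ and $f_j^*(f_j'(0))=0$. We then differentiate along the common-price curve. For each node, the derivative of the left side is $\varPhi_j(y_j)-f_j'(y_j)=f_j'(\phi_j)-f_j'(y_j)$. For the right side we use $(f_j^*)'(\varPhi_j)=\phi_j$, which gives $\frac{d}{dy_j}f_j^*(\varPhi_j(y_j))=\phi_j f_j''(\phi_j)\phi_j'$. So it suffices that
$$\sum_j \big(f_j'(\phi_j)-f_j'(y_j)\big)\,\dot y_j \ \ge\ \frac1\alpha\sum_j \phi_j f_j''(\phi_j)\phi_j'\,\dot y_j,$$
where $\dot y_j\ge0$ are the rates at which the nodes fill along the curve. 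Monotonicity of $\phi_j$ guarantees $\dot y_j\ge 0$, so the curve is traversed forward, and dominance $\phi_j\ge y_j$ makes each gap $f_j'(\phi_j)-f_j'(y_j)$ nonnegative.

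\textbf{Main obstacle.} Condition \eqref{eq:eq:MKwC1_sufficiency_Phi_mainODE} is an unweighted sum of $\phi_j'$ against the sum of the $F_j$, whereas Step 3 produces a weighted sum with weights $\phi_jf_j''(\phi_j)\dot y_j$. These weights equal $dp/\phi_j'$ times a common factor, since all nodes move at a common price rate $dp$. Our first attempt is to reparametrize by $p$. Dividing the $j$-th term by $\phi_j f_j''(\phi_j)\dot y_j = dp/\phi_j'$ turns each node's requirement into exactly $\phi_j'\le\alpha F_j$. This suggests the summed condition should instead be used after a time-change normalizing the weights. If that fails, we fall back on the special case where the hypothesis is applied nodewise, with equal weights in the symmetric regime. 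We expect the careful handling of this weighting, together with nodes that remain inactive ($y_j=0$, at the singular initial condition), to be the delicate part.
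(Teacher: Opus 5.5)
\paragraph{Verdict.} Your Steps 1--3 are sound, but the obstacle you flag is a real gap that neither of your remedies closes. In fact it cannot be closed: the lemma with the unweighted summed hypothesis is false.

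\paragraph{What works, and where it breaks.} Steps 1--3 are the primal--dual template the paper has in mind. The paper omits this proof, calling it ``similar to'' Lemma~\ref{unlimited_supply_SufNec}, whose sufficiency proof (Theorem~\ref{sufficiency_unlimited_supply}) already handles $m$ nodes. Set $D_j(y)=\alpha\bigl(\int_0^y\varPhi_j-f_j(y)\bigr)-f_j^*(\varPhi_j(y))$. Then $D_j(0)=0$, $D_j'(y)=\phi_jf_j''(\phi_j)\,(\alpha F_j-\phi_j')$, and $\alpha$-competitiveness follows from $\sum_jD_j(y_j)\ge0$ on the common-price curve. The failure is in your last step. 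Along the curve, $f_j''(\phi_j)\phi_j'\,\dot y_j=\dot p$, so the weights are $\phi_jf_j''(\phi_j)\dot y_j=\phi_j\,\dot p/\phi_j'$. The only common factor is $\dot p$. The factor $\phi_j/\phi_j'$ varies with $j$; even for identical costs, where $\phi_j=(f')^{-1}(p)$ is common, $1/\phi_j'$ still varies. A time change multiplies every $\dot y_j$ by the same scalar, so it cannot equalize these weights. You also cannot divide the terms of a single sum by different factors. In the $p$-parametrization the requirement is $\sum_j\phi_j\bigl(\alpha F_j/\phi_j'-1\bigr)\ge0$, and this does not follow from $\sum_j\phi_j'\le\alpha\sum_jF_j$. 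Your fallback, the nodewise hypothesis $\phi_j'\le\alpha F_j$, proves a different statement with a stronger hypothesis: just the Lemma~\ref{unlimited_supply_SufNec} argument again.

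\paragraph{Counterexample.} Take $m=2$, $f_1=f_2=y^2$, $\phi_1(y)=y$, $\phi_2(y)=2y$ and $\alpha=6$.
\begin{itemize}
\item \emph{Hypothesis holds.} $F_1\equiv0$ and $F_2\equiv\tfrac12$, so $\phi_1'+\phi_2'=3=6\cdot(0+\tfrac12)$ for every $(y_1,y_2)$. Monotonicity, dominance and $\phi_j(0)=0$ also hold.
\item \emph{Instance.} The prices are $\varPhi_1(y)=2y$ and $\varPhi_2(y)=4y$. Send supply-oblivious requests whose value densities rise slowly from $0$ to $p$, then a flood at density $p$. The nodes end at $y_1=p/2$ and $y_2=p/4$.
\item \emph{Ratio.} In the limit,
\[
\ALG\to\Bigl(\int_0^{p/2}2u\,\mathrm{d}u-\tfrac{p^2}{4}\Bigr)+\Bigl(\int_0^{p/4}4u\,\mathrm{d}u-\tfrac{p^2}{16}\Bigr)=\tfrac{p^2}{16},
\]
while $\OPT\ge2f^*(p)=\tfrac{p^2}{2}$. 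The ratio is $8>6$. This agrees with $\sum_j\phi_j(\alpha F_j/\phi_j'-1)=-p/4<0$.
\item \emph{Unbounded version.} Replace $f_2$ by $\varepsilon y^2$. Every $F_j$ is invariant under scaling $f_j$, so the hypothesis is unchanged, but the ratio becomes $4(\varepsilon+1)$, which is unbounded.
\end{itemize}

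\paragraph{What a correct statement needs.} The unweighted sum ignores the node scales carried by $\phi_jf_j''(\phi_j)$. A correct version needs either the nodewise condition or the weighted aggregate condition $\sum_j\phi_j(\alpha F_j/\phi_j'-1)\ge0$ above.
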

Similarly, we write the equality version of Eq. \eqref{eq:eq:MKwC1_sufficiency_Phi_mainODE} as $\sum_{j}\phi_j^{\prime} = \alpha \cdot \sum_{j} F_j(\phi_j, y_j)$, where 
$$
F_j(\phi_j,y_j) = \frac{f_j'(\phi_j(y_j))  - f_j'(y_j) }{\phi_j(y_j)\cdot f_j''(\phi_j(y_j))}.
$$

By Lemma~\ref{sufficiency_unlimited_supply_fullycon}, the performance of the algorithm~$\PUM\phi$ is determined by the reserve function chosen for each supply node. To build intuition, consider the case of two supply nodes. If we independently design an $\alpha_\star^{(\sigma_1)}$-competitive reserve function for one node and an $\alpha_\star^{(\sigma_2)}$-competitive reserve function for the other, one might expect the combined system to achieve a competitive ratio lying between these two quantities. The following theorem formalizes this intuition and shows that, in fact, an averaging phenomenon emerges asymptotically.

\begin{theorem}[Asymptotic Averaging Convergence] \label{Thm:AverageCRforMK}
    For each $j \in [m]$, let $f_j$ be a $(\tau_j,\sigma_j)$-polynomial cost function. If each $\phi_j$ in Algorithm~\ref{alg:PRM_MKwC} is selected from the optimal design space  $\mathcal{S}(\alpha_{\star}^{(\sigma_j)}, f_j)$, then Algorithm~\ref{alg:PRM_MKwC} asymptotically achieves competitive ratio
    \begin{align*}
        \frac{1}{m} \sum_{j\in [m]} \alpha_{\star}^{(\sigma_j)}
    \end{align*}
    in the large-market regime, where both the maximum valuation  $\max_t \{v_t\} $ and the total demand $\sum_t w_t$ grow unbounded.
\end{theorem}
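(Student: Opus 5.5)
The plan is to prove the statement through the sufficiency condition of Lemma~\ref{sufficiency_unlimited_supply_fullycon}. The first step is to make precise what ``asymptotically'' means. In the large-market regime, the adversary's worst case pushes every node's utilization $y_j$ to infinity, because both $\max_t v_t$ and $\sum_t w_t$ grow without bound. So I only need the coupled inequality \eqref{eq:eq:MKwC1_sufficiency_Phi_mainODE} to hold, with ratio $\bar\alpha:=\frac1m\sum_j \alpha_\star^{(\sigma_j)}+o(1)$, in the regime where all $y_j\to\infty$. The bounded-utilization regime contributes only a vanishing fraction of both \OPT and \ALG, since all costs and values grow at least like $y^{\tau_j}$.

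The second step is to analyze each node separately using the design-space structure. For $\phi_j\in\mathcal{S}(\alpha_\star^{(\sigma_j)},f_j)$ we have $\phi_j'=\alpha_\star^{(\sigma_j)}F_j(\phi_j,y_j)$ wherever $\phi_j$ is a genuine solution, and the inequality holds almost everywhere for designs built via Proposition~\ref{Prop:SolSpace}. For a $(\tau_j,\sigma_j)$-polynomial, the highest-degree term dominates as $y\to\infty$. Proposition~\ref{Prop:BoundResp} then gives $\phi_j(y)/y\to\Delta_\star^{(\sigma_j)}$ (at $\alpha=\alpha_\star^{(\sigma_j)}$ the two roots coincide), and hence $\phi_j'(y)\to\Delta_\star^{(\sigma_j)}$ along solutions. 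Plugging $\phi=\Delta y$ into $F_j$ with $f_j\sim c y^{\sigma_j}$ gives
\[
F_j\to \frac{1-\Delta^{-(\sigma_j-1)}}{\sigma_j-1}=\frac{\Delta}{\alpha_\star^{(\sigma_j)}}
\]
at $\Delta=\Delta_\star^{(\sigma_j)}$, using $\CP$. Each node's ODE is therefore asymptotically tight with ratio $\alpha_\star^{(\sigma_j)}$.

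The third step is the averaging, and this is where the main obstacle lies. The sum condition requires $\sum_j\phi_j'\le\bar\alpha\sum_jF_j$. Summing the per-node identities gives $\sum_j\alpha_\star^{(\sigma_j)}F_j$, which is a weighted rather than a uniform average. I would argue that under supply-oblivious arrivals, \PUM\phi\ equalizes marginal pseudo-prices $f_j'(\phi_j(y_j))$ across nodes, because every demand can go anywhere. This ties the $y_j$ together and lets the adversary's worst case be evaluated in the aggregate. The \OPT side then splits the total demand across nodes, and the conjugate bound \eqref{ineq:PricingFunction} becomes a sum over $j$.

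The hard part is showing that the resulting ratio of sums tends to the uniform average $\frac1m\sum_j\alpha_\star^{(\sigma_j)}$ rather than to a weighted mean. My first attempt would be to normalize each node so that its asymptotic contribution $F_j\to\Delta_\star^{(\sigma_j)}/\alpha_\star^{(\sigma_j)}$ is weighted equally in the large-market limit. This would use the freedom to choose $\phi_j$ inside the design space and the scaling invariance of elasticity. I would then bound the ratio $\sum_j\phi_j'/\sum_jF_j$ by the average via an explicit limit computation. Finally, I would control the $o(1)$ errors with the lower-order polynomial terms, which decay like $y^{-(\sigma_j-k)}$.
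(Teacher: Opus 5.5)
Your Steps 1--2 are sound and match the paper's setup. You reduce to the coupled condition of Lemma~\ref{sufficiency_unlimited_supply_fullycon}, use the equalized pseudo-prices $f_j'(\phi_j(y_j))=f_{j+1}'(\phi_{j+1}(y_{j+1}))$ to send all $y_j\to\infty$ together, and compute the per-node asymptotics. The gap is in Step~3. The key inequality is never supplied, and the route you sketch cannot supply it, because there is no freedom to ``normalize'' the nodes. First, $F_j$ is invariant under $f_j\mapsto c f_j+\ell$ for $c>0$ and linear $\ell$. Second, every design in $\mathcal{S}(\alpha_\star^{(\sigma_j)},f_j)$ is squeezed between $\phi_{\mathrm{lb}}$ and $\phi_{\mathrm{ub}}$. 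Both are asymptotic to $\Delta_\star^{(\sigma_j)}y$, since $\Delta_+^{(\sigma_j)}=\Delta_-^{(\sigma_j)}$ at the optimal ratio (Proposition~\ref{Prop:BoundResp}). So by your own Step~2, $F_j\to \Delta_\star^{(\sigma_j)}/\alpha_\star^{(\sigma_j)}=1/\sigma_j$ for \emph{every} admissible choice, and
\[
\frac{\sum_j \phi_j'}{\sum_j F_j}\;\longrightarrow\;\frac{\sum_j \alpha_\star^{(\sigma_j)}/\sigma_j}{\sum_j 1/\sigma_j}.
\]
This is a genuinely weighted mean, and no choice within the design spaces can turn it into the uniform one. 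When the $\sigma_j$ differ, this limit is strictly smaller than $\frac1m\sum_j\alpha_\star^{(\sigma_j)}$. So your stated goal of showing the ratio ``tends to the uniform average'' is both impossible and unnecessary: the theorem only needs an upper bound.

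The missing idea is the opposite ordering of the two factors, combined with Chebyshev's sum inequality. Order the nodes so that $\sigma_1\le\dots\le\sigma_m$. Then $\alpha_\star^{(\sigma_j)}=\sigma_j^{\sigma_j/(\sigma_j-1)}$ is non-decreasing in $j$, while $F_j$ is asymptotically non-increasing in $j$. The paper shows the latter at matched pseudo-prices by writing $F_j=\bigl(1-f_j'(y_j)/f_j'(\phi_j)\bigr)/Ef_j'(\phi_j)$ and using $Ef_j'(\phi_j)\to\sigma_j-1$ and $f_j'(y_j)/f_j'(\phi_j)\to 1/\sigma_j$. This gives $F_j\ge F_{j+1}$ because $\frac{\sigma_{j+1}-1}{\sigma_j-1}\ge\frac{\sigma_j}{\sigma_{j+1}}\cdot\frac{\sigma_{j+1}-1}{\sigma_j-1}$; in your limit form, this is just $1/\sigma_j\ge 1/\sigma_{j+1}$. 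Ties in $\sigma_j$ are harmless, since the corresponding $\alpha$'s coincide. Chebyshev's sum inequality for oppositely ordered sequences then yields
\[
\sum_{j}\phi_j'=\sum_{j}\alpha_\star^{(\sigma_j)}F_j\le\Bigl(\frac1m\sum_{j}\alpha_\star^{(\sigma_j)}\Bigr)\sum_{j}F_j,
\]
which is exactly condition~\eqref{eq:eq:MKwC1_sufficiency_Phi_mainODE} with the averaged ratio. Equivalently, the weighted mean displayed above is at most the uniform mean because its weights $1/\sigma_j$ decrease as the values $\alpha_\star^{(\sigma_j)}$ increase. With this inequality replacing your normalization step, the rest of your outline goes through.
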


Theorem~\ref{Thm:AverageCRforMK} reveals a qualitative distinction between the supply-oblivious and standard \OACC\ settings. In the standard model, performance is governed by the worst supply node, yielding competitive ratio $\max_j \alpha_{\star}^{(\sigma_j)}$. In contrast, under supply-oblivious arrivals, the fully connected structure eliminates bottleneck effects in the large-market regime: the competitive ratio converges to the average of the node-wise optimal ratios, thereby strictly improving the worst-node bound.  The key to Theorem~\ref{Thm:AverageCRforMK} is to establish a monotonicity property of $F_j $ across supply nodes, which enables the application of Chebyshev's sum inequality \cite{Hardy1991} to derive a sharper aggregate bound. The detailed proof is provided in Appendix~\ref{Apx:AverageCRforMK}.

\subsubsection{\OACC with (Hard) Supply Constraints} \label{(Hard) Supply Constraints}

We proceed to the second variant of \OACC with \textit{(hard) supply constraints} and demonstrate how the upper-bound solution $\phi_{\mathrm{ub}}$ can be leveraged to strengthen the result in \citet{Tan2020a}, by transforming the existence result (see Definition 3 in \citet{Tan2020a}) into an explicit expression derived from $\phi_{\mathrm{ub}}$.
For clarity of exposition, we focus on a special instance of \OACC, given by:  
\begin{equation}
\label{eq:MainPrimal}
\underset{x_t \in [0,1]}{\max}\ \sum_{t=1}^T v_t x_t \;-\; f\left(\sum_{t=1}^T w_t x_t\right) \qquad \text{s.t.}\ \sum_{t=1}^T w_t x_t \leq B.
\end{equation}  
Hereinafter, we assume w.l.o.g. that the total supply limit is normalized (i.e., $B = 1$). In this setting, we are also given the highest value among upcoming requests.\footnote{Without this information, the problem becomes trivial, as no online algorithm can be competitive with a bounded competitive ratio.}  
Let $v_{\rm max} = \max\{v_t\}_{\forall t}$ and define  
$$ 
\rho = f'^{-1}(v_{\max}), \quad  \text{or equivalently,} \quad f'(\rho) = v_{\max}. 
$$
Here, $ \rho $ represents the utilization level at which the marginal supply cost equals the maximum value $ v_{\rm max} $. 

Similar to Lemma~\ref{unlimited_supply_SufNec} and Lemma~\ref{sufficiency_unlimited_supply_fullycon}, we can derive the following necessary and sufficient condition on $ \PUM\phi $ being $ \alpha $-competitive (for the proof, see Appendix~\ref{Apx:Limited_supply_SufNec}). 

\begin{lemma}\label{Limited_supply_SufNec}
For $ \alpha \geq 1 $ and $\rho>0$, let $b = \min\{\rho,1\}.$ 
An $\alpha$-competitive online algorithm exists if and only if there exists a reserve function $\phi$ satisfying \eqref{eq:SufNec_phi_limited}. In particular, any such $\phi$ ensures that $\PUM\phi$ is $\alpha$-competitive:
\begin{subequations} \label{eq:SufNec_phi_limited}
\begin{align}\normalfont
& \phi'(y) = \alpha\cdot \frac{f'(\phi(y))  - f'(y) }{\min\{\phi(y),1\}\cdot f''(\phi(y))}, & & \forall y \in [0, b], \label{eq:eq:SufNec_phi_limited_ODE}\\
& \textsf{Monotone: } \phi'(y) \geq 0, & & \forall y \in [0, b],\\
& \textsf{Superlinear: } \phi(y) \geq  y, & & \forall y \in [0, b],\\
& \textsf{Boundary conditions: } \phi(0) =  0, \phi(b) \geq \rho.	\label{eq:SufNec_phi_limited_BoundaryCond}
\end{align}
\end{subequations}
\end{lemma}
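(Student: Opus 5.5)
We will follow the template of Lemma~\ref{unlimited_supply_SufNec}, working in the single-resource reduction and through the pricing function $\varPhi(y)=f'(\phi(y))$. The new ingredients come from two facts. First, the offline optimum can never allocate more than $1$ unit. Second, the offline optimum never benefits from allocating beyond $\rho$, because no value exceeds $v_{\max}=f'(\rho)$. Together these explain both the $\min\{\phi,1\}$ in the denominator and the truncation of the horizon at $b=\min\{\rho,1\}$.

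\emph{Sufficiency.} Suppose first that the online utilization ends at $y\le b$ and the final price is $\varPhi(y)$. We bound $\OPT$ by weak duality, using dual price $\varPhi(y)$ on the resource constraint. This gives the offline profit at price $p$ under the hard cap,
\[
\max_{0\le z\le 1}\{pz-f(z)\},
\]
which equals $p\min\{\phi(y),1\}-f(\min\{\phi(y),1\})$ when $p=f'(\phi(y))$.

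The analogue of \eqref{ineq:PricingFunction} to prove is
\[
\int_0^y \varPhi(u)\,\mathrm{d}u-f(y)\ \ge\ \tfrac1\alpha\, g(y),\qquad g(y):=\varPhi(y)\min\{\phi,1\}-f(\min\{\phi,1\}).
\]
Both sides vanish at $y=0$, so it suffices to compare derivatives. The left side has derivative $\varPhi(y)-f'(y)$. On the region $\phi<1$, the right side has derivative $\tfrac1\alpha\phi f''(\phi)\phi'$. On the region $\phi\ge1$, it has derivative $\tfrac1\alpha f''(\phi)\phi'$. In both regions the derivatives match exactly under \eqref{eq:eq:SufNec_phi_limited_ODE}, and the inequality holds under the corresponding $\le$ version. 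The envelope identity handles the kink at $\phi=1$, since $g$ is continuous there.

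Next suppose the adversary continues past utilization $b$. Because $\phi(b)\ge\rho$, the price satisfies $\varPhi(b)\ge f'(\rho)=v_{\max}$, so the algorithm accepts nothing further and the final utilization is at most $b\le 1$; this is what keeps the hard constraint satisfied. Moreover, the dual bound for $\OPT$ at utilization $b$ already dominates every instance, because $\OPT$ cannot profitably go past $\min\{\rho,1\}\le\min\{\phi(b),1\}$. Finally, we use the primal-dual decomposition across sequences of arrivals exactly as in Appendix~\ref{Apx:Unlimited}, with the dual variable taken as the final price.

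\emph{Necessity.} We build the standard continuous adversary: requests of infinitesimal weight arrive with value density increasing from $0$ to $v_{\max}$, and the instance is stopped at an arbitrary time. For any $\alpha$-competitive online algorithm, let $\psi(p)$ be its utilization when the value level reaches $p$. Define $\phi:=f'^{-1}(\text{price})$ by inverting $\psi$ as a function of the price. Then $\alpha$-competitiveness at every stopping time forces the integral inequality, with $\OPT$ computed under the cap $1$.

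From this we extract the required properties:
\begin{itemize}
\item monotonicity, since utilization is irrevocable;
\item $\phi(0)=0$, since the ratio must remain bounded for arbitrarily small instances;
\item $\phi\ge y$, since otherwise the algorithm loses money relative to greedy;
\item $\phi(b)\ge\rho$, since the algorithm must remain competitive when the top value $v_{\max}$ arrives after supply $b$ has been used.
\end{itemize}

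The hardest part will be turning the resulting differential inequality into the equality ODE. We plan to follow the appendix argument for Lemma~\ref{unlimited_supply_SufNec}. The idea is to take a maximal solution of the inequality and show that raising $\phi'$ up to equality preserves all the constraints, in particular $\phi(b)\ge\rho$, which only becomes easier to satisfy as $\phi$ increases. The remaining care goes into tracking the switch between the two denominator regimes at $\phi=1$ and the boundary case $\rho>1$, where $b=1$.
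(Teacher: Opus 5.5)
Your sufficiency direction is the paper's argument. The inequality $\int_0^y\varPhi-f(y)\ge\frac1\alpha\bar f^*(\varPhi(y))$, with $\bar f^*(p)=\max_{0\le z\le1}\{pz-f(z)\}$, is the integrated form of the paper's pointwise condition $\bar f^{*\prime}(\varPhi)\,\varPhi'\le\alpha(\varPhi-f')$, using $\bar f^{*\prime}(\varPhi(y))=\min\{\phi(y),1\}$. The condition $\phi(b)\ge\rho$ serves only to make the price reach $v_{\max}$ before the cap binds.

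The gap is in necessity, at the boundary condition. You assert $\phi(b)\ge\rho$ for the reserve function read off from the \emph{given} algorithm (``it must remain competitive when $v_{\max}$ arrives after supply $b$ has been used''). You then plan to carry it over to a maximal solution because it ``only becomes easier as $\phi$ increases''. The first step is false. An algorithm that exhausts its unit capacity while its price is still some $p_1<v_{\max}$ loses only a bounded factor on instances whose values exceed $p_1$. So it is $\alpha$-competitive for all large enough $\alpha$, yet its own reserve function has $\phi(1)=f'^{-1}(p_1)<\rho$ when $\rho>1$. The same objection applies to your dominance bullet, and hence to $\phi(\rho)\ge\rho$ when $\rho\le1$: an arbitrary competitive algorithm can spend accumulated slack by pricing below marginal cost. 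So there is nothing to inherit. Relatedly, competitiveness yields only the integral inequality $p\psi(p)-\int_0^p\psi(u)\,\mathrm{d}u-f(\psi(p))\ge\frac1\alpha\bar f^*(p)$, not a pointwise bound $\phi'\le\alpha F$. So ``raising $\phi'$ up to equality'' has no differential inequality to start from.

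The paper obtains both the equality and the boundary condition from an extremal utilization function, not from the given algorithm. It first reduces to $\psi(v_{\max})\le1$ via the rescaling $\tilde\psi(p)=\psi(\lambda p)$ with $\lambda=\psi^{-1}(1)/v_{\max}$, so that capacity is reached exactly at $v_{\max}$. It then takes the pointwise minimal $\Psi$ satisfying the integral inequality. Minimality forces equality, and differentiating gives the ODE with $\min\{\phi,1\}$ for $\phi=f'^{-1}\circ\Psi^{-1}$; this $\phi$ is your ``maximal'' solution.

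The boundary condition then comes from two separate facts about $\Psi$:
\begin{itemize}
\item $\Psi(v_{\max})\le1$, from the cap;
\item $\Psi(v_{\max})\le\rho$, from saturation. If $\Psi(v_{\max})>f'^{-1}(v_{\max})$, then near $v_{\max}$ the left-hand side has derivative $\Psi'(p)\,(p-f'(\Psi(p)))<0$ while $\frac1\alpha\bar f^*$ is increasing (the paper's Case I).
\end{itemize}
Hence $\Psi(v_{\max})\le b$, which is exactly $\phi(b)\ge\rho$, after extending $\phi$ by the ODE beyond $\Psi(v_{\max})$ if needed; it stays above $\rho\ge y$ there. This is the step your proposal has to supply: the boundary condition and dominance must be proved for the extremal object directly, not transferred from an arbitrary competitive algorithm.
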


The optimal competitive ratio established in \citet{Tan2020a} is defined as the smallest value of $\alpha$ for which Eq.~\eqref{eq:SufNec_phi_limited} admits at least one solution. However, \citet{Tan2020a} did not provide a formal analysis regarding the existence or computation of such solutions. Unlike the setting without hard supply constraints, the existence of solutions to Eq.~\eqref{eq:SufNec_phi_limited} depends not only on the competitive ratio $\alpha$ but also on the value of $\rho$. As a result, the associated design space for Eq.~\eqref{eq:SufNec_phi_limited} is fundamentally different from the one characterized in Section~\ref{sec_characterizing_solution_space}. Nevertheless, in what follows, we demonstrate that Eq.~\eqref{eq:SufNec_phi_limited} can be solved by extending the upper-bound solution $\phi_{\mathrm{ub}}$ from Eq.~\eqref{eq:SufNec_phi}.

\paragraph{Core idea: extending $ \phi_{\mathrm{ub}} $ from Eq. \eqref{eq:SufNec_phi} to solve Eq. \eqref{eq:SufNec_phi_limited}.} Let $\alpha = \asigmas$, and consider the question of for which values of $\rho$ Eq.~\eqref{eq:SufNec_phi_limited} admits at least one feasible solution. If we relax the boundary condition in Eq.~\eqref{eq:SufNec_phi_limited_BoundaryCond} to the initial condition $\phi(0) = 0$, we observe that Eq.~\eqref{eq:SufNec_phi_limited} becomes identical to Eq.~\eqref{eq:SufNec_phi} for values of $\phi(y) \leq 1$. This implies that for small $\rho$, any solution to Eq.~\eqref{eq:SufNec_phi} can be extended to a solution of Eq.~\eqref{eq:SufNec_phi_limited} (see Appendix~\ref{Apx:ExtendSol} for construction details). By Theorem~\ref{Thm:UpperBoundAch}, there exist infinitely many such extensions. Among them, we can identify the largest $\rho$ for which the extended solution satisfies $\phi(b) \geq \rho$. Notably, if a solution $\phi_1$ lies above another solution $\phi_2$, then the corresponding extension of $\phi_1$ will also lie above that of $\phi_2$. This implies that the upper-bound solution $\phi_{\mathrm{ub}}$ to Eq.~\eqref{eq:SufNec_phi} determines the largest value of $\rho$ for which Eq.~\eqref{eq:SufNec_phi_limited} admits a feasible solution. A similar analysis applies for any $\alpha \geq \ataus$. In this case, we define a solution to Eq.~\eqref{eq:SufNec_phi_limited} as an upper-bound solution if it is obtained by extending the upper-bound solution to Eq.~\eqref{eq:SufNec_phi}. This extended upper-bound solution characterizes the conditions of the largest $ \rho $ under which Eq.~\eqref{eq:SufNec_phi_limited} is feasible. The following definition formalizes this idea.

\begin{definition}[Maximal Utilization Level]\label{Def:UpperBoundInput}
  Given a $(\tau,\sigma)$-polynomial cost function $ f $. For any $\alpha\geq 1$, let $P:[\ataus,+\infty)\rightarrow (0,+\infty]$ be defined as follows: 
  $$
  P(\alpha) = \sup_{y\geq 0} \phi_{\mathrm{ub}}(y,\alpha), \qquad \forall  \alpha\geq \ataus,
  $$
  where $\phi_{\mathrm{ub}}(y,\alpha)$ is the upper-bound solution to Eq. \eqref{eq:SufNec_phi_limited} corresponding to parameter $\alpha$. 
\end{definition}

Clearly, $P(\alpha)$ is increasing with respect to $\alpha$ as the upper-bound solution $\phi_{\mathrm{ub}}(y,\alpha_1) > \phi_{\mathrm{ub}}(y,\alpha_2)$ for $\alpha_1 > \alpha_2 \geq \ataus$ (this can be obtained by analyzing the upper-bound solutions to Eq. \eqref{eq:SufNec_phi}). 
With the definition of $P(\alpha)$, we can now present the following theorem for the existence and construction of the optimal reserve function to Eq. \eqref{eq:SufNec_phi_limited}.

\begin{theorem}[\textsc{Design Space for Eq. \eqref{eq:SufNec_phi_limited}}] \label{Thm:ExistenceSol_Limited}
    For any $\alpha\geq \ataus$ and $\rho>0$, we have
    \begin{itemize}
        \item Case-I: $\rho > P(\alpha)$. There exists no solution to Eq. \eqref{eq:SufNec_phi_limited} , namely, there exists no $ \alpha $-competitive reserve function if $ \rho $ is strictly larger than the maximal utilization level $ P(\alpha) $.
        \item Case-II: $\rho = P(\alpha)$. There exists a unique solution to Eq. \eqref{eq:SufNec_phi_limited} \textbf{extended} from $ \phi_{\mathrm{ub}}$ of Eq. \eqref{eq:SufNec_phi}. 
        \item Case-III: $\rho < P(\alpha)$. There exist infinitely many solutions to Eq. \eqref{eq:SufNec_phi_limited}.
    \end{itemize} 
\end{theorem}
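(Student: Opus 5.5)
The plan is to reduce all three cases to a comparison principle for the ODE \eqref{eq:eq:SufNec_phi_limited_ODE}, using the extended upper-bound solution as an envelope. On the region $\{\phi\le 1\}$, Eq.~\eqref{eq:eq:SufNec_phi_limited_ODE} coincides with Eq.~\eqref{eq:mainODE}. On the region $\{\phi>1\}$ the right-hand side $\alpha (f'(\phi)-f'(y))/f''(\phi)$ is locally Lipschitz in $\phi$, because the singularity sits only at the origin. Consequently, solutions of \eqref{eq:SufNec_phi_limited} that leave a neighbourhood of $0$ are ordered and do not cross, and every solution is, near the origin, a solution of \eqref{eq:SufNec_phi} (or its local analogue for $\ataus\le\alpha<\asigmas$, where Theorem~\ref{Thm:UpperBoundAch}'s local analysis still gives initial slopes $\chi^{(\tau)}_\pm$). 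I would first record two facts. First, by Definition~\ref{Def:UpperLowerBoundSol} and its extension, any solution $\phi$ of \eqref{eq:SufNec_phi_limited} satisfies $\phi\le\phi_{\mathrm{ub}}(\cdot,\alpha)$ on its domain. Second, along any solution $\phi$ is monotone, and the constraint $\phi'\ge 0$ with $\phi\ge y$ fails exactly when the trajectory of \eqref{eq:MainDynSystem} reaches the diagonal or the nullcline $\phi'=0$. The supremum in Definition~\ref{Def:UpperBoundInput} is then the maximal height reached by $\phi_{\mathrm{ub}}$ before the constraints break.

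\textbf{Case I.} Suppose $\phi$ solves \eqref{eq:SufNec_phi_limited} with $\rho>P(\alpha)$. By the comparison fact, $\phi(b)\le\phi_{\mathrm{ub}}(b,\alpha)\le P(\alpha)<\rho$, which contradicts the boundary condition \eqref{eq:SufNec_phi_limited_BoundaryCond}. I need to check that $b=\min\{\rho,1\}$ lies within the feasible domain of the comparison. If $\phi_{\mathrm{ub}}$ ceases to be feasible before $b$, then so does $\phi$, since lower trajectories hit the diagonal no later.

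\textbf{Case II.} The extended $\phi_{\mathrm{ub}}$ attains $P(\alpha)=\rho$. Here I must verify that it attains the value by $y=b$ rather than only as a limit. I would handle the $P(\alpha)=+\infty$ subcase separately, where Case II is vacuous. For uniqueness: any other solution lies weakly below $\phi_{\mathrm{ub}}$. If it is strictly below at some point, then non-crossing in the Lipschitz region makes it strictly below at $b$, so it is $<\rho$. If it agrees with $\phi_{\mathrm{ub}}$ at some point with $\phi>0$, Lipschitz uniqueness forces it to agree everywhere to the right, and the extremality of $\phi_{\mathrm{ub}}$ near $0$ (the sup over slope-$\chi_+$ solutions) forces agreement to the left as well. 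Among solutions emanating from the origin, uniqueness reduces to showing that no distinct solution touches $\phi_{\mathrm{ub}}$.

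\textbf{Case III.} For $\rho<P(\alpha)$, the strategy is a continuity/sandwich argument. The solutions emanating from the origin near $\phi_{\mathrm{ub}}$ form a continuum; Theorem~\ref{Thm:UpperBoundAch} supplies infinitely many when $\alpha\ge\asigmas$, and for $\alpha$ between $\ataus$ and $\asigmas$ the local analysis gives a one-parameter family with initial slope $\chi^{(\tau)}_+$ filling a region below $\phi_{\mathrm{ub}}$. By continuous dependence on initial data away from the singularity, the height reached at $y=b$ varies continuously and approaches $\phi_{\mathrm{ub}}(b)\ge P(\alpha)>\rho$. Therefore all family members sufficiently close to $\phi_{\mathrm{ub}}$ satisfy $\phi(b)\ge\rho$ and remain feasible, and there are infinitely many of them.

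The main obstacle I expect is the behaviour at the singular origin. Specifically, I need to show that the solutions from $0$ with slope $\chi^{(\tau)}_+$ form a genuine continuum accumulating at $\phi_{\mathrm{ub}}$, and that no second solution can share $\phi_{\mathrm{ub}}$'s germ. I would attack this first by the blow-up substitution $\phi=y\,u(y)$, which turns the singular point into a saddle/node of a regular planar system, and then apply the stable-manifold or center-manifold theorem to parametrize the trajectories leaving it.
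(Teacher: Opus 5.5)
Your Cases I and II are essentially the paper's argument: you compare against the extended upper-bound solution $\tilde\phi_{\mathrm{ub}}$. The ``germ'' worry you raise in Case II is moot. A solution with $\phi(b)=\tilde\phi_{\mathrm{ub}}(b)$ coincides with $\tilde\phi_{\mathrm{ub}}$ on all of $(0,b]$ by Lipschitz uniqueness on $\{\phi>0\}$, and superlinear solutions satisfy $\phi>0$ there.

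The genuine gap is in Case III. There you need a continuum of solutions that leave the origin and accumulate on $\phi_{\mathrm{ub}}$. You propose to obtain it as a family with initial slope $\chi^{(\tau)}_{+}$, via blow-up and invariant-manifold theory. That family does not exist for $\alpha>\ataus$, and your own substitution shows why. Set $\phi=yu$ and $s=\ln y$. The system becomes $\mathrm{d}u/\mathrm{d}s=G(u)+O(y)$ and $\mathrm{d}y/\mathrm{d}s=y$, where $G(u)=\frac{\alpha}{\tau-1}\bigl(1-u^{1-\tau}\bigr)-u$ vanishes exactly at $\chi^{(\tau)}_{\pm}$. Since $G'(u)=\alpha u^{-\tau}-1$ and $(\chi^{(\tau)}_{+})^{\tau-1}>\tau$, you get $G'(\chi^{(\tau)}_{+})<0<G'(\chi^{(\tau)}_{-})$. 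Hence $(0,\chi^{(\tau)}_{+})$ is a saddle, from which exactly one trajectory enters $\{y>0\}$, and that trajectory is $\phi_{\mathrm{ub}}$. The point $(0,\chi^{(\tau)}_{-})$ is an unstable node. So every solution just below $\phi_{\mathrm{ub}}$ leaves the origin with slope $\chi^{(\tau)}_{-}$, matching the paper's $\tau=2$ linearization. The step you single out as the main obstacle is therefore aimed at a false statement. Theorem~\ref{Thm:UpperBoundAch} does not fill the hole either, because its infinitely many solutions are not shown to approach $\phi_{\mathrm{ub}}$ at $y=b$.

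The missing idea, which the paper uses, is to avoid the singular point entirely by shooting backward from a regular point just below the envelope. Fix $y_1\in(0,b)$ and $0<\varepsilon<\tilde\phi_{\mathrm{ub}}(y_1)-y_1$. Let $\phi_\varepsilon$ solve the ODE in \eqref{eq:SufNec_phi_limited} through $(y_1,\tilde\phi_{\mathrm{ub}}(y_1)-\varepsilon)$. The paper instead starts from $(y^*,1-\varepsilon)$ with $\phi_{\mathrm{ub}}(y^*)=1$, or from a diagonal point $(y_0-\varepsilon,y_0-\varepsilon)$.

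\emph{Backward,} $\phi_\varepsilon$ cannot meet $\tilde\phi_{\mathrm{ub}}$, since the right-hand side is Lipschitz on $\{\phi>0\}$. It also cannot meet the diagonal. A first contact at $\tilde y$, with $\phi_\varepsilon>y$ on $(\tilde y,y_1]$, would force $\phi_\varepsilon'(\tilde y)\ge 1$, whereas $F(\tilde y,\tilde y)=0$. Being increasing and squeezed between $y$ and $\tilde\phi_{\mathrm{ub}}(y)\to 0$, it therefore enters the origin and is monotone and superlinear on $[0,y_1]$. This is the argument of Lemma~\ref{Lem:LowerBoundforFeasibleSol}.

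\emph{Forward,} continuous dependence on the compact interval $[y_1,b]$ gives $\phi_\varepsilon\to\tilde\phi_{\mathrm{ub}}$ uniformly. So for all small $\varepsilon$ it stays above the diagonal and meets the boundary condition, giving infinitely many solutions. No local analysis at the origin is needed, and the argument also covers $\alpha=\ataus$, where the saddle/node picture degenerates.

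Two smaller points. First, your inequality $\phi_{\mathrm{ub}}(b)\ge P(\alpha)$ goes the wrong way when $\rho\le 1$. Then $b=\rho$ and $\tilde\phi_{\mathrm{ub}}(\rho)\le P(\alpha)$. What you actually need there is only $\tilde\phi_{\mathrm{ub}}>y$ on $(0,\rho]$, after which $\phi(b)\ge\rho$ is just superlinearity. Second, for $\rho>1$ the theorem holds only if $P(\alpha)$ is read as $\tilde\phi_{\mathrm{ub}}(1)$, the supremum over the supply range, which is how the paper's proof uses it. Under the literal $\sup_{y\ge 0}$, $P(\alpha)=+\infty$ for $\alpha\ge\asigmas$, and Case III would then fail for $\rho>\tilde\phi_{\mathrm{ub}}(1)$. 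So your ``$P(\alpha)=+\infty$, Case II vacuous'' subcase is not harmless; it signals the wrong reading of $P$.
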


The proof of Theorem~\ref{Thm:ExistenceSol_Limited} is provided in Appendix~\ref{Apx:ExistenceSol_Limited}. In Case-II, the extension of the upper-bound solution follows the construction in Appendix~\ref{Apx:ExtendSol}. A direct implication of Theorem~\ref{Thm:ExistenceSol_Limited} is that, for each $\rho > 0$, we can identify the best achievable competitive ratio: 
\begin{equation} \label{eq:BestCRforrho}
    \alpha_{\star}(\rho) = \left\{
    \begin{aligned}
        &\ataus, \quad &&\text{ if } \rho \leq P(\ataus), \\
        &P^{-1}(\rho), \quad &&\text{ if } \rho > P(\ataus).
    \end{aligned}
    \right. 
\end{equation}
Together with Lemma~\ref{Limited_supply_SufNec}, this result implies that $ \PUM\phi $, when equipped with the optimal reserve function $ \phi $ specified by Case-II in Theorem~\ref{Thm:ExistenceSol_Limited}, is $ \alpha_{\star}(\rho) $-competitive and therefore optimal.

\begin{wrapfigure}{R}{0.6\textwidth}
  \centering
  \begin{subfigure}[c]{0.28\textwidth}
    \includegraphics[width=0.975\linewidth]{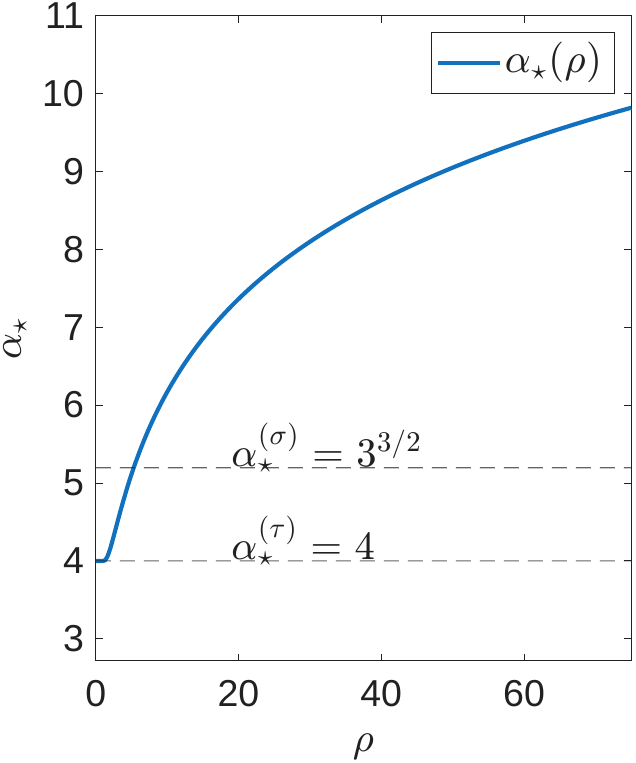}
    \caption{\footnotesize With Supply Constraints}
    \label{fig:23rhoalpha_limited}
  \end{subfigure}
  \hfill
  \begin{subfigure}[c]{0.28\textwidth}
    \includegraphics[width=\linewidth]{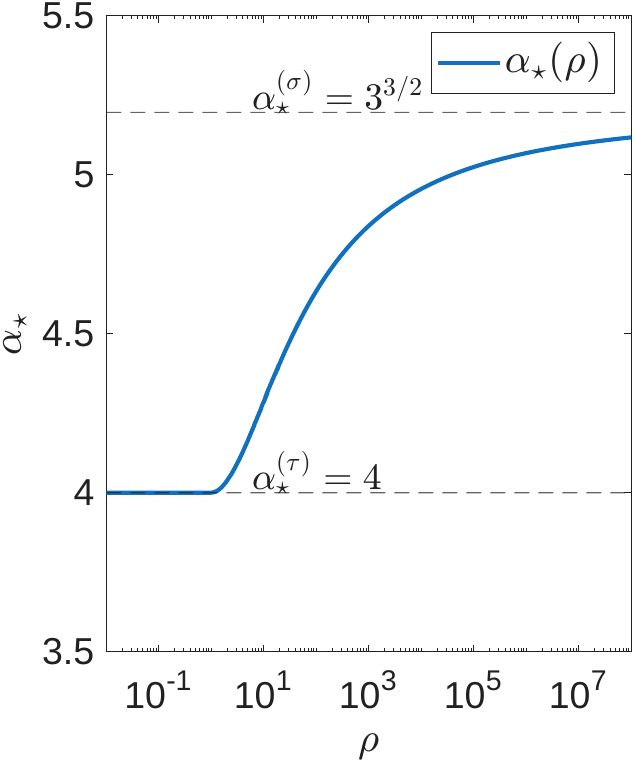}
    \caption{\footnotesize  Without Supply Constraints}
    \label{fig:23rhoalpha}
  \end{subfigure}  
  \caption{$\alpha_{\star}(\rho)$ for input $\rho>0$ with $ f(y) = y^3 + y^2 $.}
\end{wrapfigure} 
\paragraph{Case study: $f(y) = y^3 + y^2$.} 
See Figure~\ref{fig:23rhoalpha_limited} for an illustration of $ \alpha_{\star}(\rho) $ when the cost function is $f(y) = y^3 + y^2$. As shown, $\alpha_{\star}(\rho)$ grows unboundedly with $\rho$, which aligns with intuition: in the limited supply setting, as the maximum value $v_{\rm max}$—and thus $\rho$—increases, the uncertainty faced by the online algorithm also increases, making it increasingly difficult to compete with \OPT. Motivated by this case study, we also examine the role of additional information about $\rho$ in the context of \OACC. Specifically, Eq.~\eqref{eq:oacc} can be interpreted as a special case of the limited supply setting with $B \rightarrow +\infty$, where, by normalization, all input values tend to zero. In this sense, having knowledge of $\rho$ in the unlimited supply setting effectively corresponds to a limited supply instance with a small $\rho$. We show that such knowledge enables the design of improved reserve functions, as the online algorithm no longer needs to compete with \OPT indefinitely—mirroring the behavior in the limited supply case. See Appendix~\ref{Apx: Impact of Addition Information} for details. Figure~\ref{fig:23rhoalpha} illustrates how prior knowledge of $\rho$ at various levels can yield competitive ratios strictly better than $\asigmas$.

\paragraph{Implications for cost functions with $ \sigma = \infty $.} Before concluding this section, we briefly discuss the case where the $(\tau,\sigma)$-polynomial cost function has an unbounded maximum degree, i.e., $\sigma = +\infty$, such as $f(y) = e^y = \sum_{k=0}^{\infty} \frac{y^k}{k!}$. Clearly, based on our lower bound result in Theorem~\ref{theorem_lower_bound}, no algorithm with a bounded competitive ratio exists when the maximum degree of $f(y)$ is unbounded, as in the case of $f(y) = e^y$ where $\sigma = \infty$.  Nevertheless, the parameter-dependent optimal competitive ratio $ \alpha_{\star}(\rho) $ remains well-defined even if $ \sigma = + \infty $. Here, by "parameter-dependent competitive ratio," we mean that the competitive ratio explicitly depends on the additional parameter $\rho$. Thus, we can derive $ \alpha_{\star}(\rho) $ as defined in Eq.~\eqref{eq:BestCRforrho} and conclude that $ \PUM\phi $ is $ \alpha_{\star}(\rho) $-competitive for \OACC with an exponential cost function and maximum value $v_{\max} = f'(\rho)$.

\section{Conclusions and Future Work}

In this paper, we studied online allocation with convex costs (\OACC), extending limited-supply models to settings where resources can be produced dynamically at increasing marginal cost. Our main contribution is a structural understanding of the space of optimal reserve functions, revealing that optimality is not confined to a single algorithm but arises from a rich and explicitly characterizable design space. This viewpoint not only unifies a wide range of prior algorithms, but also shows that infinitely many distinct designs achieve the same best-possible competitive guarantee. Beyond unification, this structural perspective exposes new phenomena. In particular, it enables flexible combinations of dynamic and static pricing without sacrificing optimality and supports the construction of robust algorithms that remain competitive under cost uncertainty. 

Several intriguing directions remain open. First, it is unclear whether the identified design space is complete—that is, whether every $\alpha$-competitive reserve function must lie within it. A similar question arises for the completeness of the universal design space. Resolving these questions would further deepen our understanding of the theoretical structure underlying \OACC and its variants. Second, a systematic investigation of more expressive yet feasible designs within the optimal space may uncover novel forms of phase-adaptive behavior and richer structural features beyond current mix-type constructions, particularly those involving mixtures of multiple noncontiguous phases not linked by plateau regions (i.e., discontinuous price jumps). Finally, our work highlights the promise of integrating nonlinear dynamical systems theory with online algorithm design to achieve structural characterizations of optimal policies. We anticipate that this perspective will extend beyond \OACC and contribute to further progress in broader classes of online optimization problems.

\bibliographystyle{ACM-Reference-Format}
\bibliography{25-OACC}

\appendix

\section{Proofs for Characterization of $\alpha$-competitive Reserve Functions}

\subsection{Preliminaries}

\subsubsection{Online primal-dual framework.} 
The online primal-dual (OPD) framework has proved to be extremely useful for a wide variety of online optimization problems \cite{Buchbinder2009}, and the proof of the sufficiency part of Lemma \ref{unlimited_supply_SufNec} is also based on the OPD approach.
For any instance of the \OACC problem, we define the ``offline'' (fractional) version of the problem as the primal program, which is captured by Eq. \eqref{eq:oacc}. Meanwhile, we consider the corresponding Fenchel dual program of Eq. \eqref{eq:oacc}, which provides an upper bound on any feasible solution to the instance.
\begin{equation}
  \def\arraystretch{1.5}
  \begin{array}{l|l}
  \multicolumn{1}{c|}{\text{ Primal }} & \multicolumn{1}{c}{\text{ Dual }} \\ \hline
  \text{maximize } \sum\limits_{t=1}^T \sum\limits_{j=1}^m v_{t,j} x_{t,j} - \sum\limits_{j=1}^m f(y_j) &  \text{ minimize }\ \sum\limits_{j=1}^m f^*(\lambda_j) + \sum\limits_{t=1}^T \mu_t \\
  \text{subject to: } & \text{ subject to: } \\
  \quad \forall j \in [m]: \quad \sum\limits_{t=1}^T w_{t,j} x_{t,j} \leq y_j  & \quad \forall t \in [T]: \quad \mu_t\geq v_{t,j}-w_{t,j}\lambda_j \\
  \quad \forall t \in [T]: \quad 0 \leq \sum\limits_{j=1}^m x_{t,j} \leq 1 & \quad \forall j \in [m]: \quad \lambda_j \geq 0 \\
    & \quad \forall t \in [T]: \quad \mu_t\geq 0
  \end{array}\label{MKwC1_PrimalDualProblem}
\end{equation}
where $f^*$ is given by: 
$$
f^*(\lambda_j) = \max_{y_j\in \mathbb{R}_+} \lambda_j y_j - f(y_j). 
$$
Note that $ f^*(y) $ is convex, monotone, and differentiable, and $ f^*(0) = 0 $. 
In this paper, we design the reserve function $\phi$ by solving an ordinary differential equation (ODE) derived from the online primal-dual analysis.

\subsubsection{Properties of convex conjugate $ f^* $}
Throughout the whole section, we assume w.l.o.g. that $w_{t,j} = 1$ holds for any $t\in [T]$ and $j\in [m]$ (for general weights $w_{t,j}$, one can simply substitute $v_{t,j}$ by $v_{t,j}/w_{t,j}$ in the proofs and all arguments remain the same).
For \OACC with hard constraints, we use a barrier function to take care of the supply limit. For a $(\tau,\sigma)$-elastic cost $f$, where $\tau>1$, define
$$
\bar{f}(y) = 
\begin{cases}
f(y) & \text{ if } y \in [0, 1]\\
+\infty & \text{ if } y \in (1, +\infty).
\end{cases}
$$
We then show some properties of the convex conjugate function of the cost function $\bar{f}$, given by$\bar{f}^*(\lambda) = \max\limits_{y\in \mathbb{R}_+} \lambda y - \bar{f}(y)$.
\begin{lemma} \label{Lem:ModifiedCost}
    The convex conjugate of the cost function $ f^*(\lambda) $ has the following properties
    \begin{itemize}
        \item $ \bar{f}^*(\lambda) $ is convex, differentiable, and monotone, and $ f^*(0) = 0 $. 
        
        \item The derivative of $ \bar{f}^*{'}(\lambda) $ is given as follows:
              \begin{align}
                  \bar{f}^*{'}(\lambda) = 
                  \begin{cases}
                  f'^{-1}(\lambda) & \text{ if } \lambda \in [0, f^{\prime}(1)]\\
                  1                & \text{ if } \lambda \in (f^{\prime}(1), +\infty),
                  \end{cases}
              \end{align}
    \end{itemize}
\end{lemma}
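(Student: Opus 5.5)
The plan is to work directly from the definition $\bar{f}^*(\lambda) = \max_{y\in\mathbb{R}_+} \lambda y - \bar{f}(y) = \max_{y\in[0,1]} \lambda y - f(y)$ for $\lambda \geq 0$. The barrier makes every $y>1$ irrelevant, so the supremum is taken over the compact interval $[0,1]$ of a continuous function. It is therefore attained and finite, and $\bar{f}^*$ is a finite-valued function on $\mathbb{R}_+$.

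\emph{Convexity, monotonicity and the value at zero.} First, $\bar{f}^*$ is a pointwise supremum of the affine functions $\lambda \mapsto \lambda y - f(y)$, indexed by $y\in[0,1]$, so it is convex. Each of these affine maps has slope $y \geq 0$, so their supremum is non-decreasing in $\lambda$. At $\lambda=0$ we get $\bar{f}^*(0) = \max_{y\in[0,1]} -f(y) = -f(0) = 0$, using that $f\geq 0$ and $f(0)=0$ from Assumption~\ref{ass1}.

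\emph{Identifying the maximizer.} For fixed $\lambda\ge 0$, the objective $g_\lambda(y) = \lambda y - f(y)$ is strictly concave on $[0,1]$ because $f$ is strictly convex, so its maximizer $y^*(\lambda)$ is unique. Since $f'$ is continuous and strictly increasing with $f'(0)=0$, we split into two cases.
\begin{itemize}
\item If $\lambda \in [0, f'(1)]$, the stationarity condition $f'(y)=\lambda$ has the unique solution $y = f'^{-1}(\lambda) \in [0,1]$. This is an interior or endpoint maximizer of the concave function $g_\lambda$.
\item If $\lambda > f'(1)$, then $g_\lambda'(y) = \lambda - f'(y) > 0$ on all of $[0,1]$, so the maximizer is the endpoint $y^*=1$.
\end{itemize}
Thus $y^*(\lambda)$ equals the claimed formula, and it is continuous in $\lambda$: both branches give the value $1$ at $\lambda = f'(1)$.

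\emph{Differentiability.} This is the only step needing care, because of the switch at $\lambda=f'(1)$. I will invoke Danskin's envelope theorem: the index set $[0,1]$ is compact, $(\lambda,y)\mapsto \lambda y - f(y)$ is continuous, and its partial derivative in $\lambda$ (namely $y$) is continuous. Under these conditions the directional derivatives of $\bar{f}^*$ are $\max_{y\in Y^*(\lambda)} y$ and $\min_{y\in Y^*(\lambda)} y$, where $Y^*(\lambda)$ is the set of maximizers. Here $Y^*(\lambda)=\{y^*(\lambda)\}$ is a singleton, so $\bar{f}^*$ is differentiable with $\bar{f}^{*\prime}(\lambda) = y^*(\lambda)$ everywhere, including at the junction point $\lambda = f'(1)$. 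At $\lambda=0$ the statement is understood as a right derivative, equal to $f'^{-1}(0)=0$.

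As a cross-check that avoids Danskin, one can verify the formula by hand. On $[0,f'(1)]$, the envelope relation gives $\bar{f}^*(\lambda) = \lambda f'^{-1}(\lambda) - f(f'^{-1}(\lambda))$, whose derivative is $f'^{-1}(\lambda)$. On $(f'(1),\infty)$ we have $\bar{f}^*(\lambda)=\lambda - f(1)$, whose derivative is $1$. The two pieces match in value and in derivative at $\lambda = f'(1)$.
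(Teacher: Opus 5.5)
Your proposal is correct, and your cross-check is exactly the paper's proof. The paper writes $\bar{f}^*(\lambda)=\lambda f'^{-1}(\lambda)-f(f'^{-1}(\lambda))$ on $[0,f'(1)]$ and $\bar{f}^*(\lambda)=\lambda-f(1)$ beyond, differentiates each piece, and calls the remaining properties elementary. Your unique-maximizer/Danskin argument reaches the same formula more carefully: it also covers differentiability at the junction $\lambda=f'(1)$ and at $\lambda=0$, where $f'^{-1}$ itself need not be differentiable (e.g.\ when $f''(0)=0$) and the paper's piecewise chain-rule differentiation does not justify the result.
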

\begin{proof}
    The proof of these properties are elementary. For example, 
    \begin{align}
        f^*(\lambda) = 
        \begin{cases}
                  \lambda f^{'-1}(\lambda) - f(f^{'-1}(\lambda)) & \text{ if } \lambda\in \left[0, f^{\prime}(1)\right]\\
                  \lambda - f(1)            & \text{ if } \lambda\in \left(f^{\prime}(1), +\infty\right),
                  \end{cases}
    \end{align}
    Taking the derivative on both sides leads to the property in the lemma. 
\end{proof}

\subsection{Proof of Lemma \ref{unlimited_supply_SufNec}} \label{Apx:Unlimited}

To prove Lemma \ref{unlimited_supply_SufNec}, we need to show both the sufficiency and necessity of $\PUM\phi$ for achieving $ \alpha $-competitiveness. Our proof is divided into two parts. First, we first prove the the sufficiency and then the necessity. 

\begin{theorem}[Sufficiency]\label{sufficiency_unlimited_supply} 
For any $ \alpha \geq 1 $, $ \PUM\phi $ is $ \alpha $-competitive if there exists $ \phi $:
\begin{subequations}\label{eq:sufficiency_Phi}
\begin{align}
& \phi'(y) \leq \alpha\cdot \frac{f'(\phi(y))  - f'(y) }{\phi(y)\cdot f''(\phi(y))}, & & \forall y \geq  0, \\
& \textsf{Monotone: } \phi'(y) \geq 0, & & \forall y \geq  0,\\
& \textsf{Superlinear: } \phi(y) \geq  y, & & \forall y \geq  0,\\
& \textsf{Initial condition: } \phi(0) =  0.	
\end{align}
\end{subequations}
\end{theorem}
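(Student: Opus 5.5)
We prove Theorem~\ref{sufficiency_unlimited_supply} with the online primal-dual framework of \eqref{MKwC1_PrimalDualProblem}, taking $w_{t,j}=1$ as in the appendix. The idea is to build, alongside the run of $\PUM\phi$, a feasible dual solution whose objective is at most $\alpha$ times the algorithm's primal objective. Weak duality then gives $\OPT\le\alpha\cdot\ALG$.

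\textbf{Dual construction.} Let $y_j$ denote the final utilization of node $j$. We set $\lambda_j=f'(\phi(y_j))$, the final posted price of node $j$.

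For arrival $t$, the KKT conditions of the concave problem \eqref{eq_agent_t} say the following. If $x_{t,j}>0$, then $v_{t,j}-f'(\phi(y_{t,j}))=\mu_t$, where $y_{t,j}$ is the utilization just after step $t$. For every $j$, we have $v_{t,j}-f'(\phi(y_{t,j}))\le\mu_t$, with $\mu_t\ge0$ the multiplier of $\sum_j x_{t,j}\le1$.

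Since $\phi$ and $f'$ are nondecreasing and utilizations only grow, $f'(\phi(y_{t,j}))\le\lambda_j$. Hence $\mu_t\ge v_{t,j}-\lambda_j$ for all $j$, and the dual is feasible.

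\textbf{Accounting.} On the primal side, the value collected at step $t$ is $\sum_j v_{t,j}x_{t,j}$. Writing $v_{t,j}=\mu_t+f'(\phi(y_{t,j}))$ on the allocated coordinates gives
\[
\ALG\;\ge\;\sum_t\mu_t\sum_j x_{t,j}+\sum_j\Bigl(\int_0^{y_j}f'(\phi(u))\,\mathrm{d}u-f(y_j)\Bigr).
\]
This uses $v_{t,j}x_{t,j}\ge\int_{y_{t-1,j}}^{y_{t,j}}f'(\phi)$, which follows from monotonicity of the integrand. Complementary slackness gives $\mu_t>0\Rightarrow\sum_jx_{t,j}=1$, so $\sum_t\mu_t\le\sum_t\mu_t\sum_jx_{t,j}$.

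The dual objective is $\sum_jf^*(f'(\phi(y_j)))+\sum_t\mu_t$. Since $\alpha\ge1$, it suffices to prove the per-node inequality \eqref{ineq:PricingFunction} with $\varPhi=f'\circ\phi$:
\[
G(y):=\int_0^{y}f'(\phi(u))\,\mathrm{d}u-f(y)-\tfrac1\alpha f^*\bigl(f'(\phi(y))\bigr)\;\ge\;0,\qquad y\ge0 .
\]

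\textbf{Main step.} At $y=0$ we have $\phi(0)=0$, so $f^*(f'(0))=f^*(0)=0$ and $G(0)=0$. It therefore suffices to show $G'\ge0$.

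Using $(f^*)'=(f')^{-1}$, we get $\tfrac{d}{dy}f^*(f'(\phi))=\phi\,f''(\phi)\,\phi'$. Hence
\[
G'(y)=f'(\phi(y))-f'(y)-\tfrac1\alpha\,\phi(y)f''(\phi(y))\,\phi'(y),
\]
which is nonnegative exactly when the differential inequality in \eqref{eq:sufficiency_Phi} holds. Note that $f'(\phi)-f'(y)\ge0$ by dominance, consistent with $\phi'\ge0$.

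\textbf{Expected obstacle.} The main difficulty is rigor at the singular point $y=0$ and for merely a.e.-differentiable $\phi$. Near $0$ the ratio $\bigl(f'(\phi)-f'(y)\bigr)/(\phi f''(\phi))$ is $0/0$. I would handle this by working with absolutely continuous $G$ on $[\epsilon,\infty)$ and letting $\epsilon\to0$, using continuity of $\phi$ and $f^*$.

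A second point needing care is that the KKT argument for \eqref{eq_agent_t} needs strict concavity or a tie-breaking rule. Only the inequality $v_{t,j}x_{t,j}\ge\int f'(\phi)$ on allocated nodes is actually required, and this follows directly from the optimality characterization.
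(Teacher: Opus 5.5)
Your proposal is correct and is essentially the paper's proof. It uses the same online primal--dual certificate, with $\lambda_j=f'(\phi(y_j))$ and $\mu_t$ the KKT multiplier of $\sum_j x_{t,j}\le 1$. Your inequality $G\ge 0$ (i.e., Eq.~\eqref{ineq:PricingFunction}) is just the integrated form of the paper's pointwise use of the differential inequality on each interval $[y_{t-1,j},y_{t,j}]$.

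The caveats in your last paragraph are unnecessary, and one remark there is slightly off:
\begin{itemize}
\item KKT multipliers exist at every maximizer of the concave $C^1$ objective over the polyhedron $\mathcal{X}_t$, so no strict concavity or tie-breaking rule is needed.
\item Your remark that only $v_{t,j}x_{t,j}\ge\int f'(\phi)$ is required is inaccurate. Your accounting also uses full stationarity and complementary slackness to absorb $\sum_t\mu_t$.
\item $G$ is continuous at $0$ and differentiable with $G'\ge 0$ on $(0,\infty)$, so the mean value theorem already gives $G\ge 0$ without any $\epsilon\to 0$ argument.
\end{itemize}
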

\begin{proof}

  For each arriving request $t$, let $\mu_t \geq 0$ and $\{\zeta_{t,j}\}_{j\in[m]}$ denote the KKT
  multipliers associated with the constraints $\sum_{j\in[m]} x_{t,j}\leq 1$ and $x_{t,j}\geq 0$,
  respectively, in the allocation problem Eq.~\eqref{eq_agent_t} solved by
  Algorithm~\ref{alg:PRM_MKwC}. Since $\varPhi = f'\circ\phi$ is continuous and non-decreasing,
  the objective of Eq.~\eqref{eq_agent_t} is concave and continuously differentiable in
  $\mathbf{x}_t$; as all constraints defining $\mathcal{X}_t$ are affine, the KKT conditions are
  necessary and sufficient for the optimality of $\mathbf{x}_t^{\ALG}$, and such multipliers exist.
  We set the dual variables of Eq.~\eqref{MKwC1_PrimalDualProblem} as
  $$
  \lambda_{t,j} = \varPhi(y_{t,j}), \qquad \forall j \in [m],
  $$
  together with the multipliers $ \mu_t $ above. Stationarity of Eq.~\eqref{eq_agent_t} at
  $ \mathbf{x}_t^{\ALG} $ reads
  \begin{equation} \label{eq:KKT_stationarity}
      v_{t,j} - \varPhi\left(y_{t-1,j}+ x_{t,j}^{\ALG}\right) - \mu_t + \zeta_{t,j} = 0,
      \qquad \forall j \in [m],
  \end{equation}
  that is, $ v_{t,j} - \lambda_{t,j} = \mu_t - \zeta_{t,j} \leq \mu_t $ for all $ j\in[m] $. Since
  $ y_{t,j} $ is non-decreasing in $ t $ and $ \varPhi $ is non-decreasing, we have
  $ \lambda_{t,j}\leq \lambda_{T,j} $, and therefore $ \mu_t \geq v_{t,j}-\lambda_{T,j} $ for all
  $ t\in[T] $ and $ j\in[m] $. Together with $ \mu_t\geq 0 $ and $ \lambda_{T,j}\geq 0 $, this
  shows that $ \{\lambda_{T,j}\}_{j\in[m]} $ and $ \{\mu_t\}_{t\in[T]} $ form a feasible dual
  solution.

  We now prove that the incremental inequality $P_t-P_{t-1}\geq \frac{1}{\alpha}\left(D_t-D_{t-1}\right)$ holds for $t\in [T]$. 
  By Algorithm \ref{alg:PRM_MKwC}, if $\varPhi(y_{t-1,j})> v_{t,j}$, request $t$ is rejected by node $j$, i.e., $x_{t,j} = 0$. In this case, node $j$ does not contribute to $P_t$ and $D_t$.
  Therefore, in the following we focus on those $j\in [m]$ such that $\varPhi(y_{t-1,j})\leq v_{t,j}$. 

  \begin{equation}
    \begin{aligned}
      \OPT 
      &\leq \sum_{t=1}^{T} \mu_t + \sum_{j=1}^{m} \sum_{t=1}^{T}\left(f^*(\lambda_{t,j}) - f^*(\lambda_{t-1,j})\right) \\
      &= \sum_{t=1}^{T} \mu_t + \sum_{j=1}^{m} \sum_{t=1}^{T}\left(f^*(\varPhi(y_{t,j})) - f^*(\varPhi(y_{t-1,j}))\right) \\
      &= \sum_{t=1}^{T} \mu_t + \sum_{j=1}^{m} \sum_{t=1}^{T} \int_{y_{t-1,j}}^{y_{t,j}} f^{*\prime}(\varPhi(u))\cdot \varPhi^\prime(u) \Id u
    \end{aligned} \label{ineq:Glb1}
  \end{equation}
  By Eq. \eqref{eq:sufficiency_Phi}, we know that the following inequality holds for all $y_{t-1, j}\geq 0$: 
  \begin{equation}
    f^{*\prime}(\varPhi(y_{t-1,j})) \cdot\varPhi^{\prime}(y_{t-1,j}) \leq \alpha\left(\varPhi(y_{t-1,j})-f^{\prime}(y_{t-1,j})\right). \label{ineq:DiffCondPrf2}
  \end{equation}
  Combining Eq. \eqref{ineq:DiffCondPrf2} with Eq. \eqref{ineq:Glb1}, we have: 
  \begin{equation}
    \begin{aligned}      
      \OPT
      &\leq \sum_{t=1}^{T} \mu_t + \sum_{j=1}^{m} \sum_{t=1}^{T} \int_{y_{t-1,j}}^{y_{t,j}} \alpha\left(\varPhi(u) - f^\prime(u)\right) \Id u \\
      &\leq \alpha \sum_{t=1}^{T} \mu_t + \alpha \sum_{j=1}^{m} \sum_{t=1}^{T} \int_{y_{t-1,j}}^{y_{t,j}} \varPhi(u)  \Id u - \alpha \sum_{j=1}^{m} f(y_{T,j}) \\
      &\leq \alpha \sum_{t=1}^{T} \left(\mu_t + \sum_{j=1}^{m} \varPhi(y_{t,j})\cdot x_{t,j}\right) - \alpha \sum_{j=1}^{m} f(y_{T,j}).
    \end{aligned}  \label{ineq:Glb2}
  \end{equation}
  
  It remains to express $ \mu_t $ in terms of the allocation made by
  Algorithm~\ref{alg:PRM_MKwC}. By complementary slackness, $ \zeta_{t,j}\, x^{\ALG}_{t,j} = 0 $ for
  every $ j\in[m] $, so multiplying Eq.~\eqref{eq:KKT_stationarity} by $ x^{\ALG}_{t,j} $ gives
  $$
  \left(v_{t,j} - \varPhi(y_{t,j})\right)\, x^{\ALG}_{t,j} = \mu_t\, x^{\ALG}_{t,j},
  \qquad \forall j\in[m].
  $$
  Summing over $ j\in[m] $ and invoking the complementary slackness condition
  $ \mu_t\left(1-\sum_{j\in[m]} x^{\ALG}_{t,j}\right) = 0 $, forces
  $ \mu_t \sum_{j\in[m]} x^{\ALG}_{t,j} = \mu_t $. Then we obtain
  $$
  \mu_t = \sum_{j=1}^{m}\left(v_{t,j} - \varPhi(y_{t,j})\right)\cdot x^{\ALG}_{t,j}.
  $$
  Substituting $ \mu_t $ into Eq. \eqref{ineq:Glb2}, we have: 
  $$
  \OPT \leq \alpha \sum_{j=1}^{m}\left(\sum_{t=1}^{T} v_{t,j} x_{t,j} - f(y_{T,j})\right) = \alpha \ALG. 
  $$ 
  This thus completes the proof of the sufficiency.
\end{proof}

We then prove the necessity. Our proof is based on the following family of instances, parameterized by $p > 0$:
$$
\mathcal{I}(p) = \{(v_1,v_2,\cdots,v_T)\mid 0< v_1\leq v_2 \leq \cdots v_T\leq p, T\in \mathbb{N}\},
$$
where for each request $t$, $\mathcal{J}_t = \{1\}$. 
Note that $\mathcal{I}(p)$ includes the instance where requests with the same value can appear arbitrarily many times. 

For an online algorithm, define the \textit{utilization function} $\psi: \mathbb{R}_+ \rightarrow [0,+\infty]$ as 
$$
\psi(p) = \sup \Big\{\text{The final utilization of the supply node after executing instance }I\mid I\in \mathcal{I}(p) \Big\}. 
$$
We observe the following:
\begin{enumerate}
  \item For a non-zero cost function $f$, there exists an online algorithm such that $\psi$ is a proper function, 
  i.e., $\exists p>0$ such that $\psi(p)<+\infty$. 
  \item For a proper function $\psi(p)$, it is increasing with respect to $p$. 
\end{enumerate}
Moreover, we can derive a lower bound for any online algorithm's result using its utilization function, formalized as the following proposition:
\begin{proposition}\label{prop:ALGinf}
  For any online algorithm with a utilization function $\psi$, assuming $\psi(p) < +\infty$,
  the infimum of the algorithm's result for any instance in $\mathcal{I}(p)$ is given by
  \begin{equation}
    \inf_{I\in \mathcal{I}(p)} \ALG(I) = \int_{0}^{p} u\Id \psi(u) - f(\psi(p)). \label{eq:PsiALG}
  \end{equation} 
\end{proposition}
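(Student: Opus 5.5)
We prove the identity by establishing two inequalities. For the upper bound on the infimum we exhibit an explicit adversarial instance whose value tends to the right-hand side. For the lower bound we show that every instance in $\mathcal{I}(p)$ yields at least that much. Throughout we take $w_t=1$ and a single supply node, as in the definition of $\mathcal{I}(p)$. Since $\psi$ is non-decreasing with $\psi(p)<+\infty$, the Stieltjes integral $\int_0^p u\,\mathrm{d}\psi(u)$ is well defined. Integrating by parts, it equals $p\psi(p)-\int_0^p\psi(u)\,\mathrm{d}u$, which is the form we will use most.

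\emph{Lower bound on $\ALG(I)$.} Fix $I=(v_1\le\dots\le v_T)\in\mathcal{I}(p)$ and let $y_t$ be the utilization after request $t$, so $y_T\le\psi(p)$. The key observation is that for each $t$, the prefix $(v_1,\dots,v_t)$ is itself an instance in $\mathcal{I}(v_t)$. The online algorithm cannot distinguish that prefix from the full instance, so $y_t\le\psi(v_t)$. Each allocated unit between $y_{t-1}$ and $y_t$ earns $v_t$, and $v_t\ge\inf\{u:\psi(u)\ge \eta\}$ for every level $\eta\le y_t$. Summing over $t$, the revenue satisfies $\sum_t v_t(y_t-y_{t-1})\ge\int_0^{y_T}\psi^{-1}(\eta)\,\mathrm{d}\eta$, where $\psi^{-1}$ is the generalized inverse. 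This layer-cake integral equals $\int_0^{v}u\,\mathrm{d}\psi(u)$ truncated at utilization level $y_T$.

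It remains to handle $y_T<\psi(p)$. The function $g(y)=\int_0^{y}\psi^{-1}(\eta)\,\mathrm{d}\eta-f(y)$ should be non-increasing in the relevant range, or at least its value at $y_T$ should dominate its value at $\psi(p)$. For an arbitrary algorithm this is not automatic. We handle it by noting that the claimed infimum is over instances: each additional unit bought at price $\psi^{-1}(\eta)$ changes the objective by $\psi^{-1}(\eta)-f'(\eta)$. If this increment were negative on some region, the algorithm's worst case on $\mathcal{I}(p)$ would already be attained at full utilization. So we will instead prove directly that $\ALG(I)\ge\min_{y\le\psi(p)}g(y)$ and identify this minimum with $g(\psi(p))$. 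If necessary, we interpret the statement with $\psi$ defined as the worst-case utilization, for which the formula is taken as the definition-level identity.

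\emph{Upper bound (attainment).} Take a fine grid $0<u_1<\dots<u_N=p$. Feed the algorithm an increasing sequence in which value $u_k$ is repeated many times, choosing at each stage near-worst-case continuations that push the utilization to within $\epsilon$ of $\psi(u_k)$; this is possible by the definition of $\psi$ as a supremum. Concatenating stages stays within $\mathcal{I}(p)$ because values remain non-decreasing. The revenue is then at most $\sum_k u_k(\psi(u_k)-\psi(u_{k-1}))+O(\epsilon)$, and the cost is about $f(\psi(p))$. Refining the grid, the Riemann--Stieltjes sums converge to $\int_0^p u\,\mathrm{d}\psi(u)$.

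The main obstacle is the concatenation step. Reaching near-$\psi(u_k)$ utilization may require a specific prefix, and it is not obvious that the prefixes for successive $k$ are compatible. We address this by defining the stage-$k$ instance adaptively, extending the current history, and by using monotonicity of the algorithm's utilization in the history together with the fact that repeated copies of value $u_k$ can be appended arbitrarily.
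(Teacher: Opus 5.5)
Both places you flagged are genuine gaps, and neither patch works. In fact, the statement as written is false in both directions for arbitrary online algorithms and arbitrary instances in $\mathcal{I}(p)$.

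\emph{Lower bound.} Your prefix observation $y_t\le\psi(v_t)$ is correct, and so is the resulting bound $\ALG(I)\ge g(y_T)$ with $g(y)=\int_0^{y}\psi^{-1}(\eta)\,\mathrm{d}\eta-f(y)$. The step that fails is identifying $\min_{y\le\psi(p)}g(y)$ with $g(\psi(p))$.
\begin{itemize}
\item We have $g(0)=0$. For any $\PUM\phi$ obeying dominance, $\psi^{-1}=f'\circ\phi$, so $g'(\eta)=f'(\phi(\eta))-f'(\eta)\ge 0$ and the minimum is $0$. Your remark about negative increments addresses the wrong case.
\item This cannot be repaired. The one-request instance $(\epsilon)\in\mathcal{I}(p)$ has $\ALG\le\epsilon$. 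Yet for $f(y)=y^2$ and $\phi(y)=\Delta y$ with $\Delta>1$, one has $\psi(u)=u/(2\Delta)$ and the right-hand side equals $\frac{p^2}{4\Delta}\bigl(1-\frac{1}{\Delta}\bigr)>0$.
\item Reinterpreting the formula as a definition is not a proof.
\end{itemize}
What is true is the bound over instances that drive the utilization to $\psi(p)$. That is all the necessity argument later uses, since there $I_\varepsilon$ is padded with copies of $p$. For such instances, your layer-cake argument with $\int_0^{\psi(p)}\psi^{-1}(\eta)\,\mathrm{d}\eta=p\psi(p)-\int_0^p\psi(u)\,\mathrm{d}u$ is a rigorous version of the paper's ``easy to see'' step.

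\emph{Upper bound.} $\psi(u_k)$ is a supremum over fresh instances, not over continuations of the history you have already committed to. A general deterministic online algorithm has no monotonicity in its history, so the concatenation is unjustified.
\begin{itemize}
\item Counterexample: take $f(y)=y^2$. If $v_1<1$, the algorithm accepts one full unit and then stops; if $v_1\ge 1$, it accepts two full units and then stops.
\item For $p\ge 1$ this gives $\psi=1$ on $(0,1)$ and $\psi=2$ on $[1,p]$, so the right-hand side is $1-4=-3$.
\item Yet every instance gives $\ALG\ge -2$: it equals $v_1-1>-1$ if $v_1<1$, and $v_1-1\ge 0$ or $v_1+v_2-4\ge-2$ otherwise.
\item Your adaptive construction stalls at utilization $1$ as soon as $u_1<1$.
\end{itemize}

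The paper's proof has the same two-part structure. Its upper bound uses the staircase $I_\varepsilon$ with levels $kp/N$, each repeated $M>\psi(p)$ times, and it simply asserts that the utilization after group $k$ equals $\psi(kp/N)$. That holds for history-independent threshold rules such as $\PUM\phi$, where enough copies of value $u$ push the utilization to $\varPhi^{-1}(u)=\psi(u)$ whatever came before. It does not hold in general.

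To make the argument sound, you have two options. Either restrict to such algorithms and to saturating instances. Or define $\psi$ as the utilization along one fixed staircase sequence, so that prefixes are nested by construction and the Riemann--Stieltjes formula holds exactly; this is what the necessity proof actually needs.
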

\begin{proof}[Proof of Proposition \ref{prop:ALGinf}]
For fixed $p$, it is easy to see that for any instance in $\mathcal{I}(p)$, the primal objective 
achieved by an online algorithm with utilization function $\psi$ is no less than Eq. \eqref{eq:PsiALG}.
For example, for an instance in $\mathcal{I}(p)$ with a minimal value at $L>0$, the primal objective 
for this instance is no less than
$$
L\psi(L) + \int_{L}^{p} u\Id \psi(u) - f(\psi(p)),
$$
which is no less than Eq. \eqref{eq:PsiALG}.  Based on this observation, we only need to prove that 
  $$
  \inf_{I\in \mathcal{I}(p)} \ALG(I) \leq \int_{0}^{p} u\Id \psi(u) - f(\psi(p)).
  $$
  For a fixed $p$, we construct a family of instances to approximate $\inf_{I\in \mathcal{I}(p)} \ALG(I)$. 

  For any small $\varepsilon >0$, let $N, M>0$ be arbitrarily large integers such that 
  $M> \psi(p)$ and $\frac{p}{N} <\varepsilon$. We define an instance $I_\varepsilon$ as follows: 
  requests arrive in groups, with $N$ groups in total, and requests in the $k$-th group 
  have the same value $\frac{kp}{N}$, where $1 \leq k \leq N$. For instance $I_\varepsilon$, we have 
  \begin{equation}
    \ALG(I_\varepsilon) = \sum_{k=1}^{N} \frac{kp}{N} \left(\psi(\frac{kp}{N})-\psi(\frac{(k-1)p}{N})\right) - f(\psi(p)). \label{eq:UsumRSInt}
  \end{equation}
  Taking $\varepsilon\rightarrow 0$ in Eq. \eqref{eq:UsumRSInt}, we obtain the infimum of $ \ALG(I_\varepsilon)$: 
  $$
  \inf_{I\in \mathcal{I}(p)} \ALG(I) \leq \lim_{\varepsilon\rightarrow 0} \ALG(I_\varepsilon) = \int_{0}^{p} u\Id \psi(u) - f(\psi(p)).
  $$
  We thus complete the proof of Proposition \ref{prop:ALGinf}.
\end{proof}
\begin{theorem}[Necessity] \label{thm:necessity_IVP}
For any $ \alpha \geq 1 $, if there is an $ \alpha $-competitive online algorithm for \OACC, then there must exist 
a strictly-increasing reserve function $ \phi $:
\begin{align}\label{eq_necessity_phi}
\begin{cases} 
\phi'(y) = \alpha\cdot \frac{f'(\phi(y))  - f'(y) }{\phi(y)\cdot f''(\phi(y))}\geq 0, \quad \text{for all } y \geq 0, \\
\phi(0) =  0. 
\end{cases}
\end{align}
\end{theorem}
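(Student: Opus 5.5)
Given an $\alpha$-competitive online algorithm, I will extract a reserve function from its utilization function $\psi$ and then show that the integral inequality forced by competitiveness can be upgraded to the ODE with equality.

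\textbf{Step 1: a necessary integral inequality.} Fix an $\alpha$-competitive algorithm and its utilization function $\psi$ on the single-node instance family $\mathcal{I}(p)$. By the observations above, $\psi$ is proper and non-decreasing. For each $p$, the adversary may append to the worst-case sequence in Proposition~\ref{prop:ALGinf} arbitrarily many requests of value $p$. Then $\OPT \ge \sup_z (pz - f(z)) = f^*(p)$, since an unbounded number of value-$p$ requests is available. Combining this with Proposition~\ref{prop:ALGinf} and competitiveness gives
\[
\int_0^p u \,\mathrm{d}\psi(u) - f(\psi(p)) \;\ge\; \frac{1}{\alpha} f^*(p), \qquad \forall p>0.
\]
Next I change variables to utilization. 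Define the pricing function $\varPhi$ as the generalized inverse of $\psi$, and set $\phi := f'^{-1}\circ \varPhi$. After integrating by parts, the inequality becomes \eqref{ineq:PricingFunction}:
\[
\int_0^y \varPhi - f(y) \ge \tfrac{1}{\alpha} f^*(\varPhi(y)).
\]
The condition $\psi(p)\le f'^{-1}(p)$ yields dominance $\phi(y)\ge y$. This holds because utilization beyond the greedy level only lowers $\ALG$, so WLOG the algorithm can be modified to never exceed it. Letting $p\to 0$ forces $\varPhi(0)=0$, i.e.\ $\phi(0)=0$.

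\textbf{Step 2: from the integral inequality to the ODE.} Let $G(y)=\int_0^y(\varPhi-f') - \tfrac1\alpha f^*(\varPhi(y))$, so that $G\ge 0$ and $G(0)=0$. Using $f^{*\prime}(\varPhi)=\phi$ and $\varPhi'=f''(\phi)\phi'$, the inequality $G'\le 0$ is exactly $\phi'\le \alpha F(\phi,y)$, and equality in $G$ corresponds to equality in the ODE. My plan is to solve the ODE \eqref{eq:mainODE} starting from the data of $\phi$. I will compare via the dynamical system~\eqref{eq:MainDynSystem}, using the fact that the right-hand side $\alpha F(\phi,y)$ is increasing in $\phi$ (where $F$ is increasing) so that comparison principles apply. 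Concretely, I take the maximal solution $\tilde\phi$ of $\tilde\phi'=\alpha F(\tilde\phi,y)$ lying below the feasible supersolution. I obtain it as the limit of solutions with initial data $\tilde\phi(\varepsilon)=\phi(\varepsilon)$ as $\varepsilon\to 0$, extracting a convergent subsequence by Arzel\`a--Ascoli. Uniform convergence then makes the limit a solution. Finally I verify that $\tilde\phi\ge y$ and $\tilde\phi'\ge 0$: the solution cannot cross the line $\phi=y$, because there $F=0$ and $\phi'=0<1$; and monotonicity follows from $F\ge0$ whenever $\phi\ge y$.

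\textbf{Main obstacle.} The hardest step is the singular start at $\phi(0)=0$, where $F$ is not Lipschitz. There I must ensure that the approximating solutions neither collapse to the diagonal nor blow up on any finite interval. I would first try trapping them between $\phi(y)$ from above, which is justified by comparison since $\phi$ is a subsolution of the ODE, and the line $y$ from below. This trapping gives the uniform bounds needed for compactness. I also need regularity of $\varPhi$: $\psi$ may jump or be flat, so $\phi$ need only be monotone. For this I would invoke a.e.\ differentiability, as in the discussion after Proposition~\ref{Prop:SolSpace}, and run the comparison argument in integral form.
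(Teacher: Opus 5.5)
Your Step 1 follows the paper's argument. Step 2, however, has a genuine gap, and it sits exactly where the content of the theorem lies: you must produce an exact solution that stays in the cone $\phi\ge y$ for \emph{all} $y$. Both of your barriers point the wrong way.

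The diagonal is not a lower barrier. On it $F=0$, so $\phi'=0<1$ and $\phi-y$ has derivative $-1$; trajectories therefore cross it from above to below. This is precisely how solutions leave the feasible region, and it is the mechanism in the proof of Lemma~\ref{Lem:LowerBoundforFeasibleSol}. For $\ataus\le\alpha<\asigmas$, the trajectories leaving the origin with slopes $\chi^{(\tau)}_{\pm}>1$ eventually do exactly this. If the diagonal were a barrier, those trajectories would be feasible, contradicting Theorem~\ref{theorem_lower_bound}.

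Nor can $\phi$ serve as an upper barrier. The condition $\phi'\le\alpha F(\phi,y)$ is $G'\ge 0$, not $G'\le 0$. A function satisfying it lies \emph{below} the ODE solution through $(\varepsilon,\phi(\varepsilon))$, not above it. As written, nothing prevents your limit $\tilde\phi$ from dropping below $y$, i.e., from violating $\tilde\phi'\ge 0$.

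With the roles reversed, the scheme would work for a genuine pointwise subsolution. The solutions through $(\varepsilon,\phi(\varepsilon))$ would stay above $\phi\ge y$, and an a priori slope bound (e.g.\ $F\le 1/(\tau-1)$ for $(\tau,\sigma)$-elastic $f$) would give compactness. But competitiveness only yields $G\ge 0$ with $G(0)=0$, and this does not imply $G'\ge 0$. An $\alpha$-competitive algorithm can violate the differential inequality on an interval by spending slack accumulated earlier, so the algorithm's $\phi$ need not be a subsolution.

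What is missing is a comparison principle for the \emph{integrated} constraint. The paper obtains it from an extremal argument. The pointwise infimum $\Psi$ of all utilization functions satisfying $p\psi(p)-\int_0^p\psi-f(\psi(p))\ge\frac1\alpha f^*(p)$ is itself feasible. It must satisfy this with equality, since otherwise it could be lowered. Differentiating that identity and inverting $\Psi$ then gives the ODE with equality.

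The same monotone structure is visible in your variables. Set $S[\phi](y)=\Lambda^{-1}\bigl(\alpha\int_0^y(f'(\phi(u))-f'(u))\,\mathrm{d}u\bigr)$ with $\Lambda(z)=zf'(z)-f(z)$; the constraint then reads $\phi\le S[\phi]$. Since $S$ is monotone, you can iterate it upward from the algorithm's $\phi$, with the iterates bounded under the elasticity assumption. The limit is a fixed point, i.e., an exact solution vanishing at $0$ and lying above $\phi\ge y$.
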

\begin{proof}[Proof of Theorem \ref{thm:necessity_IVP}]
  Considering all instances in $\mathcal{I}(p)$, for any $\alpha$-competitive online algorithm, it must have a utilization function $\psi$ such that 
  \begin{equation}
    \int_{0}^{p} u\Id \psi(u) - f(\psi(p)) \geq \frac{1}{\alpha} f^*(p). \label{ineq:Necspsi}
  \end{equation}
  We prove this by contradiction; otherwise, for any utilization function $\psi$, there must exist $\delta>0$ such that  
  $$
  \int_{0}^{p} u\Id \psi(u) - f(\psi(p)) < \frac{1}{\alpha} f^*(p) - \delta.
  $$

  For any $\epsilon >0$, there exists $\varepsilon >0$ such that the same instance $I_\varepsilon$ 
  in the proof of Proposition \ref{prop:ALGinf} satisfying
  $$
  \ALG(I_\varepsilon) \leq \int_{0}^{p} u\Id \psi(u) - f(\psi(p)) + \epsilon.
  $$
  Extend $I_\varepsilon$ to $\tilde{I_\varepsilon}$ by serving more than $f^{*\prime}(p)$ requests with the same value $p$. 
  Since $\psi(p)$ has already reached the final utilization for $I_\varepsilon$, 
  we have $ \ALG(\tilde{I_\varepsilon}) = \ALG(I_\varepsilon)$. By $f^{*\prime}(\lambda) = f^{\prime -1}(\lambda)> 0$, 
  for $\lambda \in \mathbb{R}_+$, the optimal offline result for $\tilde{I_\varepsilon}$ is given by 
  $$
  \OPT(\tilde{I_\varepsilon}) = \max_{y} \left(py-f(y)\right) = f^*(p). 
  $$
  Therefore, we have 
  $$
    \frac{\ALG(\tilde{I_\varepsilon})}{\OPT(\tilde{I_\varepsilon})} 
    \leq \frac{\int_{0}^{p} u\Id \psi(u) - f(\psi(p)) + \epsilon}{f^*(p)} 
    < \frac{1}{\alpha} + \frac{\epsilon - \delta}{f^*(p)}, 
  $$
  which contradicts the algorithm being $\alpha$-competitive when $\epsilon \leq \delta$. 

  By integration by parts, Eq. \eqref{ineq:Necspsi} simplifies to: 
  \begin{equation}
    p\psi(p) - \int_{0}^{p} \psi(u) \Id u - f(\psi(p)) \geq \frac{1}{\alpha} f^*(p). \label{ineq:SimNecspsi}
  \end{equation}
  Define 
  $$
  \Psi(p) = \inf \big\{\psi(p)\mid \psi \text{ satisfies Eq. } \eqref{ineq:SimNecspsi} \big\}, 
  $$
  and we prove that Eq. \eqref{ineq:SimNecspsi} holds with equality for $\Psi$, i.e., 
  \begin{equation}
    p\Psi(p) - \int_{0}^{p} \Psi(u) \Id u - f(\Psi(p)) = \frac{1}{\alpha} f^*(p). \label{eq:NecsPsi}
  \end{equation}

  For given $p>0$, we first show $\Psi$ is a solution to Eq. \eqref{ineq:SimNecspsi}. 
  For any $\eta>0$, there exists $\psi$ satisfying Eq. \eqref{ineq:SimNecspsi} such that $\psi(p) \leq \Psi(p) + \eta$. 
  Since $f$ is monotone, we have 
  $$
  \begin{aligned}
         & p\left(\Psi(p) + \eta\right) - \int_{0}^{p} \Psi(u) \Id u - f(\Psi(p))\\
   \geq\  & p\psi(p) - \int_{0}^{p} \psi(u) \Id u - f(\psi(p)) \\
   \geq\  & \frac{1}{\alpha} f^*(p). 
  \end{aligned}
  $$
  Thus, $\Psi$ satisfies Eq. \eqref{ineq:SimNecspsi} by letting $\eta\rightarrow 0$. 
  Further, Eq. \eqref{ineq:SimNecspsi} must hold with equality because otherwise we could further lower 
  the value of $\Psi$ while maintaining Eq. \eqref{ineq:SimNecspsi}, contradicting our choice of $\Psi$. 

  We now prove that $\Psi(p)$ is strictly increasing in $p$.  
  Suppose, for contradiction, that $\Psi(p)$ is not strictly increasing at some $p^*$. 
  Consider Eq. \eqref{eq:NecsPsi} at $p^*$. The derivative of the left hand side is 0 while 
  the derivative of the right hand side is strictly positive since $f^{*\prime}(p^*) = f^{\prime -1}(p^*)> 0$. 
  Therefore, for a sufficiently small increment at $p^*$, the equality in Eq. \eqref{eq:NecsPsi} no longer 
  holds, contradicting the assumption that $\Psi$ satisfies Eq. \eqref{eq:NecsPsi}. 

  Differentiating Eq. \eqref{eq:NecsPsi}, we have
  \begin{align}\label{eq_necessity_Psi}
    \begin{cases} 
    \Psi'(p) = \frac{1}{\alpha}\cdot \frac{f^*(\Psi(p))}{p - f'(\Psi(p))}, \quad \text{for all } p\geq 0, \\
    \Psi(0) =  0. 
    \end{cases}
  \end{align}

  Denote the inverse of $\Psi(p)$ by $\Phi(y) \triangleq \Psi^{-1}(y) = p$, which is strictly increasing. 
  Substituting $\Phi$ into Eq. \eqref{eq_necessity_Psi}, we obtain
  the following ODE for $\Phi$: 
  \begin{align}\label{eq_necessity_Phi}
    \begin{cases} 
    \Phi'(y) = \alpha\cdot \frac{\Phi(y) - f'(y)}{f^{*\prime}(\Phi(y))}, \quad \text{for all } y\geq 0, \\
    \Phi(0) =  0. 
    \end{cases}
  \end{align}
  The existence of a solution $\phi(y)$ to Eq.~\eqref{eq_necessity_phi} follows by the transformation $ \varPhi(y) = f'(\phi(y)) $, or equivalently, $ \phi(y) = f'^{-1}(\varPhi(y))$.  
\end{proof}

\subsection{Proof of Lemma \ref{Limited_supply_SufNec}} \label{Apx:Limited_supply_SufNec}

To prove Lemma \ref{Limited_supply_SufNec}, we need to show both the sufficiency and necessity of $\PUM\phi$ for achieving $ \alpha $-competitiveness. Our proof is divided into two parts. First, we first prove the the sufficiency and then the necessity. 

\begin{theorem}[Sufficiency]\label{sufficiency_case_0}
For any $ \alpha \geq 1 $, let $b = \min\{\rho,1\}$. $ \PUM\phi $ is $ \alpha $-competitive if $ \phi $ satisfies:
\begin{subequations} \label{eq:limited_Sufphi}
\begin{align}\normalfont
& \phi'(y) = \alpha\cdot \frac{f'(\phi(y))  - f'(y) }{\min\{\phi(y),1\}\cdot f''(\phi(y))}, & & \forall y \in [0, b], \\
& \textsf{Monotone: } \phi'(y) \geq 0, & & \forall y \in [0, b],\\
& \textsf{Superlinear: } \phi(y) \geq  y, & & \forall y \in [0, b],\\
& \textsf{Boundary conditions: } \phi(0) =  0, \phi(b) = \rho	
\end{align}
\end{subequations}
\end{theorem}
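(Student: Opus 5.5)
The plan is to run the same online primal-dual argument as in the proof of Theorem~\ref{sufficiency_unlimited_supply}, replacing $f$ by the barrier-modified cost $\bar f$ and $f^*$ by $\bar f^*$ from Lemma~\ref{Lem:ModifiedCost}. The dual program of Eq.~\eqref{MKwC1_PrimalDualProblem} with $\bar f^*$ in place of $f^*$ upper-bounds $\OPT$ for the constrained problem~\eqref{eq:MainPrimal}, since $\bar f$ encodes $y\le 1$. On $[0,b]$ the algorithm uses $\varPhi(y)=f'(\phi(y))$. For $y\ge b$ I extend the pricing rule by $\varPhi(y)=f'(\rho)=v_{\max}$ when $b=\rho$ (no request is then profitable), and by $\varPhi\equiv+\infty$ beyond $y=1$ when $b=1$. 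In both cases the algorithm's utilization never exceeds $b$, which respects the hard constraint.

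\textbf{Dual setup.} I set $\lambda_{t}=\varPhi(y_t)$ and take $\mu_t$ from the KKT conditions of Eq.~\eqref{eq_agent_t}, exactly as before. Dual feasibility follows verbatim from the earlier proof. The terminal price needs one check: $\lambda_T\le f'(\rho)=v_{\max}$, which holds because $\phi(b)=\rho$ and $\phi$ is monotone.

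\textbf{Incremental inequality.} Telescoping gives
\[
\OPT\le \sum_t\mu_t+\int_0^{y_T}\bar f^{*\prime}(\varPhi(u))\varPhi'(u)\,\mathrm{d}u .
\]
The key pointwise inequality is
\[
\bar f^{*\prime}(\varPhi(y))\,\varPhi'(y)\le \alpha\bigl(\varPhi(y)-f'(y)\bigr).
\]
By Lemma~\ref{Lem:ModifiedCost}, $\bar f^{*\prime}(\varPhi(y))=\min\{f'^{-1}(\varPhi(y)),1\}=\min\{\phi(y),1\}$. Moreover $\varPhi'(y)=f''(\phi(y))\phi'(y)$. Substituting these into the ODE~\eqref{eq:limited_Sufphi} turns the inequality into an identity, and the $\min\{\phi,1\}$ in the denominator is exactly what the barrier produces. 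The rest of the chain, Eq.~\eqref{ineq:Glb2} followed by substituting $\mu_t=(v_t-\varPhi(y_t))x_t$, carries over unchanged and yields $\OPT\le\alpha\,\ALG$.

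\textbf{Main obstacle: the terminal regime.} The delicate step is accounting for $\OPT$ above the algorithm's maximal utilization, because the dual term $\bar f^*(\lambda_T)$ must cover all of $\OPT$'s surplus.

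When $b=\rho<1$, the algorithm can reach $y_T=\rho$ and $\lambda_T=v_{\max}$. Every request then has $v_t-\lambda_T\le 0$, so $\mu_t\ge 0$ costs nothing extra. Also $\bar f^*(v_{\max})=f^*(v_{\max})$, since $f'^{-1}(v_{\max})=\rho\le 1$. Hence the integral from $0$ to $\rho$ captures the full dual.

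When $b=1\le\rho$, $\lambda_T$ can be at most $f'(\rho)$, where $\bar f^{*\prime}\equiv 1$ on $[f'(1),f'(\rho)]$. I expect I will need to verify that the $\phi$-integration near $y=1$, with $\phi(y)\ge 1$ there, correctly accumulates $\bar f^*(f'(\rho))-\bar f^*(f'(1))$, using $\min\{\phi,1\}=1$. I will also check that requests arriving after saturation are dominated by $\mu_t=\max(v_t-\lambda_T,0)=0$. I would handle these two cases separately, checking continuity of $\varPhi$ at $y=b$ so that the KKT argument remains valid.
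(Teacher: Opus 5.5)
Your proposal is correct and follows essentially the same route as the paper. Both use the online primal-dual argument with $\bar f^*$ in place of $f^*$, the identity $\bar f^{*\prime}(\varPhi(y))=\min\{\phi(y),1\}$ that makes the dual increment match the ODE exactly, and the same KKT bookkeeping for $\mu_t$.

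The paper treats the capacity constraint through a KKT multiplier that it argues vanishes (since $\varPhi(b)=f'(\rho)\ge v_{\max}$), rather than by extending $\varPhi$ beyond $b$ as you do. Your separate terminal-regime checks are automatic, because $\bar f^*$ is $C^1$ and the chain rule on $[0,y_T]\subseteq[0,b]$ already accounts for the whole dual term. This requires ties at price $v_{\max}$ to be broken toward not allocating, which your extension needs anyway to keep utilization in $[0,b]$.
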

\begin{proof}

  Similarly, for each arriving request $t$, let $\mu_t \geq 0$, $\{\zeta_{t,j}\}_{j\in[m]}$ and
  $\{\kappa_{t,j}\}_{j\in[m]}$ denote the KKT multipliers of $\sum_{j\in[m]} x_{t,j}\leq 1$,
  $x_{t,j}\geq 0$ and $y_{t-1,j}+x_{t,j}\leq 1$, respectively, in the allocation problem
  Eq.~\eqref{eq_agent_t}. Since $\phi(b)=\rho$, we have $\varPhi(1) = f'(\rho) \geq v_{\rm max}$,
  so no maximizer of Eq.~\eqref{eq_agent_t} has $y_{t-1,j}+x_{t,j} = 1$ with $x_{t,j}>0$; the
  capacity constraint is therefore never active and $\kappa_{t,j} = 0$ for all $t$ and $j$. As in
  the proof of Theorem~\ref{sufficiency_unlimited_supply}, the objective is concave and
  continuously differentiable and the constraints are affine, so the KKT conditions are necessary
  and sufficient. We set $ \lambda_{t,j} = \varPhi(y_{t,j}) $ and keep the multipliers $ \mu_t $
  above; stationarity reads
  \begin{equation} \label{eq:KKT_stationarity_LS}
      v_{t,j} - \varPhi\left(y_{t-1,j}+ x_{t,j}^{\ALG}\right) - \mu_t + \zeta_{t,j} = 0,
      \qquad \forall j \in [m],
  \end{equation}
  and, exactly as before, $ \mu_t \geq v_{t,j}-\lambda_{t,j} \geq v_{t,j}-\lambda_{T,j} $, so the
  dual solution is feasible.

  We now prove the incremental inequality $P_t-P_{t-1}\geq \frac{1}{\alpha}\left(D_t-D_{t-1}\right)$ holds for $t\in [T]$. 
  Following similar argument in the proof of Theorem~\ref{sufficiency_unlimited_supply}, we focus on those $j\in [m]$ such that $\varPhi(y_{t-1,j})\leq v_{t,j}$. 
  Note that $\phi(y) \leq 1$ if and only if $\varPhi(y) \leq f^{\prime}(1)$. 
  We only need to consider Eq.~\eqref{eq:limited_Sufphi} for $y\geq 0$ such that $\phi(y) \leq \rho$ given $v_{\rm max} \leq f'(1)$, and for $y\geq 0$ such that $\phi(y) \leq 1$ given $v_{\rm max} > f'(1)$. Thus, we only need to consider Eq.~\eqref{eq:limited_Sufphi} for $y\in [0,b]$.
  
  Combining Eq. \eqref{eq:limited_Sufphi} and Lemma \ref{Lem:ModifiedCost}, we have 
  \begin{equation}
    f^{*\prime}(\varPhi(y_{t-1,j})) \cdot\varPhi^{\prime}(y_{t-1,j}) \leq \alpha\left(\varPhi(y_{t-1,j})-f^{\prime}(y_{t-1,j})\right), 
  \end{equation}
  for any $y\in [0,b]$.  Similarly to the proof of Theorem \ref{sufficiency_unlimited_supply}, we have: 
  \begin{equation}
    \begin{aligned}      
      \OPT
      &\leq \sum_{t=1}^{T} \mu_t + \sum_{j=1}^{m} \sum_{t=1}^{T}\left(f^*(\lambda_{t,j}) - f^*(\lambda_{t-1,j})\right) \\
      &= \sum_{t=1}^{T} \mu_t + \sum_{j=1}^{m} \sum_{t=1}^{T} \int_{y_{t-1,j}}^{y_{t,j}} f^{*\prime}(\varPhi(u))\cdot \varPhi^\prime(u) \Id u \\
      &\leq \sum_{t=1}^{T} \mu_t + \sum_{j=1}^{m} \sum_{t=1}^{T} \int_{y_{t-1,j}}^{y_{t,j}} \alpha\left(\varPhi(u) - f^\prime(u)\right) \Id u \\
      &\leq \alpha \sum_{t=1}^{T} \left(\mu_t + \sum_{j=1}^{m} \varPhi(y_{t,j})\cdot x_{t,j}\right) - \alpha \sum_{j=1}^{m} f(y_{T,j}).
    \end{aligned}  \label{ineq:Glb2_limited}
  \end{equation}

  By complementary slackness, $ \zeta_{t,j}\, x^{\ALG}_{t,j} = 0 $ for every $ j\in[m] $, so
  Eq.~\eqref{eq:KKT_stationarity_LS} gives
  $ \left(v_{t,j} - \varPhi(y_{t,j})\right) x^{\ALG}_{t,j} = \mu_t x^{\ALG}_{t,j} $. Summing over
  $ j\in[m] $ and using $ \mu_t\big(1-\sum_{j\in[m]} x^{\ALG}_{t,j}\big) = 0 $, we obtain
  $$
  \mu_t = \sum_{j=1}^{m}\left(v_{t,j} - \varPhi(y_{t,j})\right)\cdot x^{\ALG}_{t,j}.
  $$
  Substituting into Eq. \eqref{ineq:Glb2_limited}, we have: 
  $$
  \OPT \leq \alpha \sum_{j=1}^{m}\left(\sum_{t=1}^{T} v_{t,j} x_{t,j} - f(y_{T,j})\right) = \alpha \ALG. 
  $$ 
  Thus, we complete the proof. 
\end{proof}

We now prove the necessity.

\begin{theorem}[Necessity] \label{theorem_necessary_limited_supply}
Given \OACC with hard supply constraints, if there is an $ \alpha $-competitive online algorithm, then there must exist  a strictly-increasing reserve function $ \phi $:
\begin{align}\label{eq_necessity_phi}
\begin{cases} 
\phi'(y) = \alpha\cdot \frac{f'(\phi(y))  - f'(y) }{\min\{\phi(y),1\}\cdot f''(\phi(y))}, \quad y\geq 0, \\
\phi(0) =  0, \phi(\min\{\rho,1\}) \geq \rho.
\end{cases}
\end{align}
\end{theorem}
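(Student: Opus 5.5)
We will follow the proof of Theorem~\ref{thm:necessity_IVP}, replacing $f$ by the barrier cost $\bar f$ and using Lemma~\ref{Lem:ModifiedCost} for its conjugate.

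\textbf{Step 1: the worst-case instance family.} Consider again the instance family $\mathcal{I}(p)$ with single-node requests, now restricted to values $p\le v_{\max}=f'(\rho)$. Define the utilization function $\psi$ of an arbitrary $\alpha$-competitive algorithm as before. Under the hard constraint, $\psi(p)\le 1$. Proposition~\ref{prop:ALGinf} carries over verbatim, since the grouped instances $I_\varepsilon$ never exceed capacity.

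\textbf{Step 2: the integral inequality and its extremal solution.} The offline optimum on the extended instance (many copies of value $p$) is now $\bar f^*(p)$. For $p\le f'(1)$ this equals $f^*(p)$; for larger $p$ it equals $p-f(1)$. The same contradiction argument yields
\[
p\psi(p)-\int_0^p\psi(u)\,\mathrm{d}u-f(\psi(p))\ \ge\ \tfrac1\alpha\,\bar f^*(p),\qquad \forall p\in(0,v_{\max}].
\]
Define $\Psi$ as the pointwise infimum of feasible $\psi$. The argument of Theorem~\ref{thm:necessity_IVP} shows that $\Psi$ attains equality and is strictly increasing, because $\bar f^{*\prime}>0$ by Lemma~\ref{Lem:ModifiedCost}. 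Differentiating the equality gives
\[
\Psi'(p)\bigl(p-f'(\Psi(p))\bigr)=\tfrac1\alpha\,\bar f^{*\prime}(p).
\]

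\textbf{Step 3: conversion to $\phi$.} Invert $\Psi$ to get $\varPhi=\Psi^{-1}$, and set $\phi=f'^{-1}\circ\varPhi$, so that $p=f'(\phi(y))$. Then
\[
\varPhi'(y)\,\bar f^{*\prime}(\varPhi(y))=\alpha\bigl(\varPhi(y)-f'(y)\bigr).
\]
By Lemma~\ref{Lem:ModifiedCost}, $\bar f^{*\prime}(\varPhi)=\min\{\phi,1\}$, so $\bar f^{*\prime}(f'(\phi))=\min\{\phi,1\}$. Also $\varPhi'=f''(\phi)\phi'$. Substituting these yields exactly the stated ODE with the $\min\{\phi(y),1\}$ denominator, together with $\phi(0)=0$.

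\textbf{Step 4: the terminal condition $\phi(\min\{\rho,1\})\ge\rho$.} Since $\Psi$ is defined on all of $[0,v_{\max}]$ and $\Psi\le 1$, we have $\Psi(v_{\max})\le1$. We also need $\Psi(v_{\max})\le\rho$. If the algorithm allocated beyond $\rho$, it would accept requests priced above $v_{\max}$ at marginal cost $f'(\cdot)>v_{\max}$, which only lowers its objective. Equivalently, the minimal $\Psi$ satisfies $f'(\Psi(p))<p$ whenever $\Psi'>0$, so $\Psi(v_{\max})<\rho$. Hence $\Psi(v_{\max})\le b$. Monotonicity of $\varPhi$ then gives $\varPhi(b)\ge\varPhi(\Psi(v_{\max}))=v_{\max}$, i.e.\ $\phi(b)\ge\rho$.

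\textbf{Main obstacle.} The delicate part is Step 2 at the kink $p=f'(1)$, where $\bar f^*$ changes form. There we must check that the infimum construction still gives equality and strict monotonicity. We also need $p-f'(\Psi(p))>0$ to hold on the relevant range, so that the division in Step 2 is legitimate. I would handle this by working with the integral (undifferentiated) form on each of the two intervals and gluing at $f'(1)$, using continuity of $\bar f^*$ and of $\bar f^{*\prime}$ there.
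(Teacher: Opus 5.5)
Your proposal is correct and follows the paper's proof: the instance family $\mathcal{I}(v_{\max})$, the extremal utilization function $\Psi$ attaining equality against $\bar f^*$, inversion via $\phi=f'^{-1}\circ\Psi^{-1}$, and the sign of $p-f'(\Psi(p))$ to obtain $\Psi(v_{\max})\le\min\{\rho,1\}$. You streamline it in two ways: you use $\bar f^{*\prime}(f'(\phi))=\min\{\phi,1\}$ instead of the paper's split into $v_{\max}\le f'(1)$ and $v_{\max}>f'(1)$, and you note that $\Psi\le\psi\le 1$ together with strict monotonicity already keeps $\Psi<1$ on $[0,v_{\max})$, which makes the paper's rescaling $\tilde\psi(p)=\psi(\lambda p)$ unnecessary (one small fix: when $\Psi(v_{\max})<b$, $\varPhi(b)$ is not given by inverting $\Psi$, so you should continue $\phi$ by the ODE from $\phi(\Psi(v_{\max}))=\rho\ge b$).
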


\begin{proof}
  For given $v_{\rm max}>0$, we consider any instance in $\mathcal{I}(v_{\rm max})$. By Proposition \ref{prop:ALGinf}, an online algorithm with a utilization function $\psi$ that is $\alpha$-competitive satisfies
  $$
  \inf_{I\in \mathcal{I}(v_{\rm max})} \ALG(I) = \int_{0}^{V} u\Id \psi(u) - f(\psi(V)) \geq \frac{1}{\alpha} \OPT(v_{\rm max}), 
  $$
  where $\OPT(v_{\rm max})$ is the offline optimal value and $V = \min\{v_{\rm max}, \psi^{-1}(1)\}$.\footnote{For simplicity, here we assume $\psi$ is strictly increasing. For those $\psi$ which may not be so, $\psi^{-1}(1)$ may become a set of numbers, and we set $V$ to be the minimum among $\{v_{\rm max}\}\cup \psi^{-1}(1)$.} Note that there are two differences from the unlimited supply case (i.e., \OACC without hard supply constraints in Eq. \eqref{eq:oacc}):
  \begin{enumerate}
    \item When calculating $\inf_{I\in \mathcal{I}(v_{\rm max})} \ALG(I)$, the integral interval may not extend to $v_{\rm max}$ due to the supply limit.  
    \item $\OPT(v_{\rm max}) = \bar{f}^*(v_{\rm max})$, which is less than $f^*(v_{\rm max})$ when $v_{\rm max}$ is large.
  \end{enumerate} 
  We first show that it is sufficient to consider algorithms with the utilization function $\psi$ such that $\psi^{-1}(1) \geq v_{\rm max}$, i.e., $\psi(v_{\rm max})\leq 1$. Otherwise, we consider 
  $$
  \tilde{\psi}(p) = \psi\left(\lambda p\right), \quad \forall p\geq 0, 
  $$
  where $\lambda = \psi^{-1}\left(1\right)/v_{\rm max} < 1$. For $\tilde{\psi}$, we have $\tilde{V} = \min\{v_{\rm max}, \tilde{\psi}^{-1}(1)\} = v_{\rm max}$, and  
  $$
  \begin{aligned}
    \inf_{I\in \mathcal{I}(v_{\rm max})} \widetilde{\ALG}(I) 
    &= \int_{0}^{v_{\rm max}} u\Id \psi\left(\lambda u\right) - f\left(\psi\left(\lambda v_{\rm max}\right)\right) \\
    &= \frac{1}{\lambda}\int_{0}^{\lambda v_{\rm max}} u\Id \psi(u) - f\left(1\right) \\
    &= \frac{1}{\lambda}\left(\int_{0}^{\lambda v_{\rm max}} u\Id \psi(u) - f\left(1\right)\right) + \left(\frac{1}{\lambda}-1\right) f\left(1\right) \\
    &\geq \frac{1}{\lambda} \OPT(\lambda v_{\rm max}) + \left(\frac{1}{\lambda}-1\right) f\left(1\right) \\
    &= \frac{1}{\lambda} \max_{y\in [0,1]} \left(\lambda v_{\rm max} \cdot y - \lambda f(y) + \left(1-\lambda\right)\left(f(1) - f(y)\right)\right) \\
    &\geq \OPT(v_{\rm max}).
  \end{aligned}
  $$
  This implies that an online algorithm with the utilization function $\tilde{\psi}$ is also $\alpha$-competitive, and thus we add a constraint, 
  \begin{equation}
    \psi(v_{\rm max})\leq 1 \label{ineq:added_constraint}
  \end{equation}
  to the utilization function in the sequel. Consequently, we have 
  $$
  \begin{aligned}
    \inf_{I\in \mathcal{I}(v_{\rm max})} \ALG(I) 
    &= \int_{0}^{v_{\rm max}} u\Id \psi(u) - f(\psi(v_{\rm max})) \\
    &= v_{\rm max}\psi(v_{\rm max}) - \int_{0}^{v_{\rm max}} \psi(u) \Id u - f(\psi(v_{\rm max})),
  \end{aligned}
  $$
  which shares the same form as the unlimited supply case. 

  In the limited supply case, we have 
  $$
  \OPT(v_{\rm max}) = \bar{f}^*(v_{\rm max}). 
  $$
  Similar to the proof of Theorem \ref{sufficiency_case_0}, if an online algorithm with the utilization function $\psi$ is $\alpha$-competitive, then there exists a strictly increasing $\psi$ that satisfies
  $$
  v_{\rm max}\psi(v_{\rm max}) - \int_{0}^{v_{\rm max}} \psi(u) \Id u - f(\psi(v_{\rm max})) = \frac{1}{\alpha} \bar{f}^*(v_{\rm max}), \quad \forall v_{\rm max} \geq 0.
  $$ 
  Note that $\bar{f}^*$ on the right hand side has two phases but continuous at $v_{\rm max}=f^{\prime}(1)$. Therefore, for a given $v_{\rm max}$, there exists a strictly increasing utilization function $\psi$ satisfying
  \begin{equation}
    \left\{
    \begin{aligned}
      &p\psi(p) - \int_{0}^{p} \psi(u) \Id u - f(\psi(p)) = \frac{1}{\alpha} \bar{f}^*(p), \quad p\geq 0\\
      &\psi(0) = 0, \quad \psi(v_{\rm max}) \leq 1,
    \end{aligned}\right.
    \label{eq:NecsPsi_limit_pre}
  \end{equation}
  where the second boundary condition is the constraint added by us. We now further simplify Eq. \eqref{eq:NecsPsi_limit_pre}. 

  \paragraph{Case I:} $v_{\rm max} \leq f^{\prime}(1)$. The utilization function $\psi$, as a solution to Eq. \eqref{eq:NecsPsi_limit_pre}, satisfies the boundary condition 
  \begin{equation}
    \psi(v_{\rm max}) \leq \rho, \label{eq:boundary_case1}
  \end{equation}
  and thus $\psi(v_{\rm max}) \leq \rho \leq 1$, i.e., the constraint Eq. \eqref{ineq:added_constraint} is automatically satisfied. Otherwise, $\psi$ is a solution to 
  \begin{equation}
    \left\{
    \begin{aligned}
      &p\psi(p) - \int_{0}^{p} \psi(u) \Id u - f(\psi(p)) = \frac{1}{\alpha} f^*(p), \quad p\in [0,f^{\prime}(1)],\\
      &\psi(0) = 0, \quad \psi(v_{\rm max}) > \rho.
    \end{aligned}\right.
    \label{eq:NecsPsi_limit_case1_pre_contrd}
  \end{equation}
  Since $\psi(v_{\rm max}) > \rho = f^{\prime-1}(v_{\rm max})$, there exist a small $\varepsilon>0$ such that $\psi(p) > f^{\prime-1}(p)$ for $p\in [v_{\rm max}-\varepsilon, v_{\rm max}]$. Denote $F_{\psi}(p) = p\psi(p) - \int_{0}^{p} \psi(u) \Id u - f(\psi(p))$, and thus 
  $$
  F_{\psi}(p) = \frac{1}{\alpha} f^*(p), \quad \forall p\geq 0.
  $$
  Since
  $$
  F_{\psi}^{\prime}(p) = \psi^{\prime}(p)\left(p-f^{\prime}\left(\psi\left(p\right)\right)\right) < 0, \quad \forall p\in [v_{\rm max}-\varepsilon, v_{\rm max}], 
  $$
  we have $F_{\psi}^{\prime}(p)$ is decreasing on the interval $[v_{\rm max}-\varepsilon, v_{\rm max}]$. On the other hand, $f^*(p)$ is increasing with respect to $p\geq 0$. This implies, for $p\in [v_{\rm max}-\varepsilon, v_{\rm max}]$, 
  $$
  F_{\psi}(p) > \frac{1}{\alpha} f^*(p),
  $$
  which contradicts the fact that $\psi$ is a solution to Eq. \eqref{eq:NecsPsi_limit_case1_pre_contrd}. Therefore, there exists a utilization function $\psi$ satisfying
  \begin{equation}
    \left\{
    \begin{aligned}
      &p\psi(p) - \int_{0}^{p} \psi(u) \Id u - f(\psi(p)) = \frac{1}{\alpha} f^*(p), \quad \forall p\in [0,f^{\prime}(1)],\\
      &\psi(0) = 0, \quad \psi(v_{\rm max}) \leq \rho.
    \end{aligned}\right.
    \label{eq:NecsPsi_limit_case1_pre}
  \end{equation}

  By differentiating Eq. \eqref{eq:NecsPsi_limit_case1} and applying the transformations in Theorem \ref{sufficiency_case_0}, $\varPhi \triangleq \psi^{-1}$ and $ \varPhi(y) = f'(\phi(y)) $, we have
  \begin{equation}
    \left\{
    \begin{aligned}
      &\phi'(y) = \alpha\cdot \frac{f'(\phi(y))  - f'(y) }{\phi(y)\cdot f''(\phi(y))},\quad y\geq 0,  \\
      &\phi(y) \leq 1, \\
      &\phi(0) = 0, \quad \phi(\rho) \geq \rho,
    \end{aligned}\right. 
    \label{eq:NecsPsi_limit_case1}
  \end{equation}
  where $\phi(y)\leq 1$ is from $p\leq f^{\prime}(1)$ in Eq. \eqref{eq:NecsPsi_limit_case1_pre} and the boundary condition is derived from the boundary condition in Eq. \eqref{eq:NecsPsi_limit_case1_pre}. 

  \paragraph{Case II:} $v_{\rm max} > f^{\prime}(1)$. By Eq. \eqref{eq:NecsPsi_limit_pre}, for a given $v_{\rm max}$, there exists a utilization function $\psi$ satisfying
  \begin{equation}
    \left\{
    \begin{aligned}
      & p\psi(p) - \int_{0}^{p} \psi(u) \Id u - f(\psi(p)) = \frac{1}{\alpha} f^*(p), \quad p\in [0,f^{\prime}(1)), \\
      & p\psi(p) - \int_{0}^{p} \psi(u) \Id u - f(\psi(p)) = \frac{1}{\alpha} \left(p - f(1)\right), \quad p\in [f^{\prime}(1),+\infty), \\
      & \psi(0) = 0, \quad \psi(v_{\rm max}) \leq 1.
    \end{aligned}\right.
    \label{eq:NecsPsi_limit_case2_pre}
  \end{equation}
  Similar to Case I, we can rewrite Eq. \eqref{eq:NecsPsi_limit_case2_pre} using the reserve function $\phi$: 
  \begin{equation}
    \left\{
    \begin{aligned}
      &\phi'(y) = \alpha\cdot \frac{f'(\phi(y))  - f'(y) }{\phi(y)\cdot f''(\phi(y))}, \quad y\in [0,\zeta), \\
      &\phi'(y) = \alpha\cdot \frac{f'(\phi(y))  - f'(y) }{f''(\phi(y))}, \quad y\in [\zeta,+\infty), \\
      &\phi(0) = 0, \quad \phi(1) \geq \rho,
    \end{aligned}\right.
    \label{eq:NecsPsi_limit_case2}
  \end{equation}
  where $\phi(\zeta)=1$ and the second boundary condition is derived from the constraint Eq. \eqref{ineq:added_constraint} and $p\leq v_{\rm max}$ in Eq. \eqref{eq:NecsPsi_limit_case2_pre}. Incorporating the equations in Eq. \eqref{eq:NecsPsi_limit_case2} and combining with Eq. \eqref{eq:NecsPsi_limit_case1}, we obtain Eq. \eqref{eq_necessity_phi}. 
  Therefore, we complete the proof. 
  \end{proof}

\section{Proof of Theorem \ref{theorem_lower_bound} (Lower Bound)} \label{Apx:LowerBound}

We establish the lower bound on the competitive ratio for strongly $(\tau, \sigma)$-elastic costs, and thus general $(\tau,\sigma)$-elastic costs.
In the proof, we will analyze the asymptotic behavior of $F(\phi, y)$ as $y \rightarrow 0^+$ and $y \rightarrow +\infty$, corresponding to the parameters $\tau$ and $\sigma$, respectively.

\subsection{Left Asymptotic Analysis} 
As $ y \rightarrow 0^+ $, we use the Taylor expansion to analyze Eq.~\eqref{eq:mainODE}.
We claim that for the Taylor expansion of $f'(y)$, the lowest degree of the expansion must be at least $\tau-1$.
We prove this by contradiction. Suppose that the lowest degree of $f'(y)$ is $m - 1 < \tau -1$. The corresponding coefficient is positive since $f$ is convex. On the other hand, the lowest degree of $f''(y)$ is $m - 2 < \tau - 2$, and the function $f''(y)/y^{\tau-2}$ must be decreasing for $y>0$ sufficiently small. This contradicts the fact that $f$ is strongly $(\tau,\sigma)$-elastic. Thus, the claim holds. 
We assume w.l.o.g. that $\tau>1$ and
$$
f^\prime(y) = c_{\tau -1}y^{\tau - 1} + o(y^{\tau - 1}), \quad c_{\tau-1} > 0, 
$$
where $o(y^{\tau - 1})$ denotes the higher-order terms of $y^{\tau - 1}$. For $\tau = 1$ or the case where the lowest degree of the expansion of $f'$ is strictly larger than $\tau - 1$, we consider the lowest term of $f''$, which must have a positive coefficient, and the sequel analysis still applies.
Then we have 
$$
f^{\prime\prime}(y) = (\tau - 1) c_{\tau - 1} y^{\tau-2} + o(y^{\tau - 2}).
$$

Let
\begin{align*}
\chi := \lim\limits_{y\rightarrow 0^+}  \phi'(y).
\end{align*}
By Eq.~\eqref{eq:mainODE} and the expansion of $f'$ and $f''$, we have
\begin{align*}
\chi  =  \ & \lim\limits_{y\rightarrow 0^+} \alpha\cdot \frac{f'\left(\phi\right) - f'(y) }{\phi\cdot f''(\phi)} \\
=\ & \lim\limits_{y\rightarrow 0^+} \alpha\cdot \frac{c_{\tau -1}\phi^{\tau - 1} + o(\phi^{\tau - 1}) - c_{\tau -1}y^{\tau - 1} - o(y^{\tau - 1})}{(\tau - 1)c_{\tau - 1} \phi^{\tau-1} + o(\phi^{\tau - 1})}\\
=\ &\lim\limits_{y\rightarrow 0^+} \frac{\alpha}{\tau - 1}\cdot \left(1 - \left(\frac{y}{\phi}\right)^{\tau-1}\right) \\
=\ &\frac{\alpha}{\tau - 1}\cdot \left(1 - \frac{1}{\chi^{\tau-1}}\right),
\end{align*}	
leading to
\begin{align}\label{eq_cp} 
\underbrace{ \chi^\tau - \frac{\alpha}{\tau-1}\cdot  \chi^{\tau-1} + \frac{\alpha}{\tau-1}}_{\CP(\alpha,\tau)} = 0,
\end{align}
where $ \CP(\alpha,\tau) $ is the characteristic polynomial defined in Eq. \eqref{eq:CharacPoly}.

To guarantee that $ \phi'(0) $ is well defined, $\CP(\alpha,\tau) = 0 $ must have at least one real root. By some algebra, $ \alpha \geq \ataus =   \tau^{\frac{\tau}{\tau-1}} $ follows (for a formal discussion, see Lemma \ref{Lem:PowerEquation} in Appendix~\ref{Apx:UpperBound}). 

\subsection{Right Asymptotic Analysis} \label{Apx:Right Asymptotic Analysis}
When $ y \rightarrow +\infty $, $ \phi(y) $ should approach infinity because of the unlimited supply and the monotone cost function $ f $. 
On the other hand, the ratio $f'(\phi(y))/f'(y)$ should be bounded even if $y$ tends to infinity; otherwise the reserve function is reserving too much for the future so that their pseudo marginal cost is infinitely larger than the true marginal cost. 
Since $f$ is (tightly) $(\tau,\sigma)$-elastic, we have for sufficiently large $y>0$, there exists some small $\epsilon>0$ such that
$$
Ef'(y) \geq \tau - 1 + \epsilon,
$$
which leads to $f'(y)/y^{\tau - 1 + \epsilon}$ is increasing. 
This implies
$$
\frac{f'(\phi(y))}{f'(y)} \geq \left(\frac{\phi(y)}{y}\right)^{\tau - 1 + \epsilon} .
$$
Therefore, the ratio $\phi(y)/y$ is bounded and so is $\phi'(y)$. Let
\begin{align*}
\Delta := \lim\limits_{y\rightarrow +\infty}  \phi'(y),
\end{align*}
By Eq.~\eqref{eq:mainODE}, we have
\begin{align*}
\Delta =\ & \lim\limits_{y\rightarrow \infty} \alpha\cdot \frac{f'\left(\phi\right) - f'(y) }{\phi\cdot f''(\phi)} \\
=\ & \lim\limits_{y\rightarrow \infty} \alpha\cdot \frac{1 - \frac{f'(y)}{f'(\phi)}}{\frac{\phi\cdot f''(\phi)}{f'(\phi)}}.
\end{align*}	
Since $ f(y) $ is (tightly) $ (\tau,\sigma) $-elastic,  we have
\begin{align*}
\lim\limits_{y\rightarrow \infty}  \frac{\phi\cdot f''(\phi)}{f'(\phi)} 
&= \lim\limits_{y\rightarrow \infty} Ef'(y) = \sigma-1, \\
\lim\limits_{y\rightarrow \infty}  \ln\frac{f'(\phi)}{f'(y)} 
&= \lim\limits_{y\rightarrow \infty}  \int_y^{\phi(y)} \frac{Ef'(u)}{u} \Id u \\
&= \lim\limits_{y\rightarrow \infty}(\sigma - 1)\ln\frac{\phi(y)}{y} \\
&= (\sigma-1)\ln \Delta.
\end{align*}
Combining the above results leads to the following equation:
\begin{align*}
\Delta = \alpha\cdot \frac{1 - \frac{1}{\Delta^{\sigma - 1}}}{\sigma - 1}, 
\end{align*}
leading to the following characteristic polynomial $ \CP(\alpha,\sigma) $:
\begin{align}\label{eq_cp_sigma} 
\Delta^\sigma - \frac{\alpha}{\sigma-1}\cdot  \Delta^{\sigma-1} + \frac{\alpha}{\sigma-1} = 0.
\end{align}
By some similar algebra, $ \alpha \geq \asigmas =   \sigma^{\frac{\sigma}{\sigma-1}} $ follows. 
Similar to the analysis above, when $ \alpha = \asigmas  $, $ \phi'(\infty) $ is given by
\begin{align*}
\phi'(\infty) =  \Dstar = \asigmas/\sigma = \sigma^{\frac{1}{\sigma-1}}.
\end{align*}

Combining the left asymptotic analysis (i.e., $ y \rightarrow 0^+$) for $ \tau $ with the right asymptotic analysis (i.e., $ y \rightarrow +\infty $) for $ \sigma $:
\begin{align*}
\alpha \geq \max\left\{\ataus,  \asigmas \right\} =  \max \Big\{ \tau^{\frac{\tau}{\tau-1}},   \sigma^{\frac{\sigma}{\sigma-1}} \Big\} = \sigma^{\frac{\sigma}{\sigma-1}}.
\end{align*}
We thus complete the proof of the lower bound in Theorem \ref{theorem_lower_bound}.

\section{Proofs Related to the Design Space} \label{Apx:UpperBound}

In this section, we first give a lemma on the characteristic polynomial and the impact of different degrees for Eq.~\eqref{eq:SufNec_phi}. Then we present the proof of Theorem~\ref{Thm:UpperBoundAch} in Sections \ref{Apx:Proof of the Existence of a Solution}-\ref{Apx:Proof for Tight Linear Bounds}, and the proof of Proposition~\ref{Prop:BoundResp} in Section~\ref{Apx:BoundResp}. 
Specifically, to prove Theorem~\ref{Thm:UpperBoundAch}, we first show the existence of a solution to Eq.~\eqref{eq:SufNec_phi}, and then, the existence of infinitely many solutions. Finally, we show that the linear functions in Theorem~\ref{Thm:UpperBoundAch} are the tight linear bounds of the solutions to Eq.~\eqref{eq:SufNec_phi}.

\subsection{Preliminaries}

We first give a formal definition of the linear tight bound. 
\begin{definition}\label{Def:TightBound}
    Let $\mathcal{G}$ be a subset of continuous functions defined on $\mathbb{R}_+$. 
    For a constant $k^* \geq 0$, we say that the function $y \mapsto k^* y$ (or simply the scalar $k^*$) is the \emph{tight linear upper (respectively, lower) bound} for $\mathcal{G}$ if the following conditions hold:
    \begin{itemize}
        \item \textbf{Bound condition:} For every $f \in \mathcal{G}$, we have 
        $$
            k^* y \geq f(y) \quad (\text{respectively, } k^* y \leq f(y)) \quad \text{for all } y \in (0, +\infty).
        $$
        \item \textbf{Tightness condition:} For any scalar $k$ such that $0 \leq k < k^*$ (respectively, $k > k^*$), there exist $f_0 \in \mathcal{G}$ and $y_0 > 0$ such that 
        $$
            f_0(y_0) > k y_0 \quad (\text{respectively, } f_0(y_0) < k y_0).
        $$
    \end{itemize}
\end{definition}
When the context is clear, we simply refer to $k^*$ as a \emph{tight} upper (respectively, lower) bound. 
If $\mathcal{G}$ admits a tight lower bound $k_1$ and a tight upper bound $k_2$, 
then the region bounded by the two lines $y = k_1x$ and $y = k_2x$ forms the minimal cone that bounds $\mathcal{G}$.

In the proof of the lower bound (see Appendix~\ref{Apx:LowerBound}), the characteristic polynomial plays a central role. We now formally state its properties in the following lemma. As will become evident, these properties also fundamentally influence the existence of infinitely many solutions in the design space. 
\begin{lemma}
  \label{Lem:PowerEquation}
  For any integer $k> 1$ and any $\alpha>1$, let $z_\star = k^{1/(k-1)}$ and $\alpha_\star = z_\star^{k}$. The following equation of $x\in \mathbb{R}$ 
  \begin{equation}
    x^{k} - \frac{\alpha}{k-1}\left(x^{k-1} - 1\right) = 0 \label{eq:PowerEquation}
  \end{equation}
  has solutions:
  \begin{enumerate}
    \item $x_1\geq z_\star\geq x_2>1$, if $k$ is even and $\alpha\geq \alpha_\star$,
    \item $x_1\geq z_\star\geq x_2>1>0>x_3>-1$, if $k$ is odd and $\alpha \geq \alpha_\star$, 
  \end{enumerate} 
  where $x_1 = x_2 = z_\star$ if and only if $\alpha = \alpha_\star$. For $\alpha<\alpha_\star$, there is no positive solution to Eq. \eqref{eq:PowerEquation}, and only one negative solution $-1<x_1<0$ provided that $k$ is odd. 

  Specifically, for fixed $k$, $x_1$ and $x_2$ are dependent of $\alpha$, denoted as $x_1(\alpha)$ and $x_2(\alpha)$. We have $x_1(\alpha)$ and $x_2(\alpha)$, respectively, are increasing and decreasing, respectively, for $\alpha \geq \alpha_\star$. Moreover, as $\alpha\rightarrow+\infty$, we have $x_1(\alpha)\rightarrow +\infty$ and $x_2(\alpha)\rightarrow 1$. 
  
  On the other hand, for fixed $\alpha >1$, $x_1$ is dependent of $k$, denoted as $x_1(k)$. For integer $k\geq 1$ such that $\alpha_\star\leq \alpha$, we have $x_1(k)$ is strictly decreasing 
\end{lemma}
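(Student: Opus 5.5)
We study $g(x) := x^{k} - a\left(x^{k-1}-1\right)$ with $a := \alpha/(k-1) > 0$ by elementary calculus, and then use the inverse-function viewpoint $\alpha = h_k(x)$ for the parameter dependence. The derivative is $g'(x) = x^{k-2}\bigl(kx-(k-1)a\bigr) = x^{k-2}(kx-\alpha)$. So on $(0,\infty)$ the function $g$ decreases on $(0,x_0)$ and increases on $(x_0,\infty)$, where $x_0 = \alpha/k$.

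\textbf{Positive roots.} We have $g(0)=a>0$ and $g(1)=1>0$. Since $x_0-a=-a/k$, the minimum value is
\[
g(x_0) = a\bigl(1 - x_0^{k-1}/k\bigr).
\]
This is $\le 0$ iff $(\alpha/k)^{k-1}\ge k$, i.e. iff $\alpha \ge k^{k/(k-1)} = \alpha_\star$, with equality giving a double root at $x_0 = k^{1/(k-1)} = z_\star$. This settles existence versus nonexistence of positive roots and the double-root case. To place $z_\star$ between the two roots, use $z_\star^{k-1}=k$:
\[
g(z_\star) = kz_\star - (k-1)a = kz_\star - \alpha = \alpha_\star - \alpha \le 0 .
\]
Hence $x_2 \le z_\star \le x_1$. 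Finally, $g>0$ on $(0,1]$, because $x^{k-1}-1\le 0$ there, so $x_2>1$.

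\textbf{Negative roots.} For $x<0$ with $k$ even, $x^{k-1}<0$, so every term of $g$ is positive and there is no root. For $k$ odd, both $x^{k-2}<0$ and $kx-\alpha<0$, so $g'>0$ on $(-\infty,0)$. Together with $g(0)=a>0$ and $g(-1)=-1-a\cdot 0=-1<0$, this gives a unique root in $(-1,0)$. This holds for every $\alpha>1$, which covers the claim for $\alpha<\alpha_\star$.

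\textbf{Dependence on $\alpha$.} For $x>1$, rewrite the equation as
\[
\alpha = h_k(x) := \frac{(k-1)x^{k}}{x^{k-1}-1}.
\]
The computation above shows that $h_k(x)\ge\alpha$ exactly where $g(x)\ge 0$. Moreover, $h_k$ has a unique minimum $\alpha_\star$ at $z_\star$, is strictly decreasing on $(1,z_\star)$, is strictly increasing on $(z_\star,\infty)$, and tends to $+\infty$ at both ends. Thus $x_2(\alpha)$ and $x_1(\alpha)$ are the inverses of the two monotone branches. It follows that $x_1$ is increasing with $x_1\to\infty$, and $x_2$ is decreasing with $x_2\to 1^+$, as $\alpha\to\infty$.

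\textbf{Dependence on $k$.} This is the only step needing a genuine inequality, so I expect it to be the main obstacle. I would compare $h_{k+1}$ with $h_k$ pointwise:
\[
\frac{h_{k+1}(x)}{h_k(x)} = \frac{k\,x\,(x^{k-1}-1)}{(k-1)(x^{k}-1)} > 1
\;\iff\; x^{k} - kx + (k-1) > 0 .
\]
The right-hand inequality holds for all $x>1$ by Bernoulli's inequality (strict convexity of $x^k$ at $x=1$).

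Now let $x_1(k)$ be the larger root, so $x_1(k)\ge z_\star(k)$. Since $k^{1/(k-1)}$ is decreasing in $k$, we get $x_1(k) \ge z_\star(k) \ge z_\star(k+1)$. Hence $x_1(k)$ lies on the increasing branch of $h_{k+1}$, and
\[
h_{k+1}\bigl(x_1(k)\bigr) > h_k\bigl(x_1(k)\bigr) = \alpha .
\]
The increasing branch of $h_{k+1}$ therefore attains the value $\alpha$ strictly to the left of $x_1(k)$, i.e. $x_1(k+1) < x_1(k)$. This uses that $\alpha\ge\alpha_\star(k+1)$, so the root $x_1(k+1)$ exists.
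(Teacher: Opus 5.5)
Your proposal is correct and follows essentially the paper's route. Your $h_k$ is the paper's $g_k$, the $\alpha$-dependence comes from inverting its two monotone branches, and the $k$-dependence rests on the same inequality $x^{k}-kx+k-1>0$ (you use Bernoulli and a ratio, the paper uses a difference and $x^{k-1}\ge k$). Your one extra ingredient, $z_\star(k+1)\le z_\star(k)$, placing $x_1(k)$ on the increasing branch of $h_{k+1}$, is worth keeping: the paper's appeal to ``$g_{k+1}$ increasing for $x\ge x_1(k+1)$'' does not by itself rule out $x_1(k)<x_2(k+1)$.
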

\begin{proof}
  Define $g_k(x)$, $x\in\mathbb{R}$, as follows
  $$
  g_k(x) = \frac{(k-1)x^{k}}{x^{k-1}-1}. 
  $$
  To solve Eq. \eqref{eq:PowerEquation}, it is equivalent to find $x$ such that $x^{k-1} \neq 1$ and $g_k(x) = \alpha$. We have the derivative of $g_k$ given by
  $$
  g_k^{\prime}(x) = \frac{(k-1)x^{k-1}\left(x^{k-1}-k\right)}{\left(x^{k-1}-1\right)^2}, \quad x^{k-1}\neq 1.
  $$
  If $k$ is even, $g_k$ is divided into 2 branches $(-\infty,1)\cup (1,+\infty)$, and $g_k$ has a local maxima at $x = 0$, where $g_k(0)=0$, and a local minima at $x = z_\star$, where $g_k(z_\star) = \alpha_\star$.
  If $k$ is odd, $g_k$ is divided into 3 branches $(-\infty,-1)\cup(-1,1)\cup (1,+\infty)$, and $g_k$ has a local maxima at $x = -z_\star$ where $g_k(-z_\star) = -\alpha_\star$, and a local minima at $x = z_\star$, where $g(z_\star) = \alpha_\star$; for $x\in (-1,1)$, $g_k(x)$ is strictly decreasing.
  Combining these two cases, one can find the solutions to Eq. \eqref{eq:PowerEquation} and verify how $x_1(\alpha)$ and $x_2(\alpha)$ depend on $\alpha$.

  To prove the third part, for fixed $\alpha$, we calculate 
  $$
  g_{k+1}(x) - g_k(x) = \frac{x^{k}\left(x^{k} - k x + k-1\right)}{\left(x^{k-1}-1\right)\left(x^{k}-1\right)}. 
  $$
  Since $x_1(k) \geq z_\star = k^{1/(k-1)}$, we have 
  $$
  \begin{aligned}
    &\left(x_1\left(k\right)\right)^{k} - k x_1(k) + k - 1 \\
    = &\left(\left(x_1\left(k\right)\right)^{k-1}- k\right)\cdot x_1(k) + k -1 \\
    \geq & k-1 >0.
  \end{aligned}
  $$
  This implies 
  $$
  g_{k+1}\left(x_1\left(k\right)\right) - \alpha = g_{k+1}\left(x_1\left(k\right)\right) - g_k\left(x_1\left(k\right)\right) > 0. 
  $$
  Then we have 
  $$
  g_{k+1}\left(x_1\left(k+1\right)\right) = \alpha < g_{k+1}\left(x_1\left(k\right)\right). 
  $$
  By previous analysis, we have $g_{k+1}(x)$ is increasing for $x\geq x_1\left(k+1\right)$, and thus 
  $$
  x_1\left(k+1\right) < x_1\left(k\right). 
  $$
  Combining this together, we complete the proof. 
\end{proof}

Note that the results of $x_1$ and $x_2$ in Lemma \ref{Lem:PowerEquation} hold for $k\in \mathbb{R}_+$. There will be some troubles in defining $x_3$ when $k$ is not an integer, but we only use roots $x_1$ and $x_2$ of Eq. \eqref{eq:PowerEquation} throughout the paper, i.e., the positive roots of characteristic polynomials.

\begin{lemma} \label{Lem:ElasticComp}
  For a strongly $(\tau,\sigma)$-elastic function $f$ (see the definition right after Assumption~\ref{ass1}), we have for any $y>0$, 
  \begin{align*}
  \tau-1 \leq &Ef'(y) \leq \sigma -1.
  \end{align*}
  Furthermore, if two strongly $(\tau,\sigma)$-elastic cost functions $f_1$ and $f_2$ such that
  $$
  Ef_1''(y) \geq Ef_2''(y), \quad \forall y>0
  $$
  we then have 
  $$
  F_1(\phi,y) \leq F_2(\phi,y), \quad \forall \phi \geq y > 0,
  $$ 
  where $F_1$ and $F_2$ are defined as in Eq. \eqref{Eq:NEPFunction} with respect to $f_1$ and $f_2$, respectively.  
\end{lemma}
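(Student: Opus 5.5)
Both claims rest on monotone logarithmic derivatives, integrated up to ratio inequalities.

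\emph{First claim.} Strong $(\tau,\sigma)$-elasticity gives $\tau-2 \le y f'''(y)/f''(y) \le \sigma-2$. Equivalently, $y\mapsto f''(y)/y^{\tau-2}$ is non-decreasing and $y\mapsto f''(y)/y^{\sigma-2}$ is non-increasing on $(0,\infty)$. Since $f'(0)=0$, we can write $f'(y)=\int_0^y f''(u)\,\mathrm{d}u$. For $0<u\le y$,
\[
f''(u)\le f''(y)\,(u/y)^{\tau-2},
\]
and integrating over $u\in(0,y)$ gives $f'(y)\le y f''(y)/(\tau-1)$, i.e.\ $Ef'(y)\ge \tau-1$. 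Symmetrically, $f''(u)\ge f''(y)(u/y)^{\sigma-2}$ gives $f'(y)\ge y f''(y)/(\sigma-1)$, i.e.\ $Ef'(y)\le\sigma-1$. The case $\tau=1$ is handled the same way (the lower bound is then trivial). Integrability of $u^{\tau-2}$ near $0$ requires $\tau>1$, which holds except in that trivial case.

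\emph{Second claim.} Write, as in Eq.~\eqref{Eq:NEPFunction},
\[
F_i(\phi,y)=\frac{f_i'(\phi)-f_i'(y)}{\phi f_i''(\phi)}=\frac{1}{\phi}\int_y^\phi \frac{f_i''(u)}{f_i''(\phi)}\,\mathrm{d}u .
\]
It therefore suffices to show that for $y\le u\le\phi$,
\[
\frac{f_1''(u)}{f_1''(\phi)}\le \frac{f_2''(u)}{f_2''(\phi)} .
\]
This is equivalent to the ratio $h=f_1''/f_2''$ being non-decreasing. Its logarithmic derivative is
\[
\frac{h'(u)}{h(u)}=\frac{Ef_1''(u)-Ef_2''(u)}{u}\ge 0 ,
\]
by hypothesis, so $h(u)\le h(\phi)$, which is exactly the needed inequality. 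Integrating in $u$ over $[y,\phi]$ and dividing by $\phi>0$ gives $F_1\le F_2$.

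\emph{Main obstacle.} The only delicate point is positivity and well-definedness: $f_i''>0$ on $(0,\infty)$ by strict convexity, smoothness, and the elasticity bounds, so the logarithms and ratios make sense. There is also the behavior at $u\to0$ in the first part. That is resolved by using the monotone-ratio form, which only needs $f''(u)\to$ a value with $\int_0 f''<\infty$, and this holds since $f'(0)=0$ and $f'$ is continuous.
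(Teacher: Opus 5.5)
Your proposal is correct and follows the paper's proof essentially step for step. In the first part you integrate the monotone-ratio inequalities $f''(u)\le f''(y)(u/y)^{\tau-2}$ and $f''(u)\ge f''(y)(u/y)^{\sigma-2}$ over $(0,y)$; in the second you show $f_1''/f_2''$ is non-decreasing via its logarithmic derivative and integrate the resulting ratio inequality over $[y,\phi]$ (the paper does the same integration on $[y/\phi,1]$ after the substitution $u\mapsto u\phi$), and your remark on the trivial case $\tau=1$ is a small clarification the paper leaves implicit.
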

\begin{proof}
    We now prove the first part of this lemma. If $f$ is strongly $(\tau,\sigma)$-elastic, one can show for all $y > 0$, the following two functions are monotone (not necessarily strictly):
    $$
    \begin{aligned}    
      \frac{f^{\prime\prime}(y)}{y^{\tau - 2}} & \text{ is increasing}, \\
      \frac{f^{\prime\prime}(y)}{y^{\sigma - 2}} & \text{ is decreasing}.
    \end{aligned}
    $$
    Since $\frac{f^{\prime\prime}(y)}{y^{\tau - 2}}$ is increasing for all $y>0$, we have for any $0<t<y$,
    \begin{align*}
    & f''(t) \leq f''(y) \left(\frac{t}{y}\right)^{\tau - 2} \\
    \Rightarrow \quad & \int_{0}^{y} f''(t) \Id t \leq f''(y) \int_{0}^{y} \left(\frac{t}{y}\right)^{\tau - 2} \Id t \\
    \Rightarrow \quad & f'(y) \leq \frac{f''(y)}{\tau - 1} y,
    \end{align*}
    which implies that $Ef'(y) \geq \tau - 1$. Similarly, one can show that $Ef'(y) \leq \sigma - 1$.

    For the second part, denote $g_i(y) = \ln\left(f_k''(y)\right)$ for $y>0$ and $k\in \{1,2\}$.
    since $Ef_1''(y) \geq Ef_2''(y)$ for $y>0$, we have $g_1'(y) \geq g_2'(y)$. Let $h(y) = \exp\left(g_1(y) - g_2(y)\right)$, and $h(y)$ is non-decreasing. This implies
    $$
    \frac{f_1^{\prime\prime}(y)}{f_2^{\prime\prime}(y)} \leq  \frac{f_1^{\prime\prime}(\phi)}{f_2^{\prime\prime}(\phi)}. 
    $$
    Let $v = y/\phi \leq 1$, we have
    $$
    \begin{aligned}
        & \frac{f_1^{\prime\prime}(v\phi)}{f_1^{\prime\prime}(\phi)} \leq  \frac{f_2^{\prime\prime}(v\phi)}{f_2^{\prime\prime}(\phi)} \\
        \Rightarrow \quad & \int_{v}^{1} \frac{f_1^{\prime\prime}(u\phi)}{f_1^{\prime\prime}(\phi)} \Id u \leq  \int_{v}^{1} \frac{f_2^{\prime\prime}(u\phi)}{f_2^{\prime\prime}(\phi)} \Id u \\
        \Rightarrow \quad & \frac{f_1^{\prime}(\phi) - f_1^{\prime}(v\phi)}{\phi f_1^{\prime\prime}(\phi)} \leq \frac{f_2^{\prime}(\phi) - f_2^{\prime}(v\phi)}{\phi f_2^{\prime\prime}(\phi)} \\
        \Rightarrow \quad & F_1(\phi,y) \leq F_2(\phi,y).
    \end{aligned}
    $$
    Thus, we complete the proof.
\end{proof}

With this property, we can still show the following comparison result regarding the NEP in Eq.~\eqref{eq:SufNec_phi}.
\begin{corollary} \label{Cor:ElasticFBound}
  If $f$ is strongly $(\tau,\sigma)$-elastic, then the corresponding $F(\phi,y)$ satisfies that 
  \begin{equation} \label{Eq:ElasticFBound}
    \frac{\phi^{\sigma-1} - y^{\sigma-1}}{(\sigma-1)\phi^{\sigma-1}}
  \leq F(\phi,y) \leq \frac{\phi^{\tau-1} - y^{\tau-1}}{(\tau-1)\phi^{\tau-1}}, \quad \forall \phi \geq y > 0.
  \end{equation}
\end{corollary}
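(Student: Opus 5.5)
The plan is to apply the comparison part of Lemma~\ref{Lem:ElasticComp} twice, once against the pure power $g_\tau(y)=y^{\tau}$ and once against $g_\sigma(y)=y^{\sigma}$. For a power cost $g_k(y)=y^k$ we have $g_k'(y)=k y^{k-1}$ and $g_k''(y)=k(k-1)y^{k-2}$. Substituting into Eq.~\eqref{Eq:NEPFunction} gives
\begin{equation*}
F_{g_k}(\phi,y)=\frac{k\phi^{k-1}-k y^{k-1}}{\phi\cdot k(k-1)\phi^{k-2}}=\frac{\phi^{k-1}-y^{k-1}}{(k-1)\phi^{k-1}}.
\end{equation*}
These are exactly the two bounding expressions in Eq.~\eqref{Eq:ElasticFBound}, with $k=\sigma$ on the left and $k=\tau$ on the right. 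It therefore suffices to show $F_{g_\sigma}\le F\le F_{g_\tau}$ on $\{\phi\ge y>0\}$.

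The key input is the elasticity ordering. Since $Eg_k''(y)=y g_k'''(y)/g_k''(y)=k-2$ is constant, strong $(\tau,\sigma)$-elasticity of $f$ reads $Eg_\tau''\le Ef''\le Eg_\sigma''$ for all $y>0$. Lemma~\ref{Lem:ElasticComp} states that a larger $E(\cdot)''$ yields a smaller $F$. Applying it with $(f_1,f_2)=(g_\sigma,f)$ gives $F_{g_\sigma}\le F$, and applying it with $(f_1,f_2)=(f,g_\tau)$ gives $F\le F_{g_\tau}$.

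Two technical points need care. First, Lemma~\ref{Lem:ElasticComp} is stated for strongly $(\tau,\sigma)$-elastic functions, so I must check that $y^\tau$ and $y^\sigma$ qualify. Both have constant $E''$ within $[\tau-2,\sigma-2]$. Alternatively, I note that the proof of the comparison only uses positivity of $f_i''$ on $(0,\infty)$ and the ordering of $E f_i''$.

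Second, the edge case $\tau=1$ (or $\sigma=1$) makes the denominator $k-1$ vanish and gives $g_1''\equiv 0$. This case should be treated as the limit $k\to1^+$, where the bound becomes $\ln(\phi/y)$. Alternatively, I can simply restrict to $\tau>1$ as the paper does elsewhere. I expect this degenerate case to be the only delicate step, since the main inequality is a direct two-sided instance of the lemma.

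If I instead want a self-contained proof, I would redo the lemma's argument directly. The monotonicity of $f''(u)/u^{\tau-2}$ and $f''(u)/u^{\sigma-2}$ gives
\begin{equation*}
\Bigl(\tfrac{u}{\phi}\Bigr)^{\sigma-2}\le \frac{f''(u)}{f''(\phi)}\le \Bigl(\tfrac{u}{\phi}\Bigr)^{\tau-2}, \qquad y\le u\le\phi.
\end{equation*}
Integrating over $u\in[y,\phi]$ and dividing by $\phi$ yields both inequalities at once.
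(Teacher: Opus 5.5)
Your proposal is correct and is essentially the paper's proof: the paper also obtains both bounds by applying Lemma~\ref{Lem:ElasticComp} once with $f_2(y)=y^{\tau}$ and once against $y^{\sigma}$, using that for a pure power $y^{k}$ one has $F(\phi,y)=(\phi^{k-1}-y^{k-1})/((k-1)\phi^{k-1})$. Your extra remarks do not change the route. These are that the comparison only needs $f_i''>0$ and the ordering of $Ef_i''$, the $\tau=1$ limit $\ln(\phi/y)$, and the inlined integration of $(u/\phi)^{\sigma-2}\le f''(u)/f''(\phi)\le (u/\phi)^{\tau-2}$; all three are sound refinements of the same argument.
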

\begin{proof}
  Let $f_1 = f$ and $f_2(y) = y^{\tau}$. 
  Since $f$ is strongly $(\tau,\sigma)$-elastic, by the definition, we have $f_1$ and $f_2$ satisfy the condition in Lemma \ref{Lem:ElasticComp}. Thus, we have $F(\phi,y) \leq F_2(\phi,y)$ for all $\phi \geq y > 0$. Note that for $f_2(y) = y^{\tau}$, we have
  $$
    F_2(\phi,y) = \frac{\phi^{\tau-1} - y^{\tau-1}}{(\tau-1)\phi^{\tau-1}}.
  $$
  This implies the upper bound in Eq. \eqref{Eq:ElasticFBound}. Similarly, by letting $f_2(y) = y^{\sigma}$, one can show the lower bound in Eq. \eqref{Eq:ElasticFBound}.
\end{proof}

\subsection{Local Phase Portraits and Proof of the Existence of Solutions} \label{Apx:Proof of the Existence of a Solution}
Recall that $\alpha \geq \asigmas$. 
We first analyze the behavior of the solutions to Eq. \eqref{eq:SufNec_phi} near the origin in order to probe potential (globally) feasible solutions. Then we show that there exist solutions that are feasible.

We claim that for the Taylor expansion of $f'(y)$, the lowest degree of the expansion must be at least $\tau-1$.
We prove this by contradiction. Suppose that the lowest degree of $f'(y)$ is $m - 1 < \tau -1$. The corresponding coefficient is positive since $f$ is convex. On the other hand, the lowest degree of $f''(y)$ is $m - 2 < \tau - 2$, and the function $f''(y)/y^{\tau-2}$ must be decreasing for $y>0$ sufficiently small. This contradicts the fact that $f$ is strongly $(\tau,\sigma)$-elastic. Thus, the claim holds. 
We assume w.l.o.g. that $\tau>1$ and
$$
f^\prime(y) = c_{\tau -1}y^{\tau - 1} + o(y^{\tau - 1}), \quad c_{\tau-1} > 0, 
$$
where $o(y^{\tau - 1})$ denotes the higher-order terms of $y^{\tau - 1}$. For $\tau = 1$ or the case where the lowest degree of the expansion of $f'$ is strictly larger than $\tau - 1$, we consider the lowest term of $f''$, which must have a positive coefficient, and the sequel analysis still applies.
Then we have 
$$
f^{\prime\prime}(y) = (\tau - 1) c_{\tau - 1} y^{\tau-2} + o(y^{\tau - 2}).
$$

We first focus on the following ODE only with the initial condition $ \phi(0) = 0 $:
\begin{equation}
  \left\{
  \begin{aligned}
      &\phi^\prime(y) = \alpha \cdot F(\phi,y), \quad \forall y\geq 0, \\
      &\phi(0) = 0.
  \end{aligned}
  \right.
  \label{eq:PolyDiffEq}
\end{equation} 
We consider the following dynamical system with the origin as an equilibrium point: 
\begin{equation}
  \left\{\begin{aligned}
    & \frac{\mathrm{d}\phi}{\mathrm{d}t} = \alpha \left(f'(\phi) - f'(y)\right),\\
    & \frac{\mathrm{d}y}{\mathrm{d}t} = \phi f''(\phi),
  \end{aligned}\right. \label{eq:DynSystem}
\end{equation}
where $y = y(t)$ and $\phi = \phi(t)$ are parameterized with respect to $t\in \mathbb{R}$. Then any trajectory of Eq. \eqref{eq:DynSystem} with the origin being its limit point is a solution to Eq. \eqref{eq:PolyDiffEq}. We prove that such solutions exist by giving a qualitative description of the local phase portrait near the origin. 

For the special case of $\tau = 2$, we can study Eq. \eqref{eq:DynSystem} via its linearized system (see Appendix \ref{Linearized System}). Here we discuss the general case where $\tau >2$. 
Since the linear part of the system Eq. \eqref{eq:DynSystem} vanishes at the origin when $\tau>2$, we apply the technique of homogeneous blow-up (for details, see Chapter 3 in \cite{Dumortier2006}), which uses polar coordinates to blow-up the singularity onto a circle in order to study the local phase portrait. More precisely, applying transformation $(y,\phi) = (r\cos\theta,r\sin\theta)$, where $(r,\theta)\in \mathbb{R}_{\geq 0}\times \mathbb{S}^1$, we can write Eq. \eqref{eq:DynSystem} as: 
\begin{equation}
  \left\{\begin{aligned}
    & \frac{\mathrm{d}r}{\mathrm{d}t} = \left(\alpha \left(\sin^{\tau}\theta - \cos^{\tau - 1}\theta\sin \theta\right) + (\tau-1)\cos\theta\sin^{\tau-1}\theta\right) c_{\tau-1}r^{\tau-1} + o(r^{\tau - 1}),\\
    & \frac{\mathrm{d}\theta}{\mathrm{d}t} = \left(\alpha \cos\theta \left(\sin^{\tau-1}\theta - \cos^{\tau-1}\theta\right) - (\tau-1)\sin^{\tau}\theta\right) c_{\tau-1}r^{\tau} + o(r^{\tau}).
  \end{aligned}\right. \label{eq:DynSystemPolar}
\end{equation}
One can then apply a time transformation $r^{\tau-2}\mathrm{d}t = \mathrm{d}s$ to further simplify Eq. \eqref{eq:DynSystemPolar}. Here we skip the details on analyzing the elementary singularities on the circle $\{0\}\times \mathbb{S}^1$, and present the local phase portraits in Figure \ref{fig:blowup}. There are two types of phase portrait because the elementary singularities on $\{0\}\times \mathbb{S}^1$ are different for odd and even $k$ due to Lemma \ref{Lem:PowerEquation}. 
\begin{figure}
  \centering
  \begin{subfigure}{0.45\textwidth}
    \includegraphics[width=1\linewidth]{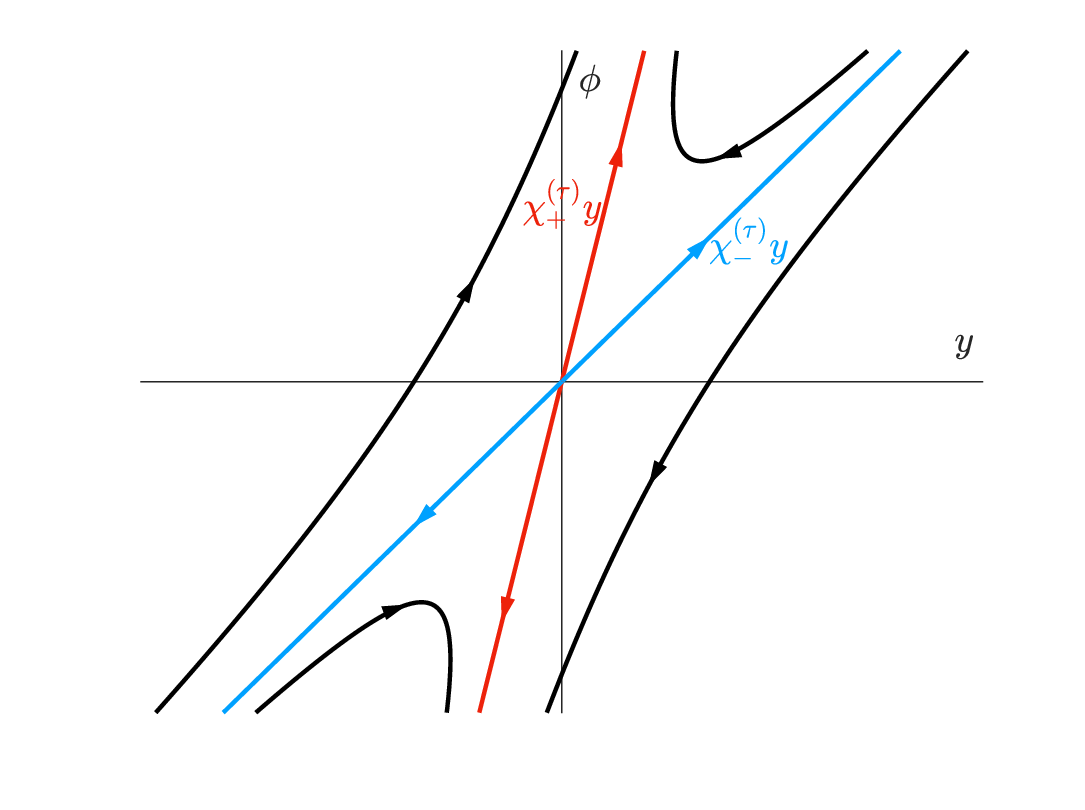}
    \caption{When $k$ is Odd} \label{fig:blowup_odd}
  \end{subfigure}
  ~
  \begin{subfigure}{0.45\textwidth}
    \includegraphics[width=1\linewidth]{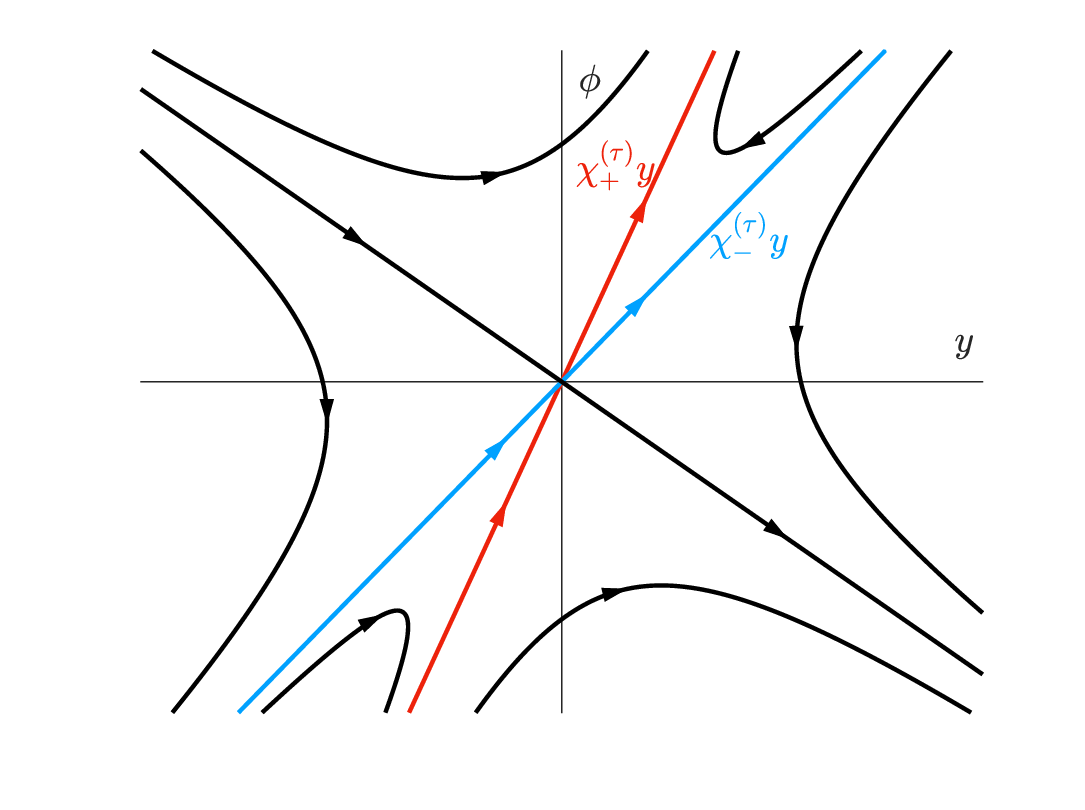}
    \caption{When $k$ is Even} \label{fig:blowup_even}
  \end{subfigure}
  \caption{We show two types of local phase portraits of dynamical system Eq. \eqref{eq:DynSystem}. Both indicate that there are trajectories starting from the origin with initial slope of $\chi^{(\tau)}_{\pm}$.}
  \label{fig:blowup}
\end{figure}

Obviously, there are trajectories with the origin being their limit point, which are solutions to Eq. \eqref{eq:PolyDiffEq}. We next prove that among these solutions, there exists at least one that is superlinear (and thus it becomes a feasible reserve function to Eq. \eqref{eq:SufNec_phi}). 

We consider trajectories with an initial slope of $\chi^{(\tau)}_+$ at the origin. We claim that these trajectories must be lower bounded by the linear function $\Delta^{(\sigma)}_{+} y$, and hence, they are superlinear. We prove this by contradiction. Otherwise, there exists a trajectory, denoted as $\tilde{\phi}$, such that $\tilde{\phi}(y_0) < \Delta^{(\sigma)}_{+} y_0$ for some $y_0>0$. By Lemma \ref{Lem:PowerEquation}, we have $\chi^{(\tau)}_+>\Delta^{(\sigma)}_{+}>1$. Then there exists $y_1\in (0,y_0)$ such that $\tilde{\phi}(y_1) = \Delta^{(\sigma)}_{+} y_1$ and 
\begin{equation}
  \tilde{\phi}(y) < \Delta^{(\sigma)}_{+} y, \quad \forall y_1<y\leq y_0. \label{ineq:SuperlinContr}
\end{equation}
We know that $\varphi(y) = \Delta^{(\sigma)}_{+} y$ is a solution to the following equation:
\begin{equation}
  \left\{\begin{aligned}
    & \varphi^\prime = \alpha\cdot \frac{\varphi^{\sigma-1} - y^{\sigma-1}}{\left(\sigma-1\right)\varphi^{\sigma-1}}, \quad y\geq y_1,\\
    & \varphi(y_1) = \tilde{\phi}(y_1).
  \end{aligned}\right. \label{eq:CompsigmaODE}
\end{equation}
Comparing Eq. \eqref{eq:PolyDiffEq} and Eq. \eqref{eq:CompsigmaODE}, by Corollary \ref{Cor:ElasticFBound}, we have
\begin{equation}
  \alpha \cdot F(\phi,y) \geq \alpha \cdot \frac{\left(\phi^{\sigma-1} - y^{\sigma-1}\right)}{\left(\sigma-1\right)\phi^{\sigma-1}}, \label{ineq:SimpLowOrdODE}
\end{equation}
where the equality holds if and only if the cost function has only one term, i.e., $\tau = \sigma$ (this is discussed in \cite{Huang2019} and \cite{Tan2020a}). Here we focus on the case where $\tau < \sigma$. In this case, Eq. \eqref{ineq:SimpLowOrdODE} becomes a strict inequality, and by the ODE comparison theorem (see Comparison Theorem  \cite{Walter1998}, p.~90), we have 
$$
\tilde{\phi}(y) > \Delta^{(\sigma)}_{+} y, \quad \forall y_1<y<y_0,
$$
which contradicts to Eq. \eqref{ineq:SuperlinContr}. Thus, trajectories with an initial slope of $\chi^{(\tau)}_+$ are lower bounded by $\Delta^{(\sigma)}_{+} y$ for any $y\geq 0$, and are, of course, superlinear.

\subsection{Proof of Infinitely Many Solutions} \label{Apx:Proof of Infinitely Many Solutions}

We first prove the following lemma. 
\begin{lemma}
  \label{Lem:LowerBoundforFeasibleSol}
  For any $Y>0$ and $\alpha \geq \asigmas$, define $\phi_Y(y)$ as the solution to the following ODE:
  \begin{equation}
    \label{eq:LowerBoundforFeasibleSol}
    \left\{
    \begin{aligned}
      &\phi^{\prime}(y) = \alpha \cdot F(\phi, y), \quad \forall y\geq 0, \\
      &\phi(Y) = Y.
    \end{aligned}
    \right.
  \end{equation}
  Then we have $\phi_Y(y) > y$ for $y\in (0,Y)$ with $\phi_Y(0) = 0$ and $\phi_Y^{\prime}(0) = \chi^{(\tau)}_-$. 
\end{lemma}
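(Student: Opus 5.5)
The plan is to integrate the ODE backward from the diagonal point $(Y,Y)$ and control the trajectory with two comparison arguments. The first keeps it strictly above the diagonal on $(0,Y)$ and forces it into the origin. The second uses the local phase portrait near the origin (Figure~\ref{fig:blowup}) to identify the entry slope as $\chi^{(\tau)}_-$.

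\textbf{Step 1 (local behavior at $(Y,Y)$).} At $(Y,Y)$ we have $F(Y,Y)=0$, so $\phi_Y'(Y)=0$. The point is regular for the planar system \eqref{eq:DynSystem}: $\mathrm{d}y/\mathrm{d}t=Yf''(Y)>0$ while $\mathrm{d}\phi/\mathrm{d}t=0$. Hence there is a unique trajectory through it, and it is locally a graph over $y$.

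To see which side of the diagonal it lies on just to the left of $Y$, write $\phi_Y(y)-y$ near $Y$. Since $\phi_Y'(Y)=0<1$, the function $\phi_Y(y)-y$ is decreasing at $Y$. It vanishes there, so it is positive for $y$ slightly less than $Y$.

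\textbf{Step 2 (the trajectory stays above the diagonal and is monotone on $(0,Y)$).} Suppose $\phi_Y(y_1)=y_1$ at some first point $y_1\in(0,Y)$ when moving leftwards. On $(y_1,Y)$ we have $\phi_Y>y$, hence $F>0$, so $\phi_Y$ is increasing there.

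This is impossible. We need $\phi_Y(y_1)=y_1<Y=\phi_Y(Y)$, but $\phi_Y-y$ would vanish at both ends while being positive inside. At $y_1$ the derivative equals $0$, so $\phi_Y-y$ has derivative $-1$ there. It would therefore be negative just to the right of $y_1$, a contradiction.

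So $\phi_Y(y)>y$ and $\phi_Y'(y)>0$ on $(0,Y)$. The graph stays in the cone, and the solution extends backward as long as $y>0$, since $F$ is smooth away from the origin in $\{\phi>y>0\}$.

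As $y\to0^+$, $\phi_Y$ is monotone and bounded below by $y$, so it has a limit $L\ge0$. If $L>0$, then near $y=0$ we would have $\phi'\approx\alpha/Ef'(L)>0$, which contradicts extending the solution down to $y=0$ with $\phi_Y\ge L$ and monotone decrease. More carefully, the vector field is regular at $(0,L)$ and points into $y>0$, so the backward trajectory reaches the $\phi$-axis. That is harmless, but the comparison in Step 3 will rule it out. I would instead bound $\phi_Y$ above using Corollary~\ref{Cor:ElasticFBound}: on the cone $F\le$ the $\tau$-power field, so comparing with the explicit linear solution $\chi^{(\tau)}_-y$ of the power-$\tau$ equation, integrated backward, gives $\phi_Y(y)\le C y$ near the origin. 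This forces $L=0$, i.e. $\phi_Y(0)=0$.

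\textbf{Step 3 (the initial slope is $\chi^{(\tau)}_-$; the main obstacle).} By Appendix~\ref{Apx:LowerBound}, any solution entering the origin has slope $\chi$ solving $\CP(\alpha,\tau)=0$, so $\chi\in\{\chi^{(\tau)}_-,\chi^{(\tau)}_+\}$. The difficulty is excluding $\chi^{(\tau)}_+$.

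The trajectories with slope $\chi^{(\tau)}_+$ were shown in Appendix~\ref{Apx:Proof of the Existence of a Solution} to lie above $\Delta^{(\sigma)}_+y$ for all $y>0$, so they never touch the diagonal. Since trajectories cannot cross, $\phi_Y$, which does reach the diagonal at $Y$, cannot be one of them. This gives $\phi_Y'(0)=\chi^{(\tau)}_-$.

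The delicate point is justifying that every trajectory with slope $\chi^{(\tau)}_+$ is of that type. In the blow-up picture, the direction $\chi^{(\tau)}_+$ is a node-type singularity on the circle, so there is a whole fan of such trajectories, and the superlinearity argument applies to each of them. The direction $\chi^{(\tau)}_-$ is a saddle, whose unique separatrix into the cone is exactly the limit of the $\phi_Y$.

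I would check this eigenvalue sign structure on $\{0\}\times\mathbb{S}^1$ explicitly from \eqref{eq:DynSystemPolar}. The sign of $\mathrm{d}\theta/\mathrm{d}s$ changes across the two roots, using $g_\tau'$ from Lemma~\ref{Lem:PowerEquation}.
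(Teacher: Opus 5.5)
Your Step 1, the diagonal argument in Step 2, and the exclusion of $\chi^{(\tau)}_+$ in Step 3 essentially coincide with the paper's proof. The gap is the step $\phi_Y(0)=0$.

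\textbf{Why the comparison fails.} You propose to bound $\phi_Y$ from above by comparing it backward with the power-$\tau$ equation, using $F\le G_\tau$ with $G_\tau(\phi,y)=(\phi^{\tau-1}-y^{\tau-1})/((\tau-1)\phi^{\tau-1})$. But $\phi_Y'\le\alpha G_\tau(\phi_Y,y)$ makes $\phi_Y$ a \emph{sub}solution of $\psi'=\alpha G_\tau(\psi,y)$. A subsolution that agrees with a solution at $y_0$ lies \emph{above} it for $y<y_0$, since reversing the direction of $y$ turns it into a supersolution. So this comparison only yields the lower bound $\phi_Y\ge\psi_\tau$. It cannot exclude $L=\lim_{y\to0^+}\phi_Y(y)>0$, and, as you note yourself, the regular-point heuristic at $(0,L)$ does not exclude it either. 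Step 3 (the slope must be a root of $\CP(\alpha,\tau)$) presupposes that the trajectory enters the origin, so the lemma is not yet established.

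\textbf{Two repairs.} The first is the paper's route, and it uses the same non-crossing fact you already invoke in Step 3. A trajectory $\phi_+$ with initial slope $\chi^{(\tau)}_+$ satisfies $\phi_+(0)=0<L$ and $\phi_+(Y)\ge\Delta^{(\sigma)}_+Y>Y=\phi_Y(Y)$. Hence $\phi_+-\phi_Y$ changes sign on $(0,Y)$, forcing an intersection at a point with $\phi>0$ where $F$ is locally Lipschitz, which contradicts uniqueness.

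The second uses the other half of Corollary~\ref{Cor:ElasticFBound}. The bound $F\ge G_\sigma$ makes $\phi_Y$ a supersolution of the power-$\sigma$ equation. Backward comparison from $(Y,Y)$ then gives $\phi_Y\le\psi_\sigma$. By homogeneity, $\psi_\sigma(y)/y$ increases from $1$ toward at most $\Delta^{(\sigma)}_-$ as $y\downarrow0$. Hence $y\le\phi_Y(y)\le\Delta^{(\sigma)}_-y$ on $(0,Y]$, which gives $\phi_Y(0)=0$. Since $\Delta^{(\sigma)}_-\le\Delta^{(\sigma)}_+<\chi^{(\tau)}_+$ for $\tau<\sigma$, it also rules out the slope $\chi^{(\tau)}_+$ directly.

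\textbf{The blow-up remark is reversed.} In your closing description, $\chi^{(\tau)}_+$ is actually the saddle direction, carrying a unique trajectory (cf.\ Appendix~\ref{Linearized System}). The direction $\chi^{(\tau)}_-$ is the node, carrying a fan. Indeed, the lemma itself produces a distinct $\chi^{(\tau)}_-$-trajectory $\phi_Y$ for every $Y>0$, so there cannot be a unique separatrix in that direction. This remark is not needed anyway, because the argument in Appendix~\ref{Apx:Proof of the Existence of a Solution} bounds \emph{every} trajectory with initial slope $\chi^{(\tau)}_+$ from below by $\Delta^{(\sigma)}_+y$.
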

\begin{proof}
  We prove $\phi_Y(y) > y$ for $y\in (0,Y)$ by contradiction. Otherwise, $\phi_Y(y)$ intersects with $\phi(y) = y$ on the interval $(0,Y)$. Let $\phi_Y(\tilde{Y}) = \tilde{Y}$ and, assume w.l.o.g. that there is no other intersecting point within the interval $(\tilde{Y},Y)$. Note that $F(y,y) = 0$. Since $\phi_Y^{\prime}(\tilde{Y}) = 0$, we have $\phi_Y(y)<y$ for $y\in (\tilde{Y},Y)$. On the other hand, we have $\phi_Y^{\prime}(y) < 0$ as $y \rightarrow Y$, which contradicts $\phi_Y(Y) = Y$. Therefore, $\phi_Y(y) > y$ for $y\in (0,Y)$. This implies $\phi_Y(0) \geq 0$. However, $\phi_Y(0)$ can not be strictly larger than 0; otherwise, $\phi_Y$ must intersect with solutions to Eq. \eqref{eq:SufNec_phi} with initial slope of $\chi^{(\tau)}_+$ at 0. Moreover, $\phi_Y^{\prime}(0) \neq \chi^{(\tau)}_+$ as solutions with initial slope of $\chi^{(\tau)}_+$ are lower bounded by $\Delta^{(\sigma)}_{+} y$ for $y\geq 0$. Thus, $\phi_Y^{\prime}(0) = \chi^{(\tau)}_-$. 
\end{proof}
We remark that a corollary of Lemma \ref{Lem:LowerBoundforFeasibleSol} is that the (point-wise) upper bound of the solution to Eq. \eqref{eq:LowerBoundforFeasibleSol} for any $Y>0$, is a solution to Eq. \eqref{eq:SufNec_phi}.

We now prove there are infinitely many solutions to Eq. \eqref{eq:SufNec_phi} satisfy that their initial slope is $\chi^{(\tau)}_-$ and each of them intersects with $\Delta^{(\sigma)}_{-} y$ at some $y>0$. 

For any $Y>0$, define $\phi_{Y,-}$ as the solution to the following ODE:
$$
\left\{
\begin{aligned}
  &\phi^{\prime}(y) = \alpha \cdot F(\phi, y), \quad \forall y\geq 0, \\
  &\phi(Y) = \Delta^{(\sigma)}_{-} Y.
\end{aligned}
\right.
$$
We have $\phi_{Y,-}(0)\leq 0$; otherwise, $\phi_{Y,-}$ intersects with solutions to Eq. \eqref{eq:SufNec_phi} with initial slope of $\chi^{(\tau)}_+$. On the other hand, $\phi_{Y,-}(0)\geq 0$; otherwise, $\phi_{Y,-}$ intersects with $\phi_Y$, solutions to Eq. \eqref{eq:LowerBoundforFeasibleSol}. Thus, we have $\phi_{Y,-}(0) = 0$. Moreover, $\phi_{Y,-}^{\prime}(0) \neq \chi^{(\tau)}_+$ as solutions with initial slope of $\chi^{(\tau)}_+$ are lower bounded by $\Delta^{(\sigma)}_{-} y$ for $y\geq 0$. Thus, $\phi_Y^{\prime}(0) = \chi^{(\tau)}_-$. 

Furthermore, we have $\phi_{Y,-}(y) > \phi_Y(y)$ for $y\in (0,Y)$; otherwise, $\phi_{Y,-}$ intersects with $\phi_Y$. Then by Lemma \ref{Lem:LowerBoundforFeasibleSol}, $\phi_{Y,-}(y) \geq \phi_Y(y) \geq y$ for $y\in [0,Y]$. Meanwhile, by the previous proof, $\phi_{Y,-}$ is lower bounded by $\Delta^{(\sigma)}_{-} y$ for $y> Y$. Thus, $\phi_{Y,-}$ is superlinear.

\subsection{Proof for Tight Linear Bounds} \label{Apx:Proof for Tight Linear Bounds}

We prove the tight upper bound and the lower bound separately.
\begin{lemma}
    Solutions to Eq. \eqref{eq:SufNec_phi} for $\alpha \geq \asigmas$ are tightly upper bounded by the linear function $\chi^{(\tau)}_{+} y$ for $y\geq 0$.
\end{lemma}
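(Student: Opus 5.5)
Two things must be shown, following Definition~\ref{Def:TightBound}. The \emph{bound condition} is that every solution $\phi$ of Eq.~\eqref{eq:SufNec_phi} with $\alpha\geq\asigmas$ satisfies $\phi(y)\leq \chi^{(\tau)}_{+}y$ for all $y>0$. The \emph{tightness condition} is that for every $k<\chi^{(\tau)}_{+}$ some solution exceeds $ky$ somewhere. The second is the easy half. The trajectories constructed in Appendix~\ref{Apx:Proof of the Existence of a Solution} leave the origin with slope $\phi'(0)=\chi^{(\tau)}_{+}$. Hence $\phi(y)/y\to\chi^{(\tau)}_{+}>k$ as $y\to 0^+$, so $\phi(y_0)>ky_0$ for all small $y_0>0$.

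For the bound condition I would use a comparison with the pure power $y^\tau$, mirroring the superlinearity argument of Appendix~\ref{Apx:Proof of the Existence of a Solution} but with inequalities reversed. By Corollary~\ref{Cor:ElasticFBound}, on the cone $\phi\geq y>0$ we have
\[
\alpha F(\phi,y)\leq \alpha\,\frac{\phi^{\tau-1}-y^{\tau-1}}{(\tau-1)\phi^{\tau-1}} =: G(\phi,y).
\]
The line $\varphi(y)=\chi^{(\tau)}_{+}y$ solves $\varphi'=G(\varphi,y)$ exactly, because $\chi^{(\tau)}_{+}$ is a root of $\CP(\alpha,\tau)$. Moreover, $G$ is smooth (locally Lipschitz in $\phi$) away from the origin, so the standard comparison theorem (\cite{Walter1998}, p.~90) applies on any interval $[y_1,\infty)$ with $y_1>0$.

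The argument is by contradiction. Suppose $\phi(y_0)>\chi^{(\tau)}_{+}y_0$ for some $y_0>0$. I would first show that $\phi$ cannot lie above the line near $0$. Every solution has $\phi'(0)\in\{\chi^{(\tau)}_{-},\chi^{(\tau)}_{+}\}$, the two positive roots of $\CP(\alpha,\tau)$ from the left asymptotic analysis. So $\phi(y)/y\to\chi\leq\chi^{(\tau)}_{+}$. The problem is the case $\chi=\chi^{(\tau)}_{+}$, where first-order information does not decide on which side the solution sits.

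To handle this, I would study $u(y)=\phi(y)/y$, which satisfies
\[
y\,u'=\alpha F(uy,y)-u\leq \frac{\alpha}{\tau-1}\left(1-u^{-(\tau-1)}\right)-u=:h(u).
\]
The function $h$ vanishes at $\chi^{(\tau)}_{\pm}$, is positive between them, and is negative for $u>\chi^{(\tau)}_{+}$ (since $-u$ dominates). Hence on any interval where $u>\chi^{(\tau)}_{+}$, $u$ is strictly decreasing. Now take $y_0$ with $u(y_0)>\chi^{(\tau)}_{+}$ and move leftwards. Because $u$ decreases in $y$ on that region, $u$ stays above $u(y_0)>\chi^{(\tau)}_{+}$ for all $y<y_0$. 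This contradicts $\lim_{y\to0^+}u(y)\leq\chi^{(\tau)}_{+}$. The argument uses only the one-sided inequality from Corollary~\ref{Cor:ElasticFBound} and the known initial slopes, and it avoids the non-Lipschitz origin entirely.

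The main obstacle is justifying the limit $\lim_{y\to0^+}\phi(y)/y\in\{\chi^{(\tau)}_{\pm}\}$ for an arbitrary solution, rather than only for the trajectories constructed via the blow-up. The left asymptotic analysis assumes the limit of $\phi'$ exists. I would instead derive directly that $u$ is bounded near $0$: if $u$ were large, then $yu'\leq h(u)<0$ forces $u$ to grow as $y\downarrow0$, roughly like $y^{-1}$ integrated, hence $u\to\infty$. That rules out a finite limit, and would only be consistent with $\phi(0)=0$ if $\phi(y)=o(1)$ still holds. So I would need a more careful estimate: $yu'\leq -cu$ for large $u$ gives $u\gtrsim y^{-c}$ with $c\to1$ as $u\to\infty$, so $\phi=uy$ does not tend to $0$ when $c\geq1$. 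Making that estimate sharp is the delicate point. If it fails, I would fall back on the blow-up phase portrait (Figure~\ref{fig:blowup}), which shows that every trajectory entering the origin inside the cone does so along the directions $\chi^{(\tau)}_{\pm}$.
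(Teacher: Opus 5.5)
Your proposal is correct and is essentially the paper's argument. Both use the pure-power majorant of Corollary~\ref{Cor:ElasticFBound} to show that a line $ky$ with $k>\chi^{(\tau)}_{+}$ can only be crossed downward: the paper does this with an auxiliary $\gamma>\alpha$ for which $ky$ solves the power-law equation plus the comparison theorem, and you do it more directly through the sign of $h$ in $yu'\leq h(u)$. Both also rely on solutions leaving the origin with slope at most $\chi^{(\tau)}_{+}$, and tightness is argued identically.

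The obstacle you flag is easier than you fear, though your $y^{-c}$ route cannot close it, since $h(u)>-u$ forces $c<1$ and then $\phi\geq Cy^{1-c}\to 0$. On the cone $F<1/(\tau-1)$, so $\phi'<\alpha/(\tau-1)$, and with $\phi(0)=0$ this gives $u\leq\alpha/(\tau-1)$ everywhere. Meanwhile $u(y_0)>\chi^{(\tau)}_{+}$ gives $yu'\leq h(u)\leq -c$ on $(0,y_0]$ for some $c>0$, so $u(y)\geq u(y_0)+c\ln(y_0/y)\to\infty$ as $y\to 0^+$. That contradiction needs no knowledge of $\phi'(0)$ and no blow-up.
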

\begin{proof}
  We only need to prove that for solutions to Eq. \eqref{eq:SufNec_phi} with an initial slope of $\chi^{(\tau)}_{+}$ at 0, the linear function $\chi^{(\tau)}_{+} y$ gives the tight upper bound. 
  By Theorem~\ref{theorem_lower_bound}, any solution to Eq.~\eqref{eq:SufNec_phi} has a bounded derivative on $\mathbb{R}_+$, and thus it must be upper bounded by some linear function (the same for the lower bound).
  Denote $k^*>0$ as the tight upper bound. We now prove that $k^* = \chi^{(\tau)}_{+}$. 

  First of all, we have $k^* \geq \chi^{(\tau)}_{+}$. Denote $\phi$ as a solution to Eq. \eqref{eq:SufNec_phi} with an initial slope of $\chi^{(\tau)}_{+}$. Then for any $0<k< \chi^{(\tau)}_{+}$,
  there exist a neighborhood at 0 such that $\phi(y) > ky$. 
  
  On the other hand, we prove $k^* \leq \chi^{(\tau)}_{+}$ by contradiction. Otherwise, there exists $k > \chi^{(\tau)}_{+}$ such that $ky$ is not an upper bound. This implies, for some $\phi$, there exists $y_0 > 0$ such that $\phi(y_0) = y_0$ and $\phi(y)< ky$ for $y\in (0,y_0)$. 
  By Lemma \ref{Lem:PowerEquation}, there exists $\gamma>\alpha$ such that $\varphi(y) = ky$ is a solution to 
  $$
  \left\{
  \begin{aligned}
    &\varphi^{\prime}(y) = \gamma \cdot \frac{c_{\tau-1}\left(\varphi^{\tau-1} - y^{\tau-1}\right)}{c_{\tau-1}\left(\tau-1\right)\varphi^{\tau-1}}, \quad \forall y\geq 0, \\
    &\varphi(y_0) = ky_0.
  \end{aligned}
  \right.  
  $$
  By Corollary \ref{Cor:ElasticFBound}, we have 
  $$
  \alpha\cdot F(\phi,y) < \gamma\cdot F(\phi,y) \leq \gamma \cdot \frac{c_{\tau-1}\left(\phi^{\tau-1} - y^{\tau-1}\right)}{c_{\tau-1}\left(\tau-1\right)\phi^{\tau-1}},
  $$
  for $\phi>y>0$. Applying the ODE comparison theorem (see Comparison Theorem \cite{Walter1998}, p. 90), we have $\phi(y) < ky$ for $y > y_0$. Therefore, we have $ky$ is a linear upper bound, which leads to a contradiction. Thus, $k^* \leq \chi^{(\tau)}_{+}$. We then obtain that $k^* = \chi^{(\tau)}_{+}$.
\end{proof}
We also prove the tight lower bound.
\begin{lemma}
    Solutions to Eq. \eqref{eq:SufNec_phi} for $\alpha \geq \asigmas$ are tightly lower bounded by the linear function $\chi^{(\tau)}_{-} y$ for $y\geq 0$.
\end{lemma}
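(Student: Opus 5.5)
Mirroring the proof of the tight upper bound, I would establish two facts: the bound condition (every solution $\phi$ of Eq.~\eqref{eq:SufNec_phi} with $\alpha\ge\asigmas$ satisfies $\phi(y)\ge \chi^{(\tau)}_{-}y$ for all $y>0$) and tightness (no larger slope works). Tightness is the easy half. The existence part of Theorem~\ref{Thm:UpperBoundAch} and the construction of $\phi_{Y,-}$ in Appendix~\ref{Apx:Proof of Infinitely Many Solutions} give feasible solutions with $\phi'(0)=\chi^{(\tau)}_{-}$. For any $k>\chi^{(\tau)}_{-}$, such a solution satisfies $\phi(y)<ky$ in a small right-neighbourhood of $0$, so $k$ is not a lower bound. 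Since every solution has bounded derivative (Theorem~\ref{theorem_lower_bound}), a tight linear lower bound $k^*$ exists, and these observations give $k^*\le\chi^{(\tau)}_{-}$.

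For the bound condition I first use the local analysis near the origin. By Appendix~\ref{Apx:LowerBound} and the blow-up picture, every feasible solution has $\phi'(0)\in\{\chi^{(\tau)}_{-},\chi^{(\tau)}_{+}\}$. Hence $\phi(y)\ge \chi^{(\tau)}_{-}y$ holds near $0$; for the case of slope exactly $\chi^{(\tau)}_-$ this needs a second-order check, addressed below. Now suppose, for contradiction, that the graph of $\phi$ goes below $\chi^{(\tau)}_{-}y$ somewhere. Then there is a first crossing point $y_0>0$ with $\phi(y_0)=\chi^{(\tau)}_{-}y_0$ and $\phi'(y_0)\le \chi^{(\tau)}_{-}$. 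Along the ray $\phi=\chi y$ with $\chi=\chi^{(\tau)}_{-}$, Corollary~\ref{Cor:ElasticFBound} gives $F(\chi y,y)\le (1-\chi^{1-\tau})/(\tau-1)$. By the definition of $\CP(\alpha,\tau)$, $\alpha$ times this quantity equals $\chi$. So the vector field satisfies $\phi'\le\chi$ on the ray, which is the wrong direction for a contradiction. I will therefore compare with the lower bound $F\ge (1-(y/\phi)^{\sigma-1})/(\sigma-1)$ instead. Let $h(z)=\alpha(1-z^{1-\sigma})/(\sigma-1)-z$. This function is positive exactly for $z\in(\Delta^{(\sigma)}_{-},\Delta^{(\sigma)}_{+})$. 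Since $\chi^{(\tau)}_{-}<\Delta^{(\sigma)}_{-}$ (the smaller root shrinks as the degree grows, Lemma~\ref{Lem:PowerEquation}), the ray $\chi^{(\tau)}_-y$ is not an invariant barrier either.

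This shows the real content: the obstruction comes from the dominance constraint, not from the ray itself. I would argue through the lower barrier $\phi_Y$ of Lemma~\ref{Lem:LowerBoundforFeasibleSol}. Any feasible solution stays above the diagonal for all $y$, so by uniqueness of trajectories away from the origin it cannot cross any $\phi_Y$. It therefore lies above $\sup_Y\phi_Y$, which is the lower-bound solution $\phi_{\mathrm{lb}}$ with slope $\chi^{(\tau)}_{-}$ at $0$. This reduces the claim to showing $\phi_{\mathrm{lb}}(y)\ge \chi^{(\tau)}_{-}y$. Writing $u=\phi/y$, the ODE becomes $y u'=\alpha F(uy,y)-u$. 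Corollary~\ref{Cor:ElasticFBound} bounds the right-hand side between $h_\sigma(u)$ and $h_\tau(u)$, where $h_\tau(u)=\alpha(1-u^{1-\tau})/(\tau-1)-u$. The function $h_\tau$ is negative for $u<\chi^{(\tau)}_-$, so a bound from $h_\tau$ cannot push $u$ upward. Instead I would use the fact that $\phi_{\mathrm{lb}}$ is a limit of the $\phi_Y$, each tending to the diagonal as $y\to Y$. Below $\chi^{(\tau)}_-$ both $h_\tau$ and $h_\sigma$ are negative (because $\chi^{(\tau)}_-<\Delta^{(\sigma)}_-$), so $u$ is strictly decreasing there. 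Hence once $u<\chi^{(\tau)}_-$, it keeps decreasing, reaches $1$ in finite $\log y$-time, and violates dominance.

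Combining these gives a clean argument. Once $u(y_0)<\chi^{(\tau)}_-$, the estimate $yu'\le h_\sigma(u)\le -c<0$ for $u\in[1,u(y_0)]$ forces $u$ to reach $1$ at finite $y$. Past that point $\phi<y$, contradicting \eqref{eq:SufNec_phi_superlin}. This settles the bound condition. The main obstacle is the behaviour near $y=0$ when $\phi'(0)=\chi^{(\tau)}_-$ exactly, where $u(0^+)=\chi^{(\tau)}_-$. There I must rule out $u$ dipping below the ray immediately. I expect to handle this with the Taylor expansion $f'(y)=c_{\tau-1}y^{\tau-1}+O(y^{\tau})$, together with the observation that any dip below $\chi^{(\tau)}_-$ triggers the same finite-time collapse, so even an immediate dip is excluded by the monotone-decrease argument.
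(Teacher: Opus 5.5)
Your proof is correct in substance, but you close the bound condition by a different route from the paper's.

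\textbf{The paper's route.} It restricts to solutions with $\phi'(0)=\chi^{(\tau)}_{-}$. For each $k<\chi^{(\tau)}_{-}$ it views the ray $ky$ as an exact solution of the pure $\tau$-power equation with an inflated eigenvalue $\gamma>\alpha$ (supplied by Lemma~\ref{Lem:PowerEquation}). The ODE comparison theorem together with Corollary~\ref{Cor:ElasticFBound} then shows that a solution touching $ky$ stays below it for all later $y$. The contradiction comes from the asserted limit $\phi'(y)\to\Delta^{(\sigma)}_{\pm}>k$ as $y\to\infty$.

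\textbf{Your route.} You work with $u=\phi/y$ in logarithmic time, using $yu'=\alpha F(uy,y)-u$, and close with the dominance constraint alone. Once $u(y_0)<\chi^{(\tau)}_{-}$, $u$ decreases at a uniform rate and reaches the diagonal at a finite $y$. There $\phi'=\alpha F(y,y)=0$, so $\phi$ drops below $y$.

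\textbf{Comparison.} Your version is self-contained. It needs neither the comparison theorem, nor the restriction to slope-$\chi^{(\tau)}_{-}$ solutions, nor the right-asymptotic limit of $\phi'$, which Appendix~\ref{Apx:LowerBound} derives only by assuming that limit exists. It also makes transparent why the ray $\chi^{(\tau)}_{-}y$ is not itself a barrier and that the real obstruction is dominance. The paper's version, in exchange, reuses the template of the tight upper-bound lemma. Your tightness half is the same as the paper's.

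\textbf{A correction.} The estimate $yu'\le h_\sigma(u)\le -c$ in your last paragraph has the inequality in the wrong direction. Corollary~\ref{Cor:ElasticFBound} gives $h_\sigma(u)\le yu'\le h_\tau(u)$, so the usable bound is $yu'\le h_\tau(u)$. Moreover $h_\tau\le -c<0$ on $[1,u(y_0)]$, because $h_\tau(1)=-1$ and the zeros of $h_\tau$ on $(1,\infty)$ are exactly the roots $\chi^{(\tau)}_{\pm}$ of $\CP(\alpha,\tau)$.

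With this fix, several parts of your proposal become unnecessary:
\begin{itemize}
\item the fact $\chi^{(\tau)}_{-}<\Delta^{(\sigma)}_{-}$ plays no role;
\item the detour through $\phi_Y$ and $\phi_{\mathrm{lb}}$ can be dropped;
\item the second-order check at the origin is not needed.
\end{itemize}
The collapse argument applies directly to any feasible $\phi$ and any $y_0$ with $\phi(y_0)<\chi^{(\tau)}_{-}y_0$.
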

\begin{proof}
  We only need to prove that for solutions to Eq. \eqref{eq:SufNec_phi} with an initial slope of $\chi^{(\tau)}_{-}$ at 0, the linear function $\chi^{(\tau)}_{-} y$ gives the tight lower bound. Denote $k^*>0$ as the tight lower bound. We now prove that $k^* = \chi^{(\tau)}_{-}$. 

  First of all, we have $k^* \leq \chi^{(\tau)}_{-}$. Denote $\phi$ as a solution to Eq. \eqref{eq:SufNec_phi} with an initial slope of $\chi^{(\tau)}_{-}$. Then for any $k> \chi^{(\tau)}_{-}$,
  there exist a neighborhood at 0 such that $\phi(y)> ky$ for $y\in (0,y_0)$. 
  
  On the other hand, we prove $k^* \geq \chi^{(\tau)}_{-}$ by contradiction. Otherwise, there exists $0<k<\chi^{(\tau)}_{-}$ such that $ky$ is not a lower bound. This implies, for some solution $\phi$, there exists $y_0>0$ such that $\phi(y_0) = ky_0$.  
  By Lemma \ref{Lem:PowerEquation}, there exists $\gamma>\alpha$ such that $\varphi(y) = ky$ is the solution to 
  $$
  \left\{
  \begin{aligned}
    &\varphi^{\prime}(y) = \gamma \cdot \frac{c_{\tau-1}\left(\varphi^{\tau-1} - y^{\tau-1}\right)}{c_{\tau-1}\left(\tau-1\right)\varphi^{\tau-1}}, \quad \forall y\geq 0, \\
    &\varphi(y_0) = k y_0.
  \end{aligned}
  \right.  
  $$
  By Corollary \ref{Cor:ElasticFBound}, we have 
  $$
  \alpha\cdot F(\phi,y) < \gamma\cdot F(\phi,y) \leq \gamma \cdot \frac{c_{\tau-1}\left(\phi^{\tau-1} - y^{\tau-1}\right)}{c_{\tau-1}\left(\tau-1\right)\phi^{\tau-1}},
  $$
  for $\phi>y>0$. Applying the ODE comparison theorem (see Comparison Theorem \cite{Walter1998}, p. 90), we have 
  $$
  y \leq \phi(y) < ky, \quad \forall y > y_0,
  $$
  where the first inequality holds because $\phi(y)$ is a feasible solution to Eq. \eqref{eq:SufNec_phi}. However, by Eq. \eqref{eq:SufNec_phi}, we have $\phi^{\prime}(y) \rightarrow \Delta^{(\sigma)}_{\pm} > k$ as $y\rightarrow +\infty$. This implies there exists sufficiently large $y_1>0$ such that $\phi(y_1)> ky_1$, which leads to a contradiction. Thus, $k^* \geq \chi^{(\tau)}_{-}$. We then obtain that $k^* = \chi^{(\tau)}_{-}$.
\end{proof}

We thus complete the proof of Theorem \ref{Thm:UpperBoundAch}. 

\subsection{Proof of Proposition \ref{Prop:BoundResp} (Tight Linear Bounds for $\phi_{\mathrm{ub}}$ and $\phi_{\mathrm{lb}}$)} \label{Apx:BoundResp}

  See Definition \ref{Def:TightBound} for the definition of tight linear bounds. 
  In Theorem \ref{Thm:UpperBoundAch}, we have shown that $\chi^{(\tau)}_{\pm} y$ give the tight bounds for the solutions to Eq. \eqref{eq:SufNec_phi}. We now show that $\Delta^{(\sigma)}_{-} y$ is the tight upper bound for $\phi_{\mathrm{lb}}$. 
  
  First, we show that $\Delta^{(\sigma)}_{-} y$ is an upper bound for $\phi_{\mathrm{lb}}$. We prove this by contradiction. Otherwise, since $\Delta^{(\sigma)}_{-} > \chi^{(\tau)}_{-}$, there exists $y_0>0$ such that $\phi_{\mathrm{lb}}(y_0) = \Delta^{(\sigma)}_{-} y_0$. Define $\phi$ as the solution to the following ODE 
  $$
  \left\{
  \begin{aligned}
    &\phi^{\prime}(y) = \alpha \cdot F(\phi, y), \quad \forall y\geq 0, \\
    &\phi(2y_0) = 2\Delta^{(\sigma)}_- y_0.
  \end{aligned}
  \right.
  $$
  This implies $\phi(y)< \Delta^{(\sigma)}_- y$ for $y\in (0,2y_0)$, and thus, $\phi(y_0) < \phi_{\mathrm{lb}}(y_0)$. We then have $\phi(y) < \phi_{\mathrm{lb}}(y)$ for $y>y_0$; otherwise $\phi$ and $\phi_{\mathrm{lb}}$ must intersect.
  Therefore, for any $y>0$,
  $$
  \phi(y) < \phi_{\mathrm{lb}}(y),
  $$
  which contradicts the definition of the lower-bound solution $\phi_{\mathrm{lb}}$. Thus, $\Delta^{(\sigma)}_{-} y$ is an upper bound for $\phi_{\mathrm{lb}}$. For the tightness, similar to the proof of Theorem \ref{theorem_lower_bound} (see Appendix~\ref{Apx:LowerBound}), we have 
  $$
  \lim_{y\rightarrow +\infty} \frac{\phi_{\mathrm{lb}}(y)}{y} = \Delta^{(\sigma)}_{\pm}. 
  $$
  Since $\Delta^{(\sigma)}_{-} y$ is an upper bound for $\phi_{\mathrm{lb}}$, the above limit must be $\Delta^{(\sigma)}_{-}$, which implies the tightness. 

  We have proved that $\Delta^{(\sigma)}_{+} y$ is the lower bound for $\phi_{\mathrm{ub}}$ in Appendix \ref{Apx:Proof of the Existence of a Solution}. Following the same analysis, one can prove the tightness. Thus, we complete the proof of Proposition~\ref{Prop:BoundResp}.

\section{Stable Numerical Methods for Computing the Boundary of  $\mathcal{S}(\asigmas,f)$}
\label{sec_numerical_method}

In general, the upper- and lower-bound solutions, $ \phi_{\mathrm{ub}} $ and $ \phi_{\mathrm{lb}} $, do not admit closed-form expressions, leaving open the question of how to compute them numerically. In what follows, we first discuss why off-the-shelf ODE solvers may fail in this context, and then present stable numerical methods with theoretically justified bounds on their approximation error.

Recall that Eq.~\eqref{eq:mainODE} exhibits a singularity at the origin, which poses challenges for both theoretical analysis and stable numerical computation. To gain a clearer understanding of this difficulty, we consider the two-dimensional dynamical system Eq.~\eqref{eq:MainDynSystem}, which is equivalent to Eq.~\eqref{eq:SufNec_phi}.
Recall that the solutions to Eq.~\eqref{eq:mainODE} correspond to the trajectories of this dynamical system that pass through the origin. As a non-Hamiltonian system, it is notoriously difficult (see \citet{Dumortier2006}) to identify first integrals\footnote{For a two-dimensional dynamical system such as Eq.~\eqref{eq:MainDynSystem}, the existence of a first integral—an implicit solution that remains constant along trajectories—completely characterizes the system’s phase portrait.} of Eq.~\eqref{eq:MainDynSystem}. Although one may attempt to solve Eq.~\eqref{eq:MainDynSystem} numerically using standard ODE solvers, identifying the upper- and lower-bound solutions among the infinitely many possible trajectories is nontrivial. Owing to the feasibility constraints in Eq.~\eqref{eq:SufNec_phi}, it is not immediately evident whether a trajectory initialized at the origin satisfies the required conditions. As a result, standard ODE solvers typically yield a single trajectory, which may not be feasible. This motivates the development of numerically stable approximation methods specifically tailored to compute the upper- and lower-bound solutions.

\subsection{Approximate Methods to Compute the Lower- and Upper-Bound Solutions} 
We start by presenting a parameterized method that can approximate the lower-bound solution $\phi_{\mathrm{lb}}$ arbitrarily closely over any bounded interval. For any $\xi>0$, let $\phi_{\mathrm{lb},\xi}$ be the solution to
\begin{align}\label{eq:LowerSol}
  \begin{cases} 
  \phi'(y) = \asigmas \cdot F(\phi, y), \quad \text{for all } 0 \leq y \leq \chi^{(\tau)}_- \xi, \\
  \phi(\xi) =  \chi^{(\tau)}_- \xi.
  \end{cases}
\end{align}
That is, $\phi_{\mathrm{lb},\xi}$ is the solution to Eq.~\eqref{eq:mainODE} which passes the point $\left(\xi, \chi^{(\tau)}_- \xi\right)$. 
Since $\phi_{\mathrm{lb},\xi}$ is not a solution to Eq.~\eqref{eq:SufNec_phi}, it is not $\asigmas$-competitive and may violate constraints Eq.~\eqref{eq:SufNec_phi_mono} and Eq.~\eqref{eq:SufNec_phi_superlin} (as a solution to Eq.~\eqref{eq:MainDynSystem}, $\phi_{\mathrm{lb},\xi}(0)=0$).
However, Proposition \ref{Prop:LowerSol} below shows that we can use $\phi_{\mathrm{lb},\xi}$ to approximate $\phi_{\mathrm{lb}}$. The proof is given in Appendix \ref{Apx:NumMethdProofs}.
\begin{proposition}[Approximate Computation of $\phi_{\mathrm{lb}}$]
  \label{Prop:LowerSol}
  For any small $\varepsilon>0$ and any $Y>0$, there exists $\xi>0$ such that the reserve function $\phi_{\mathrm{lb},\xi}$ is $\asigmas$-competitive on $[0,Y]$ and satisfies
  $$
  |\phi_{\mathrm{lb},\xi}(y) - \phi_{\mathrm{lb}}(y)| < \varepsilon, \quad \forall y\in[0,Y].
  $$
  Specifically, the reserve function $\phi_{\mathrm{lb},\xi}$ is $\asigmas$-competitive for some interval that contains $\left[0,\chi^{(\tau)}_- \xi\right]$.
\end{proposition}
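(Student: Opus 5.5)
The plan is to sandwich $\phi_{\mathrm{lb},\xi}$ between $\phi_{\mathrm{lb}}$ and a curve that approaches it, using the non-crossing of trajectories of the planar system \eqref{eq:MainDynSystem} off the origin. Two facts from Appendix~\ref{Apx:UpperBound} do most of the work. First, by Lemma~\ref{Lem:LowerBoundforFeasibleSol} and the discussion after it, every trajectory through a point strictly above the diagonal either reaches the origin with slope $\chi^{(\tau)}_{\pm}$ or hits the diagonal. Second, $\phi_{\mathrm{lb}}$ starts with slope $\chi^{(\tau)}_-$ and satisfies $\phi_{\mathrm{lb}}(y) \ge \chi^{(\tau)}_- y$ (tight lower bound, Theorem~\ref{Thm:UpperBoundAch}).

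\textbf{Step 1: ordering.} Since $\phi_{\mathrm{lb}}(\xi)\ge \chi^{(\tau)}_-\xi=\phi_{\mathrm{lb},\xi}(\xi)$, uniqueness of solutions away from the origin (where $F$ is locally Lipschitz) gives $\phi_{\mathrm{lb},\xi}\le \phi_{\mathrm{lb}}$ on the whole common domain in $(0,\infty)$.

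\textbf{Step 2: local feasibility.} On $[0,\xi]$ the curve $\phi_{\mathrm{lb},\xi}$ lies above $\phi_Y$ for $Y$ small, since it cannot cross a trajectory ending on the diagonal. Hence $\phi_{\mathrm{lb},\xi}(y)>y$, and by Lemma~\ref{Lem:LowerBoundforFeasibleSol} together with the argument for $\phi_{Y,-}$ it reaches the origin with $\phi_{\mathrm{lb},\xi}(0)=0$. Where $\phi_{\mathrm{lb},\xi}>y$ we have $F>0$, so $\phi'_{\mathrm{lb},\xi}\ge0$. Continuity of $\phi_{\mathrm{lb},\xi}-y>0$ at $y=\chi^{(\tau)}_-\xi$ then extends dominance and monotonicity to an interval strictly containing $[0,\chi^{(\tau)}_-\xi]$. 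By the sufficiency Theorem~\ref{sufficiency_unlimited_supply}, whose primal-dual argument is local in utilization, $\PUM\phi_{\mathrm{lb},\xi}$ is then $\asigmas$-competitive for utilization in that interval.

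\textbf{Step 3: convergence as $\xi\to\infty$.} The family $\phi_{\mathrm{lb},\xi}$ is monotone in $\xi$. To see this, take $\xi_1<\xi_2$: the point $(\xi_1,\chi^{(\tau)}_-\xi_1)$ lies on or below $\phi_{\mathrm{lb},\xi_2}$, because $\phi_{\mathrm{lb},\xi_2}\ge\chi^{(\tau)}_- y$ up to $\xi_2$ as in the lower-bound lemma of Appendix~\ref{Apx:Proof for Tight Linear Bounds}. So the family increases pointwise and is bounded by $\phi_{\mathrm{lb}}$. On $[0,Y]$ the derivatives are uniformly bounded by Corollary~\ref{Cor:ElasticFBound}, since $\alpha F\le \alpha/(\tau-1)$. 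Arzel\`a--Ascoli then gives uniform convergence to some $\psi\le\phi_{\mathrm{lb}}$, and $\psi$ solves \eqref{eq:mainODE} with $\psi(0)=0$. Because $\phi_{\mathrm{lb},\xi}(\xi)=\chi^{(\tau)}_-\xi$ for arbitrarily large $\xi$, and each curve stays above the diagonal on its feasibility interval, $\psi$ is feasible on all of $\mathbb{R}_+$ with initial slope $\chi^{(\tau)}_-$. By minimality in Definition~\ref{Def:UpperLowerBoundSol}, $\psi=\phi_{\mathrm{lb}}$. Uniform convergence on $[0,Y]$ gives the required $\varepsilon$ bound, and taking $\chi^{(\tau)}_-\xi\ge Y$ gives competitiveness on $[0,Y]$.

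\textbf{Main obstacle.} The delicate step is showing that the limit $\psi$ actually attains slope $\chi^{(\tau)}_-$ and is globally feasible, rather than being a curve that touches the diagonal. I would handle this by passing the pointwise bound $\phi_{\mathrm{lb},\xi}(y)\ge \chi^{(\tau)}_- y$ for $y\le\xi$ to the limit, which yields $\psi\ge\chi^{(\tau)}_- y>y$. This inequality should come from the comparison argument with $\gamma>\alpha$ used in the tight-lower-bound lemma. That argument must be checked to apply to trajectories that pass through $(\xi,\chi^{(\tau)}_-\xi)$ rather than only to feasible solutions.
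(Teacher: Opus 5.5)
Your argument is correct, but it reaches the conclusion by a different route than the paper.

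\textbf{The paper's route.} The paper fixes $X_0=\varepsilon/\Dstar$ and handles $[0,X_0)$ separately by trapping both curves under the line $\Dstar y$. On $[X_0,Y]$ it uses continuous dependence on initial data to choose a trajectory $\phi_0(Y_\xi,\cdot)$ that starts slightly below $\phi_{\mathrm{lb}}(X_0)$ and stays $\varepsilon$-close to $\phi_{\mathrm{lb}}$. Minimality of $\phi_{\mathrm{lb}}$ then forces this trajectory to cross the ray $\chi^{(\tau)}_{-}y$ at some $y_0$, since otherwise it would be a feasible solution lying below $\phi_{\mathrm{lb}}$. The paper then takes any $\xi\ge y_0$.

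\textbf{Your route.} You show that $\xi\mapsto\phi_{\mathrm{lb},\xi}$ is pointwise non-decreasing and bounded by $\phi_{\mathrm{lb}}$. You get uniform convergence from the Lipschitz bound $0\le\phi'\le\asigmas/(\tau-1)$, and identify the limit through the same minimality in Definition~\ref{Def:UpperLowerBoundSol}.

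\textbf{Comparison.} Both proofs rest on non-crossing and minimality.
\begin{itemize}
\item The paper's final step, ``choose $\xi\ge y_0$'', silently uses exactly your monotonicity in $\xi$.
\item The paper asserts without proof that $y_0\ge Y$ once $Y_\xi$ is close to $Y_L$. Your compactness argument needs neither of these and covers the origin without a separate case.
\item Your route also yields, as a byproduct, that $\phi_{\mathrm{lb}}$ is a genuine solution of Eq.~\eqref{eq:SufNec_phi}; the paper only sketches this in Appendix~\ref{Apx:Discussion on UpperLowerBoundSol}. To get it, in Step~1 argue that every feasible solution with initial slope $\chi^{(\tau)}_{-}$ lies above $\chi^{(\tau)}_{-}y$, hence above $\phi_{\mathrm{lb},\xi}$. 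Do not assume $\phi_{\mathrm{lb}}$ is itself a trajectory.
\item The paper's route, in return, gives a more explicit recipe for locating a suitable $\xi$.
\end{itemize}

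\textbf{The obstacle you flag resolves directly.} By Corollary~\ref{Cor:ElasticFBound}, on the ray $\phi=\chi^{(\tau)}_{-}y$ the field slope satisfies $\asigmas F(\chi^{(\tau)}_{-}y,y)\le \asigmas\,(1-(\chi^{(\tau)}_{-})^{1-\tau})/(\tau-1)=\chi^{(\tau)}_{-}$, the last equality being $\CP(\asigmas,\tau)=0$. So trajectories can cross that ray only downward. To avoid tangency issues, use rays $ky$ with $k<\chi^{(\tau)}_{-}$, where the inequality is strict, and let $k\uparrow\chi^{(\tau)}_{-}$. Feasibility enters the tight-lower-bound lemma only in its final contradiction at infinity, not in this invariance. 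Hence any trajectory through $(\xi,\chi^{(\tau)}_{-}\xi)$ satisfies $\phi_{\mathrm{lb},\xi}(y)\ge\chi^{(\tau)}_{-}y$ on $(0,\xi]$.

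\textbf{One slip in Step~2.} Comparing with $\phi_Y$ for small $Y$ only gives $\phi_{\mathrm{lb},\xi}>y$ on $(0,Y)$. Compare instead with the trajectory through $(\xi,\xi)$, or use the ray bound above. Then on $[\xi,\chi^{(\tau)}_{-}\xi]$, note that a first diagonal hit at $y_1$ would force $y_1=\phi_{\mathrm{lb},\xi}(y_1)>\phi_{\mathrm{lb},\xi}(\xi)=\chi^{(\tau)}_{-}\xi$, which excludes such a hit in that interval.
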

Additionally, if we assume that the request value of an instance is bounded, we can estimate the competitive ratio for the reserve function $\phi_{\mathrm{lb},\xi}$ with respect to the infinite supply. This is given by Corollary~\ref{Cor:LowerSol_vmax}, and the proof is provided in Appendix \ref{Apx:NumMethdProofs}. 
\begin{corollary} \label{Cor:LowerSol_vmax}
  For any bounded instance with the highest request value of $v_{\rm max}$, $ \PUM\phi $ with $ \phi = \phi_{\mathrm{lb},\xi}$ is $\tilde{\alpha}$-competitive, where 
  \begin{equation*}
      \tilde{\alpha} = \max\left\{\frac{f^*\left(v_{\rm max}\right)}{f^*\left(f'\left(\chi^{(\tau)}_- \xi\right)\right)},1\right\} \cdot \asigmas.
  \end{equation*}
\end{corollary}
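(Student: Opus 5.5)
We reduce the corollary to the pricing-function inequality \eqref{ineq:PricingFunction} and to the local sufficiency argument behind Theorem~\ref{sufficiency_unlimited_supply}. Set $\phi=\phi_{\mathrm{lb},\xi}$, $\varPhi=f'\circ\phi$, and $Y_\xi:=\chi^{(\tau)}_-\xi$. Proposition~\ref{Prop:LowerSol} says that on $[0,Y_\xi]$ the function $\phi$ is monotone and dominant, solves \eqref{eq:mainODE} with $\alpha=\asigmas$, and satisfies $\phi(0)=0$. For utilizations in this range, the online primal--dual chain \eqref{ineq:Glb1}--\eqref{ineq:Glb2} therefore goes through step by step. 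It gives, for every node, $\int_0^y\varPhi(u)\,\mathrm{d}u-f(y)\ge \frac{1}{\asigmas}f^*(\varPhi(y))$ whenever $y\le Y_\xi$. We will split instances according to whether the utilization of some node ever leaves $[0,Y_\xi]$.

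\textbf{Case 1: all final utilizations stay in $[0,Y_\xi]$.} The dual certificate $\lambda_{T,j}=\varPhi(y_{T,j})$, $\mu_t$ from the KKT conditions is built exactly as in Theorem~\ref{sufficiency_unlimited_supply}. The pointwise differential inequality \eqref{ineq:DiffCondPrf2} is only invoked at $u\le Y_\xi$, so $\OPT\le\asigmas\ALG$. This is at most $\tilde\alpha\,\ALG$, since the max factor is at least $1$.

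\textbf{Case 2: some node is pushed beyond $Y_\xi$.} The idea is to truncate the ODE argument at $Y_\xi$ and pay for the remainder crudely. A request is accepted at node $j$ only if $v_{t,j}\ge\varPhi(y_{t-1,j})$. Because $v_{t,j}\le v_{\max}$, nothing is ever accepted at a price above $v_{\max}$.

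For the upper bound on $\OPT$, weak duality with $\lambda_j=v_{\max}$ and $\mu_t=0$ (feasible since $v_{t,j}\le v_{\max}$) gives $\OPT\le\sum_j f^*(v_{\max})$, summed over nodes with positive allocation. One can refine this to nodes that are actually used, because a node receiving nothing contributes nothing in the dual argument.

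For the lower bound on $\ALG$, take a node whose utilization reaches $Y_\xi$. The partial sum of its primal gains up to the moment utilization hits $Y_\xi$ is already at least $\frac1{\asigmas}f^*(\varPhi(Y_\xi))$, which is at least $\frac{1}{\asigmas}f^*(f'(Y_\xi))$ by dominance ($\phi(Y_\xi)\ge Y_\xi$) and monotonicity of $f^*$. Further allocations at this node are made at prices $\varPhi(y)\ge f'(\phi(y))\ge f'(y)$, so each one adds nonnegative net profit and the running profit never decreases. Combining the two bounds gives a ratio of at most $\asigmas\, f^*(v_{\max})/f^*(f'(Y_\xi))$, i.e.\ the claimed $\tilde\alpha$.

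\textbf{Main obstacle: the multi-node accounting in Case 2.} Nodes that stay below $Y_\xi$ must be handled by the exact primal--dual inequality. Nodes that cross $Y_\xi$ must be handled by the crude bound. Yet $\mu_t$ couples the nodes within a single request.

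My first attempt is to run the dual fitting with $\lambda_j=\min\{\varPhi(y_{T,j}),v_{\max}\}$. For a crossing node, I would replace the integral $\int_{Y_\xi}^{y_{T,j}}f^{*\prime}(\varPhi)\varPhi'$ by $f^*(v_{\max})-f^*(\varPhi(Y_\xi))$. This extra term is bounded by $\big(\frac{f^*(v_{\max})}{f^*(f'(Y_\xi))}-1\big)f^*(\varPhi(Y_\xi))$. In turn, $f^*(\varPhi(Y_\xi))$ is charged to $\asigmas$ times that node's profit accumulated up to $Y_\xi$. The per-node bounds then sum to $\OPT\le\tilde\alpha\,\ALG$.
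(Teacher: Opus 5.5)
Your closing per-node dual fitting is the right argument, and it is essentially the paper's proof recast in primal--dual form. The paper integrates \eqref{eq:LowerSol} to get $\int_0^y\varPhi(u)\,\mathrm{d}u-f(y)\ge\frac{1}{\asigmas}f^*(\varPhi(y))$ for $y\le\chi^{(\tau)}_-\xi$. For larger $y$ it bounds the same quantity from below by its value at $\chi^{(\tau)}_-\xi$, hence by $\frac{1}{\asigmas}f^*\bigl(f'(\chi^{(\tau)}_-\xi)\bigr)$, and compares this with $f^*(v_{\max})$. That is the pricing-function inequality \eqref{ineq:PricingFunction} with constant $\tilde\alpha$ on the prices actually reached. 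Fed node by node into the argument of Theorem~\ref{sufficiency_unlimited_supply}, it gives exactly your charging.

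Your cap $\min\{\varPhi(y_{T,j}),v_{\max}\}$ is unnecessary. Stationarity with $x_{t,j}>0$ gives $\varPhi(y_{t,j})=v_{t,j}-\mu_t\le v_{\max}$. So $\lambda_j=\varPhi(y_{T,j})$ already yields, for a crossing node, $f^*(\lambda_j)\le f^*(v_{\max})\le\tilde\alpha\bigl(\int_0^{Y_\xi}\varPhi(u)\,\mathrm{d}u-f(Y_\xi)\bigr)$.

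The version of Case~2 written before your ``main obstacle'' paragraph should be dropped. A global bound $\OPT\le\sum_j f^*(v_{\max})$ set against the profit of a single crossing node loses a factor equal to the number of nodes. Restricting to ``nodes actually used'' does not repair this: a used node that never reaches $Y_\xi$ may carry arbitrarily small profit while still contributing $f^*(v_{\max})$ to your bound.

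One step remains unjustified: the claim that past $Y_\xi$ allocations are made at prices $f'(\phi(y))\ge f'(y)$, so the node's running profit never decreases. Your charging depends on it, and so does dual feasibility, which needs $\varPhi$ nondecreasing. Dominance of $\phi_{\mathrm{lb},\xi}$ holds only up to the point $y_0>\chi^{(\tau)}_-\xi$ where it meets the diagonal (Lemma~\ref{Lem:LowerBoundforFeasibleSol} and the proof of Proposition~\ref{Prop:LowerSol}). Beyond $y_0$ the ODE solution has $\phi'<0$ and $\phi<y$. If one instead freezes $\phi$ at $y_0$, requests of value slightly above $f'(y_0)$ are accepted forever and the profit tends to $-\infty$.

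You need to specify a continuation, for instance $\phi(y)=y$ for $y\ge y_0$. This keeps $\varPhi$ nondecreasing with $\varPhi\ge f'$ everywhere, and your per-node charging then goes through verbatim for every $v_{\max}$. When $v_{\max}\le f'(y_0)$ the continuation is never even reached. The paper's proof glosses over exactly the same point; it only remarks that $\phi_{\mathrm{lb},\xi}$ ``stops increasing after $y_0$''. So this is a gap you share with the paper rather than one your route introduces.
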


For the upper-bound solution $\phi_{\mathrm{ub}}$, we have the following approximate method. For any small $\eta>0$, let $\phi_{\mathrm{ub},\eta}$ be the solution to 
\begin{align}\label{eq:UpperSol}
\begin{cases} 
\phi'(y) = \asigmas \cdot F(\phi, y), \quad \text{for all } y \geq 0, \\
\phi(0) =  \eta.
\end{cases}
\end{align}
We have the following proposition to characterize how $\phi_{\mathrm{ub},\eta}$ approaches $\phi_{\mathrm{ub}}$.

\begin{proposition}[Approximate Computation of $\phi_{\mathrm{ub}}$]
  \label{Prop:AprUpSol}
  For any small $\varepsilon>0$ and any $Y>0$, there exists $\eta>0$ such that the reserve function $\phi_{\mathrm{ub},\eta}$ is $\asigmas$-competitive with a constant $\beta$ and satisfies
  $$
  |\phi_{\mathrm{ub},\eta}(y) - \phi_{\mathrm{ub}}(y)| < \varepsilon, \quad \forall y\in[0,Y],
  $$  
  where 
  $$
  \beta = \frac{1}{\asigmas}\left(\eta f^{\prime}(\eta) - f(\eta)\right).
  $$
  That is, $ \PUM\phi $ with $ \phi = \phi_{\mathrm{ub},\eta}$ is $ \asigmas $-competitive with $ \beta $ additive loss, namely, it satisfies $\ALG(\mathcal{I}) \geq \frac{1}{\alpha}\OPT(\mathcal{I}) - \beta$ for any instance $ \mathcal{I} $. 
\end{proposition}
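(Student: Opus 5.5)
The proof has two parts: an approximation argument showing $\phi_{\mathrm{ub},\eta}\to\phi_{\mathrm{ub}}$ uniformly on $[0,Y]$ as $\eta\to 0^+$, and a primal--dual argument with a modified initial condition that produces the additive loss $\beta$.

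\emph{Approximation.} The right-hand side $\asigmas F(\phi,y)$ is smooth away from the origin, and $\phi(0)=\eta>0$ avoids the singularity. Hence each $\phi_{\mathrm{ub},\eta}$ is the unique local solution, and the family $\{\phi_{\mathrm{ub},\eta}\}_{\eta>0}$ is ordered: two trajectories never cross, so $\phi_{\mathrm{ub},\eta}$ is increasing in $\eta$. Every solution of Eq.~\eqref{eq:SufNec_phi} passes through the origin, so by non-crossing each $\phi_{\mathrm{ub},\eta}$ lies strictly above all of them, and in particular above $\phi_{\mathrm{ub}}$.
\begin{itemize}
\item The pointwise monotone limit $\bar\phi=\lim_{\eta\to0^+}\phi_{\mathrm{ub},\eta}$ satisfies $\bar\phi\ge\phi_{\mathrm{ub}}$.
\item On any compact interval $[\delta,Y]$ the derivatives are uniformly bounded, since $F\le 1/(\tau-1)$ by Corollary~\ref{Cor:ElasticFBound}. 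Arzel\`a--Ascoli then lets us pass to the limit in the integral form of the ODE, so $\bar\phi$ solves Eq.~\eqref{eq:mainODE} with $\bar\phi(0)=0$.
\item $\bar\phi\ge\phi_{\mathrm{ub}}\ge y$, so $\bar\phi$ is feasible. Its initial slope is at least $\chi^{(\tau)}_+$, and the tight upper bound $\chi^{(\tau)}_+y$ from Theorem~\ref{Thm:UpperBoundAch} forces it to equal $\chi^{(\tau)}_+$.
\item By Definition~\ref{Def:UpperLowerBoundSol}, this gives $\bar\phi\le\phi_{\mathrm{ub}}$, hence $\bar\phi=\phi_{\mathrm{ub}}$.
\end{itemize}
Dini's theorem (a monotone family of continuous functions converging to a continuous limit on a compact set) upgrades this to uniform convergence on $[0,Y]$, which yields $\eta$ with $|\phi_{\mathrm{ub},\eta}-\phi_{\mathrm{ub}}|<\varepsilon$.

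We must also check that $\phi_{\mathrm{ub},\eta}$ is globally feasible. Monotonicity and dominance follow because it lies above $\phi_{\mathrm{ub}}\ge y$, so $F>0$. The comparison argument of Appendix~\ref{Apx:Proof of the Existence of a Solution} (lower bound by $\Delta^{(\sigma)}_+y$) applies equally and rules out blow-down.

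\emph{Competitiveness with additive loss.} Rerun the sufficiency proof (Theorem~\ref{sufficiency_unlimited_supply}) with $\varPhi=f'\circ\phi_{\mathrm{ub},\eta}$. The only place that used $\phi(0)=0$ is the telescoping step $\sum_t\bigl(f^*(\lambda_{t,j})-f^*(\lambda_{t-1,j})\bigr)=f^*(\lambda_{T,j})$ with $\lambda_{0,j}=\varPhi(0)$. Now $\varPhi(0)=f'(\eta)$, so the dual objective picks up an extra
\[
f^*(f'(\eta))=\eta f'(\eta)-f(\eta)
\]
per node. The differential inequality still holds pointwise, since the ODE holds with equality. Following the chain of inequalities gives $\OPT\le\alpha\ALG+f^*(f'(\eta))$, i.e., $\ALG\ge\OPT/\alpha-\beta$, with $\beta$ as stated.

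\emph{Main obstacle.} The delicate step is identifying the limit $\bar\phi$ with $\phi_{\mathrm{ub}}$. It requires controlling the behavior near the singular origin, namely that the limit attains $\bar\phi(0)=0$ and has the initial slope $\chi^{(\tau)}_+$ rather than $\chi^{(\tau)}_-$. I would first try the ordering argument above: $\bar\phi$ dominates a trajectory with slope $\chi^{(\tau)}_+$ and is itself bounded by $\chi^{(\tau)}_+y$, which pins the slope. If the pointwise bound at $0$ is shaky, I would fall back on the blow-up phase portrait of Figure~\ref{fig:blowup}.
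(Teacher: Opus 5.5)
Your proposal is correct, but both halves follow a different route from the paper.

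\textbf{Approximation.} The paper works with explicit auxiliary trajectories. It splits at $X_0=\varepsilon/\chi^{(\tau)}_+$ and controls both $\phi_{\mathrm{ub},\eta}$ and $\phi_{\mathrm{ub}}$ on $[0,X_0)$ by the trajectory through $(X_0,\varepsilon)$. It then uses continuous dependence on initial data to pick a trajectory through $(X_0,Y_\eta)$ with $Y_\eta$ slightly above $\phi_{\mathrm{ub}}(X_0)$. Finally it shows by contradiction that this trajectory is strictly positive at $y=0$, so any $\eta$ below that value works. The contradiction uses the $\chi^{(\tau)}_{\pm}$ slope dichotomy from the blow-up portrait together with the supremum in Definition~\ref{Def:UpperLowerBoundSol}.

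You replace this with a soft compactness argument:
\begin{itemize}
\item non-crossing makes $\eta\mapsto\phi_{\mathrm{ub},\eta}$ monotone and bounded below by $\phi_{\mathrm{ub}}$;
\item the uniform bound $0\le\phi_{\mathrm{ub},\eta}'\le\alpha/(\tau-1)$ yields a limit $\bar\phi$ solving the ODE;
\item squeezing $\bar\phi$ between $\phi_{\mathrm{ub}}$ and $\chi^{(\tau)}_+y$ pins its slope, and maximality of $\phi_{\mathrm{ub}}$ identifies it;
\item Dini's theorem gives uniform convergence.
\end{itemize}
The decisive fact is the same in both: no solution through the origin can lie above $\phi_{\mathrm{ub}}$. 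Your version needs neither the phase-portrait case split nor a continuous-dependence estimate. The paper states that estimate on all of $[X_0,\infty)$, but it is only available on compacts such as $[X_0,Y]$. In exchange, the paper's version ties an admissible $\eta$ to a concrete trajectory.

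The step you flagged as the main obstacle does go through, and you do not even need the slope of $\bar\phi$ a priori. Write $r=\bar\phi/y$. The upper bound in Corollary~\ref{Cor:ElasticFBound} gives $y\,r'\le -r^{1-\tau}\bigl(r^{\tau}-\frac{\alpha}{\tau-1}(r^{\tau-1}-1)\bigr)$, which is negative whenever $r>\chi^{(\tau)}_+$. So if $r(y_0)>\chi^{(\tau)}_+$, then $r(y)\ge r(y_0)+c\ln(y_0/y)$ for some $c>0$ and all $y<y_0$. This contradicts $r\le\alpha/(\tau-1)$.

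\textbf{Additive loss.} The paper integrates the ODE from $0$ to $y$ and verifies the integral pricing-function characterization from \citet{Huang2019} with constant $\beta$. You instead rerun the primal--dual sufficiency proof and keep the nonzero initial term $f^*(f'(\eta))$ in the telescoping sum. These are equivalent, and yours is self-contained.

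There is one bookkeeping slip. That term enters once for each supply node, so your chain actually gives $\OPT\le\alpha\ALG+m\,f^*(f'(\eta))$, i.e., an additive loss of $m\beta$ rather than $\beta$. This is not an artifact of the dual. Send each node, separately, many requests of value just below $f'(\eta)$: $\PUM\phi$ rejects all of them, while $\OPT$ collects nearly $f^*(f'(\eta))$ per node. The stated $\beta$ is therefore the single-node loss, which is also the implicit convention behind the paper's single-pricing-function argument. You should either restrict to $m=1$ or state the loss as $m\beta$.
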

The proof of Proposition~\ref{Prop:AprUpSol} is provided in Appendix~\ref{Apx:NumMethdProofs}. This approximation method is numerically stable, and thus a smaller value of $\eta$ yields a more accurate approximation of the upper-bound solution $\phi_{\mathrm{ub}}$, as well as a smaller constant $\beta$. However, due to the inherent limitations in the accuracy of numerical ODE solvers, $\eta$ cannot be made arbitrarily small. A similar limitation arises in the context of Proposition~\ref{Prop:LowerSol}.

\subsection{Proofs Related to Stable Numerical Methods} \label{Apx:NumMethdProofs}

We begin by reviewing a prior result that characterizes pricing functions achieving an $ \alpha $-competitive ratio up to an additive constant. We then approximate the lower-bound solution and derive estimates for the corresponding approximate solutions. A similar analysis is carried out for the upper-bound solutions.

Prior work by \citet{Huang2019} establishes the following characterization of when a pricing function is $ \alpha $-competitive with an additive constant $\beta$. Recall that the pricing function is defined as $\varPhi(y) = f'(\phi(y))$.
\begin{lemma}
For any $ \alpha \geq 1 $, $ \PUM\phi $ is $ \alpha $-competitive if and only if its corresponding pricing function $\varPhi$ is increasing and satisfies the following inequality for some constant $\beta$:
    \begin{equation} \label{ineq:PricingFunctionCharac}
        \int_{0}^{y} \varPhi(u)\Id u - f(y) \geq  \frac{1}{\alpha} f^{*}\left(\varPhi\left(y\right)\right) - \beta, \quad \forall y\geq 0.
    \end{equation}
\end{lemma}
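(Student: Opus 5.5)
The lemma is the additive-constant analogue of the pricing-function characterization in Eq.~\eqref{ineq:PricingFunction}, and I would prove it by reusing the two halves of the proof of Lemma~\ref{unlimited_supply_SufNec}: the online primal-dual argument of Theorem~\ref{sufficiency_unlimited_supply} for sufficiency, and the adversarial family $\mathcal{I}(p)$ together with Proposition~\ref{prop:ALGinf} for necessity. Throughout I take $w_{t,j}=1$ and write $y_j$ for the final utilization of node $j$. Because the objective is additive over nodes and $\mu_t$ splits as $\sum_j (v_{t,j}-\varPhi(y_{t,j}))x_{t,j}$, it suffices to work node by node. The additive constant will be understood per node, that is, scaled by the number of nodes that are ever used; this matches the single-node setting of \citet{Huang2019}.

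\emph{Sufficiency.} I would keep the dual certificate of Theorem~\ref{sufficiency_unlimited_supply}, namely $\lambda_j=\varPhi(y_j)$ and $\mu_t$ from the KKT conditions of Eq.~\eqref{eq_agent_t}, which is dual feasible exactly as shown there. Weak duality gives
\[
\OPT\le \sum_t\mu_t+\sum_j f^*(\varPhi(y_j)).
\]
Applying \eqref{ineq:PricingFunctionCharac} at $y=y_j$ yields
\[
f^*(\varPhi(y_j))\le \alpha\Big(\int_0^{y_j}\varPhi(u)\Id u - f(y_j)+\beta\Big).
\]
Next, monotonicity of $\varPhi$ gives $\int_{y_{t-1,j}}^{y_{t,j}}\varPhi(u)\Id u\le \varPhi(y_{t,j})x_{t,j}$. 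Combining this with the expression for $\mu_t$ gives
\[
\sum_t\mu_t+\sum_j\int_0^{y_j}\varPhi(u)\Id u\le \sum_{t,j}v_{t,j}x_{t,j}.
\]
Since $\alpha\ge1$, we have $\sum_t\mu_t\le\alpha\sum_t\mu_t$. Putting these together gives $\OPT\le\alpha(\ALG+\beta)$ (with $\beta$ summed over the active nodes), which is the additive guarantee. This step is routine.

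\emph{Necessity.} Suppose $\PUM\phi$ satisfies $\ALG\ge\frac1\alpha\OPT-\beta$ on every instance. Restrict to a single node with nondecreasing values and use Proposition~\ref{prop:ALGinf}. For $\PUM\phi$ the utilization function is $\psi=\varPhi^{-1}$, since a request of value $v$ pushes utilization up until $\varPhi(y)=v$. Hence the worst instance in $\mathcal{I}(p)$ yields $\int_0^p u\Id\psi(u)-f(\psi(p))$.

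Appending many requests of value $p$ makes $\OPT=f^*(p)$ without changing $\ALG$. Setting $y=\psi(p)$, so that $p=\varPhi(y)$, and changing variables in the Stieltjes integral turns $\int_0^p u\Id\psi(u)$ into $\int_0^y\varPhi(u)\Id u$. This gives \eqref{ineq:PricingFunctionCharac} for every $y$ in the range of $\psi$. The step needing care is the points where $\varPhi$ jumps or is flat, so that $\psi$ is not a true inverse. I would handle these by approximating with the grouped instances $I_\varepsilon$ from the proof of Proposition~\ref{prop:ALGinf}, and then extend to all $y$ using continuity of $f^*$ together with one-sided limits of $\varPhi$. Monotonicity of $\varPhi$ is automatic, because $f'$ and $\phi$ are nondecreasing.

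I expect the main obstacle to be this inverse/change-of-variables step in the necessity direction when $\varPhi$ is not strictly increasing or continuous. The sufficiency direction is essentially a verbatim reuse of Theorem~\ref{sufficiency_unlimited_supply}, with the differential inequality replaced by the integrated inequality \eqref{ineq:PricingFunctionCharac}.
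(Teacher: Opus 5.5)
The paper gives no proof of this lemma. It is imported from \citet{Huang2019}, and only the ``if'' direction is ever used (in the proof of Proposition~\ref{Prop:AprUpSol}). Your reconstruction is the standard argument and mirrors the two halves of the paper's proof of Lemma~\ref{unlimited_supply_SufNec}. Your sufficiency step is slightly cleaner than Theorem~\ref{sufficiency_unlimited_supply}, because you invoke the integrated inequality only at the final utilization and never need $\varPhi'$.

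The sufficiency computation is correct, apart from one bookkeeping slip. Evaluating \eqref{ineq:PricingFunctionCharac} at $y=0$ gives $f^*(\varPhi(0))\le\alpha\beta$, and this need not vanish: in the paper's application $\phi(0)=\eta>0$ and the inequality is tight at $y=0$. A node that is eligible for some request but never allocated still carries $\lambda_j=\varPhi(0)$ in your certificate. Dual feasibility genuinely requires $\lambda_j\geq\max_t(v_{t,j}-\mu_t)$, and this can approach $\varPhi(0)$. So the additive loss is $\beta$ per eligible node (at most $m\beta$), not per node ``ever used''. Your own derivation, which applies the inequality at $y_j$ for every $j$, already gives the $m\beta$ version.

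The real gap is in the necessity direction, at plateaus of $\varPhi$. Your saturating instances end with abundant value-$p$ demand so that $\OPT=f^*(p)$. They certify \eqref{ineq:PricingFunctionCharac} only at $y=\psi(p)$. Continuity of $f^*$ plus one-sided limits of $\varPhi$ extend this only to the closure of the range of $\psi$.

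Suppose $\varPhi\equiv p_0$ on $[a,b]$ with $a<b$; these are exactly the static phases of the paper's mix-dynamic-static designs. Then the range of $\psi$ can omit $(a,b)$ entirely. For example, if ties at zero pseudo-utility are broken by allocating nothing, $\psi$ jumps from $a$ to $b$ at $p_0$, and no point of $(a,b)$ is a limit of the range.

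The missing idea is concavity. On $[a,b]$ the slack
\[
\int_0^y\varPhi(u)\Id u-f(y)-\frac{1}{\alpha}f^*(p_0)+\beta
\]
equals a constant plus $p_0y-f(y)$, which is concave in $y$. You do obtain nonnegativity at $a$ and $b$, since both lie in the closure of the range of $\psi$. Concavity then propagates it to the whole plateau.

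Jumps of $\varPhi$, by contrast, are covered by your one-sided-limit argument: letting $p\uparrow\varPhi(y_0^+)$ yields the inequality at a jump point $y_0$. Finally, your argument assumes rather than derives the monotonicity of $\varPhi$ in the ``only if'' direction, via monotonicity of $\phi$. That matches the setting of \citet{Huang2019}, where pricing functions are monotone by hypothesis.
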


This lemma helps us give an estimation of the approximation solutions corresponding to the upper-bound solution.

\begin{proof}[Proof of Proposition \ref{Prop:LowerSol}]
  
  We prove that for sufficiently large $\xi$, $\phi_{\mathrm{lb},\xi}$ can be arbitrarily close to $\phi_{\mathrm{lb}}$ on $[0,Y]$. Let $X_0 = \varepsilon/\Dstar$. Divide $\mathbb{R}_{\geq 0} = [0, X_0)\cup [X_0, +\infty)$. For the first interval, if $\xi \geq X_0$, we have $\phi_{\mathrm{lb},\xi}(y) \leq \Dstar y$ for $y\in [0,X_0)$; otherwise, $\phi_{\mathrm{lb},\xi}(y)$ must intersect with $\Dstar y$ for twice, which contradicts $\phi_{\mathrm{lb},\xi}(y)>\Dstar y$ after intersecting with $\Dstar y$ (see the proof of Theorem \ref{Thm:UpperBoundAch}). This implies
  $$
  |\phi_{\mathrm{lb},\xi}(y) - \phi_{\mathrm{lb}}(y)| \leq \Dstar y < \varepsilon, \quad \forall y\in[0,X_0). 
  $$
  We then prove that the inequality holds for the interval $[X_0, +\infty)$. 
  
  Let $Y_L = \phi_{\mathrm{lb}}\left(X_0\right)$. For any $Y_0\in \left[\chi^{(\tau)}_- X_0, Y_L\right]$, define $\phi_0(Y_0,y)$ as the solution to 
  $$
  \begin{aligned}
  \begin{cases} 
  \phi'(y) = \alpha \cdot F(\phi, y), \quad \text{for all } y \geq 0, \\
  \phi(X_0) =  Y_0.
  \end{cases}
  \end{aligned}
  $$
  By the continuous dependence of solutions on initial values (see Theorem on Continuous Dependence \cite{Walter1998}, p. 145), there exists $Y_\xi\in \left[\chi^{(\tau)}_- X_0, Y_L\right)$ such that 
  $$
  \phi_{\mathrm{lb}}\left(y\right) - \phi_0(Y_\xi,y) < \varepsilon, \quad \forall y\geq X_0.
  $$
  Since $Y_\xi < Y_L$, we claim that there exists $y_0>0$ such that $\phi_0(Y_\xi,y_0) = \chi^{(\tau)}_- y_0$, i.e., the solution satisfying the initial condition $\phi(X_0) =  Y_\xi$ intersects with the line $\chi^{(\tau)}_- y$ at $y_0$. Otherwise, for any $y>0$, we have 
  $$
  \phi_0(Y_\xi,y) > \chi^{(\tau)}_- y,
  $$
  and thus, $\phi_0(Y_\xi,y)$ is a feasible solution to Eq.~\eqref{eq:SufNec_phi}. On the other hand, since $\phi_{\mathrm{lb}}$ is the lower-bound solution, we have for any $y>0$,
  $$
  \phi_{\mathrm{lb}}(y) < \phi_0(Y_\xi,y),
  $$
  which contradicts that 
  $$
  \phi_0(Y_\xi,X_0) = Y_\xi < Y_L = \phi_{\mathrm{lb}}\left(X_0\right).
  $$
  When taking $Y_\xi$ sufficiently close to $Y_L$, we have $y_0 \geq Y$.
  Then choose $\xi\geq y_0$ and $\phi_{\mathrm{lb}, \xi}$ satisfies
  $$
  |\phi_{\mathrm{lb},\xi}(y) - \phi_{\mathrm{lb}}(y)| < \varepsilon, \quad \forall y\in[0,Y].
  $$

  We now prove the rest of this proposition. 
  Since $\phi_{\mathrm{lb},\xi}$ is not a feasible solution to Eq.~\eqref{eq:SufNec_phi}, there exists $y_1>0$ such that $\phi_{\mathrm{lb},\xi}(y_1) = y_1$. By Lemma~\ref{Lem:LowerBoundforFeasibleSol}, we have 
  $$
  y_1 > \chi^{(\tau)}_- \xi\geq y_0\geq Y,
  $$
  and therefore, $\phi_{\mathrm{lb},\xi}$ satisfies Eq.~\eqref{eq:SufNec_phi} on the interval $\left[0,\chi^{(\tau)}_- \xi\right]$. Thus, $\phi_{\mathrm{lb},\xi}$ is $\asigmas$-competitive, and we complete the proof.
\end{proof}

\begin{proof}[Proof of Corollary \ref{Cor:LowerSol_vmax}]

Eq. \eqref{eq:LowerSol} is equivalent to 
\begin{equation} \label{eq:LowerBoundODE}
    f^{\prime}(\phi_{\mathrm{lb},\xi}) - f^{\prime} = \frac{1}{\alpha} \phi_{\mathrm{lb},\xi}\cdot f^{\prime\prime}(\phi_{\mathrm{lb},\xi})\phi_{\mathrm{lb},\xi}^{\prime}.
\end{equation}
Since $\phi_{\mathrm{lb},\xi}$ is not a feasible solution to Eq.~\eqref{eq:SufNec_phi}, there exists $y_0>0$ such that $\phi_{\mathrm{lb},\xi}(y_0) = y_0$. By Lemma~\ref{Lem:LowerBoundforFeasibleSol}, we have $y_0 > \chi^{(\tau)}_- \xi$. 
Integrate Eq.~\eqref{eq:LowerBoundODE} and apply the transformation $ \varPhi_{\mathrm{lb},\xi}(y) = f'(\phi_{\mathrm{lb},\xi}(y)) $, or equivalently, $ \phi_{\mathrm{lb},\xi}(y) = f'^{-1}(\varPhi_{\mathrm{lb},\xi}(y))$:  
$$
\begin{aligned}
       \int_{0}^{y} \varPhi_{\mathrm{lb},\xi}(u)\Id u - f(y) = \frac{1}{\alpha} \int_{0}^{y} \phi_{\mathrm{lb},\xi}(u) f^{\prime\prime}\left(\phi_{\mathrm{lb},\xi}\left(u\right)\right)\phi_{\mathrm{lb},\xi}^{\prime}\left(u\right) \Id u.
\end{aligned}
$$
Note that $\phi_{\mathrm{lb},\xi}$ and $\varPhi_{\mathrm{lb},\xi}$ stop increasing after $y_0$. Therefore, if $v_{\rm max} > \varPhi_{\mathrm{lb},\xi}\left(\chi^{(\tau)}_- \xi\right)$, we can estimate the integral on the interval $\left[0, \chi^{(\tau)}_- \xi\right]$, i.e.,
$$
\begin{aligned}
       \int_{0}^{y} \varPhi_{\mathrm{lb},\xi}(u)\Id u - f(y) 
       &\geq \frac{1}{\alpha} \int_{0}^{\chi^{(\tau)}_- \xi} f^{\prime -1}\left(\varPhi_{\mathrm{lb},\xi}\left(u\right)\right)\varPhi_{\mathrm{lb},\xi}^{\prime}\left(u\right) \Id u \\
       & = \frac{1}{\alpha} \int_{0}^{\varPhi_{\mathrm{lb},\xi}\left(\chi^{(\tau)}_- \xi\right)} f^{* \prime}\left(t\right) \Id t \\
       & =  \frac{1}{\alpha} f^{*}\left(f'\left(\chi^{(\tau)}_- \xi\right)\right). 
\end{aligned}
$$
On the other hand, if $v_{\rm max} \leq \varPhi_{\mathrm{lb},\xi}\left(\chi^{(\tau)}_- \xi\right)$, we have 
$$
\begin{aligned}
       \int_{0}^{y} \varPhi_{\mathrm{lb},\xi}(u)\Id u - f(y) &= \frac{1}{\alpha} \int_{0}^{y} \phi_{\mathrm{lb},\xi}(u) f^{\prime\prime}\left(\phi_{\mathrm{lb},\xi}\left(u\right)\right)\phi_{\mathrm{lb},\xi}^{\prime}\left(u\right) \Id u \\
       &= \frac{1}{\alpha} \int_{0}^{y} f^{\prime -1}\left(\varPhi_{\mathrm{lb},\xi}\left(u\right)\right)\varPhi_{\mathrm{lb},\xi}^{\prime}\left(u\right) \Id u \\
       & =  \frac{1}{\alpha} f^{*}\left(f'\left(y\right)\right). 
\end{aligned}
$$
Thus, we complete the proof. 
\end{proof}

\begin{proof}[Proof of Proposition \ref{Prop:AprUpSol}]
  We prove that for sufficiently small $\eta$, $\phi_{\mathrm{ub},\eta}$ can be arbitrarily close to $\phi_{\mathrm{ub}}$ on $[0,Y]$. Let $X_0 = \varepsilon/\chi^{\tau}_+$. Divide $\mathbb{R}_{\geq 0} = [0, X_0)\cup [X_0, +\infty)$. 
  For the first interval, define $\phi_\varepsilon(y)$ as the solution to 
  $$
  \begin{aligned}
  \begin{cases} 
  \phi'(y) = \alpha \cdot F(\phi, y), \quad \text{for all } y \geq 0, \\
  \phi(X_0) =  \varepsilon.
  \end{cases}
  \end{aligned}
  $$
  Similarly as the proof of Theorem \ref{Thm:UpperBoundAch}, by Lemma \ref{Lem:ElasticComp} and the ODE comparison theorem (see Comparison Theorem \cite{Walter1998}, p. 90), we have $\phi_\varepsilon(y) > \chi^{\tau}_+ y$ for $y\in (0,X_0)$. Thus, $\phi_\varepsilon(y)$ is increasing on the interval $[0,X_0)$. 
  If $\eta \leq \phi_\varepsilon(0)$, we have $\phi_{\mathrm{ub},\eta}(y)\leq \phi_\varepsilon(y)$ for . This implies
  $$
  |\phi_{\mathrm{ub},\eta}(y) - \phi_{\mathrm{ub}}(y)| \leq \phi_\varepsilon(y) <  \phi_\varepsilon(X_0) = \varepsilon, \quad \forall y\in[0,X_0). 
  $$
  We then prove that the inequality holds for the interval $[X_0, +\infty)$.
  
  Let $Y_L = \phi_{\mathrm{ub}}\left(X_0\right)$. For any $Y_0\in [Y_L, \varepsilon]$, define $\phi_0(Y_0,y)$ as the solution to 
  $$
  \begin{aligned}
  \begin{cases} 
  \phi'(y) = \alpha \cdot F(\phi, y), \quad \text{for all } y \geq 0, \\
  \phi(X_0) =  Y_0.
  \end{cases}
  \end{aligned}
  $$
  By the continuous dependence of solutions on initial values (see Theorem on Continuous Dependence \cite{Walter1998}, p. 145), there exists $Y_\eta\in (Y_L, \varepsilon]$ such that 
  $$
  \phi_0(Y_\eta,y) - \phi_{\mathrm{ub}}\left(y\right) < \varepsilon, \quad \forall y\geq X_0.
  $$
  Since $Y_\eta > Y_L$, we claim that $\phi_0(Y_\eta,0) > 0$. Otherwise, we have $\phi_0(Y_\eta,0) \leq 0$. Obviously, $\phi_0(Y_\eta,0)$ can not be negative; otherwise, it must intersect with $\phi_{\mathrm{ub}}\left(y\right)$ on the interval $[0,X_0)$. On the other hand, if $\phi_0(Y_\eta,0) = 0$, we have
  $$
  \frac{\mathrm{d} \phi_0}{\mathrm{d}y}(Y_\eta,0) = \chi^{(\tau)}_{+}, 
  $$
  i.e., the initial slope of $\phi_0(Y_\eta,y)$ is $\chi^{(\tau)}_{+}$.
  Otherwise, the initial slope of $\phi_0(Y_\eta,y)$ is $\chi^{(\tau)}_{-}$ (see the local phase portrait in Figure \ref{fig:blowup}), and thus, it must intersect with $\phi_{\mathrm{ub}}\left(y\right)$ on the interval $[0,X_0)$.
  Then, for any $y>0$, we have 
  $$
  \sup \Big \{\phi(y) \mid \phi \text{ is a sol. to Eq.  \eqref{eq:SufNec_phi}} \text{ and } \phi^{\prime}(0)=\chi^{(\tau)}_{+} \Big\} \geq \phi_0(Y_\xi,y) > \phi_{\mathrm{lb}}\left(y\right), 
  $$ 
  which contradicts the fact that $\phi_{\mathrm{ub}}$ is the upper-bound solution. Choose $\eta\leq \phi_0(Y_\eta,0)$ and we complete the fist part of the proof. We now prove $\phi_{\mathrm{ub},\eta}$ is $\asigmas$-competitive with a constant $\beta$. 
  
    Eq. \eqref{eq:UpperSol} is equivalent to 
    $$
    f^{\prime}(\phi_{\mathrm{ub},\eta}) - f^{\prime} = \frac{1}{\alpha} \phi_{\mathrm{ub},\eta}\cdot f^{\prime\prime}(\phi_{\mathrm{ub},\eta})\phi_{\mathrm{ub},\eta}^{\prime}.
    $$
    Integrate both sides on $[0,y]$ and apply the transformation $ \varPhi_{\mathrm{ub},\eta}(y) = f'(\phi_{\mathrm{ub},\eta}(y)) $, or equivalently, $ \phi_{\mathrm{ub},\eta}(y) = f'^{-1}(\varPhi_{\mathrm{ub},\eta}(y))$:
    $$
    \begin{aligned}
       \int_{0}^{y} \varPhi_{\mathrm{ub},\eta}(u)\Id u - f(y) &= \frac{1}{\alpha} \int_{0}^{y} \phi_{\mathrm{ub},\eta}(u) f^{\prime\prime}\left(\phi_{\mathrm{ub},\eta}\left(u\right)\right)\phi_{\mathrm{ub},\eta}^{\prime}\left(u\right) \Id u \\
       &= \frac{1}{\alpha} \int_{0}^{y} f^{\prime -1}\left(\varPhi_{\mathrm{ub},\eta}\left(u\right)\right)\varPhi_{\mathrm{ub},\eta}^{\prime}\left(u\right) \Id u \\
       & = \frac{1}{\alpha} \int_{f^{\prime}(\delta)}^{\varPhi_{\mathrm{ub},\eta}(y)} f^{* \prime}\left(t\right) \Id t \\
       & =  \frac{1}{\alpha} f^{*}\left(\varPhi_{\mathrm{ub},\eta}\left(y\right)\right) - \frac{f^{*}\left(f^{\prime}(\delta)\right)}{\alpha}, 
    \end{aligned}
    $$
    which satisfies the characterization equation Eq. \eqref{ineq:PricingFunctionCharac} with $\beta = f^{*}\left(f^{\prime}(\delta)\right)/\alpha$. Moreover, we have 
    $$
    \begin{aligned}
        f^{*}\left(f^{\prime}(\delta)\right) 
        &= \max_{y\geq 0} f^{\prime}(\delta)y - f(y) \\
        &= f^{\prime}(\delta)\delta - f(\delta). 
    \end{aligned}
    $$
    Thus, we complete the proof. 
\end{proof}

\section{Proofs Related to Universal Design Space} \label{Apx:UniDSProofs}

\subsection{$\mathcal{F}$ is a Finite Set of $(\tau,\sigma)$-Elastic Functions}

\begin{proof}[Proof of Proposition \ref{Prop:f_hat_finite}]
Since $Ef_k''=yf_k'''/f_k''$ is well defined on $(0,+\infty)$ for each $k$, we have $f_k''>0$ there;
together with $f_k'(0)=0$ this gives $f_k'>0$ on $(0,+\infty)$. As a maximum of finitely many
continuous functions, $E^{*}$ is continuous on $(0,+\infty)$.

Let $E^{*}(y) = \max_{k\in[K]} Ef_k''(y)$. We claim that at every switching point $y_s\in\mathcal{Y}$,
\begin{equation}\label{eq:ElastMatch}
    Ef_{k_{s-1}}''(y_s) \;=\; Ef_{k_s}''(y_s) \;=\; E^{*}(y_s).
\end{equation}
Indeed, for $y\in(y_{s-1},y_s)$ we have $Ef_{k_{s-1}}''(y)=E^{*}(y)\geq Ef_{k_s}''(y)$; letting
$y\to y_s^-$ and using continuity gives $Ef_{k_{s-1}}''(y_s)\geq Ef_{k_s}''(y_s)$. The symmetric
argument on $(y_s,y_{s+1})$ gives the reverse inequality, proving \eqref{eq:ElastMatch}. 

\textbf{Construction.}
We define $\hat{f}$ recursively on the intervals $I_s=(y_s,y_{s+1}]$, $s\geq 0$, together with
initialization $\hat f=f_{k_0}$ on $I_0\cup\{0\}$. More precisely,

\emph{Base.} On $[0,y_1]$ set $C_0=1$ and $L\equiv 0$, so that $\hat{f}=f_{k_0}$.

\emph{Induction.} Suppose $\hat{f}$ has been defined on $[0,y_s]$ and is $C^2$ there with
$\hat{f}''(y_s)>0$. Set
\begin{equation}\label{eq:Crec}
    C_s = C_{s-1}\cdot\frac{f_{k_{s-1}}''(y_s)}{f_{k_s}''(y_s)}
         \;=\; \frac{\hat{f}''(y_s)}{f_{k_s}''(y_s)} \;>\;0 ,
\end{equation}
\begin{equation}\label{eq:Lrec}
    L(y) = a_s\,(y-y_s)+b_s \ \ \text{on } I_s, \qquad
    a_s = \hat{f}'(y_s)-C_sf_{k_s}'(y_s), \qquad
    b_s = \hat{f}(y_s)-C_sf_{k_s}(y_s),
\end{equation}
and $\hat{f}=C_sf_{k_s}+L$ on $I_s$. Both $C_s$ and $L$ are well defined because
$f_{k_s}''(y_s)>0$, and $C_s>0$ by induction. This defines $\hat{f}$ on $[0,+\infty)$, of the form \eqref{eq:f_hat_finite} with $C$ piecewise constant and $L$ piecewise linear on the intervals $\{I_s\}$.

\textbf{Regularity and boundary conditions.}
On each $I_s$ the function $\hat{f}$ is a positive multiple of $f_{k_s}$ plus an affine function, hence
smooth, with
\begin{equation}\label{eq:hatfpp}
    \hat{f}''=C_sf_{k_s}''>0 \quad\text{on } I_s .
\end{equation}
At $y_s$, the choices \eqref{eq:Crec}--\eqref{eq:Lrec} give
$$
\hat{f}(y_s^+)=\hat{f}(y_s),\qquad
\hat{f}'(y_s^+)=\hat{f}'(y_s),\qquad
\hat{f}''(y_s^+)=C_sf_{k_s}''(y_s)=\hat{f}''(y_s),
$$
the last equality by \eqref{eq:Crec}. Hence $\hat{f}\in C^{2}([0,+\infty))$ and $\hat{f}''>0$ on
$(0,+\infty)$. For the third derivative, using $g'''=g''\cdot Eg''/y$ and \eqref{eq:ElastMatch},
$$
\hat{f}'''(y_s^-)=C_{s-1}f_{k_{s-1}}'''(y_s)
=\hat{f}''(y_s)\frac{Ef_{k_{s-1}}''(y_s)}{y_s}
=\hat{f}''(y_s)\frac{E^{*}(y_s)}{y_s}
=\hat{f}'''(y_s^+),
$$
so $\hat{f}\in C^{3}([0,+\infty))$. Finally $\hat{f}=f_{k_0}$ on $[0,y_1]$ yields
$\hat{f}(0)=\hat{f}'(0)=0$, and $\hat{f}''>0$ together with $\hat{f}'(0)=0$ yields $\hat{f}'>0$ on
$(0,+\infty)$. 

\textbf{Elasticity.}
By \eqref{eq:hatfpp}, on the interior of each $I_s$,
$$
E\hat{f}''(y)=\frac{y\,\hat{f}'''(y)}{\hat{f}''(y)}
=\frac{y\,C_sf_{k_s}'''(y)}{C_sf_{k_s}''(y)}
=Ef_{k_s}''(y)=E^{*}(y),
$$
the multiplicative constant cancelling. At $y=y_s$ the regularity and boundary conditions give $E\hat{f}''(y_s)=E^{*}(y_s)$. Hence $E\hat{f}''=E^{*}$ on $(0,+\infty)$. Since$\tau-2\leq Ef_k''\leq\sigma-2$ for every $k$, the same bounds hold for $E^{*}$ and therefore for $E\hat{f}''$. Consequently, $\hat{f}$ is strongly $(\tau,\sigma)$-elastic, which proves (i).

\textbf{Safe envelope.}
Fix $f\in\mathcal{F}$ and set $h=\hat{f}''/f''$ on $(0,+\infty)$. By (i) and Step 0, $h$ is positive
and, since $\hat{f}\in C^{3}$ and $f$ is smooth, $h\in C^{1}((0,+\infty))$ with
$$
\bigl(\ln h\bigr)'(y)
=\frac{\hat{f}'''(y)}{\hat{f}''(y)}-\frac{f'''(y)}{f''(y)}
=\frac{E\hat{f}''(y)-Ef''(y)}{y}
=\frac{E^{*}(y)-Ef''(y)}{y}\ \geq\ 0 ,
$$
the last inequality by definition of $E^{*}$. Hence $\ln h$, and therefore $h$, is non-decreasing on
$(0,+\infty)$.

Now fix $\phi\geq y>0$ and put $v=y/\phi\in(0,1]$. For every $u\in[v,1]$ we have $u\phi\leq\phi$, so
$h(u\phi)\leq h(\phi)$, i.e.
$$
\frac{\hat{f}''(u\phi)}{\hat{f}''(\phi)}\ \leq\ \frac{f''(u\phi)}{f''(\phi)} .
$$
Integrating over $u\in[v,1]$ and using
$\int_v^1 g''(u\phi)\Id u=\bigl(g'(\phi)-g'(v\phi)\bigr)/\phi$ for $g\in\{\hat{f},f\}$ gives
$$
\frac{\hat{f}'(\phi)-\hat{f}'(y)}{\phi\,\hat{f}''(\phi)}
\ \leq\
\frac{f'(\phi)-f'(y)}{\phi\,f''(\phi)},
$$
which is exactly $F_{\hat{f}}(\phi,y)\leq F_f(\phi,y)$. This proves (ii).

\textbf{Uniqueness.}
Let $\tilde{C}$ be piecewise constant and $\tilde{L}$ piecewise linear with switching points in
$\mathcal{Y}$, and suppose $\tilde{f}:=\tilde{C}f_{k^{*}}+\tilde{L}$ is twice continuously
differentiable. On $I_s$ we have $\tilde{f}''=\tilde{C}_sf_{k_s}''$, so continuity of $\tilde{f}''$ at
$y_s$ forces $\tilde{C}_sf_{k_s}''(y_s)=\tilde{C}_{s-1}f_{k_{s-1}}''(y_s)$, i.e. the recursion
\eqref{eq:Crec}; continuity of $\tilde{f}'$ and of $\tilde{f}$ at $y_s$ then force $a_s$ and $b_s$ as
in \eqref{eq:Lrec}. Hence $(\tilde{C},\tilde{L})$ is determined by its values on $[0,y_1]$. If
moreover $\tilde{f}(0)=\tilde{f}'(0)=0$, then writing $\tilde{L}(y)=cy+d$ on $[0,y_1]$ and using
$f_{k_0}(0)=f_{k_0}'(0)=0$ gives $d=0$ and $c=0$; thus $\tilde{f}=\lambda\hat{f}$ with
$\lambda=\tilde{C}_0>0$, and $\tilde{f}=\hat{f}$ under the normalisation $\tilde{C}_0=1$. 
\end{proof}

\begin{proof}[Proof of Theorem~\ref{Thm:UniDSElastic}]
    By the construction of $\hat{f}$ in Proposition \ref{Prop:f_hat_finite}, we have for any $f\in \mathcal{F}$, 
    $$
    \frac{\hat{f}'(\phi) - \hat{f}'(y)}{\phi \hat{f}''(\phi)} \leq \frac{f'(\phi) - f'(y)}{\phi f''(\phi)}.
    $$
    Therefore, for any $\hat{\phi}$ satisfying the sufficient condition of Eq.\eqref{eq:SufNec_phi} corresponding to $\hat{f}$, we have 
    \begin{align*}
        \hat{\phi}'(y) = \asigmas\cdot \frac{\hat{f}'(\hat{\phi}) - \hat{f}'(y)}{\hat{\phi} \hat{f}''(\hat{\phi})} \leq \asigmas\cdot \frac{f'(\hat{\phi}) - f'(y)}{\hat{\phi} f''(\hat{\phi})}.
    \end{align*}
    
    Additionally, the lower-bound solution to Eq.\eqref{eq:SufNec_phi} corresponding to $\hat{f}$, denoted by $\hat{\phi}_{\mathrm{lb}}$, must be lower bounded by the lower-bound solution to Eq.\eqref{eq:SufNec_phi} corresponding to $f$, denoted by $\phi_{\mathrm{lb}}$. Otherwise, by the ODE comparison theorem (see Comparison Theorem  \cite{Walter1998}, p.~90) and Lemma~\ref{Lem:LowerBoundforFeasibleSol}, we have $\hat{\phi}_{\mathrm{lb}}(y) \leq y$ for sufficiently large $y>0$, which contradicts that $\hat{\phi}_{\mathrm{lb}}$ is a feasible solution to Eq.\eqref{eq:SufNec_phi}. Meanwhile, the upper-bound solution to Eq.\eqref{eq:SufNec_phi} corresponding to $\hat{f}$, denoted by $\hat{\phi}_{\mathrm{ub}}$, must be upper bounded by the upper-bound solution to Eq.\eqref{eq:SufNec_phi} corresponding to $f$, denoted by $\phi_{\mathrm{ub}}$. This is because, again by the ODE comparison theorem and the definition of $\phi_{\mathrm{ub}}$ (see Definition~\ref{Def:UpperLowerBoundSol}), $\hat{\phi}_{\mathrm{ub}}$ is upper bounded by any solution to Eq.\eqref{eq:mainODE} corresponding to $f$ that upper bounds $\phi_{\mathrm{ub}}$, and thus $\hat{\phi}_{\mathrm{ub}} \leq \phi_{\mathrm{ub}}$. Finally, by Proposition~\ref{Prop:SolSpace}, we complete the proof.
    
\end{proof}

\subsection{$\mathcal{F}$ is an Infinite Set of $(\tau,\sigma)$-Polynomial Functions}

\subsubsection{Preliminaries} \label{Apx:Non-decreasing Elasticity}

We first show that $(\tau,\sigma)$-polynomials have non-decreasing elasticity; a direct computation of their elasticity bounds then establishes the strong $(\tau,\sigma)$-elasticity.

\begin{proposition} \label{Prop:ts_poly_nondecreasing}
    The $(\tau,\sigma)$-polynomials defined in Definition \ref{def:tau_sigma_convex} have non-decreasing elasticity. 
\end{proposition}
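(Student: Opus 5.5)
We need to show that for $f(y)=\sum_{k=\tau}^{\sigma} c_k y^k$ with $c_k\ge 0$ and $c_\tau,c_\sigma>0$, the elasticity $Ef'(y)=y f''(y)/f'(y)$ is non-decreasing on $(0,\infty)$. The plan is to express the elasticity as a weighted average of the exponents and show the weights shift toward larger exponents as $y$ grows.

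\textbf{Step 1: rewrite $Ef'$ as a mean.} Set $a_k := k c_k \ge 0$. Then
$$f'(y)=\sum_k a_k y^{k-1}, \qquad y f''(y)=\sum_k (k-1) a_k y^{k-1}.$$
Define the probability weights $p_k(y) := a_k y^{k-1}/\sum_j a_j y^{j-1}$, which are well defined for $y>0$. With these,
$$Ef'(y)=\sum_k (k-1)\,p_k(y),$$
so $Ef'(y)$ is the mean of $K-1$ when the random exponent $K$ has law $p(y)$.

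\textbf{Step 2: differentiate.} Using $\frac{d}{dy}\big(a_k y^{k-1}\big)=(k-1)a_k y^{k-1}/y$, we get the standard exponential-family identity
$$\frac{d}{dy}Ef'(y)=\frac{1}{y}\Big(\sum_k (k-1)^2 p_k - \big(\textstyle\sum_k (k-1)p_k\big)^2\Big)=\frac{1}{y}\operatorname{Var}_{p(y)}(K-1)\ \ge 0.$$
Equivalently, $\frac{d}{dy}Ef' = \big(y f'f'' + y^2 f'f''' - y^2 (f'')^2\big)/(y (f')^2)$. Expanding the numerator as a double sum over pairs $(i,j)$ and symmetrizing gives
$$\tfrac12\sum_{i,j} a_i a_j y^{i+j-2}(i-j)^2 \ \ge 0.$$
Either form yields monotonicity, with strict increase whenever at least two coefficients are nonzero.

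\textbf{Step 3 (optional remark on strong elasticity).} The same argument applied to $f''=\sum_k k(k-1)c_k y^{k-2}$ shows that $Ef''$ is a $p$-weighted average of $k-2$ over $k\in[\tau,\sigma]$. Hence $\tau-2\le Ef''\le\sigma-2$, and similarly $\tau-1\le Ef'\le\sigma-1$. This establishes the strong $(\tau,\sigma)$-elasticity claimed earlier.

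\textbf{Main obstacle.} The only delicate point is the case $\tau=1$, where the $k=1$ term contributes $a_1 y^0$ with weight $k-1=0$. This term still fits the weighted-mean framework, and the variance argument is unaffected, so the proof carries through.
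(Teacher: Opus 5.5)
Your proof is correct and is essentially the paper's argument: the paper also reduces the sign of the derivative to the Cauchy--Schwarz inequality $\bigl(\sum_k k^2 w_k\bigr)\bigl(\sum_k w_k\bigr)\ge\bigl(\sum_k k\,w_k\bigr)^2$ with nonnegative weights $w_k$, which is exactly your variance (equivalently, Lagrange-identity) bound. One point in your favor is that the paper's written computation is for $yf'(y)/f(y)$, so it implicitly has to be reapplied to $f'$ (also a nonnegative-coefficient polynomial) to reach $Ef'=yf''/f'$, whereas you work with $Ef'$ directly.
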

\begin{proof}
We are given $f(y) = \sum_{k=\tau}^\sigma c_k y^k$ where $c_k \geq 0$ for all $k$. We want to show that the elasticity function $Ef'(y) = \frac{y f'(y)}{f(y)}$ is non-decreasing for $y > 0$, meaning its derivative 
$$
\frac{\mathrm{d} Ef'}{\mathrm{d}y}(y) = \frac{(f'(y) + y f''(y))f(y) - y (f'(y))^2}{f(y)^2} \geq 0. 
$$
By multiplying the numerator by $y$, we only need to prove that
\begin{equation} \label{eq:y_times_num}
    \left(y f'(y) + y^2 f''(y)\right)f(y) - \left(y f'(y)\right)^2 \geq 0.
\end{equation} 
by substitute the polynomial expansions: $f(y) = \sum_{k=\tau}^{\sigma} c_k y^k$, $y f'(y) = \sum_{k=\tau}^{\sigma} k c_k y^k$, and $y^2 f''(y) = \sum_{k=\tau}^{\sigma} k(k-1) c_k y^k$, Eq. \eqref{eq:y_times_num} can be written as
$$
\left(\sum_{k=\tau}^{\sigma} k^2 c_k y^k\right)\left(\sum_{k=\tau}^{\sigma} a_k y^k\right) - \left(\sum_{k=\tau}^{\sigma} k c_k y^k\right)^2 \geq 0,
$$
which holds by the Cauchy-Schwarz Inequality (setting $u_k = k\sqrt{c_k y^k}$ and $v_k = \sqrt{c_k y^k}$). Therefore, $Ef'(y)$ is non-decreasing for $y > 0$.
\end{proof}

For the rest of this section we write $f(y)=\sum_{k=\tau}^{\sigma}c_ky^{k}$ with
$c_k\ge0$ and $c_\tau,c_\sigma>0$, so that the $c_k$ are exactly the coefficients whose
supports $[\ubar{c}_k,\bar{c}_k]$ appear in Proposition~\ref{Prop:tausigmaPolyUniDS}
and Theorem~\ref{Thm:tausigmaPolyUniDS}. For such an $f$ and any $\phi>y\ge0$,
\begin{equation}\label{eq:F_poly}
F(\phi,y)=\frac{f'(\phi)-f'(y)}{\phi f''(\phi)}
=\frac{\sum_{k=\tau}^{\sigma}kc_k\bigl(\phi^{k-1}-y^{k-1}\bigr)}
       {\sum_{k=\tau}^{\sigma}k(k-1)c_k\phi^{k-1}} .
\end{equation}
We abbreviate, for integers $k\ge1$ and $\tau\leq d\leq\sigma$,
\begin{equation}\label{eq:h_Delta}
h_k(\phi,y)=\frac{\phi^{k}-y^{k}}{k\phi^{k}},
\quad \text{ and } \quad
\Delta_d(f;\phi,y)=F(\phi,y)-h_{d-1}(\phi,y).
\end{equation}
Because several are compared below, we carry the cost function as an argument, and write $F_g$ for the function \eqref{eq:F_poly} associated with a cost function $g$. 

\subsubsection{Technical Lemmas}

\begin{lemma} \label{Lem:TermMono}
Let $\phi>y\ge0$ and let $h_k$ be as in Eq.~\eqref{eq:h_Delta}.
\begin{enumerate}
\item[(i)] $h_k(\phi,y)$ is strictly decreasing with respect to $k$ for $k\ge1$.
\item[(ii)] For any $(\tau,\sigma)$-polynomial $g(y)=\sum_{k=\tau}^{\sigma}c_ky^{k}$ and any integer $m\ge1$, let $\Delta_{m+1}(g;\phi,y) = F_g(\phi,y)-h_m(\phi,y)$. Then we have
\begin{equation}\label{eq:ConvexRep}
\Delta_{m+1}(g;\phi,y) 
=\frac{\sum_{k=\tau}^{\sigma}k(k-1)c_k\phi^{k-1}
        \bigl(h_{k-1}(\phi,y)-h_m(\phi,y)\bigr)}
      {\sum_{k=\tau}^{\sigma}k(k-1)c_k\phi^{k-1}} .
\end{equation}
In particular, if $\tau<\sigma$ and $c_\tau>0$, then
$$
\Delta_\tau(g;\phi,y)\le0
\qquad\text{and}\qquad
\Delta_\sigma(g;\phi,y)>0 .
$$
\end{enumerate}
\end{lemma}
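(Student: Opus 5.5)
For (i), I would fix $\phi>y\ge 0$, set $r=y/\phi\in[0,1)$, and write $h_k(\phi,y)=\frac{1-r^{k}}{k}$. Monotonicity in $k$ then follows from the identity $\frac{1-r^{k}}{k}=\frac{1}{k}\int_0^1\frac{d}{ds}\bigl(-(1-s(1-r))^{k}\bigr)\,ds$. A cleaner route is to observe that $\frac{1-r^k}{k}=\frac{1-r}{k}\sum_{j=0}^{k-1}r^j$, so $h_k=(1-r)\cdot A_k$, where $A_k$ is the average of $1,r,\dots,r^{k-1}$. Since these terms are strictly decreasing in $j$ whenever $0\le r<1$, appending the smaller term $r^{k}$ strictly lowers the average. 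Hence $A_{k+1}<A_k$, and $h_k$ is strictly decreasing in $k$. The edge case $r=0$ gives $h_k=1/k$, which is also strictly decreasing.

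For (ii), I would start from Eq.~\eqref{eq:F_poly} and rewrite each numerator term as $kc_k(\phi^{k-1}-y^{k-1})=k(k-1)c_k\phi^{k-1}\cdot h_{k-1}(\phi,y)$, which is immediate from the definition of $h_{k-1}$. This shows that $F_g$ is a weighted average of the $h_{k-1}$ with nonnegative weights $w_k=k(k-1)c_k\phi^{k-1}$. Subtracting $h_m=\frac{\sum_k w_k h_m}{\sum_k w_k}$ then yields exactly \eqref{eq:ConvexRep}. The denominator is positive because $c_\sigma>0$, $\sigma\ge 2$ and $\phi>0$. The case $k=1$, where $w_1=0$, requires a remark: if $\tau=1$, then $h_0$ is undefined, but its weight vanishes and the numerator term $c_1(\phi^0-y^0)$ is $0$, so that term can simply be dropped.

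For the sign claims, I would take $m=\tau-1$ and $m=\sigma-1$ in \eqref{eq:ConvexRep}. For $\Delta_\tau$, each index $k\ge\tau$ has $k-1\ge\tau-1$, so by (i) every bracket satisfies $h_{k-1}-h_{\tau-1}\le 0$; with nonnegative weights this gives $\Delta_\tau\le 0$. For $\Delta_\sigma$, each $k\le\sigma$ gives $h_{k-1}-h_{\sigma-1}\ge0$, with strict inequality for $k=\tau<\sigma$ by (i). That term carries the positive weight $w_\tau$ whenever $\tau\ge 2$, so $\Delta_\sigma>0$.

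The main obstacle is the degenerate case $\tau=1$, where $w_\tau=0$. Here the term of index $\tau$ contributes nothing, and strictness must instead come from some $k$ with $2\le k<\sigma$ and $c_k>0$, which need not exist (for example $g=c_1y+c_\sigma y^\sigma$ gives $\Delta_\sigma=0$). I would therefore state the strict inequality under the paper's standing convention $\tau>1$ (the "w.l.o.g. $\tau>1$" used throughout), noting that for $\tau=1$ only the weak inequality is guaranteed.
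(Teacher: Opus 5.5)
Your proof is correct and is essentially the paper's. For (ii) the paper uses the same weighted-average representation of $F_g$ with weights $k(k-1)c_k\phi^{k-1}$. For (i) it writes $h_k=\int_t^1 u^{k-1}\,du$ with $t=y/\phi$ rather than averaging $1,r,\dots,r^{k-1}$; the integral version has the minor advantage of also covering non-integer $k$.

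Your $\tau=1$ caveat is right and points to an oversight in the paper. The paper asserts the $k=\tau$ term in the sum is strictly positive, but its weight vanishes when $\tau=1$; for example $g=y+y^3$ gives $\Delta_3\equiv 0$. The strict inequality therefore needs $\tau\ge 2$. This is implicit in the paper's normalization $f'(0)=0$ and in the requirement $m\ge1$ when $\Delta_\tau$ is obtained with $m=\tau-1$.
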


\begin{proof}
(i) Writing $t=y/\phi\in[0,1)$ we have
$$
h_k(\phi,y)=\frac{1-t^{k}}{k}=\int_{t}^{1}u^{k-1}\,\mathrm{d}u .
$$
For every $u\in(t,1)$, $u^{k-1}$ is strictly decreasing in
$k$; hence
$h_k(\phi,y)$ is strictly decreasing in $k$.

(ii) Since $kc_k(\phi^{k-1}-y^{k-1})=k(k-1)c_k\phi^{k-1}h_{k-1}(\phi,y)$, the numerator
of Eq.~\eqref{eq:F_poly} equals $\sum_{k}k(k-1)c_k\phi^{k-1}h_{k-1}(\phi,y)$. Thus
$F_g(\phi,y)$ is the weighted average of $h_{\tau-1},\dots,h_{\sigma-1}$ with the
nonnegative weights $k(k-1)c_k\phi^{k-1}$, whose sum is positive because
$c_\sigma>0$; subtracting $h_m$ from both sides gives Eq.~\eqref{eq:ConvexRep}.

When $m=\tau-1$, by the monotonicity $h_{k-1}-h_{\tau-1}$ in
Eq.~\eqref{eq:ConvexRep} is non-positive, so $\Delta_\tau(g)\le0$. When $m=\sigma-1$, again by the monotonicity $h_{k-1}-h_{\sigma-1}$ is non-negative, and the one for $k=\tau$ is strictly positive, so
$\Delta_\sigma(g)>0$.
\end{proof}

\begin{lemma} \label{Lem:OneCoeff}
Fix $\phi>y\ge0$ and an integer $d$ with $\tau\leq d\leq\sigma$. Let $g$ and $\hat g$ be $(\tau,\sigma)$-polynomials differing only in the coefficient of $y^{d}$, say $\hat c_d=c_d-\delta$. Then
\begin{enumerate}
\item[(i)] $F_{\hat g}(\phi,y)\leq F_{g}(\phi,y)
\iff \delta\cdot\Delta_d(g;\phi,y)\le0$;
\item[(ii)] $\Delta_d(\hat g;\phi,y)$ and $\Delta_d(g;\phi,y)$ have the same sign, and
vanish together.
\end{enumerate}
\end{lemma}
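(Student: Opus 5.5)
The plan is to use the weighted-average representation from the proof of Lemma~\ref{Lem:TermMono}(ii). Write $N_g=\sum_{k}kc_k(\phi^{k-1}-y^{k-1})$ and $D_g=\sum_k k(k-1)c_k\phi^{k-1}>0$, so that $F_g=N_g/D_g$. As noted there, each term satisfies $kc_k(\phi^{k-1}-y^{k-1})=k(k-1)c_k\phi^{k-1}h_{k-1}(\phi,y)$. Set $w_d:=d(d-1)\phi^{d-1}>0$; here $d\ge\tau\ge 2$ is implicit, and if $d=1$ then $w_d=0$, both sides vanish, and the claims are trivial. Lowering $c_d$ by $\delta$ changes both pieces linearly:
\[
N_{\hat g}=N_g-\delta\, w_d h_{d-1},\qquad D_{\hat g}=D_g-\delta\, w_d .
\]
We assume $\hat g$ is still a $(\tau,\sigma)$-polynomial, so $D_{\hat g}>0$.

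For (i), compute the difference directly:
\[
F_{\hat g}-F_g=\frac{N_g-\delta w_d h_{d-1}}{D_g-\delta w_d}-\frac{N_g}{D_g}
=\frac{\delta w_d\,(N_g-h_{d-1}D_g)}{D_g D_{\hat g}}
=\frac{\delta w_d\,\Delta_d(g;\phi,y)}{D_{\hat g}} ,
\]
using $N_g-h_{d-1}D_g=D_g\,\Delta_d(g;\phi,y)$. Since $w_d>0$ and $D_{\hat g}>0$, the sign of $F_{\hat g}-F_g$ equals the sign of $\delta\,\Delta_d(g)$. This gives (i), and in fact a slightly stronger identity.

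For (ii), use the same algebra with $h_{d-1}$ subtracted:
\[
D_{\hat g}\Delta_d(\hat g)=N_{\hat g}-h_{d-1}D_{\hat g}=N_g-h_{d-1}D_g-\delta w_d h_{d-1}+\delta w_d h_{d-1}=D_g\Delta_d(g).
\]
Hence $\Delta_d(\hat g)=\frac{D_g}{D_{\hat g}}\Delta_d(g)$ with a positive factor, so the two share a sign and vanish together. Equivalently, in \eqref{eq:ConvexRep} with $m=d-1$, the $k=d$ term is zero, so changing $c_d$ alters only the denominator.

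The only step needing care is positivity of $D_{\hat g}$. It holds because $\hat g$ has nonnegative coefficients with $\hat c_\sigma>0$ (or $\hat c_\tau>0$ when $d=\sigma$), so there is no real obstacle. The computation is otherwise routine.
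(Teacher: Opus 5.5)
Your proof is correct. Part (i) is essentially the paper's computation: the paper cross-multiplies $(A-\delta a)/(B-\delta b)\le A/B$ to obtain $\delta(bA-aB)\le 0$. You compute the difference exactly and get the slightly sharper identity $F_{\hat g}-F_g=\delta w_d\,\Delta_d(g;\phi,y)/D_{\hat g}$.

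Part (ii) is where you take a genuinely different route. The paper uses the mediant inequality and splits on the sign of $\delta$:
\begin{itemize}
\item for $\delta\le 0$, $F_{\hat g}$ is the mediant of $F_g$ and $h_{d-1}$;
\item for $\delta>0$, $F_g$ is the mediant of $F_{\hat g}$ and $h_{d-1}$;
\item simultaneous vanishing comes from the equality case of the mediant inequality.
\end{itemize}
You instead observe that $D\,\Delta_d=N-h_{d-1}D$ does not depend on $c_d$; equivalently, the $k=d$ term of \eqref{eq:ConvexRep} with $m=d-1$ is zero. This gives $\Delta_d(\hat g;\phi,y)=(D_g/D_{\hat g})\,\Delta_d(g;\phi,y)$ in one line, with no case analysis.

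Your identity also contains the paper's extra information. The factor $D_g/D_{\hat g}$ is $\le 1$ when $\delta\le 0$ and $>1$ when $\delta>0$. So it recovers the betweenness statements (for instance, $F_{\hat g}$ lies between $F_g$ and $h_{d-1}$ when $\delta\le 0$) as a corollary. The mediant viewpoint mainly adds the geometric picture of $F_{\hat g}$ moving toward or away from $h_{d-1}$.

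One small correction to your aside: for $d=1$ the statement is not ``trivial'' but ill-posed, because $h_0$ is undefined. With the limiting value $h_0=\ln(\phi/y)$ one gets $\Delta_1<0$, so (i) would actually fail for $\delta<0$ even though $F$ is unchanged. This does not affect your argument, since $f'(0)=0$ forces $\tau\ge 2$. The paper's proof also implicitly needs $d\ge 2$ when it asserts $b=d(d-1)\phi^{d-1}>0$.
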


\begin{proof}
Put $A=\sum_{k}kc_k(\phi^{k-1}-y^{k-1})$, $B=\sum_{k}k(k-1)c_k\phi^{k-1}$, so
$F_g=A/B$ with $B>0$, and $a=d(\phi^{d-1}-y^{d-1})\ge0$, $b=d(d-1)\phi^{d-1}>0$, so
that $a/b=h_{d-1}(\phi,y)$. Then $F_{\hat g}=(A-\delta a)/(B-\delta b)$, and
$B-\delta b>0$ because $\hat g$ also has nonnegative coefficients with
$\hat c_\sigma>0$.
Consequently, $F_{\hat g}\leq F_g$ is equivalent to $\delta\,(bA-aB)\le0$, and dividing by $bB>0$ to
$\delta\bigl(A/B-a/b\bigr)=\delta\cdot\Delta_d(g;\phi,y)\le0$.

Recall the mediant inequality: for $q,s>0$, the fraction $\frac{p+r}{q+s}$ lies in
the closed interval with endpoints $\frac pq$ and $\frac rs$, and equals an endpoint
only if $\frac pq=\frac rs$. If $\delta\le0$, then
$F_{\hat g}=\frac{A+|\delta|a}{B+|\delta|b}$ is the mediant of $F_g$ and $h_{d-1}$,
hence lies between them, so $F_{\hat g}-h_{d-1}$ has the same sign as $F_g-h_{d-1}$.
If $\delta>0$, then
$F_{g}=\frac{(A-\delta a)+\delta a}{(B-\delta b)+\delta b}$ is the mediant of
$F_{\hat g}$ and $h_{d-1}$, so $F_g$ lies between $F_{\hat g}$ and $h_{d-1}$; if
$F_g>h_{d-1}$ this forces $F_{\hat g}\geq F_g>h_{d-1}$, if $F_g<h_{d-1}$ it forces
$F_{\hat g}\leq F_g<h_{d-1}$, and if $F_g=h_{d-1}$ the equality case of the mediant
inequality forces $F_{\hat g}=h_{d-1}$.
\end{proof}

\subsubsection{Main Proofs}

\begin{proof}[Proof of Proposition~\ref{Prop:tausigmaPolyUniDS}]
As a special case, if $\tau=\sigma$ the cost function reduces to a power cost function
and $F(\phi,y)$ is independent of the coefficient, so any estimate of $c_\tau$ in
$[\ubar{c}_\tau,\bar{c}_\tau]$ is safe. We therefore assume $\tau<\sigma$. Throughout,
``safe'' is used in the pointwise sense $\hat F(\phi,y)\leq F(\phi,y)$ for all
$\phi>y\ge0$.

Fix an integer $d$ with $\tau\leq d\leq\sigma$, let $c_\tau,\dots,c_\sigma$ be the
realized coefficients, let $\hat c_d=c_d-\delta$ be the estimate of $c_d$ with the
remaining coefficients estimated exactly, and let $\hat f$ be the resulting cost
function; thus $c_d-\bar{c}_d\leq\delta\leq c_d-\ubar{c}_d$. By
Lemma~\ref{Lem:OneCoeff}(i), $\hat c_d$ is safe if and only if
\begin{equation} \label{Ineq:EquivIneq}
\delta\cdot\Delta_d(f;\phi,y)\le0,\qquad\forall\,\phi>y\ge0 .
\end{equation}

\emph{Case $d=\tau$.} By Lemma~\ref{Lem:TermMono}(ii), $\Delta_\tau(f;\phi,y)\le0$ for
all $\phi>y\ge0$, so Eq.~\eqref{Ineq:EquivIneq} holds whenever $\delta\ge0$, i.e.\
whenever $\hat c_\tau\leq c_\tau$. Since $c_\tau$ is known only to lie in
$[\ubar{c}_\tau,\bar{c}_\tau]$, the estimate $\hat c_\tau=\ubar{c}_\tau$ is safe.

\emph{Case $d=\sigma$.} By Lemma~\ref{Lem:TermMono}(ii), $\Delta_\sigma(f;\phi,y)>0$
for all $\phi>y\ge0$, so Eq.~\eqref{Ineq:EquivIneq} holds if and only if $\delta\le0$,
i.e.\ $\hat c_\sigma\geq c_\sigma$, and the estimate $\hat c_\sigma=\bar{c}_\sigma$ is
safe.

Moreover, the two estimates are jointly safe. That is, let $g$ be obtained from $f$ by replacing $c_\tau$ with $\ubar{c}_\tau$, and $g'$ from $g$ by replacing $c_\sigma$ with $\bar{c}_\sigma$. Combining the previous two cases shows $F_{g'}\leq F_f$ pointwise.

\emph{Case $\tau<d<\sigma$.} Here $\Delta_d(f;\cdot,\cdot)$ takes both signs. Fix
$t\in(0,1)$ and put $y=t\phi$; then $h_k(\phi,t\phi)=(1-t^{k})/k$ does not depend on
$\phi$, whereas in Eq.~\eqref{eq:ConvexRep} the weights $k(k-1)c_k\phi^{k-1}$
concentrate on $k=\tau$ as $\phi\to0^{+}$ and on $k=\sigma$ as $\phi\to+\infty$.
Therefore, by Lemma~\ref{Lem:TermMono}(i),
$$
\lim_{\phi\to0^{+}}\Delta_d(f;\phi,t\phi)=h_{\tau-1}-h_{d-1}>0,
\qquad
\lim_{\phi\to+\infty}\Delta_d(f;\phi,t\phi)=h_{\sigma-1}-h_{d-1}<0.
$$
Consequently Eq.~\eqref{Ineq:EquivIneq} forces $\delta=0$, i.e.\ $\hat c_d=c_d$. If the support $[\ubar{c}_d,\bar{c}_d]$ is non-degenerate this cannot be guaranteed and no safe estimate of $c_d$ exists; if it is degenerate then $\hat c_d=c_d$ is forced and safe.

Combining the cases completes the proof.
\end{proof}

\begin{proof}[Proof of Theorem~\ref{Thm:tausigmaPolyUniDS}]
If $\tau=\sigma$ the family $\mathcal{F}$ consists of positive multiples of $y^{\sigma}$,
for which $F$ is the same function, and the claim is trivial; assume $\tau<\sigma$.
Since $\mathcal{G}$ is a finite family of $(\tau,\sigma)$-polynomials, hence strongly $(\tau,\sigma)$-elastic costs, following the same analysis as in the proof of Theorem~\ref{Thm:UniDSElastic}, it suffices to prove that for every $f\in\mathcal{F}$
\begin{equation}\label{Ineq:GoalIneq}
\min_{\tau\leq d\leq\sigma-1}F_{f^{(d)}}(\phi,y)\;\leq\;F_{f}(\phi,y),
\qquad\forall\,\phi\geq y\ge0 .
\end{equation}
Granting Eq.~\eqref{Ineq:GoalIneq}, Proposition~\ref{Prop:f_hat_finite} applied
to $\mathcal{G}$ shows that $\hat f$ is well defined and strongly $(\tau,\sigma)$-elastic with
$E\hat f''\geq Ef^{(d)\prime\prime}$ for each $d$. Thus,
Lemma~\ref{Lem:ElasticComp} gives
$\hat F\leq\min_{d}F_{f^{(d)}}\leq F_f$ for every $f\in\mathcal{F}$. Then Theorem~\ref{Thm:UniDSElastic} yields $\mathcal{S}(\alpha,\hat f)\subseteq\mathcal{S}(\alpha,f)$ for all $f\in\mathcal{F}$.

It remains to prove Eq.~\eqref{Ineq:GoalIneq}. For $\phi=y$ both sides vanish, so fix
$\phi>y\ge0$; all quantities below are evaluated at this $(\phi,y)$ and we suppress it
from the notation. Let $c=(c_\tau,\dots,c_\sigma)$ be the coefficient vector of $f$.
We construct a chain of cost functions, each obtained from the previous one by
modifying a single coefficient, along which $F$ does not increase, and its terminus nevertheless always lies in $\mathcal{G}$.

\textbf{The chain.} Set $g_{\tau-1}=f$ and, for $d=\tau,\tau+1,\dots,\sigma$,
let $g_{d}$ be obtained from $g_{d-1}$ by replacing the coefficient of $y^{d}$ by
$$
\hat c_d=
\begin{cases}
\ubar{c}_d, & \text{if }\Delta_d(g_{d-1})\le0
 \qquad(\text{a \emph{low} move}),\\[3pt]
\bar{c}_d,  & \text{if }\Delta_d(g_{d-1})>0
 \qquad(\text{a \emph{high} move}).
\end{cases}
$$
Every $g_d$ is a $(\tau,\sigma)$-polynomial, and $g_d$ differs
from $g_{d-1}$ in exactly one coefficient; so Lemmas~\ref{Lem:TermMono}(ii) and
\ref{Lem:OneCoeff} apply at every step. Note also that in $g_{d-1}$ the coefficients
$c_d,\dots,c_\sigma$ are still those of $f$, so that
$\delta:=c_d-\hat c_d$ satisfies $\delta=c_d-\ubar{c}_d\ge0$ for a low move and
$\delta=c_d-\bar{c}_d\le0$ for a high move.

In either case $\delta\cdot\Delta_d(g_{d-1})\le0$, so Lemma~\ref{Lem:OneCoeff}(i) gives $F_{g_d}\leq F_{g_{d-1}}$. Chaining over $d=\tau,\dots,\sigma$,
\begin{equation}\label{eq:chain_desc}
F_{g_\sigma}\;\leq\;F_{g_{\sigma-1}}\;\leq\;\cdots\;\leq\;F_{g_{\tau-1}}=F_{f}.
\end{equation}
In particular, by Lemma~\ref{Lem:TermMono}(ii), the move at $d=\tau$ is low, and the move at $d=\sigma$ is high. 

Suppose the move at some index $d$ with $\tau\leq d\leq\sigma-1$ is high, i.e.\ $\Delta_d(g_{d-1})>0$. Since $g_d$
differs from $g_{d-1}$ only in the coefficient of $y^{d}$,
Lemma~\ref{Lem:OneCoeff}(ii) gives $\Delta_d(g_{d})>0$, that is,
$F_{g_d}>h_{d-1}$. By Lemma~\ref{Lem:TermMono}(i), $h_{d-1}>h_{d}$, we have
$$
\Delta_{d+1}(g_{d})=F_{g_d}-h_{d}>0 ,
$$
so the move at $d+1$ is high as well. By induction on $d$, once a high move occurs all
subsequent moves are high.

Consequently, there is an index $d^{\ast}$ with
$\tau\leq d^{\ast}\leq\sigma-1$ such that the moves at $\tau,\dots,d^{\ast}$ are low and
the moves at $d^{\ast}+1,\dots,\sigma$ are high. Hence the coefficients of $g_\sigma$
are $\ubar{c}_k$ for $k\leq d^{\ast}$ and $\bar{c}_k$ for $k>d^{\ast}$, i.e.\
$g_\sigma=f^{(d^{\ast})}$. Combining with Eq.~\eqref{eq:chain_desc},
$$
\min_{\tau\leq d\leq\sigma-1}F_{f^{(d)}}
\;\leq\;F_{f^{(d^{\ast})}}
\;=\;F_{g_\sigma}
\;\leq\;F_{f},
$$
which is Eq.~\eqref{Ineq:GoalIneq}. This completes the proof.
\end{proof}

\section{Proof of Theorem \ref{Thm:AverageCRforMK}} \label{Apx:AverageCRforMK}

We assume w.l.o.g. that $\sigma_j$ is increasing for $j\in[m]$.
By Algorithm~\ref{alg:PRM_MKwC}, we have 
\begin{equation} \label{Eq:EqualLevel}
    f_j'\left(\phi_j(y_j)\right) = f_{j+1}'\left(\phi_{j+1}(y_{j+1})\right).
\end{equation}
For convenience, we simplify $\phi_j(y_j)$ by $\phi_j$ for each $j\in [m]$.
By Eq.~\eqref{Eq:EqualLevel}, we have, for each $j$,
\begin{align*}
    & F_{j}\left(\phi_{j},y_j\right) \geq F_{j+1}\left(\phi_{j+1},y_{j+1}\right) \\
    \quad\Leftrightarrow & \frac{Ef_{j+1}'\left(\phi_{j+1}\right)}{Ef_j'\left(\phi_{j}\right)} \geq \frac{f_{j+1}'\left(\phi_{j+1}\right) - f_{j+1}'\left(y_{j+1}\right)}{f_j'\left(\phi_j\right) - f_j'\left(y_j\right)} \\
    \quad\Leftrightarrow & \frac{Ef_{j+1}'\left(\phi_{j+1}\right)}{Ef_j'\left(\phi_{j}\right)} \geq \frac{1 - f_{j+1}'\left(y_{j+1}\right)/f_{j+1}'\left(\phi_{j+1}\right)}{1 - f_j'\left(y_j\right)/f_j'\left(\phi_j\right)}.
\end{align*}
For each $j\in [m]$, since $f_j$ is strongly $(\tau,\sigma)$-elastic, we have (see Appendix~\ref{Apx:Right Asymptotic Analysis})
$$
\lim_{y_j\rightarrow\infty} \frac{f_j'(y_j)}{f_{j}'\left(\phi_j\right)} = \frac{1}{\sigma_j}.
$$
Therefore, to prove $F_{j}\left(\phi_{j},y_j\right) \geq F_{j+1}\left(\phi_{j+1},y_{j+1}\right)$, we only need to show that for sufficiently large $y>0$
$$
\frac{Ef_{j+1}'\left(\phi_{j+1}\right)}{Ef_j'\left(\phi_{j}\right)} \geq \frac{\sigma_j}{\sigma_{j+1}} \cdot \frac{\sigma_{j+1}-1}{\sigma_{j}-1}.
$$
On the other hand, recall that $Ef_j'(y)$ is bounded by $[\tau-1,\sigma-1]$ and is non-decreasing. So we have
\begin{align*}
    \lim_{y\rightarrow\infty}Ef_j'(y) = \sigma_{j}-1 \quad \forall j\in [m].
\end{align*}
Since $f_j'$ and $\phi_j$ are increasing functions, by Eq.~\eqref{Eq:EqualLevel}, one can define $y_{j}$ as a function of $y_{j+1}$ (which is well-defined almost everywhere). This implies 
\begin{align*}
    \lim_{y_{j+1}\rightarrow\infty}Ef_j'\left(\phi_j(y_j)\right) = \sigma_{j}-1 \quad \forall j\in [m-1].
\end{align*}
Therefore, we have
\begin{align*}
    \lim_{y_{j+1}\rightarrow\infty} \frac{Ef_{j+1}'\left(\phi_{j+1}\right)}{Ef_j'\left(\phi_{j}\right)} 
    &= \frac{\sigma_{j+1}-1}{\sigma_j-1} \\
    &\geq \frac{\sigma_j}{\sigma_{j+1}} \cdot \frac{\sigma_{j+1}-1}{\sigma_{j}-1},
\end{align*}
which leads to $F_{j}\left(\phi_{j},y_j\right) \geq F_{j+1}\left(\phi_{j+1},y_{j+1}\right)$.
That is, $F_j(\phi_j,y_j)$ is decreasing with respect to $j$.
Meanwhile, competitive ratio $\alpha_{\star}^{(\sigma_j)}$ is increasing as $\sigma_j$ is increasing for $j\in [m]$.
Recall Chebyshev's sum inequality \cite{Hardy1991}, 
which states that if $a_1 \leq a_2 \leq \cdots \leq a_n$ and $b_1 \geq b_2 \geq \cdots \geq b_n$, then  
\begin{align*}
\frac{1}{n} \sum_{k=1}^n a_k b_k \leq \left(\frac{1}{n} \sum_{k=1}^n a_k\right) \left(\frac{1}{n} \sum_{k=1}^n b_k\right).
\end{align*}
We then have 
$$
\begin{aligned}
    \sum_{j=1}^{m}\phi_j'(y_j) 
    &= \sum_{j=1}^{m} \alpha_{\star}^{(\sigma_j)}\cdot F_j(\phi_j,y_j) \\
    &\leq \frac{1}{m} \sum_{j\in [m]} \alpha_{\star}^{(\sigma_j)}\cdot \sum_{j=1}^{m}F_j(\phi_j,y_j), 
\end{aligned}
$$
where $y_j\geq 0$ and $j\in [m]$. 
Thus, by Theorem \ref{sufficiency_unlimited_supply_fullycon}, we complete the proof. 

\section{Proofs Related to \OACC with (Hard) Supply Constraint}

Denote Eq. \eqref{eq:eq:SufNec_phi_limited_ODE} as $\phi^{\prime}(y) = \alpha\cdot F_{\textsf{LS}}(\phi,y)$ and $ F_{\textsf{LS}} $ is given by  
$$
F_{\textsf{LS}}(\phi,y) =  \frac{f'(\phi(y))  - f'(y) }{\min\{\phi(y),1\}\cdot f''(\phi(y))},
$$
where the subscript ``\textsf{LS}" stands for ``Limited Supply".

\subsection{Solving Eq. \eqref{eq:SufNec_phi_limited} by Extending Solutions to Eq. \eqref{eq:SufNec_phi}} \label{Apx:ExtendSol}

Let $\alpha = \asigmas$ and we want to find for what $\rho$ Eq. \eqref{eq:SufNec_phi_limited} has at least one feasible solution. For convenience, in this subsection, we relax the two-point boundary condition Eq. \eqref{eq:SufNec_phi_limited_BoundaryCond} into an initial condition $\phi(0)=0$, which recovers the constraints in Eq. \eqref{eq:SufNec_phi}. By Theorem \ref{Thm:UpperBoundAch}, there are also infinitely many solutions to Eq. \eqref{eq:SufNec_phi_limited}, which can be constructed as follows.
Let $\varphi$ be a solution  to Eq. \eqref{eq:SufNec_phi} and $\varphi(y^*) = 1$. Then $\varphi$ can be extended to a solution of Eq. \eqref{eq:SufNec_phi_limited}, denoted as $\tilde{\phi}$. More precisely, let $\phi$ be the solution to the following ODE
$$
\left\{
\begin{aligned}
    &\phi^{\prime}(y) = \alpha \cdot F_{\textsf{LS}}(\phi,y), \quad \forall y\geq y^*, \\
    &\phi(y^*) = 1,
\end{aligned}
\right.
$$
and the extended solution $\tilde{\phi}$ is defined as 
\begin{equation} \label{eq:ExtendedSol}
    \tilde{\phi}(y) =\left\{
    \begin{aligned}
        &\varphi(y), \quad && 0\leq y\leq y^*, \\
        &\phi(y), \quad && y> y^*.
    \end{aligned}
    \right.
\end{equation}
As the supply is limited by 1, $\tilde{\phi}$ is $\alpha$-competitive on the interval $[0,\tilde{\phi}^{-1}(1)]$, and therefore $\tilde{\phi}$ is $\asigmas$-competitive only if the utilization level $\rho\leq \tilde{\phi}(1)$. We say $\tilde{\phi}(1)$ is the largest utilization level for $\tilde{\phi}$, i.e., for any utilization level $\rho > \tilde{\phi}(1)$, the reserve function cannot be $\alpha$-competitive. 

Furthermore, let $\varphi_1$ and $\varphi_2$ be two solutions to Eq.~\eqref{eq:SufNec_phi} with $\alpha = \asigmas$, such that $\varphi_1(y) > \varphi_2(y)$ for all $y > 0$. Denote their corresponding extended solutions as $\tilde{\phi}_1$ and $\tilde{\phi}_2$. Clearly, we have $\tilde{\phi}_1(y) > \tilde{\phi}_2(y)$ for all $y > 0$. Consequently, the largest utilization level associated with $\tilde{\phi}_1$ exceeds that of $\tilde{\phi}_2$, i.e., $\tilde{\phi}_2(1) < \tilde{\phi}_1(1)$.  This observation implies that the upper-bound solution $\phi_{\mathrm{ub}}$ to Eq.~\eqref{eq:SufNec_phi} determines the maximal utilization level achievable by any feasible solution to Eq.~\eqref{eq:SufNec_phi_limited}. We define this \textit{maximal utilization level} for a given $\alpha = \asigmas$ as $P(\alpha)$, formally stated in Definition \ref{Def:UpperBoundInput}. 

\medskip
\noindent \textbf{Illustration of extending solutions.} By our results in Section~\ref{sec_characterizing_solution_space}, for \OACC with $f(y) = y^3 + y^2$ and unlimited supply, the optimal competitive ratio is $\asigmas = 3\sqrt{3}$. Figure~\ref{fig:23PolynomialCost} presents the solutions to Eq.~\eqref{eq:SufNec_phi}, where the red curve represents the upper-bound solution and the blue curves correspond to solutions obtained via Proposition~\ref{Prop:LowerSol}.  We also present the solutions to Eq.~\eqref{eq:SufNec_phi_limited} in Figure~\ref{fig:23PolynomialCost_Limited2} for the case with a competitive ratio of $\asigmas$ and $\rho = 1.2$ (i.e., $v_{\rm max} = 4.32B^2 + 2.4B$). In this figure, the gray lines correspond to solutions for \OACC without supply constraints, while the colored curves represent solutions under supply constraints.  

From Figure~\ref{fig:23PolynomialCost} to Figure~\ref{fig:23PolynomialCost_Limited2}, one can clearly observe how a solution is extended. In general, the transition point occurs when a solution reaches $\phi(y)=1$, beyond which it grows more rapidly compared to its original trajectory.  Another key observation is that moving from the unlimited-supply \OACC setting to the supply-constrained setting broadens the optimal design space. For instance, the lowest feasible solution in Figure~\ref{fig:23PolynomialCost_Limited2} would be infeasible in the original \OACC setting due to a violation of the superlinearity condition. However, because of the change in the differential equation when the solution hits $\phi(y)=1$, this trajectory accelerates and ultimately overlaps with $\phi(y)=y$ at $y=1$, thereby becoming \textit{locally} superlinear on $ y \in [0,1]$.  Finally, we note that our analysis only concerns solutions over the local interval $[0,1]$; the extended ranges shown in the figures are provided solely for illustrative purposes. 

\begin{figure*}
  \centering
  \begin{subfigure}{0.45\textwidth}
    \centering
    \includegraphics[width=0.9\linewidth]{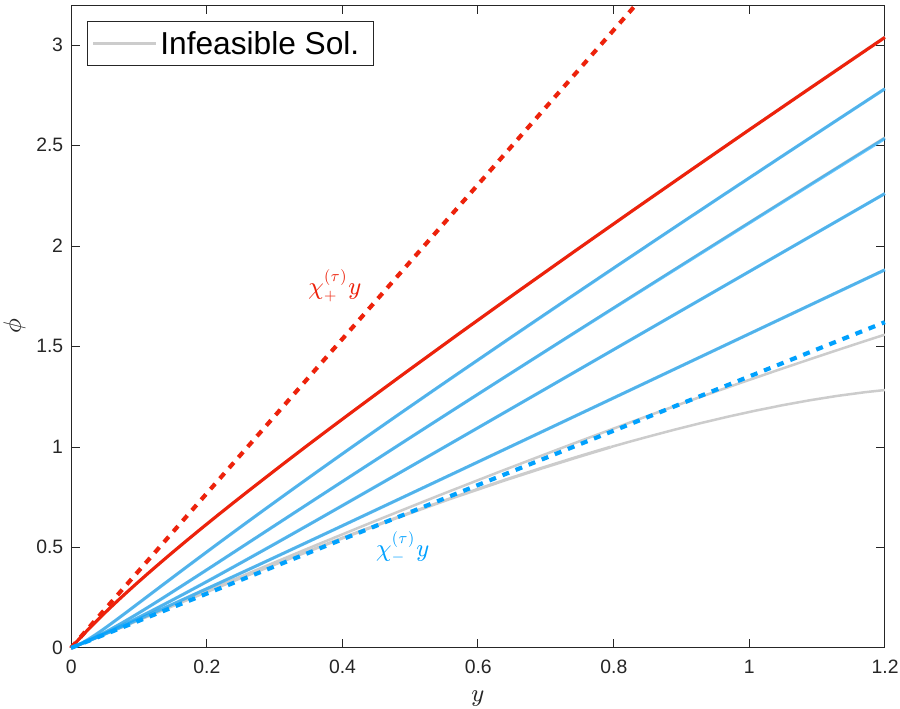}
    \caption{Unlimited Supply Case} \label{fig:23PolynomialCost}
  \end{subfigure}
  \begin{subfigure}{0.45\textwidth}
    \centering
    \includegraphics[width=0.9\linewidth]{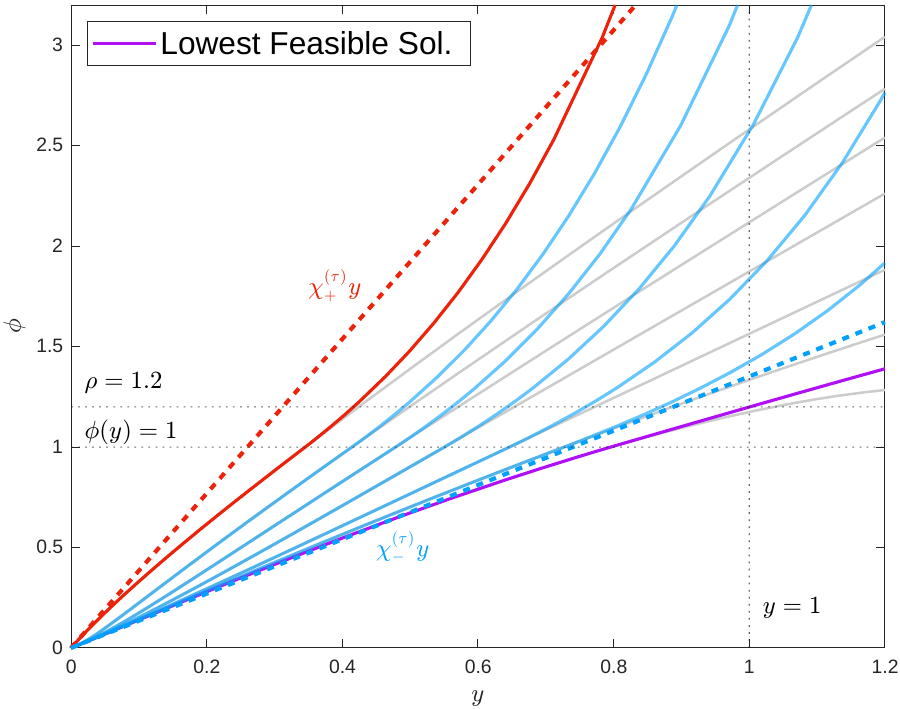}
    \caption{Limited Supply Case} \label{fig:23PolynomialCost_Limited2}
  \end{subfigure}
  \caption{Feasible reserve functions for \OACC with cost function $f(y) = y^3+y^2$ under two different settings: (a) without supply constraints (unlimited supply), and (b) with supply constraints ( limited supply).}
\end{figure*}

\subsection{Proof of Theorem \ref{Thm:ExistenceSol_Limited}} \label{Apx:ExistenceSol_Limited}

    For any $\alpha \geq \ataus$, denote the upper-bound solution to Eq. \eqref{eq:SufNec_phi} as $\phi_{\mathrm{ub}}(y)$ and the corresponding upper-bound solution to Eq. \eqref{eq:SufNec_phi_limited} as $\tilde{\phi}_{\mathrm{ub}}(y)$. For simplicity, here we drop the $\alpha$ in the notations of the upper-bound solutions. 
    By Definition \ref{Def:UpperBoundInput}, we have 
    $$
    P(\alpha) = \sup_{y\geq 0} \tilde{\phi}_{\mathrm{ub}}(y).
    $$

    If $\alpha$ is so small that $\phi_{\mathrm{ub}}(y)\leq 1$ for $y\geq 0$, then we have $\phi_{\mathrm{ub}} = \tilde{\phi}_{\mathrm{ub}}$, and thus, 
    $$
    P(\alpha) = \sup_{y\geq 0} \phi_{\mathrm{ub}}(y).
    $$
    In this case, for any $\rho>0$, the existence of solutions to Eq. \eqref{eq:SufNec_phi_limited} is equivalent to Eq. \eqref{eq:SufNec_phi} with the superlinear constraint on $\left[0,\phi_{\mathrm{ub}}^{-1}(\rho)\right]$. By the definition of the upper-bound solution, there exist
    \begin{itemize}
        \item Case-I: $\rho > P(\alpha)$. There exists no solution to Eq. \eqref{eq:SufNec_phi}.
        \item Case-II: $\rho = P(\alpha)$. There exists a unique solution to Eq. \eqref{eq:SufNec_phi}.
        \item Case-III: $\rho < P(\alpha)$. There exist infinitely many solutions to Eq. \eqref{eq:SufNec_phi}.
    \end{itemize}
    
    Especially, in Case-III, we can construct the solutions by the following approach. Denote $y_0 = \phi_{\mathrm{ub}}(y_0)$, i.e., $\phi_{\mathrm{ub}}$ intersect with $\phi(y)=y$ at $y_0$ (this must happen as $\phi_{\mathrm{ub}}$ is not superlinear for $\alpha< \asigmas$). Then for any sufficiently small $\epsilon>0$, we can construct $\phi_\epsilon$ satisfying $\phi_\epsilon(1)\geq \rho$, where $\phi_\epsilon$ is constructed as the solution to 
    $$
    \left\{
    \begin{aligned}
        &\phi_{\epsilon}^{\prime}(y) = \alpha \cdot F(\phi_\epsilon,y), \quad \forall y\geq 0 \\
        &\phi_\epsilon(y_0-\epsilon) = y_0-\epsilon.
    \end{aligned}
    \right.
    $$
    By Lemma \ref{Lem:LowerBoundforFeasibleSol}, we have $\phi_\epsilon(0)=0$ and $\phi_\epsilon$ is superlinear for $0\leq y\leq y_0 - \epsilon$. Since $\phi_{\mathrm{ub}}(y) > \phi_\epsilon(y)$ on $y\in\left(0,\phi_\epsilon^{-1}(\rho)\right)$, we have 
    $$
    y_0 - \epsilon \leq \phi_\epsilon^{-1}(\rho) < \phi_{\mathrm{ub}}^{-1}(\rho).
    $$
    Therefore, $\phi_\epsilon$ is a solution to Eq. \eqref{eq:SufNec_phi_limited}.
    
    On the other hand, if $\phi_{\mathrm{ub}}(y)> 1$ for some $y\geq 0$, we let $y^* = \phi_{\mathrm{ub}}^{-1}(1)$, i.e., $\phi_{\mathrm{ub}}(y^*) = 1$. Then we have $\tilde{\phi}_{\mathrm{ub}}$ extends $\phi_{\mathrm{ub}}$ from $y^*$ such that $\tilde{\phi}_{\mathrm{ub}}(y^*) =\phi_{\mathrm{ub}}(y^*) = 1$. Then we have $\tilde{\phi}_{\mathrm{ub}}(1) = P(\alpha)$. For any $\rho > P(\alpha)$, if there exist a solution to Eq. \eqref{eq:SufNec_phi_limited}, denoted as $\phi_0$, then we have $\phi_0(1) = \rho > \tilde{\phi}_{\mathrm{ub}}(1)$, which contradicts the definition of $\tilde{\phi}_{\mathrm{ub}}$. If $\rho=P(\alpha)$, $\tilde{\phi}_{\mathrm{ub}}$ is the unique solution. If $\rho<P(\alpha)$, for any sufficiently small $\varepsilon>0$, we can construct $\phi_\varepsilon$ satisfying $\phi_\varepsilon(y^*) = 1 - \varepsilon$, where $\phi_\varepsilon$ is constructed as the solution to
    $$
    \left\{
    \begin{aligned}
        &\phi_{\epsilon}^{\prime}(y) = \alpha \cdot F(\phi_\epsilon,y), \quad \forall y\geq 0 \\
        &\phi_\epsilon(y^*) = 1-\varepsilon.
    \end{aligned}
    \right.
    $$
    By the continuous dependence of solutions on initial values (see Theorem on Continuous Dependence \cite{Walter1998}, p. 145), we can extend $\phi_\varepsilon$ from $y^*$ and obtain a solution to Eq. \eqref{eq:SufNec_phi_limited}, denoted as $\tilde{\phi_\varepsilon}$, such that $\rho < \tilde{\phi_\varepsilon}(1) \leq P(\alpha)$. Thus, we complete the proof of Theorem~\ref{Thm:ExistenceSol_Limited}.

\subsection{Calculate the Optimal Competitive Ratio with Additional Information of $\rho$} \label{Apx: Impact of Addition Information}

In this section, we show how the optimal competitive ratio can be computed when the input $\rho$ is known. Following similar insights from the \OACC\ setting with supply constraints, we leverage the upper-bound solution to explicitly calculate the optimal competitive ratio by exploiting the additional information provided by $\rho$.

Consider Eq.~\eqref{eq:SufNec_phi} with $\alpha < \asigmas$. Since $\asigmas$ has been established as a lower bound, any solution $\phi(y)$ to Eq.~\eqref{eq:SufNec_phi} with $ \phi(0) = 0 $ must eventually violate superlinearity. Specifically, there exists some $ z > 0$ such that $ \phi(z) = z$. From a different perspective, if the reserve function $\phi(y)$ is only required over a limited interval $[0, y_0]$ with $y_0 \leq z$, then the solution $\phi(y)$ remains feasible on this restricted domain. In particular, for any input $\rho \leq \phi(y_0)$, the solution $\phi$ is still $\alpha$-competitive. Hence, for any $\alpha \geq \ataus$, the upper-bound input $P(\alpha)$ (see Definition~\ref{Def:UpperBoundInput}) in the limited-supply setting also applies in this context. 
\begin{definition} \label{Def:UpperBoundInput_unlimited}
  The cost function $f$ is a $(\tau,\sigma)$-polynomial. For any $\alpha\geq 1$, we define the upper bound input $P:[\ataus,+\infty)\rightarrow (0,+\infty]$ as follows: for $\alpha\geq \ataus$,
  $$
  P(\alpha) = \sup_{y\geq 0} \phi_{\mathrm{ub}}(y,\alpha),
  $$
  where $\phi_{\mathrm{ub}}(y,\alpha)$ is the upper-bound solution to Eq. \eqref{eq:SufNec_phi} with parameter $\alpha$. 
\end{definition}
Note that Definition~\ref{Def:UpperBoundInput_unlimited} is slightly different from Definition~\ref{Def:UpperBoundInput}: they rely on the upper-bound solutions to Eq.~\eqref{eq:SufNec_phi} and Eq.~\eqref{eq:SufNec_phi_limited}, respectively.
Similarly, we have $P(\alpha)$ is increasing with respect to $\alpha$, and by Theorem \ref{Thm:UpperBoundAch}, Eq. \eqref{eq:SufNec_phi} is solvable for $\alpha \geq \asigmas$, and thus $P(\alpha) = +\infty$ for $\alpha\geq \alpha^{(\sigma)}_*$.  
Recall that the upper-bound solution always exists for $\alpha\geq \ataus$, and it is a solution to Eq. \eqref{eq:SufNec_phi} where the constraints are satisfied on some bounded interval. 
We can obtain the best competitive ratio for each input $\rho>0$ by Eq. \eqref{eq:BestCRforrho}.
See Figure \ref{fig:23rhoalpha} for an example where the cost function is given by $f(y) = y^3 + y^2$. Since $P(\alpha) = +\infty$ for all $\alpha \geq \asigmas$, the function $\alpha_{\star}(\rho)$ is bounded. This highlights that the unlimited supply case is significantly easier than the limited supply case. Intuitively, the challenge in the limited supply setting arises from the restricted opportunity to make allocation mistakes when competing against the offline optimum, due to the constraint on available supply.

\section{Further Discussions}

\subsection{Dynamical Systems and First Order ODE} \label{Apx:Dynamical Systems and First Order ODE}

In this section, we review some definitions and notations about dynamical systems and breifly explain the relationship between planar dynamical systems and first order ODE. 

We first introduce vector fields and terminologies related to dynamical systems. See Chapter 1 in \cite{Dumortier2006} or Sections 1.1-1.4 in \cite{Chicone2006} for more details.

Let $\Delta$ be an open subset of the Euclidean plane $\mathbb{R}^2$. We define a smooth vector field on $\Delta$ as a $C^{\infty}$ map $X: \Delta \rightarrow \mathbb{R}^2$ where $X(x)$ is meant to represent the free part of a vector attached at the point $x = \left(x^1,x^2\right) \in \Delta$. The graphical representation of a vector field on the plane consists in drawing a number of well chosen vectors $(x, X(x))$. Integrating a vector field means that we look for curves $x(t)$, with $t$ belonging to some interval in $\mathbb{R}$, that are solutions of the differential equation
\begin{equation} \label{eq:IntroDynSys}
  \frac{\mathrm{d}x}{\mathrm{d}t}=X(x),
\end{equation}
where $x \in \Delta$. In the context of dynamical systems, the variable $t$ denotes time.\footnote{We note that the symbol $t$ is also used to index requests in \OACC; this slight abuse of notation should cause no confusion, as its meaning is clear from the context. } Since $X=X(x)$ does not depend on $t$, we say that the differential equation Eq. \eqref{eq:IntroDynSys} is autonomous.

Solutions of Eq. \eqref{eq:IntroDynSys} are differentiable maps $\varphi: I \rightarrow \Delta$ ( $I$ being an interval on which the solution is defined) such that
$$
\frac{d \varphi}{d t}(t)=X(\varphi(t)),
$$
for every $t \in I$. That is, for a solution defining an integral curve, the tangent vector $\frac{d \varphi}{d t}(t)$ at $\varphi(t)$ coincides with the value of the vector field $X$ at the point $\varphi(t)$. 

A point $x \in \Delta$ such that $X(x)=0$ (respectively $\neq 0$ ) is called a singular point (respectively regular point) of $X$. Let $x$ be a singular point of $X$. Then $\varphi(t)=x$, with $-\infty<t<\infty$, is a solution of Eq. \eqref{eq:IntroDynSys}, i.e., $0=\varphi^{\prime}(t)=X(\varphi(t))=X(x)$.

Let $x_0 \in \Delta$ and $\varphi: I \rightarrow \Delta$ be a solution of Eq. \eqref{eq:IntroDynSys} such that $\varphi(0)=x_0$. The solution $\varphi: I \rightarrow \Delta$ is called maximal if for every solution $\psi: J \rightarrow \Delta$ such that $I \subset J$ and $\varphi=\left.\psi\right|_I$ then $I=J$ and, consequently $\varphi=\psi$. In this case we write $I=I_{x_0}$ and call it the maximal interval.

Let $\varphi: I_{x_0} \rightarrow \Delta$ be a maximal solution; it can be regular or constant. Its image $\gamma_{\varphi}=\left\{\varphi(t): t \in I_{x_0}\right\} \subset \Delta$ endowed with the orientation induced by $\varphi$, in case $\varphi$ is regular, is called the \emph{trajectory}, \emph{orbit} or \emph{(maximal) integral curve} associated to the maximal solution $\varphi$. 

We show some basic results about trajectories of a dynamical system:
\begin{itemize}
  \item Given two orbits of $X$ either they coincide or they are disjoint.
  \item A trajectory starting from a non-singular point can never reach a singular point in finite time; it can only approach it asymptotically as $t\rightarrow \pm \infty$.
  \item Any point that a trajectory tends toward as $t\rightarrow \pm \infty$ must be a singular point.
\end{itemize}

By a \emph{phase portrait of the vector field} $X: \Delta \rightarrow \mathbb{R}^2$ we mean the set of (oriented) orbits of $X$. It consists of singularities and regular orbits, oriented according to the maximal solutions describing them, hence in the sense of increasing $t$. In general, the phase portrait is represented by drawing a number of significant orbits, representing the orientation (in case of regular orbits) by arrows. 

Let $p$ be a singular point of a planar smooth vector field $X=(G, H)$. We say that
$$
D X(p)=\begin{bmatrix}
\frac{\partial G}{\partial x}(p) & \frac{\partial G}{\partial y}(p) \\
\frac{\partial H}{\partial x}(p) & \frac{\partial H}{\partial y}(p)
\end{bmatrix}
$$
is the linear part of the vector field $X$ at the singular point $p$.
The singular point $p$ is called \emph{non-degenerate} if 0 is not an eigenvalue.
The singular point $p$ is called \emph{hyperbolic} if the two eigenvalues of $D X(p)$ have real part different from 0 .
The singular point $p$ is called \emph{semi-hyperbolic} if exactly one eigenvalue of $D X(p)$ is equal to 0 . Hyperbolic and semi-hyperbolic singularities are also said to be \emph{elementary singular points}.

We now discuss the relationship between dynamical systems and first order ODE. We consider the following first order ODE
\begin{equation} \label{eq:IntroDynSys_ODE}
  \frac{\mathrm{d}x^2}{\mathrm{d}x^1}=\frac{H(x^1,x^2)}{G(x^1,x^2)}. 
\end{equation}
For $x\in \Delta$ such that $G(x)\neq 0$, ODE Eq. \eqref{eq:IntroDynSys_ODE} only captures the slope of vector arrows in vector field $X$; 
for $x\in \Delta$ such that $G(x)= 0$, ODE Eq. \eqref{eq:IntroDynSys_ODE} is undefined. Specifically, in this paper, we focus on the case where the origin is an isolated zero of $G$ and $H$. 
Thus, the origin is a singularity of the planar system Eq. \eqref{eq:IntroDynSys} and ODE Eq. \eqref{eq:IntroDynSys_ODE} is undefined at the origin. This implies the ODE captures less information than the planar system. In fact, the ODE captures an equivalence class of 2-D systems up to a time-reparametrization, which share the same phase portrait. More precisely, the equivalence class of Eq. \eqref{eq:IntroDynSys} is defined by $\tilde{X}_h(x) = h(x)X(x)$, where $h:\Delta\rightarrow \mathbb{R}_+$. In general, for a domain where $G \neq 0$, a solution to Eq. \eqref{eq:IntroDynSys_ODE} is a projection of a trajectory of the planar system onto the $(x^1,x^2)$-plane. For any nontrivial trajectory of the planar system, we may recover a solution to the slope ODE by a piecewise time-reparametrization. 

\subsection{Illustrating that $\phi_{\mathrm{ub}}$ and $\phi_{\mathrm{lb}}$ are Solutions to Eq.~\eqref{eq:SufNec_phi}} \label{Apx:Discussion on UpperLowerBoundSol}

In this section, we want to briefly show why the upper- and lower-bound solutions $\phi_{\mathrm{ub}}$ and $\phi_{\mathrm{lb}}$ are also solutions to Eq. \eqref{eq:SufNec_phi} for $\alpha \geq \asigmas$. First of all, since solutions to Eq. \eqref{eq:SufNec_phi} are upper and lower bounded by some linear functions, the limits ($\sup$ and $\inf$) used to define $\phi_{\mathrm{ub}}$ and $\phi_{\mathrm{lb}}$ output positive real numbers for any $y\geq 0$. Thus, $\phi_{\mathrm{ub}}$ and $\phi_{\mathrm{lb}}$ are well defined. We then wish that the solutions to Eq. \eqref{eq:SufNec_phi}, used in the limits, converge uniformly to $\phi_{\mathrm{ub}}$ and $\phi_{\mathrm{lb}}$. 

We first introduce why we need the concept of uniform convergence. The fact that $\phi_{\mathrm{ub}}$ and $\phi_{\mathrm{lb}}$ are well defined only tells us those solutions point-wisely converge to $\phi_{\mathrm{ub}}$ and $\phi_{\mathrm{lb}}$, but we are not sure the limit functions $\phi_{\mathrm{ub}}$ and $\phi_{\mathrm{lb}}$ themselves are differentiable (we do not even know whether they are continuous). The uniform convergence, on the other hand, guarantees us better properties of the limit functions. And most importantly, it allows us to interchange the operations of limit process and differentiation. In our case, for example, if $\sup \phi$ converge uniformly to $\phi_{\mathrm{ub}}$, we then have
$$
\begin{aligned}
    \left(\phi_{\mathrm{ub}}\right)^{\prime} &= \left(\sup \phi\right)^{\prime} = \sup \phi^{\prime} \\
    &= \sup \cdot \left(\asigmas F(\phi,y)\right) \\
    &= \asigmas \cdot F(\sup\phi,y) \\
    &= \asigmas \cdot F(\phi_{\mathrm{ub}},y).
\end{aligned}
$$
For details on the concept of uniform convergence, we recommend Chapter 7 in \cite{Rudin2008}.

To prove $\sup \phi$ converge uniformly to $\phi_{\mathrm{ub}}$ on any bounded interval, one can follow a similar manner in the proof of Maximal and Minimal Solutions Theorem in \cite{Walter1998}, p. 93. One may notice that this proof requires $F(\phi,y)$ in Eq. \eqref{eq:SufNec_phi} to be continuous in a domain containing 0, but in our problem, $F(\phi,y)$ is discontinuous at 0. 
In fact, this does not affect us. The continuity in the proof given by \cite{Walter1998} is used to guarantee the prerequisite of the Peano Existence Theorem (see \cite{Walter1998}, p.~73), which helps us find a solution goes through 0. However, we have found such solutions in the proof of Theorem \ref{Thm:UpperBoundAch} (see Appendix \ref{Apx:UpperBound}). One can use the solutions and follow the techniques given by \cite{Walter1998} and prove the uniform convergence.

\subsection{Illustration for $(2,\sigma)$-polynomial Cost Functions} \label{Linearized System}
For $(2,\sigma)$-polynomial cost function, qualitative analysis for the dynamical system Eq. \eqref{eq:MainDynSystem} will be much easier by utilizing its linearized system for which we can obtain the explicit form of the solutions. In this section, we show that for cost functions with $\tau = 2$, there exists a unique solution to Eq. \eqref{eq:SufNec_phi} with an initial slope of $\chi^{(\tau)}_+$, and therefore, it becomes the upper-bound solution. We also illustrate why for competitive ratio $\alpha < \ataus$ one cannot find a feasible solution. 

We let 
$$
f(y) = \sum_{k=2}^{\sigma} c_ky^k, \quad c_2,c_{\sigma}>0 \text{ and } c_k\geq 0, \forall k>2, 
$$
and the linearized system of Eq. \eqref{eq:MainDynSystem} is given by
\begin{equation}
  \begin{bmatrix}
    \frac{\mathrm{d}\phi}{\mathrm{d}t} \\
    \frac{\mathrm{d}y}{\mathrm{d}t}
  \end{bmatrix}
  = 
  \begin{bmatrix}
    \alpha\cdot 2c_2 & -\alpha\cdot 2c_2 \\
    2c_2 & 0 
  \end{bmatrix}
  \begin{bmatrix}
    \phi \\
    y
  \end{bmatrix},\label{eq:LinDynSystem}
\end{equation}
simply denoted as $\dot{X} = AX$. For $\alpha > \ataus = 4$, matrix $A$ has two real eigenvalues: 
$$
\lambda_1 = c_2\left(\alpha + \sqrt{\alpha^2-4\alpha}\right)>0 \quad \text{and} \quad \lambda_2 = c_2\left(\alpha - \sqrt{\alpha^2-4\alpha}\right)>0.
$$
and $Q^{-1}AQ = \operatorname{diag}\{\lambda_1,\lambda_2\}$, where 
$$
Q = \begin{bmatrix}
  1 & 1 \\
  \frac{\alpha - \sqrt{\alpha^2-4\alpha}}{2\alpha} & \frac{\alpha + \sqrt{\alpha^2-4\alpha}}{2\alpha}
\end{bmatrix}. 
$$
Then we can obtain the solutions to the linear system Eq. \eqref{eq:LinDynSystem}:
\begin{equation}
  \begin{bmatrix}
    \phi \\
    y
  \end{bmatrix}
  =
  \begin{bmatrix}
    C_1e^{\lambda_1 t} + C_2e^{\lambda_2 t} \\
    \frac{1}{2\alpha c_2}\left(C_1 \lambda_2 e^{\lambda_1 t} + C_2 \lambda_1 e^{\lambda_2 t}\right)
  \end{bmatrix},
  \quad t\in \mathbb{R}, \label{eq:LinDynSystem_Sol}
\end{equation}
where $C_1,C_2$ are arbitrary constants.\footnote{If system Eq. \eqref{eq:MainDynSystem} is indeed linear, i.e., $\tau = \sigma = 2$, we can obtain that the feasible solutions are given by Eq. \eqref{eq:LinDynSystem_Sol} and $C_1,C_2\geq 0$. Constraints $C_1,C_2\geq 0$ are obtained from Eq. \eqref{eq:SufNec_phi_superlin}: solving $\phi(t) \geq y(t)$ for $t\in \mathbb{R}$, we get $C_1,C_2\geq 0$, which guarantees $\phi$ to be superlinear and thus feasible.} 

Here we discuss what we can know about the solutions to the nonlinear system Eq. \eqref{eq:MainDynSystem} from the solutions Eq. \eqref{eq:MainDynSystem}. Consider the initial slope $\frac{\mathrm{d}\phi}{\mathrm{dy}}\left(0\right)$ of a solution $\phi\geq 0$ to Eq. \eqref{eq:LinDynSystem_Sol}. There is a unique solution with an initial slope of $\chi_+$ when $C_2 = 0$, and infinite many solutions with an initial slope of $\chi_-$ when $C_2>0$. We now show that this is also true for the solutions to the nonlinear system Eq. \eqref{eq:MainDynSystem}. 

\begin{figure}
  \centering
  \begin{subfigure}{0.45\textwidth}
    \includegraphics[width=1\linewidth]{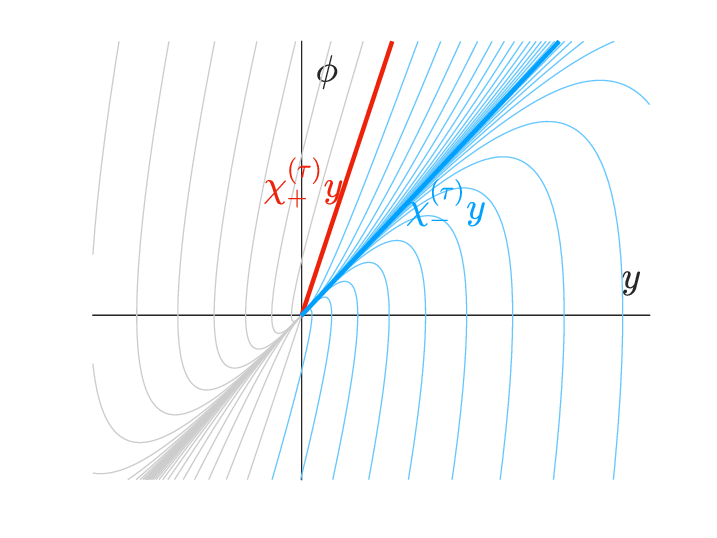}
    \caption{$\alpha > \ataus$} \label{fig:PhDiaLinSys}
  \end{subfigure}
  ~
  \begin{subfigure}{0.45\textwidth}
    \includegraphics[width=1\linewidth]{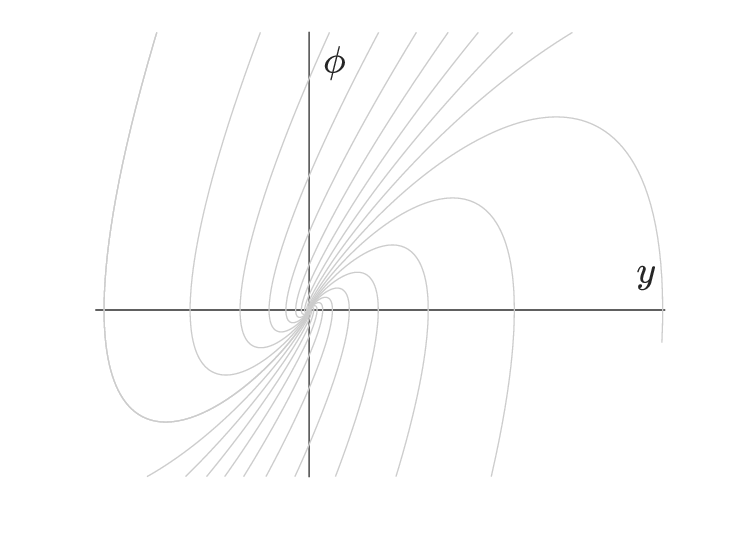}
    \caption{$\alpha < \ataus$} \label{fig:PhDiaLinSys_Smallalpha}
  \end{subfigure}
  \caption{Illustration that $\alpha\geq \ataus$ is necessary. Colored trajectories are tangent to corresponding linear functions at the origin.}
\end{figure}

Since $\lambda_1$ and $\lambda_2$ have non-zero real parts, the origin is the hyperbolic equilibrium point of the nonlinear system Eq. \eqref{eq:MainDynSystem}. In this case, the nonlinear system is locally topologically conjugate to the linear system Eq. \eqref{eq:LinDynSystem} near the equilibrium point by Hartman-Grobman Theorem (see Theorem 7.14 \citep{Chicone2006}), i.e., the phase portrait of the nonlinear system Eq. \eqref{eq:MainDynSystem} near the origin is the same as the phase portrait of the linear system Eq. \eqref{eq:LinDynSystem}, up to a continuous change of coordinates (i.e., a homeomorphism). Furthermore, \citet{Guysinsky2003} has shown that the homeomorphism is differentiable at the equilibrium point, and thus, a solution to the linear system Eq. \eqref{eq:LinDynSystem} and the corresponding solution (under the homeomorphism) to the nonlinear system Eq. \eqref{eq:MainDynSystem} share the same slope at the origin. Therefore, there exist a unique solution with an initial slope $\chi_+$ to the nonlinear system Eq. \eqref{eq:MainDynSystem}. 
\begin{remark}
  Composing $f^{\prime}$ with this unique reserve function, we get the inverse function of the utilization function we constructed when proving Theorem \ref{unlimited_supply_SufNec} (see Proof of Theorem \ref{thm:necessity_IVP}):
  $$
  \Psi(p) = \inf\{\psi(p)\mid \psi \text{ satisfies Eq. } \eqref{ineq:SimNecspsi}\}. 
  $$
\end{remark}

To illustrate why no feasible solution exists when the competitive ratio $\alpha < \ataus$, recall that all solutions to Eq.~\eqref{eq:SufNec_phi} originate from the initial slopes $\chi^{(\tau)}_{\pm}$ at the origin. To obtain a solution to Eq.~\eqref{eq:SufNec_phi}, there must exist corresponding solutions to its linearized system that also begin with an initial slope of $\chi^{(\tau)}_{\pm}$. This holds for $\alpha \geq \ataus$, as illustrated in Figure~\ref{fig:PhDiaLinSys}. In contrast, for $\alpha < \ataus$, Figure~\ref{fig:PhDiaLinSys_Smallalpha} shows that no such solution exists, since all trajectories spiral into the origin without settling on a specific slope. Furthermore, as we discuss in Section~\ref{(Hard) Supply Constraints}, for $\alpha \geq \ataus$ it is possible to find solutions to Eq.~\eqref{eq:SufNec_phi} under a relaxed superlinear constraint, where Eq.~\eqref{eq:SufNec_phi_superlin} is required to hold only over a bounded interval. 

\subsection{Re-frame \OMCR \cite{Devanur2012} under \OACC Setting} \label{Apx:CompConcaveReturn}

For ease of exposition, we assume that the return functions of all offline nodes in online matching with concave returns (\OMCR) \cite{Devanur2012} are identical. When converting \OMCR into the \OACC formulation as described in Section~\ref{Sec:Problem Formulation}, the induced cost function takes the form $g’(0)y - g(y)$. Since $g$ is concave, this expression is upper bounded by the linear function $g’(0)y$. In contrast, the \OACC framework considers more stringent cost structures that grow at least linearly, consistent with the assumption $\tau \geq 1$ for $(\tau,\sigma)$-elastic functions. Therefore, the class of cost functions arising from \OMCR is fundamentally different from the class considered in \OACC. We now compare the characterizations of $\alpha$-competitive designs in \OMCR\ and \OACC.

Beyond the distinction in the cost function class, we also compare the two ODE characterizations.
The characterization of $\alpha$-competitive design for \OMCR is given by
\begin{align} \label{eq:omcr_ode}
    \alpha g'(u) = Y'(v)\frac{\mathrm{d}v}{\mathrm{d}u} + g'(v),
\end{align}
where $Y(v) = g(v) - vg'(v)$ and 
$$
v = \argmax_{y\geq 0} \left(g(y) - \lambda y\right)
$$
for some $\lambda > 0$ (in fact, $\lambda$ is the dual variable we used in \OACC).

Since \OMCR is a special case of \OACC (in terms of problem formulation), we write the ODE for \OMCR in the \OACC form:
Let $y=u$, $f(y) = g'(0)y - g(y)$, and $\phi(y) = v$. We have $f'(y) = g'(0) - g'(y)$ and $f''(y) = -g''(y)$. Therefore, Eq. \eqref{eq:omcr_ode} can be rewritten as
\begin{align}
    & \alpha g'(u) = Y'(v)\frac{\mathrm{d}v}{\mathrm{d}u} + g'(v) = -vg''(v)\frac{\mathrm{d}v}{\mathrm{d}u} + g'(v) \notag \\
    \Leftrightarrow\quad & \alpha(-f'(u) + g'(0)) = \left(\phi(y)f''(y) \right)\phi'(u) + g'(0) - f'(\phi(u)) \notag \\
    \Leftrightarrow\quad & \phi'(u)= \frac{f'(\phi(y)) - \alpha f'(y) + (\alpha-1)g'(0)}{\phi(y)f''(\phi(y))} \label{eq:omcr_ode_oacc}.
\end{align}

When $\alpha = 1$, Eq. \eqref{eq:omcr_ode_oacc} coincides exactly with our ODE Eq. \eqref{eq:mainODE}. 
This equivalence stems from the fact that both frameworks share a common degenerate limit: with a single offline node and linear costs $f(y) = cy$ ($c > 0$), \OACC reduces to reward maximization with margin-adjusted bids $(v_t - c \cdot w_t)_+$. Symmetrically, setting $g(y) = y$ in \OMCR with these adjusted bids induces an equivalent instance. In this regime, both $\PUM\phi$ and the design of \cite{Devanur2012} collapse to the greedy allocation policy, each attaining the ideal competitive ratio of $1$.

If we consider the sufficient conditions for an online algorithm being $\alpha$-competitive in the \OACC and \OMCR settings, we can replace the equalities in Eq. \eqref{eq:mainODE} and Eq. \eqref{eq:omcr_ode_oacc} by ``$\leq$''. Therefore, the sufficient condition for satisfying both \OMCR and \OACC is given by
\begin{align}
    \phi'(y) 
    & \leq \min\left\{\frac{f'(\phi(y)) - \alpha f'(y) + (\alpha-1)g'(0)}{\phi(y)f''(\phi(y))},  \alpha\cdot \frac{f'(\phi(y))  - f'(y) }{\phi(y)\cdot f''(\phi(y))}\right\} \notag \\
    & \leq \alpha\cdot \frac{f'(\phi(y))  - f'(y) }{\phi(y)\cdot f''(\phi(y))}, \label{eq:oacc_and_omcr}
\end{align}
where Eq. \eqref{eq:oacc_and_omcr} is obtained by
\begin{align*}
    & f'(\phi) \leq g'(0) \\
    \Leftrightarrow \quad & (\alpha - 1) f'(\phi) \leq (\alpha - 1) g'(0) \\
    \Leftrightarrow \quad & f'(\phi) + (\alpha-1)g'(0) \geq \alpha f'(\phi). 
\end{align*}
Therefore, by a similar argument to Proposition~\ref{Prop:SolSpace}, one can show the design space of \OACC is contained in the design space of \OMCR. In other words, if Eq.~\eqref{eq:SufNec_phi} exists a solution for cost functions in the form of $g'(0)y - g(y)$, then our algorithm also achieves the optimal competitive ratio for \OMCR.

\section{Empirical Evaluation of Optimal Design Space}
\label{Apx:empirical_evaluation}

In this section, we present a case study that applies the proposed \OACC\ model and algorithms to online cloud resource allocation, using Google’s cluster-usage traces~\cite{reiss2012heterogeneity}.

\subsection*{Experiment 1: Empirical Performance Comparison of Different Optimal Designs} \label{sec:Single Server}

In our first experiment, we consider a single server setting (i.e., $ m = 1 $) and compare the empirical performances of the proposed $ \PUM\phi $ using four different reserve functions: \textsf{UBS} uses the upper-bound solution  $ \phi_{\mathrm{ub}} $, \textsf{LBS} uses the lower-bound solution $ \phi_{\mathrm{lb}} $, \textsf{Linear} uses the linear reserve function $\Dstar y$, and \textsf{MIX} uses the mix-dynamic-static construction based on $ \phi_{\mathrm{ub}} $ and $ \phi_{\mathrm{lb}} $ with turning points $p_1, p_2$ (see Section \ref{sec:Solution Space}). We note that 
\textsf{Linear} is the design of prior work \cite{Tan2020a} and \cite{Huang2019}.  
The upper- and lower-bound solutions,
$ \phi_{\mathrm{ub}} $ and $ \phi_{\mathrm{lb}}$, are numerically computed using the methods in Proposition \ref{Prop:AprUpSol} (set $\eta=10^{-9}$) and Proposition \ref{Prop:LowerSol} (set $\xi = 10^9$). Recall that all four reserve functions can ensure $ \PUM\phi $ to achieve the same optimal competitive ratio in theory, while they exhibit different behaviors regarding their different levels of aggressiveness (as discussed in Section \ref{sec:existence_design_space}). 

\begin{figure*}[htb]
  \centering
  \captionsetup[subfigure]{justification=centering}
  \begin{subfigure}{0.9\textwidth}
    \includegraphics[width=0.43\linewidth]{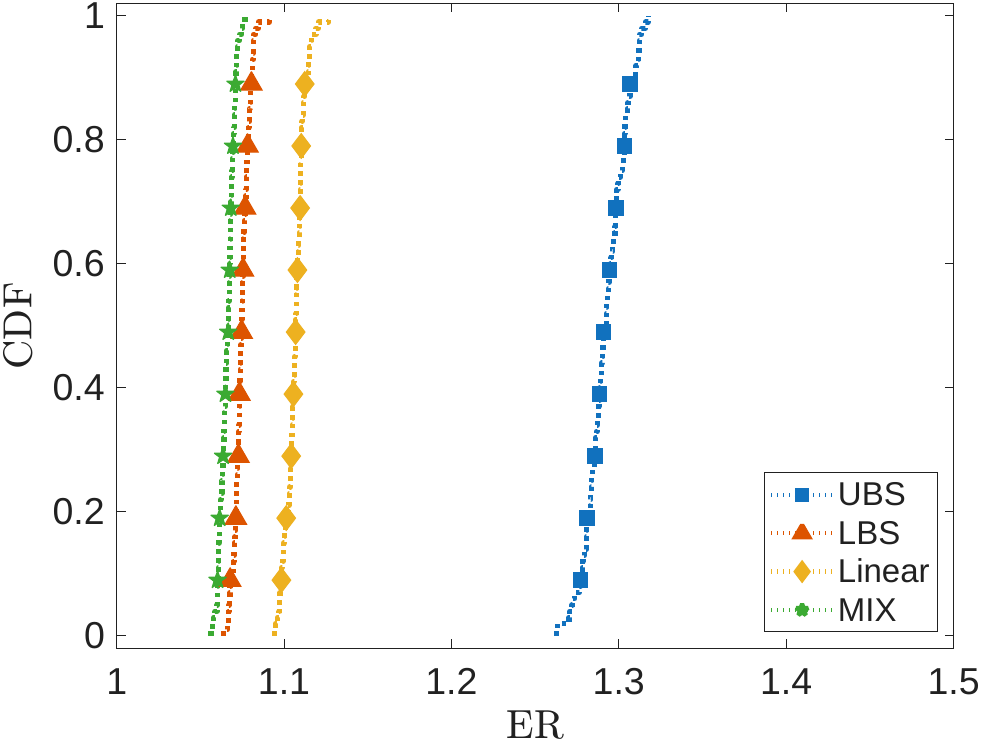}
    \hfill
    \includegraphics[width=0.43\linewidth]{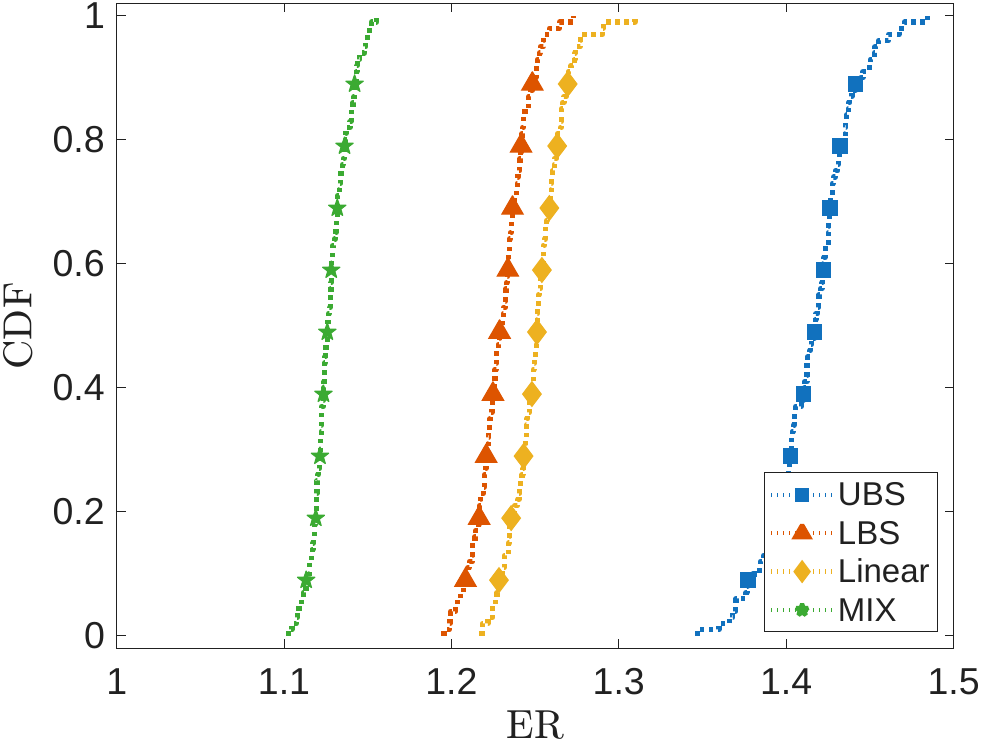}
    \caption{ER under the Two Value Distributions \\ Left: Single-Normal, Right: Mixture-of-Normals}
    \label{fig:Unlim_SingK_CR}
  \end{subfigure}
  
  \vspace{+0.4cm}
  
  \begin{subfigure}{0.9\textwidth}
    \includegraphics[width=0.43\linewidth]{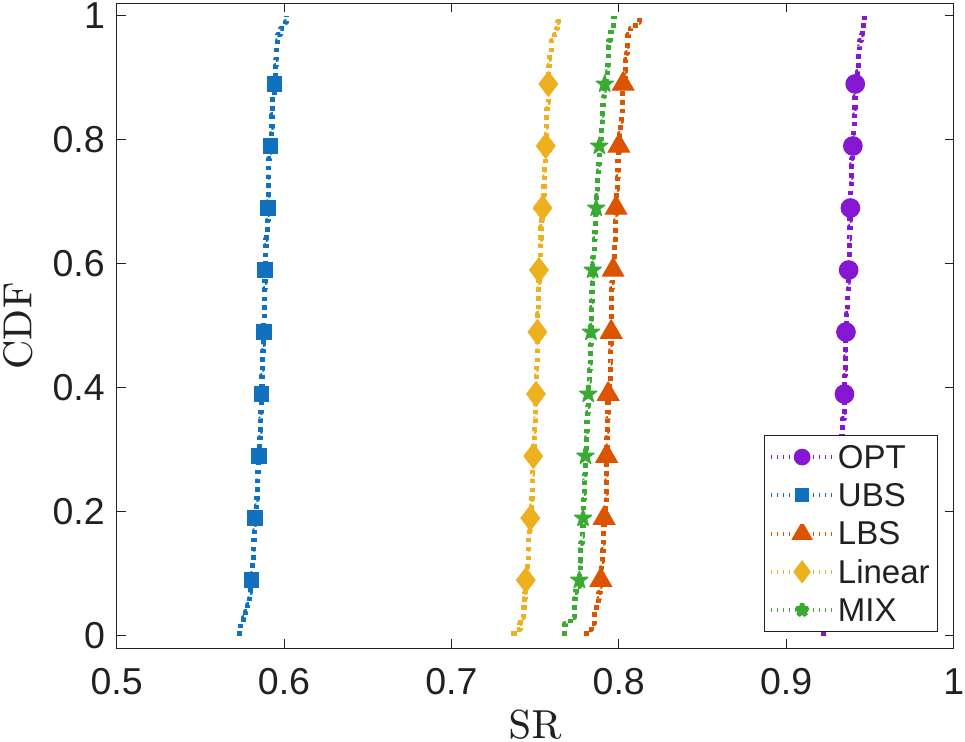}
    \hfill
    \includegraphics[width=0.43\linewidth]{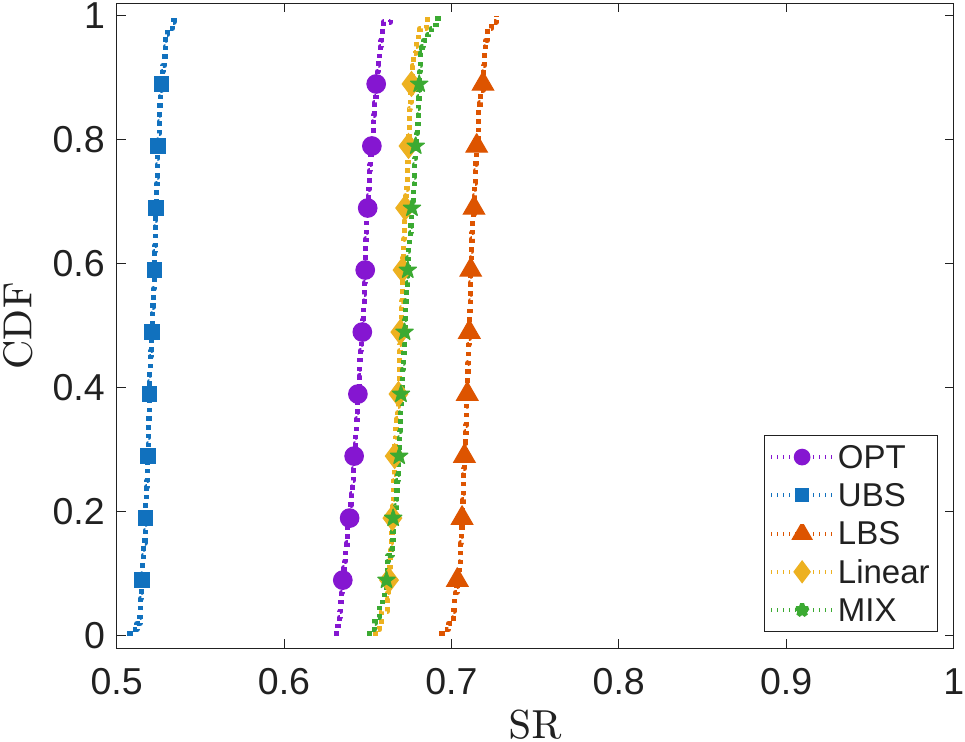}
    \caption{SR under the Two Value Distributions \\ Left: Single-Normal, Right: Mixture-of-Normals}
    \label{fig:Unlim_SingK_SR}
  \end{subfigure}  
  
  \caption{CDF plots of empirical competitive ratios (ER) and served ratios (SR) for the four designs: the upper-bound solution design (\textsf{UBS}), the lower-bound solution design (\textsf{LBS}), the linear design (\textsf{Linear}), and the mixed design of \textsf{UBS} and \textsf{LBS} (\textsf{MIX}). Recall that all these designs share the same competitive ratio of $\asigmas = 3^{3/2}$.}
  
\end{figure*}

We set the cost of allocating $y$ units of CPU as $f(y) = c_1 y^{\tau} + c_2 y^{\sigma}$, which is consistent with the widely adopted \textit{dynamic voltage frequency scaling} mechanism in modeling power consumptions of computing servers \cite{kim2011power,Zhang2015}. We set $c_1 = 3.24,\sigma = 3$, corresponding to the dynamic power consumption, and $c_2 = 10.3, \tau = 2.4$ to account for other costs at the system level, such as cooling. 
Throughout our experiments, we sample $T = 1500$ tasks to simulate online arrivals of demand nodes in \OACC. For each task $t$, we extract the \emph{timestamp}, the scheduling \emph{priority} $p_t$ (which is linearly mapped to $[1,2]$), and the \emph{requested CPU resource} $w_t$. The $T$ tasks are arranged in ascending order of their timestamps. The value of task $t$ is generated as $v_t = r_t p_t \frac{w_t}{\bar{w}}$, where $W = \sum_{t\in[T]} w_t$ denotes the total requested CPU and $\bar{w} =W/T$ denotes the average requested CPU. We randomly generate $r_t$ from $[0,100]$ based on the following two distributions: (i) \textit{Single-Normal}, in which we use the (truncated) normal of $\mathcal{N}(50,10^2)$, and (ii) \textit{Mixture-of-Normals}, where we use a mixture of (truncated) normal distributions as follows: we divide the tasks into four groups per 350 tasks; for the $i$-th group, $r_t$ is drawn from $\mathcal{N}(25i-12.5,10^2)$, i.e., the expected value of the tasks is increasing. Single-Normal and Mixture-of-Normals simulate the easy and hard instances, respectively.

We evaluate the \textit{empirical competitive ratio} (\textbf{ER}) of each algorithm, defined as the ratio between the offline optimum and the algorithm’s performance on a given instance. Each algorithm is tested on 100 independently generated instances, and the cumulative distribution functions (CDFs) of their ERs are shown in Figure~\ref{fig:Unlim_SingK_CR}. Figure~\ref{fig:Unlim_SingK_SR} also reports the \textit{served ratio} (\textbf{SR}) (i.e., the total allocated CPU divided by the total requested CPU), indicating how aggressively an algorithm allocates resources (with higher SR implying more allocation, possibly to lower-value requests).
Under the simple input distribution (i.e., the Single-Normal setting shown in Figures~\ref{fig:Unlim_SingK_CR} and~\ref{fig:Unlim_SingK_SR}), we observe that all designs except \textsf{UBS} perform competitively relative to the offline optimum. In particular, the performance of \textsf{MIX} and \textsf{LBS} are similar, suggesting that \textsf{LBS}, which is the closest to a greedy policy within the optimal design space, is already effective on its own, and the additional tunable parameter in \textsf{MIX} yields limited improvement in this case.  However, in the more challenging setting that simulates harder instances  (i.e., the Mixture-of-Normal input distribution), all baseline designs (\textsf{UBS}, \textsf{LBS}, and \textsf{Linear}) experience a clear performance degradation. In contrast, \textsf{MIX} remains robust and continues to perform well, provided that its turning points are appropriately tuned. This highlights the importance of characterizing the algorithm’s design space (the focus of this paper) and developing adaptive designs to handle input heterogeneity and adversarial structures.

Figure~\ref{fig:Set1} provides deeper insights into how and why \textsf{MIX} outperforms the other designs. The best-fit value of $p_1$ effectively identifies the utilization level at which \textsf{UBS} enters the third phase of the request sequence. Once this threshold is reached, \textsf{MIX} transitions into a static pricing phase (i.e., it stops increasing the reserve function) to maximize acceptance of high-value requests. In the final (fourth) phase, it resumes price growth but does so as gradually as possible, following the \textsf{LBS} structure. The ``Plateau Phase” shown in Figure~\ref{fig:Set1_t1t2} highlights the subset of requests during which \textsf{MIX} employs this static pricing strategy. Figure~\ref{fig:Set1_PhaseSR} further clarifies the differences in allocation behavior. In the first two phases, \textsf{MIX} and \textsf{UBS} exhibit similar served ratios, reflecting a conservative allocation strategy. Beginning in the third phase, however, \textsf{MIX} becomes more aggressive in accepting requests. Crucially, this early conservatism enables \textsf{MIX} to allocate more resources at lower cost during the final phase—thereby outperforming \textsf{LBS} by accepting more valuable requests overall.

\begin{figure*}[htb]
  \centering
  \captionsetup[subfigure]{justification=centering}
  \begin{subfigure}{0.95\textwidth}
    \label{fig:Set1_ValueRatio}
    \includegraphics[width=0.95\textwidth, left]{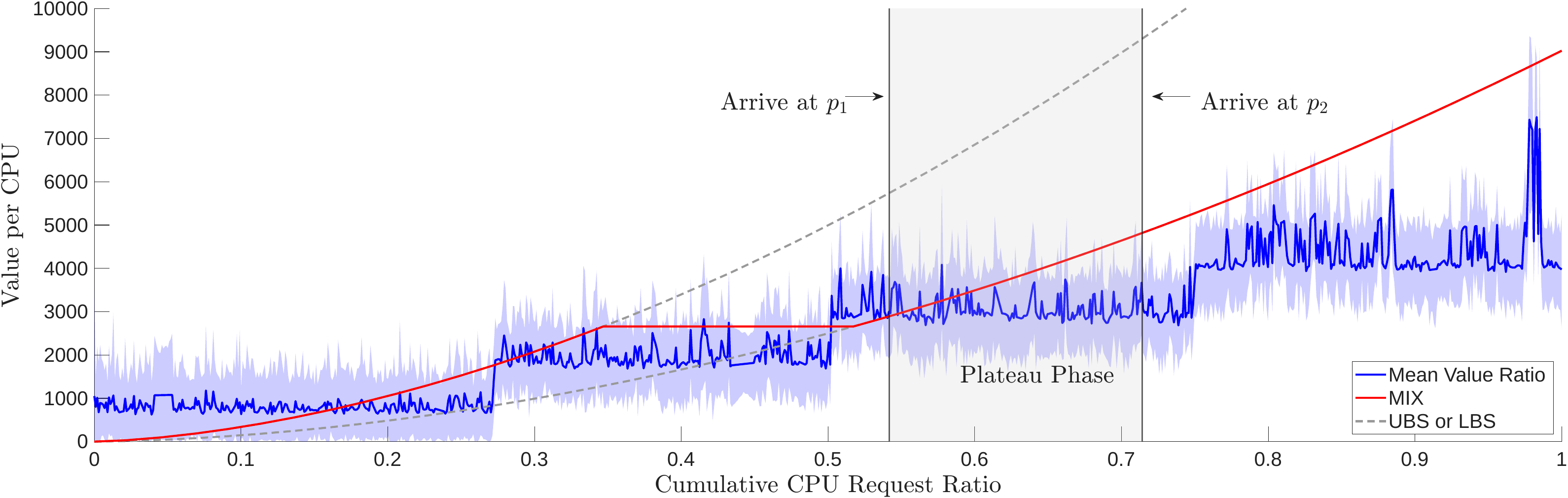}
  \end{subfigure}

  \vspace{+0.2cm}
  
  \begin{subfigure}{0.94\textwidth}
    \hspace{+0.18cm}
    \includegraphics[width=0.936\textwidth]{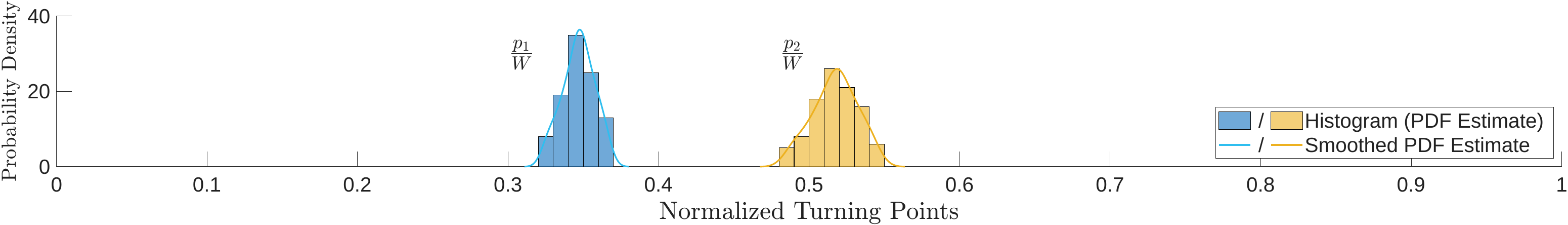}
    \caption{Top: value ratio distribution. Bottom: empirical PDF of $p_1$ and $p_2$.} 
    \label{fig:Set1_t1t2}
  \end{subfigure}

  \vspace{+0.6cm}
  
  \begin{subfigure}{\textwidth}
    \centering
    \includegraphics[width=0.4\linewidth]{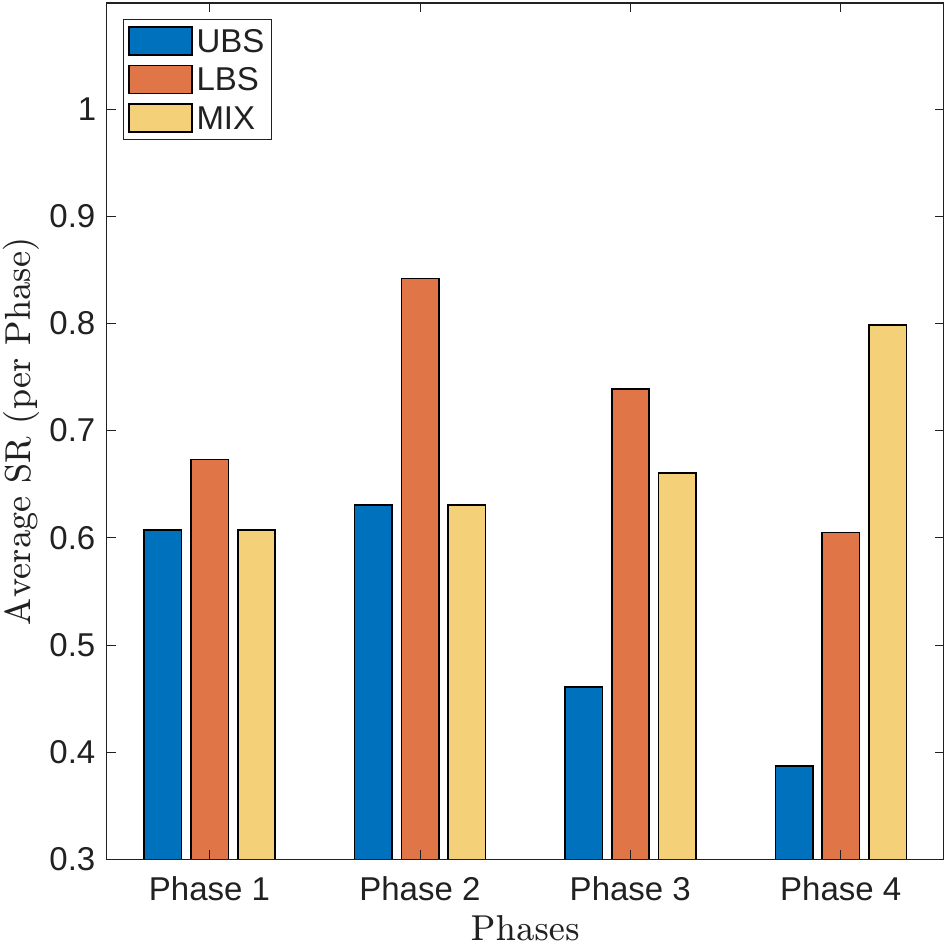}
    \caption{Average SR of UBS, LBS, and MIX in each phase.}
    \label{fig:Set1_PhaseSR}
  \end{subfigure}
  
  \caption{Fig. (a): The top panel compares the pricing function ($f'(\phi)$) of \textsf{MIX} (with $p_1$ set to the mean of best-fit $p_1$ values across 100 instances) against the mean value ratio $\mathbb{E}[v_t / w_t]$. The $x$-axis shows the normalized cumulative CPU ratio $\sum_{i=1}^t w_i / W$, and the blue shaded area marks the 80\% quantile of the empirical value ratio distribution. The bottom panel shows the empirical PDFs of the best-fit turning points $p_1$ and $p_2$ (normalized as $p_1/W$, $p_2/W$) from 100 instances, where $p_1$ is determined via grid search to maximize the empirical competitive ratio (ER). Fig. (b): The bar charts report the mean served ratio (SR) of three reserve function designs, averaged over 100 instances in each phase of the Mixture-of-Normals request sequence.}
  \label{fig:Set1}
  
\end{figure*}

\subsection*{Experiment 2: Performance Evaluation for \OACC with Supply-Oblivious Arrivals}
In our second experiment, we consider a three-server setting ($m = 3$), consisting of Server 1, 2, and 3, each with a distinct cost function. The goal is to compare the empirical performance of the proposed $ \PUM\phi $ algorithm under three different reserve function designs. In the \textsf{Separate} scheme, each server uses its own linear reserve function; in the \textsf{Identical} scheme, all servers adopt the linear reserve function of Server 3; and in the \textsf{Sep-MIX} scheme, Servers 1 and 2 use their respective linear reserve functions, while Server 3 employs the \textsf{MIX} reserve function from Section~\ref{sec:Single Server}, with $p_1 = 5\%\times W$.  
Recall that both \textsf{Separate} and \textsf{Sep-MIX} are guaranteed to achieve the same competitive ratio in theory (see Theorem~\ref{Thm:AverageCRforMK}), whereas \textsf{Identical} is strictly worse in terms of its theoretical competitive ratio.

We consider two settings for modeling the CPU allocation costs of Servers 1, 2, and 3: (i) the \emph{Overlapping Costs} setting, where $f_1(y) = c_1 y^2$, $f_2(y) = f_1(y) + c_2 y^{2.4}$, and $f_3(y) = f_2(y) + c_3 y^3$; and (ii) the \emph{General Costs} setting, where $g_1(y) = c_1 y^2 + 1.5c_1 y^2$, $g_2(y) = 3.5c_2 y^{2.4}$, and $g_3(y) = 1.5c_1 y^2 + 3c_3 y^3$. The cost coefficients are set to $c_1 = 280$, $c_2 = 46$, and $c_3 = 3.9$. For each request $r_t$, the demand is sampled from a truncated normal distribution $\mathcal{N}(50, 10^2)$. To simulate arrival patterns under different system characteristics, we consider two types of server ID distributions: (i) \emph{Supply-Oblivious (SO)}, where each task's server ID is uniformly drawn from $\{1, 2, 3\}$; and (ii) \emph{Decreasingly Supply-Oblivious (DSO)}, where tasks are divided into four equal groups of 350 each. In the $i$-th group, each task is assigned to Server 3 with probability $(i - 1) \times 25\%$ and to $\{1, 2, 3\}$ otherwise—modeling a gradually increasing supply-awareness across task groups.

\begin{figure*}[htb] 
  \centering
  \begin{subfigure}{0.32\textwidth}
    \centering
    \includegraphics[width=\linewidth]{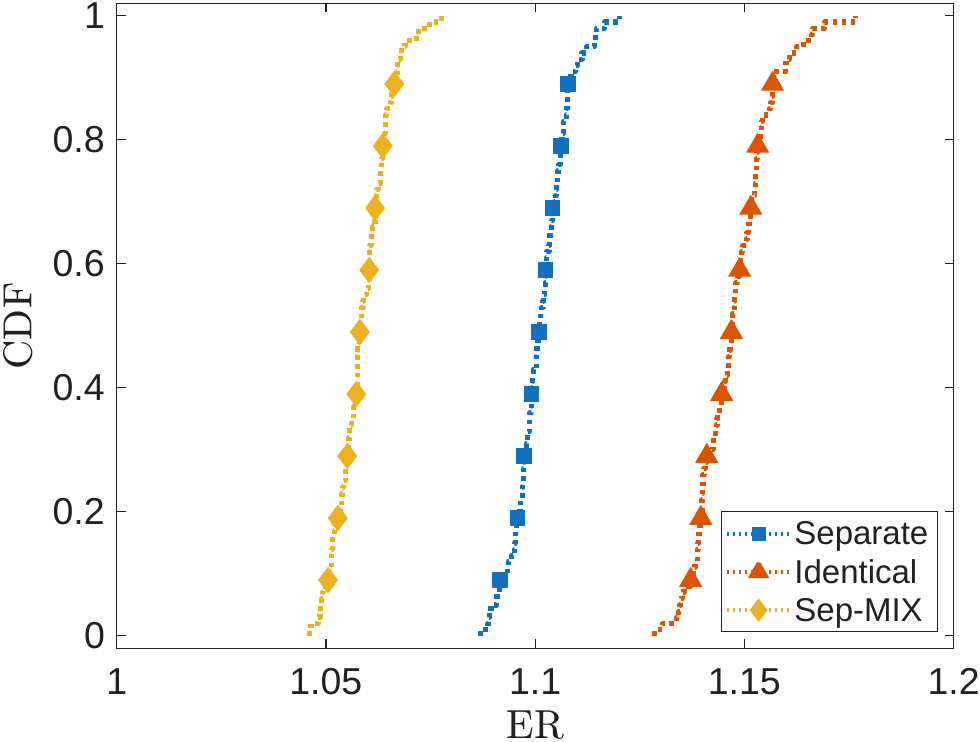}
    \caption{Overlapping Costs with SO}
    \label{fig:Set2_CR1}
  \end{subfigure}
  \begin{subfigure}{0.32\textwidth}
    \centering
    \includegraphics[width=\linewidth]{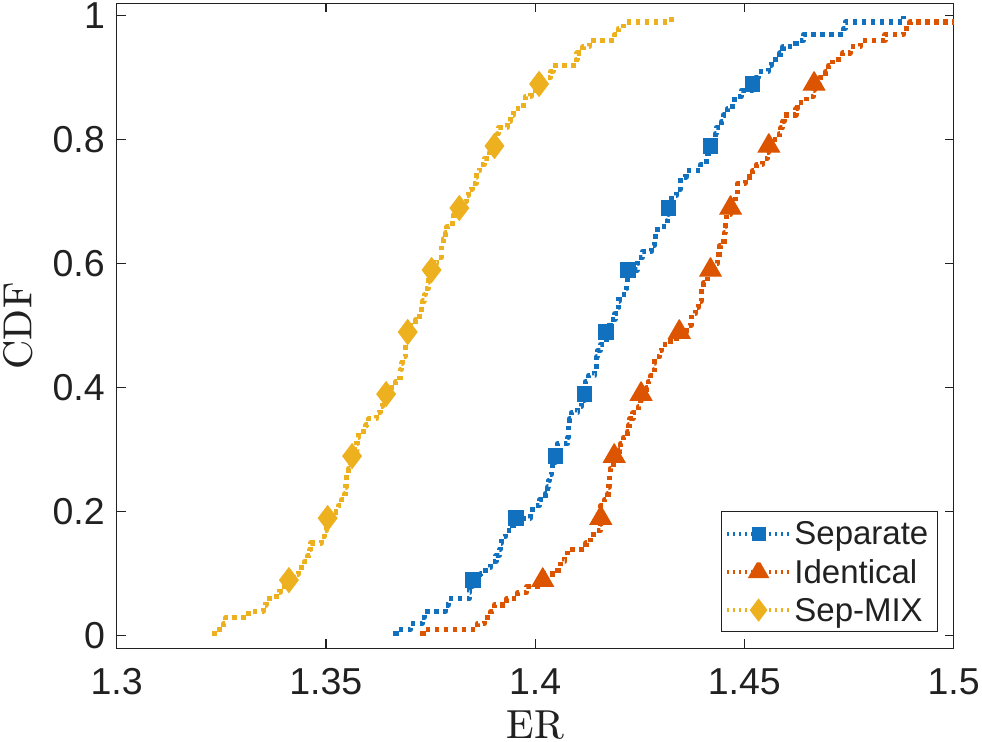}
    \caption{Overlapping Costs with DSO}
    \label{fig:Set2_CR3}
  \end{subfigure}
  \begin{subfigure}{0.32\textwidth}
    \centering
    \includegraphics[width=\linewidth]{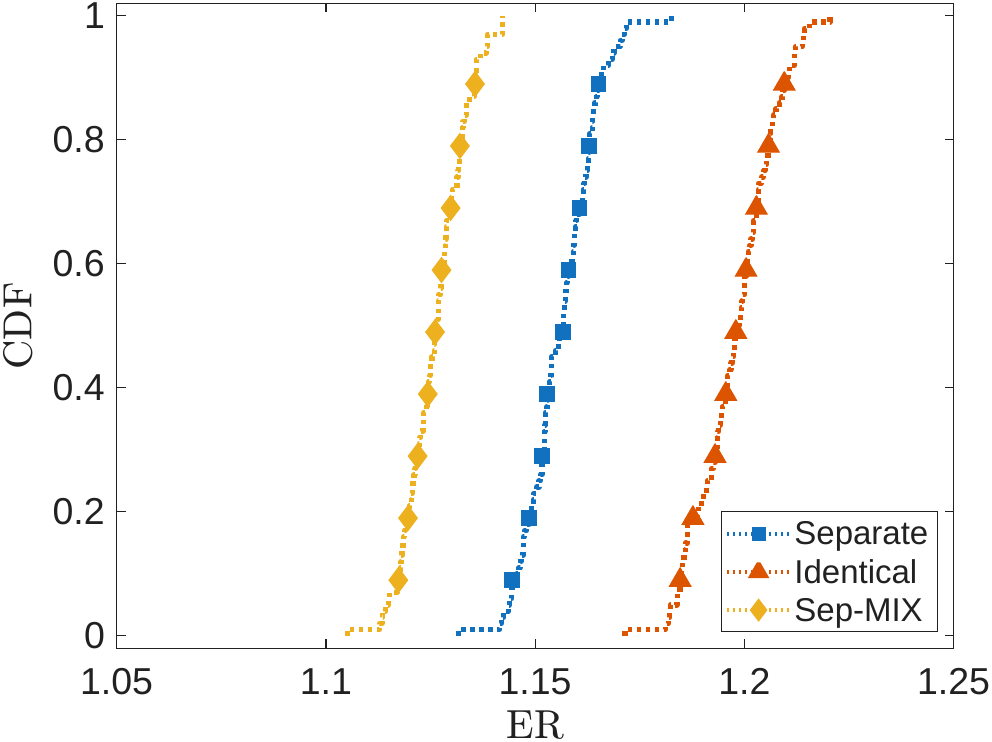}
    \caption{General Costs with SO}
    \label{fig:Set2_CR2}
  \end{subfigure}  
  
  \caption{CDF plots of empirical competitive ratios for the three designs (\textsf{Separate}, \textsf{Identical}, and \textsf{Sep-MIX}) under different cost models (Overlapping Costs and General Costs) and arrival patterns (SO and DSO). }
  
\end{figure*}

We evaluate each algorithm over 100 randomly generated instances and plot the empirical CDFs of their empirical competitive ratios (ERs) in Figures~\ref{fig:Set2_CR1}–\ref{fig:Set2_CR2}. As shown, \textsf{Identical} consistently performs the worst, as it fails to exploit the supply-oblivious structure of arrivals. In contrast, \textsf{Sep-MIX} outperforms \textsf{Separate}, benefiting from the \textsf{LBS}-based reserve function used for Server 3. Figure~\ref{fig:Set2_CR2} further shows that both \textsf{Separate} and \textsf{Sep-MIX} maintain strong performance under general cost structures, even though Theorem~\ref{Thm:AverageCRforMK} only guarantees performance under overlapping costs. Figure~\ref{fig:Set2_CR3} illustrates that, even with unlimited supply, adversarial supply-awareness patterns can significantly degrade performance—but this can be mitigated by the \textsf{MIX} design in \textsf{Sep-MIX}. As seen in Experiment 1, \textsf{MIX}, with suitable turning points, detects phase transitions and adjusts its reserve function accordingly: using \textsf{UBS} early to filter low-value requests and later switching to static or \textsf{LBS}-based pricing to utilize reserved capacity. A similar dynamic unfolds in Figure~\ref{fig:Set2_CR3} for \textsf{Sep-MIX}: in early (less adversarial) phases, \textsf{MIX} behaves conservatively, using \textsf{UBS} to limit resource allocation and preserve supply; in later (more adversarial) phases, it transitions into the “Plateau Phase” or \textsf{LBS}, allowing for more effective allocation due to lower prior utilization.

These results underscore the importance of principled reserve function design (particularly hybrid structures with a mixture of \textsf{UBS} and \textsf{LBS}) in environments with phase-wise dynamics (e.g., batch arrivals), whether induced by shifts in request values (Experiment 1) or supply-obliviousness (Experiment 2). When historical data is available and phase transitions can be estimated via statistical or learning methods, designs like \textsf{MIX} offer a promising path toward robust and high-performing online allocation algorithms.

\end{document}